\newtheorem{theorem}{Theorem}[section]
\newtheorem*{theorem*}{Theorem}
\newtheorem*{proposition*}{Proposition}
\newtheorem{lemma}[theorem]{Lemma}
\newtheorem*{lemma*}{Lemma}
\newtheorem{corollary}[theorem]{Corollary}
\newtheorem*{conjecture*}{Conjecture}
\newtheorem*{fact*}{Fact}
\newtheorem*{hypothesis*}{Hypothesis}
\theoremstyle{definition}
\newtheorem{definition}[theorem]{Definition}
\newtheorem*{definition*}{Definition}
\theoremstyle{remark}
\newtheorem*{claim*}{Claim}
\newtheorem{remark}[theorem]{Remark}
\newtheorem*{remark*}{Remark}
\newtheorem*{observation*}{Observation}
\let\mathbb\varmathbb
\crefname{lemma}{Lemma}{Lemmas}
\crefname{fact}{Fact}{Facts}
\crefname{theorem}{Theorem}{Theorems}
\crefname{corollary}{Corollary}{Corollaries}
\crefname{claim}{Claim}{Claims}
\crefname{example}{Example}{Examples}
\crefname{algorithm}{Algorithm}{Algorithms}
\crefname{problem}{Problem}{Problems}
\crefname{definition}{Definition}{Definitions}
\global\long\def\argmin{\mathrm{argmin}}
\def\AA{\mathbf{A}}
\def\BB{\mathbf{B}}
\def\MM{\mathbf{M}}
\def\HH{\mathbf{H}}
\def\Om{\mathbf{\Omega}}
\def\NN{\mathbf{N}}
\def\WW{\mathbf{W}}
\def\CC{\mathbf{C}}
\def\DD{\mathbf{D}}
\def\II{\mathbb{I}}
\def\YY{\mathbf{Y}}
\def\XX{\mathbf{X}}
\def\PP{\mathbf{P}}
\def\UU{\mathbf{U}}
\def\VV{\mathbf{V}}
\def\EE{\mathbf{E}}
\def\SS{\mathbf{S}}
\def\TT{\mathbf{T}}
\def\WW{\mathbf{W}}
\def\PI{\mathbf{\Pi}}
\def\QQ{\mathbf{Q}}
\def\MM{\mathbf{M}}
\def\ZZ{\mathbf{Z}}
\def\RR{\mathbf{R}}
\def\II{\mathbf{I}}
\def\Y{\mathbf{Y}}
\def\X{\mathcal{X}}
\def\S{\mathcal{S}}
\def\R{\mathcal{R}}
\def\T{\mathcal{T}}
\def\Sig{\mathbf{\Sigma}}
\def\poly{\textrm{poly}}
\def\prob#1#2{\mbox{\bf Pr}_{#1}\left[ #2 \right]}
\def\expec#1#2{{\bf \mathbb{E}}_{#1}[ #2 ]}
\def\expecf#1#2{{\bf \mathbb{E}}_{#1}\left[ #2 \right]}
\def\var#1{\mbox{\bf Var}[ #1 ]}
\def\varf#1{\mbox{\bf Var}\left[ #1 \right]}
\def\trace#1{\mathrm{Tr} \left(#1 \right)}
\def\ceil#1{\left\lceil #1 \right\rceil}
\def\expec#1#2{{\bf \mathbb{E}}_{#1}[ #2 ]}
\def\expecf#1#2{{\bf \mathbb{E}}_{#1}\left[ #2 \right]}
\def\var#1{\mbox{\bf Var}[ #1 ]}
\def\varf#1{\mbox{\bf Var}\left[ #1 \right]}
\numberwithin{equation}{section}
\numberwithin{figure}{section}
\newcounter{Frame}
\newenvironment{Frame}[1][htb]{%
\refstepcounter{Frame}
    \begin{mdframed}[%
        frametitle={#1},
        skipabove=\baselineskip plus 2pt minus 1pt,
        skipbelow=\baselineskip plus 2pt minus 1pt,
        linewidth=1.0pt,
        frametitlerule=true,
        nobreak=true
    ]%
}{%
    \end{mdframed}
}
\title{Robust and Sample Optimal Algorithms for\\ PSD Low-Rank Approximation}
\author{
Ainesh Bakshi
\thanks{ Ainesh Bakshi and David Woodruff acknowledge support in part from  NSF  No.  CCF-1815840.}
\\
CMU\\
abakshi@cs.cmu.edu
\and
Nadiia Chepurko
\\
MIT\\
nadiia@mit.edu
\and David P. Woodruff\footnotemark[1]
\\
CMU\\
dwoodruf@cs.cmu.edu}
\begin{document}
\date{}
\maketitle
\thispagestyle{empty}
\setcounter{page}{0}
\begin{abstract}
    Recently, Musco and Woodruff (FOCS, 2017) showed that given an $n \times n$ positive semidefinite (PSD) matrix $\AA$, it is possible to compute a  $(1+\epsilon)$-approximate relative-error low-rank approximation to $\AA$ by querying $\widetilde{O}(nk/\epsilon^{2.5})$ entries of $\AA$ in time $\widetilde{O}(nk/\epsilon^{2.5} +n k^{\omega-1}/\epsilon^{2(\omega-1)})$. They also showed that any relative-error low-rank approximation algorithm must query ${\Omega}(nk/\epsilon)$ entries of $\AA$, this gap has since remained open. Our main result is to resolve this question by obtaining an optimal algorithm that queries  $\widetilde{O}(nk/\epsilon)$ entries of $\AA$ and outputs a relative-error low-rank approximation in $\widetilde{O}(n\cdot(k/\epsilon)^{\omega-1})$ time. Note, our running time improves that of Musco and Woodruff, and matches the information-theoretic lower bound if the matrix-multiplication exponent $\omega$ is $2$.  
    
    We then extend our techniques to negative-type distance matrices. Here, our input is a pair-wise distance matrix $\AA$ corresponding to a point set $\mathcal{P} = \{x_1, x_2, \ldots, x_n\}$ such that $\AA_{i,j}=\| x_i - x_j \|^2_2$. Bakshi and Woodruff (NeurIPS, 2018) showed a bi-criteria, relative-error low-rank approximation for negative-type metrics. Their algorithm queries $\widetilde{O}(nk/\epsilon^{2.5})$ entries and outputs a rank-$(k+4)$ matrix. We show that the bi-criteria guarantee is not necessary and obtain an  $\widetilde{O}(nk/\epsilon)$ query algorithm, which is optimal. Our algorithm applies to all distance matrices that arise from metrics satisfying negative-type inequalities, including $\ell_1, \ell_2,$ spherical metrics, hypermetrics and effective resistances on a graph. We also obtain faster algorithms for ridge regression. 
   
    Next, we introduce a new robust low-rank approximation model which captures PSD matrices that have been corrupted with noise. We assume that the Frobenius norm of the corruption is bounded. Here, we relax the notion of approximation to additive-error, since it is information-theoretically impossible to obtain a relative-error approximation in this setting. While a sample complexity lower bound precludes sublinear algorithms for arbitrary PSD matrices, we provide the first sublinear time and query algorithms when the corruption on the diagonal entries is bounded. As a special case, we show sample-optimal sublinear time algorithms for low-rank approximation of correlation matrices corrupted by noise. 
\end{abstract}

\newpage

\thispagestyle{empty}
\setcounter{page}{0}
\tableofcontents 
\newpage

\section{Introduction}

Low-rank approximation is one of the most common dimensionality reduction techniques, whereby one replaces a large matrix $\AA$ with a low-rank factorization $\UU \cdot \VV \approx \AA$. Such a factorization provides a compact way of storing $\AA$ and allows one to multiply $\AA$ quickly by a vector. It is used as an algorithmic primitive in clustering \cite{drineas2004clustering,mcsherry2001spectral}, recommendation systems \cite{drineas2002competitive}, web search \cite{achlioptas2001web,kleinberg1999authoritative}, and learning mixtures of distributions \cite{achlioptas2005spectral, kannan2005spectral}, and has numerous other applications. 

A large body of recent work has looked at {\it relative-error} low-rank approximation, whereby given an $n \times n$ matrix $\AA$, an accuracy parameter $\epsilon > 0$, and a rank parameter $k$, one seeks to output a rank-$k$ matrix $\BB$ for which 
\begin{eqnarray}\label{eqn:relError}
\|\AA-\BB\|_F^2 \leq (1+\epsilon)\|\AA-\AA_k\|_F^2,
\end{eqnarray}
where for a matrix $\CC$, $\|\CC\|_F^2 = \sum_{i,j} \CC_{i,j}^2$, and $\AA_k$ denotes the best rank-$k$ approximation to $\AA$ in Frobenius norm. $\AA_k$ can be computed exactly using the singular value decomposition, but takes time $O(n^{\omega})$, where $\omega$ is the matrix multiplication constant.  
We refer the reader to the survey \cite{w14} and references therein.

For worst-case matrices, it is not hard to see that any algorithm achieving (\ref{eqn:relError}) must spend at least $\Omega(\textrm{nnz}(\AA))$ time, where $\textrm{nnz}(\AA)$ denotes the number of non-zero entries (sparsity) of $\AA$. Indeed, without reading most of the non-zero entries of $\AA$, one could fail to read a single large entry, thus making one's output matrix $\BB$ an arbitrarily bad approximation. 

A flurry of recent work  \cite{kerenidis2016quantum, mw17,  bakshi2018sublinear, chia2018quantum, tang2019quantum, rebentrost2018quantum, gilyen2018quantum,   indyk2019sample, sw19, gilyen2019quantum} has looked at the possibility of achieving {\it sublinear} time algorithms (classical and quantum) for low-rank approximation. In particular, Musco and Woodruff \cite{mw17} consider the important case of positive-semidefinite (PSD) matrices. PSD matrices include as special cases covariance matrices, correlation matrices, graph Laplacians, kernel matrices and random dot product models. Further, the special case where the input itself is low-rank (PSD Matrix Completion) has applications in quantum state tomography \cite{gross2010quantum}. Subsequently, Bakshi and Woodruff \cite{bakshi2018sublinear} considered low-rank approximation of the closely related family of Negative-type (Euclidean Squared) distance matrices. Negative-type metrics include as special cases  $\ell_1$ and $\ell_2$ metrics, spherical metrics and hypermetrics, as well as effective resistances in graphs \cite{deza2009geometry,terwilliger1987classification, chandra1996electrical,christiano2011electrical}. Negative-type metrics have found various applications in algorithm design and optimization \cite{arora2008euclidean,spielman2011graph, koutis2014approaching, madry2015fast}. 

Musco and Woodruff show that it is possible to output a low-rank matrix $\BB$ in factored form achieving (\ref{eqn:relError}) in $\widetilde{O}(nk/\epsilon^{2.5} +n k^{\omega-1}/\epsilon^{2(\omega-1)})$ time, while reading only $\widetilde{O}(nk/\epsilon^{2.5})$ entries of $\AA$. They also showed a lower bound that any algorithm achieving (\ref{eqn:relError}) must read $\Omega(nk/\epsilon)$ entries, and closing the gap between these bounds has remained an open question. 
Similarly, Bakshi and Woodruff exploit the structure of Negative-type metrics to reduce to the PSD case and obtain a bi-criteria algorithm that requires $\widetilde{O}(nk/\epsilon^{2.5})$ queries. The gap in the sample complexity and the requirement of a bi-criteria guarantee remained open. 
We resolve these questions here, and describe our novel technical contributions in Section \ref{sec:tech}.



Next we consider PSD matrices that have been corrupted by a small amount of noise. A drawback of algorithms achieving (\ref{eqn:relError}) is that they cannot tolerate any amount of unstructured noise. For instance, if one slightly corrupts a few off-diagonal entries, making the input matrix $\AA$ no longer PSD, then it is impossible to detect such corruptions in sublinear time, making the relative-error guarantee (\ref{eqn:relError}) information-theoretically impossible. Motivated by this, we  also introduce a new framework where an adversary corrupts the input by adding a noise matrix $\NN$ to a psd matrix $\AA$. We assume that the Frobenius norm of the corruption is bounded relative to the Frobenius norm of $\AA$, i.e., $\|\NN\|^2_F \leq \eta \|\AA \|^2_F$.  We also assume the corruption is well-spread, i.e., each row of $\NN$ has $\ell^2_2$-norm at most a fixed constant factor larger than $\ell^2_2$-norm of the corresponding row of $\AA$. 

This model captures small perturbations to PSD matrices that we may observe in real-world datasets, as a consequence of round-off or numerical errors in tasks such as computing Laplacian pseudoinverses, and systematic measurement errors when computing a covariance matrix. One important application captured by our model is low-rank approximation of corrupted \textit{correlation matrices}. Finding a low-rank approximation of such matrices occurs when measured correlations are asynchronous or incomplete, or when models are stress-tested by adjusting individual correlations. Low-rank approximation of correlation matrices also has many applications in finance \cite{h02}. 

Given that it is information-theoretically impossible to obtain the relative-error guarantee (\ref{eqn:relError}) in the \emph{robust model}, we relax our notion of approximation to the following well-studied additive-error guarantee:
\begin{eqnarray}\label{eqn:addError}
\|\AA-\BB\|_F^2 \leq \|\AA-\AA_k\|_F^2 + (\epsilon+\eta) \|\AA\|_F^2. 
\end{eqnarray}
This additive-error guarantee was introduced by the seminal work of Frieze et. al. \cite{fkv04}, and triggered a long line of work on low-rank approximation from a computational perspective. Frieze et al. showed that it is possible to achieve (\ref{eqn:addError}) in $O(\textrm{nnz}(\AA))$ time. Further, given access to an oracle for computing row norms of $\AA$, \ref{eqn:addError} is achievable in sublinear time. More recently, the same notion of approximation was used to obtain sublinear sample complexity and running time algorithms for \textit{distance matrices} \cite{bakshi2018sublinear,indyk2019sample}, and
a quantum algorithm for recommendation systems \cite{kerenidis2016quantum}, which was subsequently dequantized \cite{tang2019quantum}. 

This raises the question of how robust are our sublinear low-rank approximation algorithms for structured matrices, if we relax to additive-error guarantees and allow for corruption. In particular, can we obtain additive-error low-rank approximation algorithms for PSD matrices that achieve sublinear time and sample complexity in the presence of noise? We characterize when such robust algorithms are achievable in sublinear time. 



\subsection{Our Results}

\begin{table}[]
\small
\centering
\begin{tabular}{|c|c|c|c|c|c|}
\hline
\multirow{2}{*}{\textbf{Problem}} & \multicolumn{2}{c|}{\textbf{Prior Work}} & \multicolumn{2}{c|}{\textbf{Our Results}} & \textbf{Query Lower } \\ \cline{2-5} 
 & \textbf{Query} & \textbf{Run Time} & \textbf{Query} & \textbf{Run Time} & \textbf{Bound}  \\ \hline
\multirow{2}{*}{\textbf{PSD} \textbf{LRA}} & $O\left(\frac{nk}{\epsilon^{2.5}}\right)$ & $O\left(\frac{nk^{\omega-1}}{\epsilon^{2\omega-2}}+ \frac{nk}{\epsilon^{2.5}}\right)$ &  $O^*\left(\frac{nk}{\epsilon}\right)$ & $O^{\dagger}\left(\frac{nk^{\omega-1}}{\epsilon^{\omega-1}}\right)$ & $\Omega\left(\frac{nk}{\epsilon}\right)$\\ \cline{2-6} 
 & \multicolumn{2}{c|}{\cite{mw17}} & \multicolumn{2}{c|}{Thm. \ref{thm:sample_opt_psd_lra}} & \cite{mw17}\\ \hline
\multirow{2}{*}{\shortstack{\textbf{PSD LRA}\\ \textbf{PSD Output}}} & $O\left(nk\left(\frac{k}{\epsilon^{2}}+\frac{1}{\epsilon^3}\right)\right)$ & $O\left(nk^{\omega-1}\left(\frac{k}{\epsilon^{\omega}}+\frac{1}{\epsilon^{3\omega-3}}\right)\right)$ &  $O^*\left(\frac{nk}{\epsilon}\right)$ & $O^{\dagger}\left(\frac{nk^{\omega-1}}{\epsilon^{\omega-1}}\right)$ & $\Omega\left(\frac{nk}{\epsilon}\right)$\\ \cline{2-6} 
 & \multicolumn{2}{c|}{\cite{mw17}} & \multicolumn{2}{c|}{Thm. \ref{thm:sample_opt_psd_lra}} & \cite{mw17}\\ \hline
\multirow{2}{*}{\shortstack{\textbf{Negative-Type}\\ \textbf{LRA} }} &  $O\left(\frac{nk}{\epsilon^{2.5}}\right)$ & $O\left(\frac{nk^{\omega-1}}{\epsilon^{2\omega-2}} + \frac{nk}{\epsilon^{2.5}}\right)$ &  $O^*\left(\frac{nk}{\epsilon}\right)$ & $O^{\dagger}\left(\frac{nk^{\omega-1}}{\epsilon^{\omega-1}}\right)$ & $\Omega\left(\frac{nk}{\epsilon}\right)$  \\ \cline{2-6} 
 & \multicolumn{2}{c|}{Bi-criteria, \cite{bakshi2018sublinear} } & \multicolumn{2}{c|}{No Bi-criteria, Thm. \ref{thm:optimal_euclidean_lra}} & \cite{bakshi2018sublinear} \\ \hline
\multirow{2}{*}{\shortstack{\textbf{Coreset Ridge}\\ \textbf{Regression}}} & $O\left(\frac{n s_{\lambda}^2}{\epsilon^4}\right)$ & $O\left(\frac{n s_{\lambda}^{\omega}}{\epsilon^{\omega}}\right)$ & $O^*\left(\frac{ns_{\lambda}}{\epsilon^2}\right)$ & $O^{\dagger}\left(\frac{ns^{\omega-1}_{\lambda}}{\epsilon^{2\omega-2}}\right)$ & $\Omega\left(\frac{n s_{\lambda}}{\epsilon^2}\right)$ \\ \cline{2-6} 
 & \multicolumn{2}{c|}{\cite{mw17}} & \multicolumn{2}{c|}{Thm. \ref{thm:ridge_regression}} & Thm \ref{thm:coreset_lb_for_ridge} \\ \hline
\end{tabular}
\caption{Comparison with prior work. The notation $O^*$ and $O^{\dagger}$ represent existence of matching lower bounds for query complexity and running time (assuming the fast matrix multiplication exponent $\omega$ is $2$) respectively. The notation $s_{\lambda}$ is used to denote the statistical dimension of ridge regression. All bounds are stated ignoring polylogarithmic factors in $n,k$ and $\epsilon$.}
\end{table}

We begin with stating our results for low-rank approximation for structured matrices. 
Our main result is an optimal algorithm for low-rank approximation of PSD matrices: 

\vspace{0.1in} 
\noindent \textbf{Theorem \ref{thm:sample_opt_psd_lra}} \textit{(Informal Sample-Optimal PSD LRA.) 
Given a PSD matrix $\AA$, there exists an algorithm that queries $\widetilde{O}(nk/\epsilon)$ entries in $\AA$ and outputs a rank $k$ matrix $\BB$ such that with probability $99/100$, $\|\AA - \BB \|^2_F \leq (1+\epsilon)\|\AA  - \AA_k\|^2_F$, and the algorithm runs in time $\widetilde{O}(n \cdot(k/\epsilon)^{\omega-1})$.
}
\vspace{0.1in} 
\begin{remark}
Our algorithm matches the sample complexity lower bound of Musco and Woodruff, up to logarithmic factors, which shows that any randomized algorithm that outputs a $(1+\epsilon)$-relative-error low-rank approximation for a PSD matrix $\AA$ must read $\Omega(nk/\epsilon)$ entries. Our running time also improves that of Musco and Woodruff and is optimal if the matrix multiplication exponent $\omega$ is $2$. 
\end{remark}

\begin{remark}
We can extend our algorithm such that the low-rank matrix $\BB$ we output is also PSD with the same query complexity and running time. In comparison, the algorithm of Musco and Woodruff accesses $\widetilde{O}(nk/\epsilon^3+ nk^2/\epsilon^2)$ entries in $\AA$ and runs in time $\widetilde{O}(n (k/\epsilon)^{\omega} + nk^{\omega-1}/\epsilon^{3(\omega-1)})$.
\end{remark}
 
At the core of our analysis is a sample optimal algorithm for Spectral Regression: $\min_{\XX}\|\DD \XX - \EE \|^2_2$. 
We show that when $\DD$ has orthonormal columns and $\EE$ is arbitrary, we can sketch the problem by sampling rows proportional to the leverage scores of $\DD$ and approximately preserve the minimum cost.  
This is particularly surprising since our sketch only computes sampling probabilities by reading entries in $\DD$, while being completely agnostic to the entries in  $\EE$. 
Here, we also prove a spectral approximate matrix product guarantee for our one-sided leverage score sketch, which may be of independent interest. We note that such a guarantee for leverage score sampling does not appear in prior work, and we discuss the technical challenges we need to overcome in the subsequent section.

The techniques we develop for PSD low-rank approximation also extend to computing a low-rank approximation for distance matrices that arise from negative-type (Euclidean-squared) metrics. Here, our input is a pair-wise distance matrix $\AA$ corresponding to a point set $\mathcal{P}=\{x_1, x_2, \ldots, x_n \} \in \mathbb{R}^d$ such that $\AA_{i,j}= \|x_i - x_j \|^2_2$. We obtain an optimal algorithm for computing a low-rank approximation of such matrices:


\vspace{0.1in} 
\noindent \textbf{Theorem \ref{thm:optimal_euclidean_lra}} \textit{(Informal Sample-Optimal LRA for Negative-Type Metrics.) 
Given a negative-type distance matrix $\AA$, there exists an algorithm that queries $\widetilde{O}(nk/\epsilon)$ entries in $\AA$ and outputs a rank $k$ matrix $\BB$ such that with probability $99/100$, $\|\AA - \BB \|^2_F \leq (1+\epsilon)\|\AA  - \AA_k\|^2_F$, and the algorithm runs in time $\widetilde{O}(n \cdot(k/\epsilon)^{\omega-1})$.
}

\begin{remark}
Prior work of Bakshi and Woodruff \cite{bakshi2018sublinear} obtains a $\widetilde{O}(nk/\epsilon^{2.5})$ query algorithm that outputs a rank-$(k+4)$ matrix $\BB$ such that $\|\AA - \BB \|^2_F \leq (1+\epsilon)\|\AA - \AA_k \|^2_F$. We show that the bi-criteria guarantee is not necessary, thereby resolving an open question in their paper. 
\end{remark}

\paragraph{Structured Regression.}
The sample-optimal algorithm for PSD Low-Rank Approximation also leads to a faster algorithm for Ridge Regression, when the design matrix is PSD. Given a PSD matrix $\AA$, a vector $y$ and a regularization parameter $\lambda$, we consider the following optimization problem:
$ \min_{x \in \mathbb{R}^n}\|\AA x - y \|^2_2 + \lambda\|x \|^2_2$.
This problem is often referred to as Ridge Regression and has been the focus of numerous theoretical and practical works (see \cite{gruber2017improving} and references therein).  

\vspace{0.1in} 
\noindent \textbf{Theorem \ref{thm:ridge_regression}} \textit{(Informal Ridge Regression.) 
Given a PSD matrix $\AA$, a regularization parameter $\lambda$ and statistical dimension $s_{\lambda} = \trace{ (\AA^2 +\lambda \II)^{-1} \AA^2}$,
there exists an algorithm that queries $\widetilde{O}(n s_{\lambda}/\epsilon^2)$ entries of $\AA$ and with probability $99/100$ outputs a $(1+\epsilon)$ approximate solution to the Ridge Regression objective and runs in $\widetilde{O}(n (s_{\lambda}/\epsilon^2)^{\omega-1})$ time. 
}
\vspace{0.1in}

\begin{remark}
Our result improves on prior work by Musco and Woodruff \cite{mw17}, who obtain an algorithm that queries $\widetilde{O}(ns_{\lambda}^2/\epsilon^4)$ entries in $\AA$ and runs in $\widetilde{O}(n (s_{\lambda}/\epsilon^2)^{\omega} )$ time. 
\end{remark}

\begin{remark}
Since our algorithm works for all $y$ simultaneously, we obtain a low-rank \textit{coreset} of the design matrix (in factored form) that preserves the Ridge Regression cost up to a $(1+\epsilon)$ factor. Further, in Theorem \ref{thm:coreset_lb_for_ridge}, we prove a matching lower bound on the query complexity for any coreset construction.
\end{remark}
  



\paragraph{Robust Low-Rank Approximation.}
Next, we consider a robust form of low-rank approximation problem, where the input is a PSD matrix corrupted by noise. In this setting, we have query access to the corrupted matrix $\AA + \NN$, where $\AA$ is PSD and $\NN$ is such that $\|\NN\|^2_F \leq \eta \|\AA\|^2_F$. Further, for all $i \in [n]$ $\|\NN_{i,*} \|^2_2\leq c \| \AA_{i,*}\|^2_2$, for a fixed constant $c$. 
The diagonal of a PSD matrix carries crucial information since the largest diagonal entry upper bounds all off-diagonal entries. Therefore, a reasonable adversarial strategy is to corrupt the largest diagonal entries and make them close to the small diagonal entries, which enables the resulting matrix to have large off-diagonal entries that are hard to find. Capturing this intuition we parameterize our algorithms and lower bounds by the largest ratio between a diagonal entry of $\AA$ and $\AA+\NN$, denoted by $\phi_{\max} = \max_{j \in [n]} \AA_{j,j} / |(\AA+\NN)_{j,j}|$.

\vspace{0.1in} 
\noindent \textbf{Theorem \ref{thm:robust_lb}.} \textit{(Informal lower bound.) 
Let $\epsilon>\eta>0$.
Given $\AA + \NN$ such that $\AA$ is PSD and  $\NN$ is a corruption matrix as defined above, any randomized algorithm that with probability at least $2/3$ outputs a rank-$k$ approximation up to additive error $(\epsilon+\eta)\|\AA \|^2_F$ must read  $\Omega\left( \phi_{\max}^2 nk/\epsilon \right)$ entries of $\AA+\NN$.
}
\vspace{0.1in} 

\begin{remark}
Any algorithm must incur additive error $\eta\|\AA \|^2_F$, since $\AA$ is not even identifiable below additive-error $\eta\|\AA \|^2_F$.
\end{remark}
\begin{remark}
In our hard instance, $\phi^2_{\max}$  can be as large as $\epsilon n/k$, which implies a sample-complexity lower bound of $\Omega(n^2)$. While this lower bound precludes sublinear algorithms for arbitrary PSD matrices, we observe that in many applications $\phi_{\max}$ can be significantly smaller. For instance, if $\AA$ is a correlation matrix, we know that the true diagonal entries of $\AA+\NN$ are $1$ and can ignore any corruption on them to bound $\phi_{\max}$ by $1$. 
\end{remark}

Motivated by the aforementioned observation, we introduce algorithms for robust low-rank approximation, parameterized by the corruption on the diagonal entries. We obtain the following theorem:

\vspace{0.1in} 
\noindent \textbf{Theorem \ref{thm:main_thm} } \textit{(Informal Robust LRA.)
Given $\AA + \NN$, which satisfies our noise model, there exists an algorithm that queries $\widetilde{O}(\phi_{\max}^2 nk/\epsilon)$ entries in $\AA+\NN$ and computes a rank $k$ matrix $\BB$ such that with probability at least $99/100$, $
    \|\AA - \BB \|^2_F \leq \|\AA - \AA_k \|^2_F + (\epsilon + \sqrt{\eta} )\|\AA\|^2_F
$.
}
\begin{remark}
While the sample complexity of this algorithm matches the sample complexity in the lower bound, it incurs additive-error $\sqrt{\eta}\| \AA\|^2_F$ as opposed to $\eta \|\AA\|^2_F$. An interesting open question here is whether we can achieve additive-error $o(\sqrt{\eta}\|\AA\|^2_F)$, though we note that when $\eta^2 \leq \epsilon$, this just changes the additive error guarantee of our low-rank approximation by a constant factor.
\end{remark}
   
\begin{remark}
Our techniques extend to low-rank approximation of correlation matrices, and we obtain a sample complexity of $\widetilde{O}(nk/\epsilon)$, which is optimal. In fact, the hard instance in \cite{mw17} implies an $\Omega(nk/\epsilon)$ lower bound on the sample complexity, even in 
the presence of no noise. 
Surprisingly, corrupting a correlation matrix does not increase the sample complexity and only incurs an additive error of $\sqrt{\eta}\|\AA \|^2_F$ (see Corollary \ref{cor:cor_matrix_lra} for a formal statement). 
\end{remark}


\section{Technical Overview}\label{sec:tech}

In this section, we provide an overview of our techniques and supply intuition for our proofs. As a first step, it is easy to see that the $\Omega(\textsf{nnz}(\AA))$ lower bound for general matrices does not apply to PSD matrices, since it proceeds by hiding arbitrarily large entries. Observe that, reading the diagonal of a PSD matrix certifies an upper bound on all entries of the matrix and thus off-diagonals cannot be arbitrarily large. With this intuition in mind, we focus on sublinear algorithms.

\subsection{Sample-Optimal Low-Rank Approximation} 
At a high level, our algorithm consists of two stages: first, we use the existing machinery developed by Musco and Woodruff \cite{mw17} to obtain \textit{weak projection-cost preserving sketches} for $\AA$. Our sketches are smaller than those obtained by Musco and Woodruff, albeit satisfying a weaker guarantee. Recall that such sketches reduce the dimensionality of the columns (rows), while ensuring that the norm of \textit{all} projections of rows of the form $\II - \PP$, for rank $k$ $\PP$, space are simultaneously preserved. Constructing such sketches for both the column and row spaces of $\AA$ results in a much smaller matrix, which we can afford to query. 

At this point our approach diverges from that of Musco and Woodruff, since it is not possible to follow their strategy and recover a  $(1+\epsilon)$-relative-error low-rank approximation from the weaker sketch we constructed above. However, we show that our sketch has enough information to extract a \textit{structured subspace} (represented by an orthonormal basis) for $\AA$ such that the projection onto the orthogonal complement of this subspace is comparable to the optimal low-rank approximation cost, in spectral norm. Note, this guarantee is stronger than the span of the structured subspace containing a low-rank approximation comparable to the optimal in Frobenius norm, and indeed the latter does not suffice. In the second stage, we show that we can recover a rank-$k$ matrix in the span of the structured subspace such that it is a $(1+\epsilon)$-relative-error low-rank approximation for $\AA$. Further, we show that all these steps can be performed in sublinear time and by reading only $\widetilde{O}(nk/\epsilon)$ entries in $\AA$ (see Theorem \ref{thm:sample_opt_psd_lra} for a precise statement).


We begin by providing a bird's eye view of the Musco-Woodruff algorithm and how to adapt parts of it to obtain \textit{weak projection-cost preserving sketches (PCPs)}. For ease of exposition, we ignore polylogarithmic factors in the subsequent discussion. Their algorithm begins with computing the so-called \textit{ridge leverage scores} (Definition \ref{def:ridge_leverage_scores}) for $\AA^{1/2}$, which approximate the \textit{ridge leverage scores} of $\AA$ up to a $\sqrt{n/k}$-factor. The ridge leverage scores of $\AA^{1/2}$ can be approximated efficiently since we can compute the row norms of $\AA^{1/2}$ by simply reading the diagonal of $\AA$. 
It is well known \cite{cohenmm17} that sampling $k/\epsilon^2_0$ columns of $\AA$ proportional to its ridge leverage scores results in a sketch $\CC$ that preserves the cost of all rank-$k$ projections $\PP$:
\begin{equation}
\label{eqn:rel_pcp}
    \|\CC - \PP\CC \|^2_F = (1\pm\epsilon_0)\|\AA - \PP \AA\|^2_F 
\end{equation}

In prior work, $\CC$ is referred to as a projection-cost preserving sketch (PCP).
PCP constructions are useful since a low-rank approximation for $\CC$ translates to a low-rank approximation for $\AA$, while $\CC$ has much smaller dimension.  Observe that oversampling columns of $\AA$ proportional to the \textit{ridge leverage scores} of $\AA^{1/2}$, by a $\sqrt{n/k}$ factor, suffices to obtain the guarantee of \ref{eqn:rel_pcp} (see Lemma \ref{lem:rel_col_pcp} for a precise statement). Note, $\CC$ may have $\Omega(n^{1.5}/\epsilon^2_0)$ non-zeros but the algorithm need not query any entries in $\CC$ at this stage.   
Musco and Woodruff then construct a row PCP for $\CC$ by sampling $\sqrt{nk}/\epsilon^{2.5}_0$ rows of $\CC$ proportional to the rank-$(k/\epsilon_0)$ \textit{ridge leverage scores} of $\AA$. The resulting matrix $\RR$ is a $\sqrt{nk}/\epsilon^{2.5}_0 \times \sqrt{nk}/\epsilon^{2.5}_0$ matrix such that for any rank-$k$ projection $\PP$, 
\begin{equation}
\label{eqn:rel_pcp2}
\|\RR - \RR \PP \|^2_F + O(\|\AA - \AA_k\|^2_F ) =  (1\pm \epsilon_0) \|\CC - \CC \PP \|^2_F 
\end{equation}

Since $\RR$ is a much smaller matrix, they run an input-sparsity time algorithm to compute a low-rank approximation for it  \cite{clarkson2013low}. Using standard regression techniques (described in \cite{cohenmm17, mw17, bakshi2018sublinear}) along with equations \ref{eqn:rel_pcp} and  \ref{eqn:rel_pcp2}, setting $\epsilon_0 =\epsilon$ results in a $(1+\epsilon)$-low-rank approximation of $\AA$ by querying $O(nk/\epsilon^{5})$ entries. Musco and Woodruff instead use a more complicated algorithm to get a $1/\epsilon^{2.5}$ dependence.

Diverging from Musco and Woodruff, we begin by showing that a Spectral-Frobenius  projection for $\AA$ suffices to obtain a low-rank approximation with $O(nk/\epsilon)$ queries. A projection matrix $\PP$ is a  $(\epsilon, k)$ Spectral-Frobenius projection for a matrix $\AA$ if $\|\AA - \AA \PP \|_2^2 \leq \frac{\epsilon }{k} \|\AA - \AA_k \|_F^2$ (see Definition \ref{def:sf_projection}).  
Assuming we are handed a $(k/\epsilon)$-dimensional \textit{structured subspace} such that the projection on to this subspace is given by $\PP = \QQ \QQ^\top$, where  $\QQ$ is a $n\times k/\epsilon$ matrix with orthonormal columns satisfying $\|\AA - \QQ\QQ^{\top}\AA  \|^2_2 \leq \epsilon/k \cdot \|\AA - \AA_{k/\epsilon}\|^2_F$, we show that we can extract a rank-$k$ relative-error approximation by reading only $nk/\epsilon$ entries in $\AA$ (Theorem \ref{thm:struc_proj_to_lra} in Section \ref{subsec:structured_projection_to_lra}). We provide an overview of the proof here.

The \textsf{SF} projection property implies $\|\AA - \QQ\QQ^{\top} \AA_k \QQ\QQ^{\top} \|^2_F \leq (1+\epsilon)\|\AA - \AA_k\|^2_F$. Therefore, it suffices to solve the following optimization problem:
\begin{equation}
\label{eqn:sf_cons_regression}
    \min_{\textrm{rank}(\XX) \leq k} \|\AA - \QQ \XX \QQ^{\top} \|^2_F
\end{equation}
since $\XX = \QQ^{\top} \AA_k \QQ$ is always feasible.
While we are now optimizing over a $k/\epsilon \times k/\epsilon$ matrix $\XX$, with rank at most $k$, the problem still seems intractable to solve \textit{optimally} in sublinear time and queries to $\AA$. The key idea here is that $\QQ$ has orthonormal columns and thus the leverage scores are precomputed for us. We can then sample columns and rows proportional to the leverage scores of $\QQ$ and consider a significantly smaller sketched problem. 
Therefore, we create sampling matrices $\SS$ and $\TT$ that sample $\textrm{poly}(k/\epsilon)$ rows proportional to the leverage scores of $\QQ$ and consider the resulting optimization problem:
\begin{equation}
\label{eqn:sketched_gen_lra}
    \min_{\textrm{rank}(X)\leq k}\|\SS\AA\TT - \SS\QQ\XX\QQ^{\top}\TT\|^2_F
\end{equation}
Here, we are faced with an intriguing phenomenon: our sketched optimization problem  does not have the property that the minimum cost for Equation \ref{eqn:sketched_gen_lra} is a $(1+\epsilon)$-approximation to the minimum cost for Equation \ref{eqn:sf_cons_regression}. The reason is that our sketch incurs a fixed additive shift term, which we cannot approximate in sublinear time. We note that this is the bottleneck in approximating the cost of the optimal low-rank approximation, and as mentioned in \cite{mw17}, it is open to estimate this cost in $o(n^{3/2})$ time.   

However, we can apply the structural result in Lemma \ref{lem:affine_embedding} twice, to show that the optimal solution to Equation \ref{eqn:sketched_gen_lra}, when plugged in to Equation \ref{eqn:sf_cons_regression}  obtains a $(1+\epsilon)$-approximation to the minimum cost.
Formally, $\SS$ and $\TT$ have the property that if $\widehat{\XX}= \arg\min_{\XX}\|\SS \AA \TT - \SS\QQ\XX\QQ^{\top}\TT \|^2_F$, then 
\begin{equation*}
      \|\AA - \QQ\widehat{\XX}\QQ^{\top}\|^2_F \leq (1+O(\epsilon)) \min_{\textrm{rank}(X)\leq k}\|\AA - \QQ\XX\QQ^{\top}\|^2_F.
\end{equation*}

The optimization problem in Equation \ref{eqn:sketched_gen_lra} is called \textit{Generalized Low-Rank Approximation} and admits a closed form solution \cite{friedland2007generalized} (Theorem \ref{thm:gen_lra}). Further, since the problem now has all dimensions independent of $n$, we can afford to explicitly compute $\SS \AA\TT$ by querying the corresponding entries in $\AA$.  The resulting closed-form solution can also be computed  in $\textrm{poly}(k/\epsilon)$ time (and queries) which only contributes a lower order term. We obtain one factor for the low-rank approximation for $\AA$ by simply computing an orthonormal basis for $\QQ\widehat{\XX}$. In order to compute the second factor, we set up and approximately solve a regression problem, the details of which can be found in Algorithm \ref{alg:proj_to_rank}. Efficiently solving such a regression problem is now standard in low-rank approximation literature \cite{cohenmm17, mw17, bakshi2018sublinear}. Therefore, we can output a low-rank approximation to $\AA$ by querying only $\widetilde{O}(nk/\epsilon)$ entries.

Now, we focus on obtaining a Spectral-Frobenius projection for $\AA$. 
Our starting point here is to observe that the PCP construction from equations \eqref{eqn:rel_pcp} and \eqref{eqn:rel_pcp2} above allows to preserve the projection of columns of $\AA$ on all $(k/\epsilon)$-dimensional subspaces, albeit up to a constant factor by setting $\epsilon_0 = O(1)$. 
Therefore, a natural approach is to set the error parameter $\epsilon_0$ in the PCP constructions to be a small fixed constant, say $0.1$, and the rank parameter $k$ to be $k/\epsilon$, where $\epsilon$ is the desired input accuracy.  Further, we observe that the guarantee obtained in Equation \ref{eqn:rel_pcp} can be strengthen to a mixed \textit{Spectral-Frobenius PCP} guarantee (also introduced by \cite{mw17}): for all rank-$(k/\epsilon)$ projection matrices $\PP$, the column PCP $\CC$ satisfies :
\begin{equation}
\label{eqn:weak_pcp1}
    (1-0.1)\|\AA -  \PP\AA \|^2_2 -\frac{\epsilon}{10k}\|\AA -\AA_{k/\epsilon}\|^2_F \leq \|\CC - \PP\CC\|^2_2 \leq (1+0.1)\|\AA - \PP \AA  \|^2_2 +\frac{\epsilon}{10k}\|\AA -\AA_{k/\epsilon}\|^2_F
\end{equation}
For a formal statement refer to Lemma \ref{lem:rel_row_pcp_mixed}. Sampling rows of $\CC$ proportional to the same distribution results in a row PCP for $\CC$ such that for all rank-$(k/\epsilon)$ projections $\PP$,
\begin{equation}
\label{eqn:weak_pcp2}
    (1-0.1)\|\CC -  \CC\PP \|^2_2 -\frac{\epsilon}{10k}\|\AA -\AA_{k/\epsilon}\|^2_F \leq \|\RR - \RR\PP\|^2_2 \leq (1+0.1)\|\CC - \CC \PP  \|^2_2 +\frac{\epsilon}{10k}\|\AA -\AA_{k/\epsilon}\|^2_F
\end{equation}



We then use an \textit{input-sparsity} spectral-low-rank approximation algorithm by \cite{cohen2015dimensionality} (Lemma \ref{lem:input_sparsity_spectral_lra}), to obtain a low-dimensional subspace, represented by a $ \sqrt{nk/\epsilon} \times k/\epsilon$ matrix $\ZZ$ with orthonormal columns such that 
\begin{equation}
\label{eqn:struct}
    \|\RR - \RR \ZZ \ZZ^{\top} \|^2_2 \leq \frac{\epsilon}{k} \|\RR  - \RR_{k/\epsilon}\|^2_F
\end{equation}
Observe the projection matrix $\ZZ \ZZ^{\top}$ as a $(\epsilon,  k)$-Spectral-Frobenius projection for $\RR$. 
Now, using the fact that $\RR$ is a Spectral-Frobenius PCP for $\CC$ and plugging in $\PP = \ZZ \ZZ^{\top}$ in Equation \ref{eqn:weak_pcp2}, we can bound $\|\CC - \CC\ZZ \ZZ^{\top} \|^2_2$ as follows :
\begin{equation}
\label{eqn:sf_c}
\begin{split}
    \|\CC - \CC \ZZ \ZZ^{\top} \|^2_2  \leq \frac{10}{9} \|\RR - \RR\ZZ\ZZ^{\top} \|^2_2 + \frac{\epsilon}{9k}\|\RR - \RR_{k/\epsilon}\|^2_F 
    & \leq \frac{10\epsilon}{k}\|\RR - \RR_k \|^2_F + \frac{\epsilon}{9k}\|\RR - \RR_{k/\epsilon}\|^2_F\\
    & \leq O\left(\frac{\epsilon}{k}\right) \|\CC - \CC_{k} \|^2_F
\end{split}
\end{equation}
where the second inequality follows from Equation \ref{eqn:struct} and the third follows from the fact that PCPs preserve Frobenius low-rank approximation cost up to a constant factor, i.e., $\|\RR - \RR_{k/\epsilon} \|^2_F =\Theta(\|\CC - \CC_{k/\epsilon} \|^2_F) = \Theta(\|\AA - \AA_{k/\epsilon} \|^2_F)$. Therefore, $\ZZ\ZZ^{\top}$ is also a $(\epsilon, k)$- Spectral-Frobenius projection for $\CC$.
Here, we are faced with a few challenges. First, the relative-error approximation spanned by the subspace has rank $k/\epsilon$. Second, it is unclear how to obtain any reasonable result for $\AA$ from the above structural property, given that even the dimensions of $\ZZ\ZZ^{\top}$ do not match $\AA$.


Here, we observe that if we had a such a projection for the \textit{column-space} of $\CC$, by Equation \ref{eqn:weak_pcp1}, it would also work for $\AA$ and we would be done. To this end, we consider the following optimization problem:
\begin{equation}
\label{eqn:tech_spectral_regression}
\min_{\WW \in \mathbb{R}^{n \times k'}} \|\CC - \WW \ZZ^{\top} \|^2_2
\end{equation}
We show that an orthonormal basis $\QQ$ for an approximate minimizer to  Equation \ref{eqn:tech_spectral_regression} is an \textsf{SF} projection for $\CC$ and in turn $\AA$ (since $\CC$ is a column PCP for $\AA$). Therefore, we focus on optimizing Equation \ref{eqn:tech_spectral_regression} and refer to this problem as \textit{Spectral Regression}. We note that unlike standard regression, here we minimize the Spectral (Operator) norm.  While the corresponding problem for minimizing Frobenius norm is extensively studied and well understood, to the best of our knowledge the only relevant related work on Spectral Regression is in the streaming model, by Clarkson and Woodruff \cite{clarkson2009numerical}. They construct an oblivious sketch, consisting of random entries in $\{-1,1\}$, for Equation \ref{eqn:tech_spectral_regression} that preserves the optimal solution up to a  $(1\pm\epsilon)$ factor. Unfortunately, we cannot use oblivious sketching here, since $\CC$ may be a dense matrix and we cannot afford to read all of it. 

Here, we emphasize that obtaining a sample-optimal algorithm for the aforementioned Spectral Regression problem is crucial for our main algorithmic result. Given that $\CC$ is an $n \times \sqrt{nk/\epsilon}$ matrix, we cannot query most of it and thus approximating its leverage scores is infeasible.  
A natural approach here would be to follow the \textit{Affine Embedding idea} for Frobenius norm (refer to Lemma \ref{lem:affine_embedding}) and hope a similar guarantee holds for the spectral norm as well. Here, one might hope to obtain a small sketch that preserves the spectral norm cost of all $\WW$ up to a $(1\pm \epsilon)$ factor. While such a guarantee would suffice, we note that $\ZZ$ could have rank as large as $k/\epsilon$ and we can no longer afford a $(1+\epsilon)$-approximate affine embedding even for Frobenius norm, without incurring a larger dependence on $\epsilon$. This precludes all known approaches for sketching Equation \ref{eqn:tech_spectral_regression} to preserve the optimal cost.

Instead, we relax the notion of approximation for our sketch. We observe that it suffices to construct a sketch $\SS$ such that if $ \widehat{\WW} = \arg\min_{\WW}\|\CC\SS- \WW\ZZ^{\top} \SS\|^2_2$, 
then 
\begin{equation}
\label{eqn:sketch_error}
    \|\CC -\widehat{\WW}\ZZ^{\top} \|^2_2 \leq O(1)\left(\min_{\WW \in \mathbb{R}^{n \times k/\epsilon}} \|\CC - \WW\ZZ^{\top}\|^2_2 + \frac{\epsilon}{k}\|\CC- \CC_{k/\epsilon} \|^2_F\right)
\end{equation}
Note, this is a weaker guarantee for the sketch $\SS$, since we only need to preserve the cost of the optimal solution up to a mixed relative and additive error. First, we observe such a guarantee suffices, since we can upper bound the cost from Equation \ref{eqn:sketch_error} by $O(\epsilon/k) \cdot \|\CC-\CC_{k/\epsilon}\|^2_F$  and the Spectral-Frobenius PCP from Equation \ref{eqn:weak_pcp1} incurs this term anyway. In Theorem \ref{thm:spectral_regression},  we show that we can construct such a sketch $\SS$ satisfying Equation \ref{eqn:sketch_error} by sampling $k/\epsilon$ columns of $\CC$ proportional to the leverage scores of $\ZZ^{\top}$.
This is surprising since we completely ignore all information about $\CC$ and our sketch is not an oblivious sketch. 

The key technical lemma (Lemma \ref{lem:weak_amm}) we prove here is a \textit{weak approximate matrix product} for $\CC^*$ and $\ZZ^{\top}$  where $\CC^* = \CC(\II - \PP_{\ZZ^{\top}})$ is the projection onto the orthogonal complement of $\ZZ^{\top}$. While  \textit{ approximate matrix product} has been extensively studied \cite{drineas2004clustering, sarlos2006improved, clarkson2013low}, even for spectral norm \cite{cohen2015optimal}, it is important to emphasize here that all known constructions are either oblivious sketches or require sketches that are sampled proportional to both $\CC^*$ and $\ZZ^{\top}$. Since $\ZZ^{\top}$ has no information about the spectrum of $\CC^*$, the main challenge here is to control the spectrum of $\CC^* \SS\SS^{\top}\ZZ^{\top}$.

In order to bound $\|\CC^* \SS\SS^{\top}\ZZ^{\top}\|_2$, we analyze how sampling columns of $\CC^*$ proportional to the \textit{leverage scores} of $\ZZ^{\top}$ affects the spectrum of $\CC^*$. 
An important tool in our analysis is the following result by Rudelson and Vershynin on how the spectral norm of a matrix degrades when we sample a uniformly random subset of rows \cite{rudelson2007sampling}. They show that sampling $q$ rows of a matrix $\MM$ uniformly at random, indexed by the set $\mathcal{Q}$, results in a matrix $\MM_{|\mathcal{Q}}$ such that
\begin{equation*}
     \expecf{}{\big\|\MM_{|\mathcal{Q}}\big\|_2} = O\left( \sqrt{\frac{q}{n}}\|\MM \|_2 + \sqrt{\log(q)}\|\MM \|_{(n/q)}\right)
\end{equation*}
where $\|\AA \|_{(n/q)}$ is the average of the largest $n/q$ $\ell_2$-norms of columns of $\AA$. 
Here, we prove that expected spectral norm of $\CC$ restricted to the columns sampled by $\SS$ proportional to the leverage scores of $\ZZ^T$ only exceeds that of a random subset by a polylogarithmic factor. 
This result may be of independent interest in applications where we would want to bound the spectrum of random submatrices, where the rows or columns are \textit{not} sampled uniformly.

Intuitively, there are two technical challenges we overcome in order to apply the Rudelson-Vershynin result in our setting. First, a leverage score sampling matrix $\SS$ need not sample columns uniformly at random, since we have no control over the squared column norms of $\ZZ^{\top}$. Given that the squared column norms of $\ZZ^{\top}$ may be lopsided, the subset of columns we select could be far from a uniform sample in the worst case. Second, the matrix we apply it to is not square and $\|\cdot\|_{(n/q)}$ norm only shrinks substantially when the columns of $\AA$ have the same $\ell^2_2$ norm, up to a constant. 

We therefore obtain a variant of Spectral norm decay for rectangular matrices, i.e. for any $n\times m$ matrix $\MM$ with roughly the same squared column norms, we show that 
\begin{equation}
\label{eqn:spectral_decay_rv}
    \expecf{}{\big\|\MM_{|\mathcal{Q}}\big\|_2} = O\left( \sqrt{\frac{q}{n}}\|\MM \|_2 + \sqrt{\log(q)/b}\|\MM \|_{(n/q)}\right)
\end{equation}
where $b = \ceil{n/m}$.  
To apply the above result, we then partition the rows of $\CC$ (since $\SS$ samples columns of $\CC$ as opposed to rows) into $\log(n)$ groups such that within each group, all rows have roughly the same squared norm. We then analyze leverage score sampling proportional to the column norms of $\ZZ^{\top}$ on each group independently. 
We show that we can obtain a coupling between the two random processes, namely uniform sampling and leverage score sampling, such that we obtain a decay bound similar to Equation \ref{eqn:spectral_decay_rv}, up to log factors. We describe our solution in more detail in Section \ref{sec:rel}. We note that our results extend to outputting a low-rank PSD matrix as well. 

\paragraph{Negative-Type Matrices.} We then use the techniques developed above to obtain an optimal relative-error low-rank approximation for Negative-Type distance matrices. While arbitrary metrics do not admit sublinear time algorithms for relative-error low-rank approximation (see Theorem 7.1 in \cite{bakshi2018sublinear}) Bakshi and Woodruff provided a sublinear time algorithm for metrics that satisfy negative-type inequalities. They obtain a $(1+\epsilon)$-relative-error approximation, that queries $\widetilde{O}(nk/\epsilon^{2.5})$ entries in the input. However, this algorithm outputs a bi-criteria solution, i.e., given a negative-type matrix $\AA$, it outputs a rank-$(k+4)$ matrix $\MM$ such that $\|\AA - \BB \|^2_F \leq (1+\epsilon)\|\AA - \AA_k \|^2_F$.

The key observation they make is that negative-type metrics can be realized as the distances corresponding to a point set $\mathcal{P}=\{x_1, x_2, \ldots x_n\}$ such that $\AA_{i,j}= \|x_i - x_j\|^2_2 = \|x_i \|^2_2 + \| x_j\|^2_2 - 2\langle x_i,x_j\rangle$. Therefore, $\AA$ admits the following decomposition: $\AA = \RR_1 + \RR_2 - 2\BB$, where for all $j\in[n]$, $(\RR_1)_{i,j}=\|x_i \|^2_2$, $\RR_2 = \RR_1^{\top}$ and $\BB$ is PSD. Observe that query access to $\AA$ suffices to obtain query access to $\BB$ by simply assuming w.l.o.g. that $x_1$ is centered at the origin and the $i$-th entry in the first row corresponds to $\|x_i\|^2_2$. Therefore, any PSD low-rank approximation algorithm can be simulated on the matrix $\BB$ by only having query access to $\AA$. 
Bakshi and Woodruff show that obtaining the low-rank approximation for $\BB$ and appending the column span of $\RR_1$ and $\RR_2$ to it results in a rank-$(k+4)$ bi-criteria approximation to $\AA$. The bi-criteria algorithm can be improved to $k+2$ using Cauchy's Interlacing Theorem \cite{fisk2005very} and observing $\RR_1, \RR_2$ are rank-$1$ updates to $\BB$, but this seems to be the limit of such approaches. 

We show here that our \textsf{SF} projection framework can be used to obtain a sample-optimal algorithm for negative-type metrics, and the bi-criteria approximation is not necessary. Recall, from our discussion above, that an \textsf{SF} projection for $\AA$ suffices to obtain a low-rank approximation for $\AA$. Our key observation is that we can use the techniques we developed for PSD matrices to obtain an \textsf{SF} projection, $\QQ \QQ^T$ , for $\BB$ (in the decomposition above), to which we append the column span of $\RR_1, \RR_2$ to $\QQ$, and the resulting projection (denoted by $\mathbf{\Omega}$) is an \textsf{SF} projection for $\AA$. To see this, observe, $\|\AA - \mathbf{\Omega} \AA_k\mathbf{\Omega} \|^2_F = \|\AA -\AA_k \|^2_F + \|\AA_k - \mathbf{\Omega}\AA_k\mathbf{\Omega} \|^2_F + 2 \trace{(\AA-\AA_k)(\II-\mathbf{\Omega})\AA_k\mathbf{\Omega}}$. A simple calculation using Von-Neumann's trace inequality bounds $\|\AA_k - \mathbf{\Omega}\AA_k\mathbf{\Omega} \|^2_F + 2 \trace{(\AA-\AA_k)(\II-\mathbf{\Omega})\AA_k\mathbf{\Omega}}$ by $O(k\|\AA(\II -\mathbf{\Omega}) \|^2_2)$. Since $\mathbf{\Omega}$ spans $\RR_1$ and $\RR_2$, and is an  \textsf{SF} projection for $\BB$, we can bound the above cost by $O(\epsilon/k)\|\BB- \BB_{k+2}\|_F$. It is easy to see that $\|\AA - \AA_k \|^2_F = O(\|\BB - \BB_{k+2} \|^2_F)$ and therefore, we conclude $\|\AA - \mathbf{\Omega} \AA_k\mathbf{\Omega} \|^2_F \leq (1 +O(\epsilon)) \| \AA - \AA_k \|^2_F$ (see Lemma \ref{lem:struct_proj_distance_matrix} for details). Subsequently, we use the sublinear algorithm we developed for PSD matrices to obtain a low-rank approximation for $\AA$.

\paragraph{Ridge Regression.} Our techniques also naturally extend to ridge regression, when the design matrix is PSD. This connection was originally outlined by Musco and Woodruff and they obtain sublinear time algorithms for solving ridge regression, parametrized by the statistical dimension $s_{\lambda}$. 
At a high level, we compute a rank-$(s_{\lambda}/\epsilon^2)$ 
spectral approximation to the input and solve ridge regression on the resulting matrix, 
i.e., given a PSD matrix $\AA$, we compute a low-rank matrix $\BB$ such that $\|\AA - \BB \|^2_2 \leq O(\epsilon/k)\|\AA - \AA_k \|^2_F$.
Further, we observe that the low-rank matrix is in fact a coreset for the input as it simultaneously preserves the cost of all $x$ and $y$. 

We then obtain a matching query lower bound for constructing coresets for ridge regression. Our lower bound proceeds by showing that a coreset can output a low-rank approximation on the instance of Musco and Woodruff with a stronger quadratic, rather than a linear dependence on $\epsilon$. Intuitively, the hard instance has multiple principle submatrices of all $1$s placed randomly over the matrix. Since a coreset simultaneously preserves the ridge regression cost for all $x, y$, it suffices to query the coreset on tuples of (scaled) eigenvectors and learn the positions of the blocks. However, a priori, we do not know what the eingenvectors of $\AA$ are. Instead, we query the coreset on every vector with a bounded support, and pick all vectors with small regression cost. We show that our resulting set only contains vectors which do not overlap much on the locations of the hidden blocks and we show this suffices.

\subsection{Robust Low-Rank Approximation}
The robustness model we consider is as follows: we begin with an $n \times n$ PSD matrix $\AA$. An adversary is then allowed to arbitrarily corrupt $\AA$ by adding a perturbation matrix $\NN$ such that $\|\NN\|^2_F \leq \eta \|\AA\|^2_F$ and for all $i \in[j]$, $\| \NN_{i,*}\|^2_2\leq c \|\AA_{i,*}\|^2_2$, for a fixed constant $c$. Note, while the adversary is unrestricted in the entries of $\AA$ that it corrupts, the Frobenius norm of the corruption is bounded in terms of the Frobenius norm of $\AA$ and the corruption is well-spread. The motivation for considering such a model is that many matrices that we observe in practice might be close but not exactly PSD, for instance, small perturbations to PSD matrices. 

It is impossible to obtain a relative-error low-rank approximation in this setting, since we cannot even identify $\AA$ after querying all $n^2$ entries of $\AA+\NN$. To see this, consider the case where $\AA$ is rank-$k$, and observe that a relative-error algorithm requires identifying $\AA$ exactly. 
However, by querying all entries of $\AA+\NN$, we can determine the row norms exactly. Therefore, we can run the algorithm of Frieze-Kannan-Vempala \cite{fkv04} to obtain a rank-$k$ matrix $\XX \YY^{\top}$ (in factored form) such that with probability at least $99/100$,

\begin{equation}
\label{eqn:rob1}
    \begin{split}
        \|\AA + \NN - \XX \YY^{\top}  \|^2_F &\leq \|\AA + \NN - (\AA + \NN)_k \|^2_F + \epsilon\|\AA + \NN \|^2_F \\
        & \leq \|\AA + \NN - \AA_k \|^2_F + \epsilon\|\AA + \NN \|^2_F \\
        & \leq \|\AA - \AA_k \|^2_F  + \|\NN\|^2_F + 2\langle\AA - \AA_k, \NN \rangle + (3+\eta)\epsilon\|\AA \|^2_F \\
        & \leq \|\AA - \AA_k \|^2_F + O(\epsilon+\sqrt{\eta}) \|\AA\|^2_F
    \end{split}
\end{equation}
where the second inequality follows from $(\AA+\NN)_k$ being the best rank-$k$ approximation to $\AA+\NN$ and $\AA_k$ is any other rank-$k$ matrix. The third inequality uses $\|\AA +\NN\|^2_F \leq 2(\|\AA\|^2_F+\|\NN\|^2_F)$, which follows from $\ell^2_2$ distance satisfying triangle-inequality up to a factor of $2$. The last inequaliity uses Cauchy-Schwarz on $2|\langle\AA - \AA_k, \NN \rangle| \leq 2\|\AA \|_F\cdot \|\NN \|_F \leq 2\sqrt{\eta}\|\AA\|^2_F$, which follows from the assumption on $\NN$. Additionally 
\begin{equation}
\label{eqn:rob2}
    \begin{split}
        \|\AA + \NN - \XX \YY^{\top}  \|^2_F &= \|\AA  - \XX \YY^{\top}\|^2_F + \| \NN \|^2_F +2 \langle\AA -\XX\YY^{\top} , \NN\rangle \\
        & \geq \|\AA  - \XX \YY^{\top}\|^2_F  - 2\sqrt{\eta}\|\AA \|^2_F
    \end{split}
\end{equation}
Combining Equations \ref{eqn:rob1} and \ref{eqn:rob2}, we have $
\|\AA  - \XX \YY^{\top}\|^2_F \leq \|\AA - \AA_k \|^2_F   + O(\epsilon+ \sqrt{\eta})\|\AA \|^2_F$.
While this algorithm is far from optimal in terms of sample complexity, it indicates that relaxing our guarantees to additive-error is amenable to robust algorithms and indicates why we pick up a $\sqrt{\eta}$ term. 
The central question we focus on in this section is whether there exists a \textit{robust sublinear time and query algorithm} to obtain an additive-error low-rank approximation for PSD matrices. 

We begin by showing a sample complexity lower bound if $\AA$ is an arbitrary PSD matrix. The intuition from the relative-error setting still applies and the diagonal entries are crucial for sublinear algorithms. 
In tune with this intuition, the adversary corrupts large diagonal entries to decrease their magnitude and thus obfuscate rows that contain large off-diagonal entries. We therefore parameterize our lower bound and algorithms by the largest ratio between a diagonal entry of $\AA$ and $\AA+\NN$, denoted by $\phi_{\max} = \max_{j \in [n]} \AA_{j,j} / |(\AA+\NN)_{j,j}|$.  Recall, we obtain the following lower bound:

\vspace{0.1in} 
\noindent \textbf{Theorem \ref{thm:robust_lb}.} \textit{(Informal lower bound.) 
Let $\epsilon>\eta>0$.
Given $\AA + \NN$ such that $\AA$ is PSD and  $\NN$ is a corruption matrix as defined above, any randomized algorithm, that with probability at least $2/3$, outputs a rank-$k$ approximation up to additive error $(\epsilon+\eta)\|\AA \|^2_F$ must read  $\Omega\left( \phi_{\max}^2 nk/\epsilon \right)$ entries of $\AA+\NN$.
}
\vspace{0.1in} 

In our hard instance, we have a block matrix $\AA$, where we place a random $\epsilon/\eta \times \epsilon/\eta$, rank-$1$, non-contiguous block $\BB_1$ such that each entry in the block is $\sqrt{\eta^2 n/\epsilon}$ and the remaining matrix has $1$s on the diagonals and $0$s everywhere else. It is easy to see this matrix is PSD.
We observe that the block $\BB_1$ contributes an $\epsilon$-fraction of the Frobenius norm of $\AA$, and the $\ell^2_2$ norm of the diagonals is an $\eta$-fraction of the Frobenius norm of $\AA$. Therefore, the adversary can afford to corrupt all the diagonal entries in $\BB_1$ and set them to be $1$. Such a perturbation is feasible in our model and successfully obfuscates the large off-diagonal entries. Note, for this perturbation $\phi^2_{max} = \eta^2 n/\epsilon $.

Let the resulting matrix be denoted by $\AA + \NN$. Here, we observe any $\epsilon$-additive-error low-rank approximation cannot ignore the block $\BB_1$.  Since the diagonals of $\AA + \NN$ now provide no information about the off-diagonal entries, any algorithm that correctly outputs a low-rank approximation for both $\AA +\NN$ and $\II$ must detect at least one entry in $\BB_1$. Since $\BB_1$ has  $\epsilon^2/\eta^2$ non-zeros, any algorithm must query $\Omega(\eta^2n^2/\epsilon^2) = \Omega(\phi^2_{\max}n/\epsilon)$ entries to detect one entry. To obtain a linear dependence on $k$, we simply create $k$ independent copies of $\BB_1$. 

\vspace{0.1in}
\textbf{Robust Algorithm.}
Next, we focus on a robust, additive-error low-rank approximation algorithm, where the sample complexity is parameterized by $\phi_{\max}$. 
We begin by introducing a new sampling procedure to construct \textit{projection-cost preserving sketches}. Our construction is simple to state: we sample each column proportional to the corresponding diagonal entry. Computing these sampling probabilities \textit{exactly} requires reading only $n$ entries in $\AA + \NN$. We show that sampling $\widetilde{O}\left(\phi_{\max}^2 \sqrt{n}k^2/\epsilon^2 \right)$ columns proportional to this distribution preserves the projection of the columns of $\AA$ onto the orthogonal complement of any rank-$k$ subspace, up to additive error $(\epsilon+ \sqrt{\eta})\|\AA \|^2_F$.  

\vspace{0.1in} 
\noindent \textbf{Theorem \ref{thm:pcp_main_col}.} \textit{(Informal Robust Column PCP.)
Let $\AA+\NN$ be an $n \times n$ matrix following the assumptions of our noise model. Let $k \in[n]$ and $\epsilon , \sqrt{\eta}>0$. Let $q = \{q_1, q_2 \ldots q_n\}$ be a probability distribution over the columns of $\AA$ such that $q_j = (\AA+\NN)_{j,j}/\trace{\AA+\NN}$.  Construct $\CC$ by sampling $\widetilde{O}(\phi^2_{\max}\sqrt{n}k^2/\epsilon^2)$ columns of $\AA+\NN$ proportional to $q$ and rescaling appropriately. Then, with probability at least $1-c$, for any rank-$k$ orthogonal projection $\XX$, 
\[
\| \CC - \XX\CC \|^2_F = \| \AA - \XX\AA \|^2_F \pm (\epsilon+\sqrt{\eta})\| \AA \|^2_F
\]
}
We note that all prior PCP constructions work in the noiseless setting.
As a comparison, the construction of Cohen et. al. \cite{cohenmm17} works for arbitrary $\AA$, but requires $nnz(\AA)$ time and queries to compute the approximate ridge-leverage scores of $\AA$.  Musco and Musco  \cite{musco2017recursive} describe how to approximately compute the ridge leverage scores of $\AA^{1/2}$ (if $\AA$ is PSD) using the Nystrom approximation,  where $\AA = \AA^{1/2}\cdot \AA^{1/2}$. Musco and Woodruff \cite{mw17} use this method to compute the ridge leverage scores of $\AA^{1/2}$ with $\Theta(nk)$ queries and show that the this provides a $(\sqrt{n/k})$-approximation to the ridge leverage scores of $\AA$. We note that the guarantees obtained by \cite{cohenmm17,mw17} are relative error, as opposed to the additive error guarantee in the theorem above. Finally, Bakshi and Woodruff \cite{bakshi2018sublinear} provide an additive-error sublinear time construction for \textit{distance matrices} by sampling proportional to column norms. In all the aforementioned constructions, computing the sampling distribution is a non-trivial task, whereas we simply sample proportional to the diagonal entries.     

We observe that we sample columns of $\AA+\NN$, to obtain $\CC$ which is an unbiased estimator for $\|\AA+\NN\|^2_F$. 
The main technical challenge in our construction is to relate the cost of rank-$k$ projections for the column space of $\AA$ to that of $\CC$, while obtaining an optimal dependence on $n$ and $k$. Note, while we do not obtain the correct dependence on $\epsilon$, we do not have to explicitly compute all of $\CC$, only a subset of it.   

We then extend the diagonal sampling algorithm to construct a robust row 
PCP for the matrix $\CC$. We note that the 
construction for $\AA$ does not immediately give a row PCP for $\CC$ since $\CC$ is no longer a corrupted PSD matrix or even a square matrix, and thus there is no notion of a diagonal. Here, all previous approaches to construct a PCP with a 
sublinear number of queries hit a roadblock, since the matrix $\CC$ need not have any well-defined structure apart from being a scaled subset of the columns of the 
original corrupted PSD matrix $\AA+\NN$. However, we show that sampling rows of $\CC$
proportional to the diagonal entries of $\AA+\NN$ results in a row PCP for $\CC$. 

\vspace{0.1in} 
\noindent \textbf{Theorem \ref{thm:pcp_main_row}.} \textit{(Informal Robust Row PCP.)
Let $\AA+\NN$ be an $n \times n$ matrix corresponding to our noise model and let $\CC$ be a column PCP for $\AA$ as defined above. Let $p = \{p_1, p_2 \ldots p_n\}$ be a probability distribution over the rows of $\CC$ such that $p_j = (\AA+\NN)_{j,j}/\trace{\AA+\NN}$. Construct $\RR$ by sampling $\widetilde{O}(\phi_{\max}\sqrt{n}k^2/\epsilon^2)$ rows of $\CC$ proportional to $p$ and scaling appropriately. With probability at least $1-c$, for any rank-$k$ orthogonal projection $\XX$,
\[
\| \RR - \RR\XX\|^2_F = \| \CC - \CC\XX\|^2_F \pm (\epsilon +\sqrt{\eta} )\| \AA \|^2_F
\]
}

For our algorithm, we begin by constructing column and row PCPs of $\AA+\NN$, to obtain a $t \times t$ matrix $\RR$, where $t = \widetilde{O}(\phi_{\max} \sqrt{n}k^2/\epsilon^2)$. Instead of reading the entire matrix, we uniformly sample $\epsilon^3 t/k^3$ entries in each row of $\RR$, and query these entries. Note, this corresponds to reading $\epsilon^3 t^2/k^3 = \widetilde{O}(\phi^2_{\max}nk/\epsilon)$ entries in $\AA+\NN$. Ideally we would want to estimate the $\ell^2_2$ norms of each row of $\RR$ to then use a result of Frieze-Kannan-Vempala to obtain a low-rank approximation for $\RR$ \cite{fkv04}. 
It is well known that to recover a low-rank approximation for $\RR$, one can sample rows of $\RR$ proportional to row norm estimates, denoted by $\mathcal{Y}_i$ \cite{fkv04}. As shown in \cite{indyk2019sample} the following two conditions are a relaxation of those required in \cite{fkv04}, and suffice to obtain an additive error low-rank approximation :   
\begin{enumerate}
	\item For all $i \in [t]$, the corresponding estimate over-estimates the row norm of $\RR_{i,*}$, i.e., $\mathcal{Y}_i \geq \|\RR_{i,*} \|^2_2$.
	\item The sum of the over-estimates is not too much larger than the Frobenius norm of the matrix, i.e., $\sum_{i \in [t]} \mathcal{Y}_i \leq \phi^2_{\max} n/t \| \RR \|^2_F$
\end{enumerate}
If the two conditions are satisfied, Frieze-Kannan-Vempala implies sampling $s$ rows of $\RR$ proportional to $\mathcal{Y}_i$ results in an $s \times t$ matrix $\SS$ such that the row space of $\SS$ contains a good rank-$k$ approximation, where $s = O(\phi^2_{\max}nk/\epsilon t)$, which matches our desired sample complexity. 
We show that, unfortunately, such a guarantee is not possible even in the uncorrupted case, where $\NN=0$ and $\phi_{\max}=1$. Intuitively, $\RR$ may be a sparse matrix such that many rows have large norm, and uniform sampling cannot obtain concentration for all such rows, as required by the aforementioned conditions. 

Instead, we settle for a weaker statement, where we show that the estimator obtained by uniform sampling in each row is accurate with $o(1)$ probability. At a high level, we show that we can design a sampling process that is statistically close to $\ell^2_2$ sampling described by Frieze-Kannan-Vempala \cite{fkv04}. We then open up the analyzes of Frieze-Kannan-Vempala and show that our sampling process suffices to recover the low-rank approximation guarantee. Given the flurry of recent work in quantum computing \cite{kerenidis2016quantum, chia2018quantum, tang2019quantum, rebentrost2018quantum, gilyen2018quantum} that uses Frieze-Kannan-Vempala $\ell_2^2$ sampling as a key algorithmic primitive, our analysis may be of independent interest. 

\vspace{0.1in} 
\noindent \textbf{Lemma \ref{lem:large_norm}.} \textit{(Informal Estimation of Row Norms.)
Let $\RR \in \mathbb{R}^{t \times t }$ be the row PCP as defined above. For all $i \in [t]$ let $\X_i = \sum_{j \in [\epsilon^3 t/k^3]} \X_{i,j}$ such that $\X_{i,j} = k^3\RR^2_{i,j'}/\epsilon^3$ with probability $1/t$, for all $j' \in [t]$. Then, for all $i \in [t]$, $\X_i = \left(1\pm 0.1\right)\|\RR_{i,*}\|^2_2$ with probability at least $\min(\|\RR_{i,*} \|^2_2 k/\epsilon n,1)$.
}
\vspace{0.1in}

We now face two major challenges: first, the probability with which 
the estimators are accurate is too small to even detect all rows with norm larger than $\phi_{\max}^2 n\| \RR \|^2_F/t^2$, and second, there is no small query certificate for when an estimator is accurate in 
estimating the row norms. Therefore, we cannot even identify the rows
where we obtain an accurate estimate of norm.   

To address the first issue, we make the crucial observation that while we cannot estimate the norm of each row accurately, we can hope to sample the row with the same probability as Frieze-Kannan-Vempala \cite{fkv04}. Recall, their algorithm requires sampling row $\RR_{i,*}$ with probability at least $\|\RR_{i,*}\|^2_2/\|\RR\|^2_F$, which matches the probability in Lemma \ref{lem:large_norm}. Therefore, we can focus on designing a weaker notion of identifiability, that may potentially include extra rows.  

We begin by partitioning the rows of $\RR$ into two sets. Let $\mathcal{H} = \left\{ i \textrm{ }\big|\textrm{ } \|\RR_{i,*} \|^2_2 \geq \phi_{\max}^2n/t^2 \|\RR \|^2_F \right\}$ be the set of heavy rows and $[t] \setminus \mathcal{H}$ be the remaining rows. Note, $|\mathcal{H}| = O\hspace{-0.1cm}\left(t^2/\phi_{\max}^2n\right) =O\hspace{-0.1cm}\left(k^4 \log^4(n)/\epsilon^4 \right)$. 
We then condition on our estimator having norm at least $\phi_{\max}^2n\|\RR \|^2_F/t^2$. Conditioned on this event, we sample the corresponding row of $\RR$ with probability $1$. As before, we want to prevent sampling too many spurious rows, but we show only a subset of the rows in $\mathcal{H}$ satisfy this condition. 
This ensures we identify rows in $\mathcal{H}$ with the right probability. 
For all the remaining rows, we know the norm is at most 
$\phi_{\max}^2n/t^2\|\RR\|^2_F$. We show that uniformly sampling $\phi_{\max}^2n/t$ such rows suffices to simulate row norm sampling. 

We then open up the analysis of 
Frieze-Kannan-Vemapala to show that the above sampling procedure suffices to bound the overall variance, resulting in a relaxation of the conditions required to obtain an additive error low-rank approximation to $\RR$. 
Once we compute a good low-rank approximation for $\RR$ we can follow the approach of \cite{cohenmm17,mw17,bakshi2018sublinear}, where we set up two regression problems, and use the sketch and solve paradigm to compute a low-rank approximation for $\AA$, culminating in Theorem \ref{thm:main_thm}.

For corrupted correlation matrices, we observe that the true uncorrupted matrix has all diagonal entries equal to $1$. Therefore, we can discard the diagonal entries of $\AA + \NN$ and assume they are $1$. 
In this case, no matter what the adversary does to the diagonal,  $\phi_{\max}=1$ and we obtain an  $\widetilde{O}(nk/\epsilon)$ query algorithm that satisfies the above guarantee. Further, we show a matching sample complexity lower bound of $\Omega(nk/\epsilon)$, to obtain $\epsilon$-additive-error, even in the presence of no noise. 


\section{Preliminaries and Notation}
Given an $m \times n$ matrix $\AA$ with rank $r$, we can compute its 
singular value decomposition, denoted by $\texttt{SVD}(\AA) = \UU 
\mathbf{\Sigma} \VV^{\top}$, such that $\UU$ is an $m \times r$ matrix with 
orthonormal columns, $\VV^{\top}$ is an $r \times n$ matrix with orthonormal 
rows and $\mathbf{\Sigma}$ is an $r \times r$ diagonal matrix. The entries 
along the diagonal are the singular values of $\AA$, denoted by 
$\sigma_1, \sigma_2 \ldots \sigma_r$. Given an integer $k \leq r$, we 
define the truncated singular value decomposition of $\AA$ that zeros out 
all but the top $k$ singular values of $\AA$, i.e.,  $\AA_k = \UU 
\mathbf{\Sigma}_k \VV^{\top}$, where $\mathbf{\Sigma}_k$ has only $k$ non-zero 
entries along the diagonal. It is well known that the truncated SVD 
computes the best rank-$k$ approximation to $\AA$ under the Frobenius 
norm, i.e., $\AA_k = \min_{\textrm{rank}(\XX)\leq k} \| \AA - \XX\|_F$.
More generally, for any matrix $\MM$, we use the notation $\MM_k$ and 
$\MM_{\setminus k}$ to denote the first $k$ components and all but the 
first $k$ components respectively. We use $\MM_{i,*}$ and $\MM_{*,j}$ to 
refer to the $i^{th}$ row and $j^{th}$ column of $\MM$ respectively. 
For an $n \times n$ PSD matrix $\AA$, we denote the singular (eigenvalue) decomposition by $\UU \Sig \UU^{\top}$. Further, since $\Sig_{i,i} \geq 0$, let $\AA^{1/2} = \UU \Sig^{\frac{1}{2}}\UU^{\top}$ be the square root of $\AA$. Note that $\AA_{i,j} = \langle\AA^{1/2}_{i,*}, \AA^{1/2}_{j,*} \rangle$. By Cauchy-Schwarz, for all $i,j \in[n]$, $\AA^2_{i,j} =\langle\AA^{1/2}_{i,*}, \AA^{1/2}_{j,*} \rangle^2 \leq \|\AA^{1/2}_{i,*}\|^2_2\cdot \|\AA^{1/2}_{j,*}\|^2_2 =\AA_{i,i} \cdot \AA_{j,j}$. We use $\textsf{nnz}(\AA)$ to denote the number of non-zero entries (sparsity) of $\AA$.  We use operator and spectral norm interchangeably to denote  $\|\MM \|_2 = \max_{\|y\|_2=1}\|\MM y\|_2$.  We also use the notation $\MM^{\dagger}$ to denote the Moore-Penrose pseudoinverse.

\section{Relative Error PSD Low-Rank Approximation}\label{sec:rel}

In this section, we describe our main algorithm for \textit{relative-error} PSD Low-Rank Approximation, where we query only $\widetilde{O}(nk/\epsilon)$ of the input matrix $\AA$. This improves the best known algorithm by Musco and Woodruff that queries $\widetilde{O}(nk/\epsilon^{2.5})$ and matches their query lower bound of $\Omega(nk/\epsilon)$ up to polylogarithmic factors \cite{mw17}. Formally, we prove the following: 

\begin{theorem}(Sample-Optimal PSD Low-Rank Approximation.)
\label{thm:sample_opt_psd_lra}
Given an $n \times n$ PSD matrix $\AA$, an integer $k$, and $1>\epsilon >0$, Algorithm \ref{alg:sample_opt_rel} samples $\widetilde{O}(nk/\epsilon)$ entries in $\AA$ and outputs matrices $\MM, \NN^{\top} \in \mathbb{R}^{n \times k}$ such that with probability at least $9/10$, 
\begin{equation*}
    \|\AA - \MM \NN\|^2_F \leq (1+\epsilon) \|\AA - \AA_k\|^2_F
\end{equation*}
Further, the algorithm runs in $\widetilde{O}(n(k/\epsilon)^{\omega-1} +  (k/\epsilon^3)^\omega)$ time. 
\end{theorem}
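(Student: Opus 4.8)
The plan is to run the argument in two stages: (i) produce an $(\epsilon,k)$ Spectral--Frobenius (SF) projection $\QQ\QQ^{\top}$ for $\AA$ using \emph{weak} projection-cost-preserving sketches, and (ii) convert such a projection into an explicit rank-$k$ factorization via Theorem~\ref{thm:struc_proj_to_lra}, checking throughout that every step reads only $\widetilde{O}(nk/\epsilon)$ entries of $\AA$ and costs the claimed time.

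For stage (i), first I would invoke the Musco--Woodruff ridge-leverage machinery with accuracy parameter $\epsilon_0 = 0.1$ and rank parameter $k/\epsilon$. Sampling $\widetilde{O}(\sqrt{nk/\epsilon})$ columns of $\AA$ proportional to the rank-$(k/\epsilon)$ ridge leverage scores of $\AA^{1/2}$ (oversampled by a $\sqrt{n\epsilon/k}$ factor, computable from $\mathrm{diag}(\AA)$) produces a column PCP $\CC \in \mathbb{R}^{n \times \widetilde{O}(\sqrt{nk/\epsilon})}$ satisfying the mixed SF guarantee \eqref{eqn:weak_pcp1}, and sampling $\widetilde{O}(\sqrt{nk/\epsilon})$ rows of $\CC$ from the same distribution produces a row PCP $\RR$ satisfying \eqref{eqn:weak_pcp2}; reading all of $\RR$ costs $\widetilde{O}(nk/\epsilon)$ entries of $\AA$. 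Running the input-sparsity spectral LRA routine of \cite{cohen2015dimensionality} (Lemma~\ref{lem:input_sparsity_spectral_lra}) on $\RR$ yields an orthonormal $\ZZ$ with $\|\RR - \RR\ZZ\ZZ^{\top}\|_2^2 \le \frac{\epsilon}{k}\|\RR - \RR_{k/\epsilon}\|_F^2$; chaining this through \eqref{eqn:weak_pcp2} together with the fact that PCPs preserve the Frobenius low-rank cost up to a constant (the computation in \eqref{eqn:sf_c}) shows $\ZZ\ZZ^{\top}$ is an $(\epsilon,k)$-SF projection for the \emph{row} space of $\CC$. To move to the \emph{column} space of $\CC$ --- which, by \eqref{eqn:weak_pcp1}, transfers to $\AA$ --- I solve the Spectral Regression problem $\min_{\WW}\|\CC - \WW\ZZ^{\top}\|_2^2$ approximately: take the leverage-score sampling matrix $\SS$ that selects $\widetilde{O}(k/\epsilon)$ columns of $\CC$ proportional to the leverage scores of $\ZZ^{\top}$ (known, since $\ZZ$ has orthonormal columns), set $\widehat{\WW} = \argmin_{\WW}\|\CC\SS - \WW\ZZ^{\top}\SS\|_2^2$, and let $\QQ$ be an orthonormal basis for the column span of $\widehat{\WW}$. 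Reading $\CC\SS$ costs $\widetilde{O}(nk/\epsilon)$ entries of $\AA$.

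The crux, and the step I expect to be the main obstacle, is proving that this sketch $\SS$, which ignores all information about $\CC$, satisfies the weak guarantee of Theorem~\ref{thm:spectral_regression}, i.e.\ $\|\CC - \widehat{\WW}\ZZ^{\top}\|_2^2 = O(1)\big(\mathrm{OPT} + \tfrac{\epsilon}{k}\|\CC - \CC_{k/\epsilon}\|_F^2\big)$ with $\mathrm{OPT} = \min_{\WW}\|\CC - \WW\ZZ^{\top}\|_2^2$ --- this suffices because the extra term is already absorbed by \eqref{eqn:weak_pcp1}. By the usual normal-equations decomposition this reduces to the weak approximate matrix product of Lemma~\ref{lem:weak_amm}: a spectral-norm bound on $\CC^{*}\SS\SS^{\top}\ZZ^{\top}$ where $\CC^{*} = \CC(\II - \PP_{\ZZ^{\top}})$. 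Since $\ZZ^{\top}$ encodes nothing about the spectrum of $\CC^{*}$, standard one- or two-sided AMM bounds do not apply, and I would instead use the Rudelson--Vershynin spectral-decay bound for uniform row subsampling \cite{rudelson2007sampling} after two reductions: (a) establish its rectangular strengthening \eqref{eqn:spectral_decay_rv}, gaining the $1/\sqrt{b}$ factor with $b = \lceil n/m \rceil$; and (b) bucket the rows of $\CC$ into $O(\log n)$ groups of roughly equal squared norm and, on each bucket, construct a coupling between leverage-score sampling (proportional to the column norms of $\ZZ^{\top}$) and uniform sampling so that the former's expected spectral norm exceeds the latter's by at most a polylogarithmic factor. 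The delicate point is (b): a leverage-score sample can be arbitrarily far from uniform when the column norms of $\ZZ^{\top}$ are lopsided, so the coupling must be carried out bucket-by-bucket.

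For stage (ii), with $\QQ \in \mathbb{R}^{n \times k/\epsilon}$ an $(\epsilon,k)$-SF projection for $\AA$, the SF property gives $\|\AA - \QQ\QQ^{\top}\AA_k\QQ\QQ^{\top}\|_F^2 \le (1+\epsilon)\|\AA - \AA_k\|_F^2$, so it suffices to approximately solve the Generalized Low-Rank Approximation problem $\min_{\mathrm{rank}(\XX)\le k}\|\AA - \QQ\XX\QQ^{\top}\|_F^2$ (which has the rank-$k$ matrix $\QQ^{\top}\AA_k\QQ$ as a feasible point). I build leverage-score sampling matrices $\SS,\TT$ for $\QQ$ (scores known), sample $\poly(k/\epsilon)$ rows and columns, read the corresponding $\poly(k/\epsilon)$ entries of $\AA$, and solve $\widehat{\XX} = \argmin_{\mathrm{rank}(\XX)\le k}\|\SS\AA\TT - \SS\QQ\XX\QQ^{\top}\TT\|_F^2$ in closed form via Theorem~\ref{thm:gen_lra}; because the sketched cost carries an unapproximable additive shift I do \emph{not} argue the sketched optimum is close to the true optimum, but instead apply the affine-embedding Lemma~\ref{lem:affine_embedding} once on each side to conclude $\|\AA - \QQ\widehat{\XX}\QQ^{\top}\|_F^2 \le (1+O(\epsilon))\min_{\mathrm{rank}(\XX)\le k}\|\AA - \QQ\XX\QQ^{\top}\|_F^2$, which chains with the SF bound. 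Finally, to return $\MM,\NN$: take an orthonormal basis $\UU_1 \in \mathbb{R}^{n \times k}$ for the column span of $\QQ\widehat{\XX}$ and approximately solve $\min_{\VV}\|\AA - \UU_1\VV\|_F^2$ by a leverage-score sketch on the row space of $\UU_1$ as in \cite{cohenmm17,mw17,bakshi2018sublinear} (reading another $\widetilde{O}(nk/\epsilon)$ entries), returning $\MM = \UU_1$ and $\NN = \widehat{\VV}$. Rescaling $\epsilon$ by a constant yields the stated error, the queries total $\widetilde{O}(nk/\epsilon)$ across reading $\mathrm{diag}(\AA)$, forming $\RR$, forming $\CC\SS$, the Generalized LRA sketch, and the final regression, and the runtime is dominated by the spectral LRA on $\RR$ and the regression solves, giving $\widetilde{O}(n(k/\epsilon)^{\omega-1})$, plus the $(k/\epsilon^3)^{\omega}$ cost of the closed-form solve of the $\widetilde{O}(k/\epsilon^3)$-dimensional sketched Generalized LRA instance.
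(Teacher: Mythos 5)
Your proposal is correct and follows essentially the same route as the paper's proof of Theorem~\ref{thm:sample_opt_psd_lra}: constant-error, rank-$(k/\epsilon)$ column and row PCPs, an input-sparsity spectral LRA on $\RR$, the Spectral Regression step solved by sampling columns of $\CC$ proportional to the leverage scores of $\ZZ^{\top}$ with the weak spectral AMM (Lemma~\ref{lem:weak_amm}) established via the rectangular Rudelson--Vershynin bound together with row-norm bucketing and a coupling to uniform sampling, and finally the structured-projection-to-LRA machinery of Theorem~\ref{thm:struc_proj_to_lra} using the generalized LRA closed form and two affine embeddings. You also correctly identify the coupling in the weak AMM as the delicate step and account for queries and running time as in the paper.
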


We begin by defining various statistical quantities associated with a given matrix, such as the leverage and ridge-leverage scores. The leverage score of a given row measures the importance of this row in composing the row span. Leverage scores have found numerous applications in regression, preconditioning, linear programming and graph sparsification \cite{sarlos2006improved, spielman2011graph, lee2015efficient, cohen2015uniform}. In the special case of graphs, they are referred to as \textit{effective resistances}. 

\begin{definition}(\textit{Leverage Scores.})
Given a matrix $\MM \in \mathbb{R}^{n \times m}$, let $\textbf{m}_i = \MM_{i,*}$ be the $i$-th row of $\MM$. Then, for all $i \in [n]$ the $i$-th row leverage score of $\MM$ is given by
\begin{equation*}
    \tau_i(\MM) = \textbf{m}_i (\MM^{\top} \MM)^{\dagger} \textbf{m}_i^{\top}
\end{equation*}

\end{definition}

The column leverage scores can be defined analogously. 
Note, in the special case where $\MM$ has orthonormal columns, the row leverage scores of $\MM$ are simply the $\ell^2_2$ norms of the rows i.e., $\tau_i(\MM)=\| \textbf{m}_{i} \|^2_2$.  It is well-known that sampling rows of a matrix proportional to the leverage scores satisfies the subspace embedding property (Spectral Sparsification for Graphs) and leads to faster algorithms for $\ell_2$-norm Regression. Recall, for an $n \times m$ matrix $\AA$, a leverage score sampling matrix $\SS = \DD \mathbf{\Omega}^{\top}$, where $\DD$ is a $t \times t$ diagonal matrix and $\mathbf{\Omega}$ is an $n \times t$ sampling matrix. For all $j \in [t]$, select row index $i \in [n]$ with probability $p_i = \tau_i(\AA)/ \sum_i \tau_i(\AA)$ and set $\mathbf{\Omega}_{i,j}=1$ and $\DD_{j,j} = 1/\sqrt{t p_i}$.  

\begin{lemma}(Subspace Embedding.)
\label{lem:subspace_embedding}
Given a matrix $\AA \in \mathbb{R}^{n \times m}$, $\epsilon>0$, and a leverage score sampling matrix $\SS$ with $t =O(mlog(m)/\epsilon^2)$ rows, with probability at least $99/100$, for all $x \in \mathbb{R}^{m}$
\begin{equation*}
    \|\SS \AA x \|^2_2 = (1\pm \epsilon) \|\AA x\|^2_2
\end{equation*}
\end{lemma}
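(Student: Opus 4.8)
\medskip

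\noindent\textbf{Proof proposal.} The plan is to reduce the ``for all $x$'' multiplicative guarantee to a single spectral-norm bound on a random Gram matrix, and then apply a matrix concentration inequality. First I would pass to an orthonormal basis for the column space of $\AA$: writing the thin SVD $\AA = \UU \Sig \VV^{\top}$ with $\UU \in \mathbb{R}^{n \times r}$ having orthonormal columns and $r = \mathrm{rank}(\AA) \le m$, every vector of the form $\AA x$ equals $\UU y$ for some $y \in \mathbb{R}^r$, with $\norm{\AA x}^2_2 = \norm{y}^2_2$ and $\norm{\SS\AA x}^2_2 = y^{\top}\left(\UU^{\top}\SS^{\top}\SS\UU\right)y$. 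Hence the claimed guarantee for all $x$ is equivalent to
\[
\norm{\UU^{\top}\SS^{\top}\SS\UU - \II_r}_2 \le \epsilon .
\]
I would also record two standard facts about leverage scores: the row leverage scores of $\AA$ are the squared row norms of $\UU$, i.e.\ $\tau_i(\AA) = \norm{\UU_{i,*}}^2_2$, so the sampling probabilities are $p_i = \norm{\UU_{i,*}}^2_2 / r$; and $\sum_{i} \tau_i(\AA) = \norm{\UU}^2_F = r$.

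Next I would express the sampled Gram matrix as a sum of independent rank-one PSD matrices. If $i_1,\ldots,i_t$ are the i.i.d.\ sampled row indices, then $\UU^{\top}\SS^{\top}\SS\UU = \sum_{j=1}^{t} Y_j$ with $Y_j = \tfrac{1}{t\,p_{i_j}}\,\UU_{i_j,*}^{\top}\UU_{i_j,*}$. A one-line computation gives $\EX[Y_j] = \tfrac{1}{t}\sum_i \UU_{i,*}^{\top}\UU_{i,*} = \tfrac{1}{t}\II_r$, hence $\EX\!\left[\sum_j Y_j\right] = \II_r$. The crucial point --- and where leverage-score sampling earns its keep --- is the \emph{deterministic} bound $\norm{Y_j}_2 = \tfrac{1}{t\,p_{i_j}}\norm{\UU_{i_j,*}}^2_2 = \tfrac{r}{t}$, using $p_i = \norm{\UU_{i,*}}^2_2/r$: the possibly lopsided row norms of $\UU$ are exactly cancelled by the importance weights, so every summand is uniformly small in spectral norm.

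Finally I would invoke a matrix Chernoff bound for sums of independent PSD matrices (e.g.\ Tropp): if the $Y_j$ are independent PSD with $\lambda_{\max}(Y_j) \le R$ and $\sum_j \EX[Y_j] = \II_r$, then $\lambda_{\max}\!\left(\sum_j Y_j\right) \ge 1+\epsilon$ and $\lambda_{\min}\!\left(\sum_j Y_j\right) \le 1-\epsilon$ each hold with probability at most $r\, e^{-\epsilon^2/(3R)}$ for $\epsilon \in (0,1)$. Plugging in $R = r/t$ gives total failure probability at most $2r\, e^{-\epsilon^2 t/(3r)} \le 2m\, e^{-\epsilon^2 t/(3m)}$, which drops below $1/100$ once $t = C\, m\log(m)/\epsilon^2$ for a suitable absolute constant $C$. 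On the complementary event we have $\norm{\sum_j Y_j - \II_r}_2 \le \epsilon$, which is exactly the reduced statement. I do not anticipate a genuine obstacle: the only care needed is (i) the bookkeeping when $\AA$ is rank-deficient --- work with $r \le m$ throughout, which only helps the bound on $t$ --- and (ii) citing the matrix concentration inequality in precisely the form used. If one prefers to avoid matrix Chernoff, an $\epsilon$-net argument on the unit sphere of $\mathbb{R}^r$ together with a scalar Bernstein bound in each fixed direction and a union bound also works, at the cost of a slightly different logarithmic factor; I would take the matrix Chernoff route as the cleanest.
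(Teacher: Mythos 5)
Your proposal is correct and follows essentially the same route the paper takes: the paper disposes of this lemma by noting it "simply follows from an application of the Matrix Chernoff bound" (i.e., that leverage-score sampling preserves the singular values of $\AA$ up to $1\pm\epsilon$), which is precisely the reduction to $\|\UU^{\top}\SS^{\top}\SS\UU - \II_r\|_2 \le \epsilon$ that you carry out in detail. Your write-up just supplies the standard details (rank-one decomposition, the uniform bound $r/t$ on each summand, and the matrix Chernoff tail) that the paper delegates to the cited survey.
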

This simply follows from an application of the Matrix Chernoff bound. Observe that the sketch preserves all the singular values of $\AA$ up to a factor of $1\pm\epsilon$. We refer the reader to a recent survey for more details \cite{w14}. 
Next, we recall that leverage score sampling results in a fast algorithm for regression. 

\begin{lemma}(Fast Regression, Theorem 38 \cite{clarkson2013low}.) 
\label{lem:fast_regression}
Given matrices $\AA \in \mathbb{R}^{n \times m}, \BB \in \mathbb{R}^{n \times d}$ such that rank$(\AA) \leq r$ and $\epsilon>0$, sample $O(r\log(r) + r/\epsilon)$ rows of $\AA, \BB$ proportional to the leverage scores of $\AA$ to obtain a sketch $\SS$ such that $\YY^* = \arg\min_{\YY} \| \SS \AA \YY - \SS \BB \|^2_F$. Then, with probability at least $1-c$,
\begin{equation*}
    \| \AA \YY^* - \BB \|^2_F \leq (1+\epsilon)\min_{\YY}\|\AA\YY -\BB \|^2_F
\end{equation*}
for a fixed small constant $c$.
Further, the time to compute $\YY^*$ is $O(\textsf{nnz}(\AA)\log(r/\epsilon) + (n+d)(r/\epsilon)^{\omega-1} + \textrm{poly}(r/\epsilon))$.
\end{lemma}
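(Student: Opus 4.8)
This is essentially Theorem 38 of \cite{clarkson2013low}, and the plan is to run the standard \emph{sketch-and-solve} analysis for $\ell_2$ regression, with the only twist being that the sketch $\SS$ here is a leverage-score sampling matrix rather than an oblivious embedding. Let $\UU$ be an orthonormal basis for the column space of $\AA$ (so $\UU$ has $\mathrm{rank}(\AA)\le r$ columns), let $\YY^{\mathrm{opt}} = \arg\min_{\YY}\norm{\AA\YY - \BB}_F^2$, and let $\BB^{\perp} = \BB - \AA\YY^{\mathrm{opt}} = (\II - \UU\UU^{\top})\BB$, so that $\OPT^2 := \norm{\BB^{\perp}}_F^2$ and, crucially, $\UU^{\top}\BB^{\perp} = 0$. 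Writing $\Delta = \YY^{*} - \YY^{\mathrm{opt}}$ and using that $\AA\Delta$ lies in the column space of $\AA$ while $\BB^{\perp}$ is orthogonal to it, Pythagoras gives $\norm{\AA\YY^{*} - \BB}_F^2 = \norm{\AA\Delta}_F^2 + \OPT^2$, so it suffices to prove $\norm{\AA\Delta}_F^2 \le O(\epsilon)\,\OPT^2$.

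For that I would use two properties of the leverage-score sampling matrix $\SS$, both holding with constant probability once $\SS$ has $t = O(r\log r + r/\epsilon)$ rows. First, a constant-factor subspace embedding (Lemma \ref{lem:subspace_embedding} with error $1/2$, which needs only $O(r\log r)$ samples via the matrix Chernoff bound): $\norm{\SS\AA\Delta}_F^2 \ge \tfrac12\norm{\AA\Delta}_F^2$. Second, an approximate matrix product bound between $\SS\UU$ and $\SS\BB^{\perp}$: since $\sum_i\tau_i(\AA) = \mathrm{rank}(\AA) \le r$, leverage-score sampling of $t = O(r/\epsilon)$ rows makes $\expecf{}{\norm{(\SS\UU)^{\top}\SS\BB^{\perp} - \UU^{\top}\BB^{\perp}}_F^2} \le \tfrac{r}{t}\norm{\BB^{\perp}}_F^2 \le \epsilon\,\OPT^2$, and as $\UU^{\top}\BB^{\perp} = 0$ this reads $\norm{(\SS\UU)^{\top}\SS\BB^{\perp}}_F^2 \le \epsilon\,\OPT^2$ (Markov). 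Now $\YY^{*}$ satisfies the sketched normal equations $(\SS\AA)^{\top}\SS(\AA\YY^{*} - \BB) = 0$, and since $\AA\Delta = (\AA\YY^{*} - \BB) + \BB^{\perp}$ we get $(\SS\AA)^{\top}\SS\AA\Delta = (\SS\AA)^{\top}\SS\BB^{\perp}$. Writing $\AA\Delta = \UU\ZZ$ with $\ZZ = \UU^{\top}\AA\Delta$ and $\norm{\ZZ}_F = \norm{\AA\Delta}_F$,
\[
\norm{\SS\AA\Delta}_F^2 = \langle \Delta,\,(\SS\AA)^{\top}\SS\AA\Delta\rangle = \langle \SS\AA\Delta,\,\SS\BB^{\perp}\rangle = \langle \ZZ,\,(\SS\UU)^{\top}\SS\BB^{\perp}\rangle \le \norm{\AA\Delta}_F\sqrt{\epsilon}\,\OPT .
\]
Combining with the subspace embedding bound yields $\norm{\AA\Delta}_F^2 \le 4\epsilon\,\OPT^2$, hence $\norm{\AA\YY^{*} - \BB}_F^2 \le (1+4\epsilon)\OPT^2$; rescaling $\epsilon$ by a constant completes the correctness claim, and the failure probability is a constant by a union bound over the two events.

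For the running time, the two nontrivial ingredients are approximating the leverage scores of $\AA$ and then solving the small sketched regression. For the first, compose a sparse subspace embedding $\Pi$ with $\poly(r)$ rows (applied in $O(\textsf{nnz}(\AA)\log(r/\epsilon))$ time) with a thin QR/SVD of $\Pi\AA$ to get $\RR$ with $\AA\RR^{-1}$ well-conditioned, then a Johnson--Lindenstrauss map $\GG$ of width $O(\log n)$ so that $\tau_i(\AA)\approx \smallnorm{e_i^{\top}\AA\RR^{-1}\GG}_2^2$; the product $\AA(\RR^{-1}\GG)$ is again formed in $O(\textsf{nnz}(\AA)\log(r/\epsilon))$ time, after which sampling $t = O(r\log r + r/\epsilon)$ rows and reading the matching rows of $\AA,\BB$ costs $O((m+d)t)$. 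Solving $\min_{\YY}\norm{\SS\AA\YY - \SS\BB}_F^2$ is a regression whose design matrix $\SS\AA$ is $O(r/\epsilon)\times m$ of rank at most $r$: reduce its column count to $O(r)$ with a second sparse embedding on the right, compute the SVD of the resulting $\poly(r/\epsilon)$-size matrix in $\poly(r/\epsilon)$ time, and lift back and multiply through to obtain $\YY^{*}$, contributing the $(n+d)(r/\epsilon)^{\omega-1} + \poly(r/\epsilon)$ terms.

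\textbf{Main obstacle.} The delicate step is the approximate-matrix-product bound with the \emph{tight} $O(r/\epsilon)$ sample size (rather than $O(r/\epsilon^2)$): one must exploit that the target product $\UU^{\top}\BB^{\perp}$ is exactly zero, so that the AMM error is pure additive noise, and that the sampling variance is controlled by $\sum_i\tau_i(\AA)\le r$; this, together with the extra $\norm{\AA\Delta}_F$ factor that the normal equations produce on both sides (a quadratic gain), is exactly what turns $1/\epsilon^2$ into $1/\epsilon$. Everything else --- the Pythagorean split, the subspace embedding, and the running-time bookkeeping --- is routine; on the algorithmic side the only real work is the fast leverage-score approximation.
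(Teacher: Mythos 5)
The paper does not prove this lemma at all: it is imported verbatim as Theorem 38 of Clarkson--Woodruff \cite{clarkson2013low}, with only a remark that the stated running time uses an OSNAP-type construction. Your reconstruction is correct and is essentially the standard argument behind that theorem: the Pythagorean split $\|\AA\YY^*-\BB\|_F^2=\|\AA\Delta\|_F^2+\OPT^2$, a constant-factor subspace embedding from $O(r\log r)$ leverage-score samples, and the approximate-matrix-product step where the variance calculation $\expecf{}{\|(\SS\UU)^{\top}\SS\BB^{\perp}\|_F^2}\le \frac{1}{t}\sum_i \frac{\|\UU_{i,*}\|_2^2\|\BB^{\perp}_{i,*}\|_2^2}{p_i}\le\frac{r}{t}\OPT^2$ exploits $\UU^{\top}\BB^{\perp}=0$ and $\|\UU_{i,*}\|_2^2=\tau_i(\AA)$, which together with the sketched normal equations gives the quadratic gain and hence the $O(r/\epsilon)$ (rather than $O(r/\epsilon^2)$) sample bound — exactly the mechanism you flag as the delicate point. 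The only soft spot is in the running-time paragraph: applying a second sketch on the right of $\SS\AA$ produces an approximate minimizer of the sketched problem rather than the exact $\YY^*=(\SS\AA)^{\dagger}\SS\BB$, so strictly you should either compute $(\SS\AA)^{\dagger}$ directly (still within the $(n+d)(r/\epsilon)^{\omega-1}+\mathrm{poly}(r/\epsilon)$ budget since $\SS\AA$ has $O(r\log r + r/\epsilon)$ rows) or note that a further $(1+\epsilon)$-approximate solve suffices for the stated guarantee; this does not affect correctness of the lemma as used in the paper, where the leverage scores are in fact known exactly.
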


Note, the terms in the running time follow from using Cohen's construction for OSNAP \cite{cohen2016nearly}.
Leverage score sampling matrices also approximately preserve norms in affine spaces, which leads to faster algorithms for multi-response regression, i.e., $\min_{\XX} \|\AA\XX - \BB\|^2_F$, where $\BB$ now has a large number of columns. 

\begin{lemma}(Affine Embeddings, Theorem 39 \cite{clarkson2013low}.)
\label{lem:affine_embedding}
Given matrices $\AA\in \mathbb{R}^{n \times m}$ , such that $\textrm{rank}(\AA) =r$, and $\BB \in \mathbb{R}^{n \times d}$, let $\SS$ be a leverage score sampling matrix with $t = O(r/\epsilon^2)$ rows. Further, let $\XX^*$ be the optimizer for $\min_{\XX}\|\AA\XX - \BB \|^2_F$ and let $\BB^* = \AA\XX^* - \BB$. Then, with probability at least $1-c$, for all $\XX \in \mathbb{R}^{m \times d}$
\begin{equation*}
    \|\SS\AA\XX - \SS \BB \|^2_F - \|\SS\BB^*\|^2_F = (1\pm \epsilon) \|\AA \XX - \BB \|^2_F - \|\BB^* \|^2_F
\end{equation*}
for a fixed small constant $c$. 
\end{lemma}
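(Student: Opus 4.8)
The plan is to reduce to an orthonormal basis for the column span of $\AA$ and then combine two standard properties of a leverage-score sampling matrix: that it is a subspace embedding for that column span, and that it satisfies an approximate-matrix-product bound between that span and $\BB^*$. First I would write $\AA = \UU\RR$ where $\UU\in\mathbb{R}^{n\times r}$ has orthonormal columns spanning the column space of $\AA$ and $\RR = \UU^\top\AA$ has full row rank $r$; since the row leverage scores of $\AA$ depend only on its column span, sampling by the leverage scores of $\AA$ is the same as sampling by those of $\UU$, which are exactly the squared row norms $\|\UU_{i,*}\|_2^2$ (summing to $r$). Fix any $\XX$ and set $\YY = \XX - \XX^*$, so $\AA\XX - \BB = \AA\YY + \BB^*$ with $\AA\YY\in\mathrm{colspan}(\UU)$. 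First-order optimality of $\XX^*$ gives $\AA^\top\BB^* = 0$, hence $\UU^\top\BB^* = 0$ (as $\RR$ has full row rank), so $\langle\AA\YY,\BB^*\rangle_F = \trace{\YY^\top\AA^\top\BB^*} = 0$. By the Pythagorean theorem $\|\AA\XX - \BB\|_F^2 - \|\BB^*\|_F^2 = \|\AA\YY\|_F^2$, while on the sketched side $\|\SS\AA\XX - \SS\BB\|_F^2 - \|\SS\BB^*\|_F^2 = \|\SS\AA\YY\|_F^2 + 2\langle\SS\AA\YY,\SS\BB^*\rangle_F$. Thus it suffices to show the right-hand side here equals $\|\AA\YY\|_F^2$ up to $\pm\epsilon(\|\AA\YY\|_F^2 + \|\BB^*\|_F^2)$.

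Next I would control the two surviving terms. For the first, applying the subspace-embedding guarantee of Lemma~\ref{lem:subspace_embedding} to $\UU$ (whose column span has dimension $r$) with accuracy $\epsilon_0 = \Theta(\epsilon)$ and $t = O(r\log r/\epsilon_0^2)$ rows, then summing the vector bound over the columns of $\RR\YY$, gives $\|\SS\AA\YY\|_F^2 = (1\pm\epsilon_0)\|\AA\YY\|_F^2$. For the cross term, write $\langle\SS\AA\YY,\SS\BB^*\rangle_F = \trace{(\RR\YY)^\top\UU^\top\SS^\top\SS\BB^*}$ and bound it by $\|\RR\YY\|_F\cdot\|\UU^\top\SS^\top\SS\BB^*\|_F = \|\AA\YY\|_F\cdot\|\UU^\top\SS^\top\SS\BB^* - \UU^\top\BB^*\|_F$ via Cauchy--Schwarz, using $\UU^\top\BB^* = 0$ and $\|\RR\YY\|_F = \|\UU\RR\YY\|_F = \|\AA\YY\|_F$. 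Writing $\SS^\top\SS = \sum_{j=1}^t \tfrac{1}{tp_{i_j}}e_{i_j}e_{i_j}^\top$, the estimator $\UU^\top\SS^\top\SS\BB^* = \sum_{j=1}^t \tfrac{1}{tp_{i_j}}\UU_{i_j,*}^\top\BB^*_{i_j,*}$ is unbiased for $\UU^\top\BB^* = 0$, and because $p_i = \|\UU_{i,*}\|_2^2/r$ exactly cancels the $\|\UU_{i,*}\|_2^2$ factor in the second moment, $\mathbb{E}\|\UU^\top\SS^\top\SS\BB^*\|_F^2 = \tfrac1t\sum_i \tfrac{1}{p_i}\|\UU_{i,*}\|_2^2\|\BB^*_{i,*}\|_2^2 = \tfrac{r}{t}\|\BB^*\|_F^2$. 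Taking $t = O(r/\epsilon_0^2)$ and applying Markov's inequality yields $\|\UU^\top\SS^\top\SS\BB^*\|_F \le \epsilon_0\|\BB^*\|_F$ with probability at least $1-c'$.

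Finally I would assemble: on the event that both guarantees hold (union bound over the two constant-probability events, which do not depend on $\XX$), $\|\SS\AA\XX-\SS\BB\|_F^2 - \|\SS\BB^*\|_F^2 = (1\pm\epsilon_0)\|\AA\YY\|_F^2 \pm 2\epsilon_0\|\AA\YY\|_F\|\BB^*\|_F$, and AM--GM ($2\|\AA\YY\|_F\|\BB^*\|_F \le \|\AA\YY\|_F^2 + \|\BB^*\|_F^2$) turns this into $(1\pm O(\epsilon_0))\|\AA\YY\|_F^2 \pm O(\epsilon_0)\|\BB^*\|_F^2$. Substituting $\|\AA\YY\|_F^2 = \|\AA\XX-\BB\|_F^2 - \|\BB^*\|_F^2$ and choosing the hidden constant in $\epsilon_0 = \Theta(\epsilon)$ small enough matches exactly $(1\pm\epsilon)\|\AA\XX-\BB\|_F^2 - \|\BB^*\|_F^2$, simultaneously for all $\XX$. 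I expect the cross-term step to be the main obstacle: unlike the subspace embedding it is not immediate from a matrix-Chernoff bound on an $r\times r$ matrix, and $\BB^*$ is completely unstructured with possibly enormous Frobenius norm — the crucial point is precisely that sampling by the leverage scores of $\UU$, rather than uniformly or by the row norms of $\UU$, makes the variance scale as $\tfrac{r}{t}\|\BB^*\|_F^2$ and hence fits the $t = O(r/\epsilon^2)$ budget.
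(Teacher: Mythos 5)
The paper never proves this lemma — it is quoted as Theorem 39 of Clarkson--Woodruff \cite{clarkson2013low} — so there is nothing internal to compare against; what you have written is, in essence, the standard proof underlying that cited theorem, and it is correct. Your decomposition via the normal equations ($\AA^{\top}\BB^*=0$, hence $\UU^{\top}\BB^*=0$ and the Pythagorean identity $\|\AA\XX-\BB\|_F^2-\|\BB^*\|_F^2=\|\AA\YY\|_F^2$), the column-wise use of the subspace embedding for $\UU$, the second-moment bound $\expecf{}{\|\UU^{\top}\SS^{\top}\SS\BB^*\|_F^2}=\frac{r}{t}\|\BB^*\|_F^2$ exploiting the exact cancellation $p_i=\|\UU_{i,*}\|_2^2/r$, Cauchy--Schwarz on the cross term, and the final AM--GM step are all sound, and you correctly observe that both probabilistic events are independent of $\XX$, so the guarantee holds uniformly. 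One quantitative caveat: the subspace-embedding step, as you invoke it (matrix Chernoff, matching the paper's Lemma \ref{lem:subspace_embedding}), costs $t=O(r\log r/\epsilon^2)$ rows, so your argument establishes the lemma with an extra $\log r$ factor relative to the stated $O(r/\epsilon^2)$; this is harmless for how the paper uses the lemma (all of its bounds suppress polylogarithmic factors), and in fact for i.i.d.\ leverage-score sampling with replacement the log factor is unavoidable — taking $\BB=0$ makes the affine-embedding guarantee coincide with the full subspace-embedding property, which by a coupon-collector argument requires $\Omega(r\log r)$ samples — so the statement as printed is itself slightly optimistic for this sampling scheme rather than your proof being lossy.
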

An important application of the above lemma (which we use extensively) is to sketch constrained regression problems, for example, when the matrix $\XX$ has a fixed small rank. Since affine embeddings approximately preserve the cost of all affine spaces up to a fixed shift, this guarantee in particular holds for $\XX$ with small rank. Recall, an important caveat here is that the cost of the sketched problem is not a relative-error approximation to the cost of the original problem since we cannot estimate $\| \BB^*\|^2_F$ in general. However, the upshot here is that the aforementioned guarantee still suffices for optimization since the fixed shift does not change the optimizer. 

The next tool we use is input-sparsity time low-rank approximation. This was achieved by Clarkson and Woodruff \cite{clarkson2013low} and the exact dependence on $k,\epsilon$ was improved in subsequent works \cite{mm13,NN13,BDN15,cohen2016nearly}. While the standard low-rank approximation guarantee achieves relative-error under Frobenius norm, here we will require a spectral norm bound, which follows from results of \cite{cohen2015dimensionality, cohenmm17}.

\begin{lemma}(Input-Sparsity Spectral LRA \cite{cohen2015dimensionality, cohenmm17}.)
\label{lem:input_sparsity_spectral_lra}
Given a matrix $\AA \in \mathbb{R}^{n \times d}$, $\epsilon, \delta >0$ and $k \in \mathbb{N}$, let $k' =O( k/\epsilon)$. Then, there exists an algorithm that outputs a matrix $\ZZ^{\top} \in \mathbb{R}^{k' \times n}$ such that with probability at least $1-\delta$,
\begin{equation*}
    \|\AA - \AA \ZZ \ZZ^{\top} \|^2_2 \leq O\left(\frac{\epsilon}{k}\right) \| \AA - \AA_{k/\epsilon}\|^2_F
\end{equation*}
in time and query complexity $\widetilde{O}\left(\textsf{nnz}(\AA) + (n+d)\textrm{poly}(k/\epsilon \delta) \right)$.
\end{lemma}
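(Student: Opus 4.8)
The plan is to replace the exact truncated SVD by a single coarse \emph{spectral} dimension reduction followed by an exact SVD on a small matrix. Write $r \defeq k/\epsilon$, so producing $\ZZ$ amounts to finding an orthonormal $\ZZ$ of rank $k'=O(r)$ with $\norm{\AA - \AA\ZZ\ZZ^{\top}}_2^2 \le O(1/r)\norm{\AA - \AA_r}_F^2$, which is what $O(\epsilon/k)\norm{\AA-\AA_{k/\epsilon}}_F^2$ unwinds to. As a feasibility check, the exact top $Cr$ right singular vectors already suffice: then $\norm{\AA - \AA\ZZ\ZZ^{\top}}_2^2=\sigma_{Cr+1}^2(\AA)$, and since $\sigma_{r+1}^2,\dots,\sigma_{Cr}^2$ are $(C-1)r$ quantities, each at least $\sigma_{Cr+1}^2$, whose sum is at most $\norm{\AA-\AA_r}_F^2$, we get $\sigma_{Cr+1}^2(\AA)\le \tfrac1{(C-1)r}\norm{\AA-\AA_r}_F^2$. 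The lemma is this estimate made algorithmic.

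The engine is ridge leverage score sampling in the style of Cohen--Musco--Musco \cite{cohenmm17, cohen2015dimensionality}. Fix the ridge parameter $\lambda=\tfrac1r\norm{\AA-\AA_r}_F^2$ and let $\bar\tau_i=\AA_{*,i}^{\top}(\AA\AA^{\top}+\lambda\II)^{-1}\AA_{*,i}$ be the $\lambda$-ridge leverage score of the $i$-th column of $\AA$; expanding in the SVD gives $\sum_i\bar\tau_i=\sum_j \sigma_j^2/(\sigma_j^2+\lambda)\le 2r$. I would sample $s=O(r\log(r/\delta))$ columns of $\AA$ with probabilities proportional to constant-factor overestimates of the $\bar\tau_i$, rescale, and call the result $\CC\in\mathbb{R}^{n\times s}$. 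By the matrix Chernoff bound, with probability $\ge 1-\delta$ the sample yields a constant-factor spectral approximation $\CC\CC^{\top}+\lambda\II\approx_{O(1)}\AA\AA^{\top}+\lambda\II$, which for \emph{every} orthogonal projection $\mathbf{P}$ gives
\begin{equation*}
\tfrac12\norm{(\II-\mathbf{P})\AA}_2^2-O(\lambda)\;\le\;\norm{(\II-\mathbf{P})\CC}_2^2\;\le\;2\norm{(\II-\mathbf{P})\AA}_2^2+O(\lambda).
\end{equation*}

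Given $\CC$, I would take $\ZZ\in\mathbb{R}^{n\times k'}$, $k'=O(r)$, to be its top $k'$ left singular vectors, computed from the $s\times s$ Gram matrix $\CC^{\top}\CC$ in $\widetilde{O}(ns^2+s^{\omega})=\widetilde{O}((n+d)\poly(k/\epsilon))$ time. Applying the displayed guarantee first at $\mathbf{P}=\ZZ\ZZ^{\top}$, and then at $\mathbf{P}$ equal to the projection onto the top $k'$ left singular vectors of $\AA$ to control $\sigma_{k'+1}^2(\CC)=\norm{(\II-\ZZ\ZZ^{\top})\CC}_2^2$, yields
\begin{align*}
\norm{\AA - \AA\ZZ\ZZ^{\top}}_2^2
&\le 2\,\sigma_{k'+1}^2(\CC)+O(\lambda)\le 4\,\sigma_{k'+1}^2(\AA)+O(\lambda)\\
&\le \frac{4}{(k'/r-1)\,r}\norm{\AA-\AA_r}_F^2+O(\lambda)=O\!\left(\tfrac1r\right)\norm{\AA-\AA_r}_F^2,
\end{align*}
using the averaging bound from the first paragraph and $\lambda=\tfrac1r\norm{\AA-\AA_r}_F^2$; taking $k'=O(r)$ large enough drives the leading constant down, giving the claim.

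The ingredient swept under the rug — and the main obstacle — is that forming the sampling distribution needs $\lambda$, i.e.\ $\norm{\AA-\AA_r}_F^2$, which we cannot compute exactly. I would handle this with the recursive ridge leverage score estimation of Musco--Musco \cite{musco2017recursive}: uniformly subsample, recurse for constant-factor estimates of the scores at a trial $\lambda$, and geometrically adjust $\lambda$ until $\sum_i\bar\tau_i^{\lambda}\in[r,2r]$, which pins $\lambda$ to within a constant factor of the target value — and, as the computation above shows, any such $\lambda$ suffices. This uses $O(\log n)$ rounds, each dominated by an OSNAP sketch \cite{cohen2016nearly} costing $\widetilde{O}(\textsf{nnz}(\AA)+(n+d)\poly(k/\epsilon))$, and the per-round failure probabilities union-bound into $\delta$ and the polylog factors, giving the stated $\widetilde{O}(\textsf{nnz}(\AA)+(n+d)\poly(k/\epsilon\delta))$ bound. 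I expect the genuinely delicate points to be (i) that the \emph{spectral} (rather than merely Frobenius) guarantee survives ridge leverage score sampling with only $\widetilde{O}(r)$ samples, and (ii) the recursive estimation of $\lambda$; the rest is the averaging estimate and the $\preceq$-manipulations above.
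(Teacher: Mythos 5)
Your proposal is correct in its main chain of inequalities, but it takes a genuinely different route from the paper. The paper's proof is a two-line reduction: it invokes Lemma 18 of \cite{cohen2015dimensionality}, which says that applying any oblivious subspace embedding (here an OSNAP sketch with $O(1)$ distortion and rank parameter $k/\epsilon$) and taking the top singular directions of the sketched matrix already yields the spectral projection-cost guarantee, in $\widetilde{O}(\textsf{nnz}(\AA))$ time. You instead re-derive the statement from scratch via ridge leverage score \emph{column sampling} with $\lambda = \frac{1}{r}\|\AA-\AA_r\|_F^2$, the resulting constant-factor spectral approximation $\CC\CC^{\top}+\lambda\II \approx \AA\AA^{\top}+\lambda\II$, and the averaging bound $\sigma^2_{Cr+1}(\AA)\le \frac{1}{(C-1)r}\|\AA-\AA_r\|^2_F$; this is essentially a re-proof of the mixed Spectral--Frobenius PCP machinery (Lemma 24 of \cite{mw17}, Lemma \ref{lem:rel_row_pcp_mixed} here) rather than a citation of it. What your route buys is a non-oblivious, sampling-based construction (useful when one cannot afford to touch all of $\AA$ obliviously); what the paper's route buys is brevity and an unconditional $\widetilde{O}(\textsf{nnz}(\AA))$ bound with no parameter estimation. (Also note your $\ZZ$ spans the column space, i.e.\ you bound $\|(\II-\ZZ\ZZ^{\top})\AA\|_2^2$; the lemma as used needs the row-space version $\|\AA-\AA\ZZ\ZZ^{\top}\|_2^2$, which you get by running the identical argument on $\AA^{\top}$.)

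The one step that would fail as written is your selection of $\lambda$. The stopping rule ``adjust $\lambda$ until $\sum_i \bar\tau_i^{\lambda}\in[r,2r]$'' does \emph{not} pin $\lambda$ to within a constant factor of $\frac{1}{r}\|\AA-\AA_r\|_F^2$, and the accepted $\lambda$ can be far too large. Concretely, take $\sigma_1^2=\cdots=\sigma_r^2=M$ and a flat tail of total mass $T=\|\AA-\AA_r\|_F^2$ with $M\gg T$: the head contributes $r\cdot\frac{M}{M+\lambda}\approx r-\frac{r\lambda}{M}$, so any $\lambda$ with $\frac{T}{\lambda}\gtrsim \frac{r\lambda}{M}$, i.e.\ $\lambda\lesssim\sqrt{TM/r}$, passes the test, and $\sqrt{TM/r}$ exceeds the target $T/r$ by a factor $\sqrt{Mr/T}$, which is unbounded. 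Your final bound has an additive $O(\lambda)$ term, so with such a $\lambda$ you only prove error $O(\sqrt{TM/r})$, not $O(\frac{1}{r})\|\AA-\AA_r\|_F^2$; the best you can salvage from ``$\sum_i\bar\tau_i^{\lambda}\ge r$'' is $\lambda\le\frac{2}{r}\|\AA-\AA_{r/2}\|_F^2$, which includes head mass and is not comparable to $\|\AA-\AA_r\|_F^2$. (There is also the degenerate case $\mathrm{rank}(\AA)<r$, where the sum never reaches $r$.) The fix is standard: do not infer $\lambda$ from the score sum. Either use the $\widetilde{O}(\textsf{nnz}(\AA))$-time algorithm of \cite{cohenmm17}, which directly outputs constant-factor overestimates of the rank-$r$ ridge leverage scores with $\lambda=\frac{1}{r}\|\AA-\AA_r\|_F^2$ defined implicitly (the Musco--Musco recursive Nystr\"om estimator you cite is for the PSD entry-access setting, not this one), or first compute a constant-factor estimate of $\|\AA-\AA_r\|_F^2$ via any $O(1)$-approximate Frobenius low-rank approximation in $\widetilde{O}(\textsf{nnz}(\AA))$ time and plug it in as $\lambda$. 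With that replacement, the rest of your argument goes through.
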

\begin{proof}
By Lemma 18 from \cite{cohen2015dimensionality} it suffices to use any obvious subspace embedding matrix with $\epsilon=O(1)$ and $k=k/\epsilon$. Here, we use OSNAP in the regime that requires $\widetilde{O}(k/\epsilon^2)$ rows and sparsity $\textrm{polylog}(k)/\epsilon$ \cite{NN13}. Instantiating this OSNAP construction with $\epsilon=O(1)$ and $k=k/\epsilon$ results in $\ZZ^{\top}$ with $k/\epsilon$ rows in the desired running time.  
\end{proof}

Next we define the ridge leverage scores of a matrix. The ridge leverage scores were used as sampling probabilities in the context of linear regression and spectral approximation \cite{li2013iterative, kapralov2017single, alaoui2015fast}, and low-rank approximation \cite{cohenmm17, mw17}.  Intuitively, the ridge leverage scores can be thought of as adding a regularization term that attenuates the smaller singular directions such that they are sampled with proportionately lower probability. 

\begin{definition}\textit{(Ridge Leverage Scores.)}
\label{def:ridge_leverage_scores}
Given a matrix $\MM \in \mathbb{R}^{n \times m}$ and an integer $k$, let $\textbf{m}_i = \MM_{i,*}$ be the $i$-th row of $\MM$. Then, for all $i \in[n]$, the $i$-th rank-$k$ ridge leverage score of $\MM$ is 
\begin{equation*}
    \rho^{k}_{i}(\MM) = \textbf{m}_i\left(\MM^{\top} \MM + \frac{\|\MM - \MM_k \|^2_F}{k}\II\right)^{\dagger} \textbf{m}^{\top}_i
\end{equation*}
\end{definition}
Since we typically use the row ridge leverage scores to define a probability distribution over the rows and sample according to this distribution, it is crucial that their sum is small as this controls the number of rows we would need to sample. This follows from a straightforward calculation: 
\begin{lemma}(Lemma 4 from \cite{cohenmm17}.) Let $\rho^k_i(\MM)$ be the $i$-th ridge leverage score of $\MM$. Then, $$\sum_{i \in [n]} \rho^k_i(\MM) \leq 2k$$
\end{lemma}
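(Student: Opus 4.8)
The plan is to rewrite the sum of ridge leverage scores as the trace of a matrix that is simultaneously diagonalized by the singular vectors of $\MM$, and then bound the resulting sum of eigenvalues by splitting off the contribution of the top $k$ singular directions from the tail.

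First I would set $\lambda = \|\MM - \MM_k\|_F^2/k$ and use the cyclic property of the trace to write
\[
\sum_{i \in [n]} \rho^k_i(\MM) = \sum_{i \in [n]} \mathbf{m}_i (\MM^\top\MM + \lambda \II)^\dagger \mathbf{m}_i^\top = \mathrm{Tr}\!\left(\MM(\MM^\top\MM + \lambda\II)^\dagger \MM^\top\right).
\]
Next I would substitute the SVD $\MM = \UU\Sig\VV^\top$. Since each row $\mathbf{m}_i$ lies in the row space of $\MM$, the components of $(\MM^\top\MM+\lambda\II)^\dagger$ supported on $\ker(\MM)$ do not contribute, and a direct computation gives $\MM(\MM^\top\MM+\lambda\II)^\dagger\MM^\top = \UU\,\Sig^2(\Sig^2+\lambda\II)^{-1}\UU^\top$. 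Taking the trace, the sum equals $\sum_{i} \sigma_i^2/(\sigma_i^2 + \lambda)$, where $\sigma_1 \ge \sigma_2 \ge \cdots$ are the singular values of $\MM$.

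Then I would split this sum at index $k$. For the top part, each summand $\sigma_i^2/(\sigma_i^2+\lambda)$ is at most $1$, so $\sum_{i \le k} \sigma_i^2/(\sigma_i^2+\lambda) \le k$. For the tail, I would use $\sigma_i^2/(\sigma_i^2+\lambda) \le \sigma_i^2/\lambda$ together with the identity $\sum_{i > k} \sigma_i^2 = \|\MM - \MM_k\|_F^2 = k\lambda$, which gives $\sum_{i>k} \sigma_i^2/(\sigma_i^2+\lambda) \le \lambda^{-1}\sum_{i>k}\sigma_i^2 = k$. Adding the two bounds yields $\sum_i \rho^k_i(\MM) \le 2k$.

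There is no substantive obstacle here; the only point requiring a little care is the bookkeeping with the Moore--Penrose pseudoinverse — i.e. making sure the kernel directions genuinely drop out — which is handled by noting that each $\mathbf{m}_i$, and hence the entire quadratic form, lives on the row space of $\MM$, where the pseudoinverse agrees with the honest inverse of $\Sig^2 + \lambda\II$ restricted to that subspace.
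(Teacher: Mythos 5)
Your proof is correct: the paper itself does not prove this lemma (it simply cites Lemma 4 of \cite{cohenmm17}, noting it follows from a straightforward calculation), and your argument — writing the sum as $\trace{\MM(\MM^{\top}\MM+\lambda\II)^{\dagger}\MM^{\top}} = \sum_i \sigma_i^2/(\sigma_i^2+\lambda)$ with $\lambda = \|\MM-\MM_k\|_F^2/k$ and splitting at index $k$ — is exactly the standard calculation behind the cited result. The pseudoinverse bookkeeping is also fine, since for $\lambda>0$ the matrix is invertible, and when $\lambda=0$ the matrix $\MM$ has rank at most $k$ and the sum reduces to the ordinary leverage scores, which total $\mathrm{rank}(\MM)\leq k$.
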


Cohen et. al. \cite{cohenmm17} show that the ridge leverage scores of a matrix can be approximated up to a small constant in $O(\textsf{nnz}(\AA))$ time, however this involves reading the entire matrix $\AA$. For the special case of $\AA$ being PSD, Musco and Musco \cite{musco2017recursive} show that the ridge leverage scores of $\AA^{\frac{1}{2}}$ can be approximated up to a small constant using a so-called Nystr{\"o}m approximation. 

\begin{lemma}(Lemma 4 of \cite{mw17}.)
\label{lem:approx_ridge_leverage}
Given a PSD matrix $\AA \in \mathbb{R}^{n \times n}$ and integer $k$, there exists an algorithm that accesses $O(nk\log(k/\delta))$ entries in $\AA$ and computes $\hat{\rho}^k_i(\AA^{\frac{1}{2}})$ for all $i \in [n]$, such that with probability $1-\delta$,
\begin{equation*}
    \rho^k_i(\AA^{1/2}) \leq \hat{\rho}^k_i(\AA^{1/2}) \leq 3 \rho^k_i(\AA^{1/2})
\end{equation*}
and runs in time $O(n (k\log(k/\delta))^{\omega-1})$, where $\omega$ is the matrix multiplication exponent. 
\end{lemma}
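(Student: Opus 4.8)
The plan is to reduce the computation to the recursive Nystr\"om / ridge-leverage-score machinery of Musco and Musco \cite{musco2017recursive} (see also \cite{cohenmm17}); the only genuinely new ingredients are two algebraic observations that let us simulate that machinery while reading only the diagonal of $\AA$ together with $O(k\log(k/\delta))$ of its columns. First, since $\AA^{1/2}$ is symmetric we have $(\AA^{1/2})^{\top}\AA^{1/2}=\AA$, and $\AA^{1/2}$ commutes with $\AA$ and with $(\AA+\lambda\II)^{\dagger}$. Writing $\lambda=\tfrac1k\|\AA^{1/2}-(\AA^{1/2})_k\|_F^2$ and using that the eigenvalues of $\AA^{1/2}$ are the square roots of those of $\AA$, we get $\lambda=\tfrac1k\sum_{j>k}\sigma_j(\AA)=\tfrac1k\bigl(\trace{\AA}-\sum_{j\le k}\sigma_j(\AA)\bigr)$ and
\[
\rho^{k}_i(\AA^{1/2}) \;=\; e_i^{\top}\AA^{1/2}(\AA+\lambda\II)^{\dagger}\AA^{1/2}e_i \;=\; e_i^{\top}\AA(\AA+\lambda\II)^{\dagger}e_i .
\]
Thus the quantities to estimate are exactly the rank-$k$ ``kernel'' ridge leverage scores of $\AA$; the trace $\trace{\AA}=\sum_i\AA_{i,i}$ is read off the diagonal in $n$ queries, and it remains only to estimate the top-$k$ eigenvalue mass $\sum_{j\le k}\sigma_j(\AA)$, which will fall out of the Nystr\"om approximation below. (If $\lambda=0$, i.e.\ $\AA$ has rank $\le k$, the scores are ordinary leverage scores and the argument simplifies.)

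Second, fix a column sampling-and-rescaling matrix $\SS\in\mathbb{R}^{n\times m}$ selecting $m=O(k\log(k/\delta))$ columns, set $\PP=\AA^{1/2}\SS$, and apply the Woodbury identity to $(\PP\PP^{\top}+\lambda\II)^{-1}$; using $\AA^{1/2}\PP=\AA\SS$ and $\PP^{\top}\PP=\SS^{\top}\AA\SS$ this yields
\[
\AA^{1/2}(\PP\PP^{\top}+\lambda\II)^{-1}\AA^{1/2} \;=\; \tfrac1\lambda\Bigl(\AA-(\AA\SS)(\lambda\II_m+\SS^{\top}\AA\SS)^{-1}(\AA\SS)^{\top}\Bigr).
\]
Hence the Nystr\"om ridge-leverage estimate $\widehat\rho_i\defeq\tfrac1\lambda\bigl(\AA_{i,i}-(\AA\SS)_{i,*}(\lambda\II_m+\SS^{\top}\AA\SS)^{-1}(\AA\SS)_{i,*}^{\top}\bigr)$ depends on $\AA$ only through its diagonal, the $n\times m$ block $\AA\SS$ (i.e.\ $m$ rescaled columns of $\AA$), and the $m\times m$ principal block $\SS^{\top}\AA\SS$ --- a total of $O(nm)=O(nk\log(k/\delta))$ entries --- and once these are read, all $\widehat\rho_i$ can be formed in $O(nm^{\omega-1}+m^{\omega})=O(n(k\log(k/\delta))^{\omega-1})$ time.

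To produce a column subset $\SS$ that makes $\widehat\rho_i$ a constant-factor overestimate of $\rho^k_i(\AA^{1/2})$, I would run the recursive scheme of \cite{musco2017recursive}: repeatedly pass to a uniformly random half-sample of the current column index set, recurse to obtain constant-factor ridge-leverage estimates for the (principal-submatrix) sub-instance, and then sample $O(k\log(k/\delta))$ columns in proportion to those estimates. Every sub-instance is again of the form $\AA^{1/2}\RR$ for a $0/1$-rescaling matrix $\RR$, so the Woodbury reduction applies verbatim with $\AA$ replaced by the principal submatrix $\RR^{\top}\AA\RR$, whose diagonal is a sub-vector of $\AA$'s diagonal and so costs no extra queries. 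Because the column-set sizes halve down the recursion, the query and time costs form geometric series dominated by the top level, giving $O(nk\log(k/\delta))$ queries and $O(n(k\log(k/\delta))^{\omega-1})$ time; the $O(\log n)$ recursion levels enter only a union bound. For correctness, sampling $O(k\log(k/\delta))$ columns in proportion to the true rank-$k$ ridge leverage scores gives, by a matrix Bernstein bound with effective dimension $\sum_i\rho^k_i(\AA^{1/2})\le 2k$, a Nystr\"om approximant $\PP\PP^{\top}$ with $\tfrac12(\AA+\lambda\II)\preceq\PP\PP^{\top}+\lambda\II\preceq\tfrac32(\AA+\lambda\II)$, whence $\tfrac23\rho^k_i(\AA^{1/2})\le\widehat\rho_i\le 2\rho^k_i(\AA^{1/2})$; a final deterministic scale-up by $\tfrac32$ makes $\widehat\rho_i$ an overestimate within factor $3$. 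The top-$k$ eigenvalues of $\PP\PP^{\top}$ additionally supply the estimate of $\sum_{j\le k}\sigma_j(\AA)$, hence of $\lambda$.

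The main obstacle --- the one step that really requires the analysis of \cite{musco2017recursive} rather than bookkeeping --- is showing that the per-level error does \emph{not} compound over the $\Theta(\log n)$ recursion levels: one must prove that an approximate ridge-leverage sample drawn from a \emph{uniform half-sample} of the columns is, with probability $1-\delta/\poly(\log n)$, still a valid constant-factor ridge-leverage sample for the full column set, so that the final estimates stay within a fixed constant of the truth independently of $n$. The remaining pieces --- the two algebraic identities above, the geometric accounting of queries and arithmetic, and the union bound over levels --- are mechanical.
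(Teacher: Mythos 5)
This lemma is not proved in the paper at all---it is imported as Lemma 4 of \cite{mw17}, whose proof is precisely the recursive Nystr\"om ridge-leverage-score machinery of \cite{musco2017recursive} that you reconstruct: the identity $\rho^k_i(\AA^{1/2}) = e_i^{\top}\AA(\AA+\lambda\II)^{\dagger}e_i$, the Woodbury form of the estimator that touches only the diagonal, $\AA\SS$ and $\SS^{\top}\AA\SS$, and the recursive half-sampling analysis with the union bound over levels. Your reconstruction is correct and follows essentially the same route as the cited source (including the point you flag, that $\lambda$ is only known up to a constant factor inside the recursion, which is handled there exactly as you indicate), so there is nothing substantive to add.
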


Note, while it is not known how to compute ridge leverage scores of a PSD matrix in sublinear time, Musco and Woodruff \cite{mw17} show that the ridge leverage scores of $\AA^{\frac{1}{2}}$ are a coarse approximation to the ridge leverage scores of $\AA$. 

\begin{lemma}(Lemma 5 in \cite{mw17}.)
\label{lem:a_half_to_a}
Given a PSD matrix $\AA \in \mathbb{R}^{n \times n}$, for all $i \in [n]$,
\begin{equation*}
    \rho^{k}_{i}(\AA) \leq 2 \sqrt{\frac{n}{k}} \rho^{k}_{i}(\AA^{\frac{1}{2}}) 
\end{equation*}
\end{lemma}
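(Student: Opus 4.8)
The plan is to diagonalize everything in the eigenbasis of $\AA$ and reduce the claim to a scalar inequality that can be checked eigenvalue-by-eigenvalue. Write the eigendecomposition $\AA = \UU\Sig\UU^{\top}$ with eigenvalues $\sigma_1\ge\sigma_2\ge\cdots\ge\sigma_n\ge 0$, so $\AA^{1/2}=\UU\Sig^{1/2}\UU^{\top}$ shares this eigenbasis with eigenvalues $\sqrt{\sigma_j}$. Since $\AA$ is symmetric, $\AA_{i,*}=\mathbf{e}_i^{\top}\AA$ and $\big(\AA^{\top}\AA+\tfrac{\|\AA-\AA_k\|_F^2}{k}\II\big)^{\dagger}=\UU(\Sig^2+\lambda_1\II)^{\dagger}\UU^{\top}$ with $\lambda_1\defeq\tfrac1k\sum_{j>k}\sigma_j^2$; similarly the regularized Gram pseudoinverse for $\AA^{1/2}$ is $\UU(\Sig+\lambda_2\II)^{\dagger}\UU^{\top}$ with $\lambda_2\defeq\tfrac1k\sum_{j>k}\sigma_j$ (using $\|\AA^{1/2}-(\AA^{1/2})_k\|_F^2=\sum_{j>k}\sigma_j$). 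Substituting and writing $u_{ij}=\UU_{i,j}$ gives the closed forms
\[
\rho^k_i(\AA)=\sum_{j:\sigma_j>0} u_{ij}^2\,\frac{\sigma_j^2}{\sigma_j^2+\lambda_1},\qquad \rho^k_i(\AA^{1/2})=\sum_{j:\sigma_j>0} u_{ij}^2\,\frac{\sigma_j}{\sigma_j+\lambda_2},
\]
where directions with $\sigma_j=0$ contribute zero to both expressions.

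Because the weights $u_{ij}^2$ are nonnegative and the same in both sums, it suffices to prove the scalar bound $\tfrac{\sigma^2}{\sigma^2+\lambda_1}\le 2\sqrt{n/k}\cdot\tfrac{\sigma}{\sigma+\lambda_2}$ for every $\sigma=\sigma_j>0$, equivalently (clearing denominators) $\sigma^2+\sigma\lambda_2\le 2\sqrt{n/k}\,(\sigma^2+\lambda_1)$. The only real input is a comparison of the two regularizers: by Cauchy--Schwarz, $\lambda_2=\tfrac1k\sum_{j>k}\sigma_j\le \tfrac1k\sqrt{(n-k)\sum_{j>k}\sigma_j^2}=\sqrt{\tfrac{n-k}{k}}\sqrt{\lambda_1}\le\sqrt{n/k}\,\sqrt{\lambda_1}$. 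I would then split into two cases. If $\sigma^2\ge\lambda_1$, then $\sigma\lambda_2\le\sigma\sqrt{n/k}\sqrt{\lambda_1}\le\sqrt{n/k}\,\sigma^2$, so the left side is at most $(1+\sqrt{n/k})\sigma^2\le 2\sqrt{n/k}\,\sigma^2\le 2\sqrt{n/k}(\sigma^2+\lambda_1)$, using $k\le n$. If $\sigma^2<\lambda_1$, then $\sigma^2<\lambda_1$ and $\sigma\lambda_2\le\sigma\sqrt{n/k}\sqrt{\lambda_1}<\sqrt{n/k}\,\lambda_1$, so the left side is at most $(1+\sqrt{n/k})\lambda_1\le 2\sqrt{n/k}\,\lambda_1\le 2\sqrt{n/k}(\sigma^2+\lambda_1)$. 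Summing the scalar inequality against $u_{ij}^2$ over $j$ yields the lemma; when $\AA$ has rank at most $k$, $\lambda_1=\lambda_2=0$ and each surviving term equals $1$ on both sides, so the bound holds trivially.

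I expect the main (though mild) obstacle to be pinning down the scalar inequality with the stated constant $2\sqrt{n/k}$, and in particular noticing that one must branch on whether $\sigma$ dominates $\sqrt{\lambda_1}$: in the ``small $\sigma$'' regime both the numerator $\sigma^2$ and the cross term $\sigma\lambda_2$ have to be charged to $\lambda_1$ rather than to $\sigma^2$. The only other point needing a bit of care is the rank-deficient case, handled above by observing that the $\sigma_j=0$ directions drop out of both ridge leverage score expressions.
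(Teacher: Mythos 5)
Your proposal is correct: the eigenbasis reduction is valid (the $\sigma_j=0$ directions indeed drop out of both scores), the Cauchy--Schwarz comparison $\lambda_2\le\sqrt{n/k}\,\sqrt{\lambda_1}$ is right, and the two-case scalar bound $\sigma^2+\sigma\lambda_2\le 2\sqrt{n/k}(\sigma^2+\lambda_1)$ goes through, using $\sqrt{n/k}\ge 1$. Note that this paper does not prove the lemma itself but imports it from Musco and Woodruff \cite{mw17}; your argument is essentially their proof — diagonalize both ridge leverage scores in the common eigenbasis and compare the two regularizers via Cauchy--Schwarz — so there is nothing to flag beyond that.
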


Musco and Woodruff then show that sampling columns of $\AA$, according to the corresponding ridge leverage scores of $\AA^{\frac{1}{2}}$, suffices to obtain a column projection-cost preserving sketch (PCP), if we oversample by a $\sqrt{n/k}$ factor. Projection-cost preserving sketches were introduced by Feldmen et. al. \cite{feldman2013turning} and Cohen et. al. \cite{cohen2015dimensionality} and studied in the context of low-rank approximation in \cite{cohenmm17,  mw17, bakshi2018sublinear}. 

\begin{lemma}(Column PCP from \cite{mw17}.)
\label{lem:rel_col_pcp}
Given a PSD matrix $\AA \in \mathbb{R}^{n \times n}$, integer $k$ and $\epsilon>0$, for all $j \in [n]$ let $\bar{\rho}^k_j(\AA^{\frac{1}{2}})$ be a constant approximation to the column-ridge leverage scores of $\AA^{\frac{1}{2}}$. Let $q_j =  \bar{\rho}^k_j(\AA^{\frac{1}{2}})/\sum_j \bar{\rho}^k_j(\AA^{\frac{1}{2}})$ and let $t = O\left(\sqrt{\frac{n}{k}}\sum_j \bar{\rho}^k_j(\AA^{\frac{1}{2}})\log(k/\delta)/\epsilon^2\right) = O\left(\sqrt{nk}\log(k/\delta)/\epsilon^2\right)$. Construct $\CC \in \mathbb{R}^{n \times t}$ by sampling $t$ columns of $\AA$ and setting each one to be $\frac{1}{\sqrt{t q_j}} \AA_{*,j}$ with probability $q_j$. Then, with probability $1-\delta$, for any rank-$k$ projection matrix $\XX \in \mathbb{R}^{n \times n}$,
\begin{equation*}
    (1-\epsilon)\|\AA - \XX \AA \|^2_F \leq \|\CC - \XX\CC \|^2_F \leq (1+\epsilon)\|\AA - \XX \AA \|^2_F 
\end{equation*}
Further, such a $\CC$ can be computed by accessing $\widetilde{O}(nk)$ entries in $\AA$ and in time $O(nk^{\omega-1})$.
\end{lemma}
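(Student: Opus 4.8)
The plan is to reduce the statement to the standard ridge-leverage-score projection-cost-preserving sketch of Cohen, Musco and Musco \cite{cohenmm17}, and to use the facts already recorded in this section to show that sampling columns of $\AA$ proportionally to (overestimates of) the ridge leverage scores of $\AA^{1/2}$ induces a \emph{per-column effective sampling rate} that is at least the rate their analysis demands. Recall the relevant black box: if one draws $t$ column indices with replacement, picking column $j$ with probability $q_j$ and rescaling the drawn copy by $1/\sqrt{tq_j}$, then $\CC\CC^{\top}$ concentrates around $\AA\AA^{\top}$ in the ridge sense with regularizer $\lambda=\|\AA-\AA_k\|_F^2/k$, and consequently $\|\CC-\XX\CC\|_F^2=\trace{(\II-\XX)\CC\CC^{\top}}=(1\pm\epsilon)\trace{(\II-\XX)\AA\AA^{\top}}=(1\pm\epsilon)\|\AA-\XX\AA\|_F^2$ \emph{simultaneously} over all rank-$k$ orthogonal projections $\XX$, with failure probability $\delta$, provided $tq_j\ge c\,\rho^k_j(\AA)\log(k/\delta)/\epsilon^2$ for all $j$ and a sufficiently large absolute constant $c$. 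This is precisely the quantity that appears in the variance and uniform-bound terms of their matrix-Bernstein argument, so the conclusion is insensitive to oversampling and to replacing exact ridge leverage scores by constant-factor overestimates.

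The first step is to verify the per-column rate. By hypothesis $\rho^k_j(\AA^{1/2})\le\bar\rho^k_j(\AA^{1/2})\le 3\rho^k_j(\AA^{1/2})$, so writing $t=C\sqrt{n/k}\,\big(\sum_\ell\bar\rho^k_\ell(\AA^{1/2})\big)\log(k/\delta)/\epsilon^2$ one gets
\[
tq_j \;=\; C\sqrt{n/k}\;\bar\rho^k_j(\AA^{1/2})\;\frac{\log(k/\delta)}{\epsilon^2}\;\ge\;\frac{C}{3}\sqrt{n/k}\;\rho^k_j(\AA^{1/2})\;\frac{\log(k/\delta)}{\epsilon^2}.
\]
Plugging in Lemma \ref{lem:a_half_to_a}, which gives $\rho^k_j(\AA^{1/2})\ge\tfrac12\sqrt{k/n}\,\rho^k_j(\AA)$, yields $tq_j\ge(C/6)\,\rho^k_j(\AA)\log(k/\delta)/\epsilon^2$, so taking $C\ge 6c$ meets the requirement of the \cite{cohenmm17} lemma and the PCP guarantee holds with the claimed probability (the $\delta/3$ failure probability of the ridge-leverage estimation step below is absorbed into $\delta$). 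For the size bound, use that ridge leverage scores of any matrix sum to at most $2k$ (Lemma 4 of \cite{cohenmm17}) together with $\bar\rho\le 3\rho$ to get $\sum_\ell\bar\rho^k_\ell(\AA^{1/2})\le 6k$, hence $t=O(\sqrt{n/k}\cdot k\cdot\log(k/\delta)/\epsilon^2)=O(\sqrt{nk}\log(k/\delta)/\epsilon^2)$, matching the stated dimension.

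It then remains to account for resources: no entry of the sampled columns needs to be read, since it suffices to compute the distribution $\{q_j\}$ and draw $t$ indices, after which $\CC$ is stored implicitly as the list of chosen indices together with the scalars $1/\sqrt{tq_j}$. The distribution is determined by the constant-factor approximate ridge leverage scores $\bar\rho^k_j(\AA^{1/2})$, which by Lemma \ref{lem:approx_ridge_leverage} (run with failure probability $\delta/3$) are computed by querying $O(nk\log(k/\delta))=\widetilde O(nk)$ entries of $\AA$ in time $O(n(k\log(k/\delta))^{\omega-1})$; building the sampling CDF and drawing $t$ indices adds only $O(n+t)$, a lower-order term. The only genuinely non-routine point in this argument is the one invoked as a black box above: that the \cite{cohenmm17} analysis is controlled \emph{per column} by $1/(tq_j)$ weighted by the column's ridge leverage contribution, so that a pointwise lower bound on $tq_j$ is enough in place of sampling from the exact ridge-leverage distribution. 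If one does not wish to cite this, the step that must be redone is the matrix-Bernstein (intrinsic-dimension) bound $\|(\AA\AA^{\top}+\lambda\II)^{-1/2}(\CC\CC^{\top}-\AA\AA^{\top})(\AA\AA^{\top}+\lambda\II)^{-1/2}\|_2\le\epsilon$ with $\lambda=\|\AA-\AA_k\|_F^2/k$, whose variance and spectral-norm terms are exactly what $tq_j\gtrsim\rho^k_j(\AA)\log(k/\delta)/\epsilon^2$ tames; deducing from it the $(1\pm\epsilon)$ trace identity for every rank-$k$ projection $\XX$ is then the standard PCP manipulation.
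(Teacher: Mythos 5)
The paper itself states this lemma as an imported result from \cite{mw17} without reproving it, and your reconstruction follows precisely the argument underlying that citation: compute constant-factor approximations to the ridge leverage scores of $\AA^{1/2}$ via Lemma~\ref{lem:approx_ridge_leverage}, use Lemma~\ref{lem:a_half_to_a} to show the $\sqrt{n/k}$-oversampled scores dominate the ridge leverage scores of $\AA$ per column, and invoke the Cohen--Musco--Musco column PCP theorem, whose guarantee is insensitive to overestimates and oversampling. Your accounting of resources is also right, including the key point that $\CC$ is only stored implicitly (indices plus rescaling factors), so the $\widetilde O(nk)$ query and $O(nk^{\omega-1})$ time bounds come entirely from the leverage-score estimation step.
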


This result also implies that the resulting matrix $\CC$ is a \textit{Spectral-Frobenius PCP} for $\AA$ (Lemma 24 in \cite{mw17}), i.e., for any rank-$k$ projection matrix $\XX$,
\begin{equation}
    (1-\epsilon)\|\AA - \XX \AA \|^2_2 -\frac{\epsilon}{k}\|\AA -\AA_k\|^2_F \leq \|\CC - \XX\CC\|^2_2 \leq (1+\epsilon)\|\AA - \XX \AA \|^2_2 +\frac{\epsilon}{k}\|\AA -\AA_k\|^2_F
\end{equation}
As noted by Musco and Woodruff, the resulting matrix $\CC$ is not even square 
and thus is it unclear how to sample rows of $\CC$ to obtain a row-PCP in 
sublinear time and queries. In particular, the ridge-leverage scores of rows of $\CC$ can be an $n/k$-factor larger than the corresponding  
ridge-leverage scores of $\AA^{\frac{1}{2}}$. Instead, Musco and Woodruff sample 
rows of $\CC$ proportional to the rank-$k/\epsilon^2$ ridge leverage scores of $\AA^{\frac{1}{2}}$. In addition, they show the stronger guarantee that a \textit{Spectral-Frobenius PCP} holds (by Lemma 8 of \cite{mw17}) for PSD Matrices.    

\begin{lemma}(Spectral-Frobenius PCP.)
\label{lem:rel_row_pcp_mixed}
Given a PSD matrix $\AA \in \mathbb{R}^{n\times n}$, an integer $k$ and $\epsilon>0$, let $\CC \in \mathbb{R}^{n\times t}$ be a column PCP for $\AA$, following Lemma \ref{lem:rel_col_pcp}. Let $k' = k/\epsilon^2$. For all $i \in [n]$, let $\bar{\rho}^{k'}_i(\AA^{\frac{1}{2}})$ be a constant approximation to the rank-$k'$ row ridge leverage scores of $\AA^{\frac{1}{2}}$. Let $p_i = \bar{\rho}^{k'}_i(\AA^{\frac{1}{2}})/ \sum_i \bar{\rho}^{k'}_i(\AA^{\frac{1}{2}})$ and let $t = O\left(\sqrt{\frac{n}{k}}\sum_i \bar{\rho}^{k'}_i(\AA^{\frac{1}{2}}) \log(n)/\epsilon \right) = O\left(\sqrt{nk}\log(n)/\epsilon^3\right)$. Then, with probability $1-c$, for all rank-$k'$ projection matrices $\XX$,
\begin{equation*}
    (1-\epsilon)\|\CC - \CC \XX \|^2_2 -\frac{\epsilon}{k}\|\AA -\AA_k\|^2_F \leq \|\RR - \RR\XX\|^2_2 \leq (1+\epsilon)\|\CC - \CC \XX \|^2_2 +\frac{\epsilon}{k}\|\AA -\AA_k\|^2_F
\end{equation*}
\end{lemma}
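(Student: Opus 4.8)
The plan is to follow the route of \cite{mw17}: reduce the claimed spectral--Frobenius bound to a \emph{regularized spectral approximation} of $\CC^{\top}\CC$ by $\RR^{\top}\RR$, and obtain the latter from ridge leverage score sampling together with a matrix concentration inequality. Set $\lambda = \|\CC - \CC_{k'}\|^2_F/k'$. The first step is purely linear-algebraic: it suffices to show
\begin{equation*}
(1-\epsilon)\bigl(\CC^{\top}\CC + \lambda\II\bigr) \preceq \RR^{\top}\RR + \lambda\II \preceq (1+\epsilon)\bigl(\CC^{\top}\CC + \lambda\II\bigr),
\end{equation*}
equivalently $\RR^{\top}\RR \preceq (1+\epsilon)\CC^{\top}\CC + \epsilon\lambda\II$ and the analogous lower bound. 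Given this, for any rank-$k'$ orthogonal projection $\XX$, conjugating by $\II-\XX$ and using $(\II-\XX)\lambda\II(\II-\XX)\preceq \lambda\II$ gives $(\II-\XX)\RR^{\top}\RR(\II-\XX) \preceq (1+\epsilon)(\II-\XX)\CC^{\top}\CC(\II-\XX) + \epsilon\lambda\II$, and similarly on the other side; taking the top eigenvalue and recalling $\|\RR - \RR\XX\|^2_2 = \lambda_{\max}\bigl((\II-\XX)\RR^{\top}\RR(\II-\XX)\bigr)$ (same for $\CC$) yields $\|\RR - \RR\XX\|^2_2 = (1\pm\epsilon)\|\CC - \CC\XX\|^2_2 \pm \epsilon\lambda$. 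It then remains to absorb $\epsilon\lambda$ into $\tfrac{\epsilon}{k}\|\AA-\AA_k\|^2_F$: since $\CC$ is a column PCP (Lemma \ref{lem:rel_col_pcp}) it preserves the Frobenius low-rank cost up to a constant, so $\lambda = \Theta\bigl(\|\AA-\AA_{k'}\|^2_F/k'\bigr) \le \Theta(\epsilon^2/k)\|\AA-\AA_k\|^2_F$ because $k' = k/\epsilon^2$ and $\|\AA-\AA_{k'}\|^2_F \le \|\AA-\AA_k\|^2_F$; hence $\epsilon\lambda \ll \tfrac{\epsilon}{k}\|\AA-\AA_k\|^2_F$.

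For the regularized spectral approximation, note that row $i$ is sampled with probability proportional to $\bar{\rho}^{k'}_i(\AA^{1/2})$, while the quantity actually governing the concentration is the rank-$k'$ row ridge leverage score $\rho^{k'}_i(\CC)$ of $\CC$ (an $n\times t$ matrix, one score per index in $[n]$, so the indexing matches $\AA^{1/2}$). The crux is the \emph{domination} chain
\[
\rho^{k'}_i(\CC) \;\le\; O(1)\,\rho^{k'}_i(\AA) \;\le\; O\!\left(\sqrt{n/k'}\right)\rho^{k'}_i(\AA^{1/2}) \;\le\; O\!\left(\sqrt{n/k}\right)\bar{\rho}^{k'}_i(\AA^{1/2}),
\]
where the middle and right inequalities are the rank-$k'$ versions of Lemmas \ref{lem:a_half_to_a} and \ref{lem:approx_ridge_leverage}, and the left one uses the structure of $\CC$: by the push-through identity $\rho^{k'}_i(\CC) = e_i^{\top}\CC\CC^{\top}(\CC\CC^{\top}+\lambda\II)^{-1}e_i$ and similarly for $\AA$, so it suffices to show $\CC\CC^{\top}+\lambda\II \succeq \tfrac{1}{O(1)}\bigl(\AA^2 + \tfrac{\|\AA-\AA_{k'}\|^2_F}{k'}\II\bigr)$, which is exactly the spectral--Frobenius column-PCP guarantee stated right after Lemma \ref{lem:rel_col_pcp} (applied at rank $k'$), rephrased as a Loewner lower bound. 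Since $\sqrt{n/k}\ge\sqrt{n/k'}$ and $\sum_i \bar{\rho}^{k'}_i(\AA^{1/2}) = O(k') = O(k/\epsilon^2)$, the total sampling mass after oversampling is $O(\sqrt{nk}/\epsilon^2)$, and a matrix Bernstein bound with intrinsic dimension $O(k')$ in the form used for ridge leverage sampling — where the $\tfrac{\epsilon}{k}\|\AA-\AA_k\|^2_F$ additive slack lets one restrict attention to the regime $\|\CC-\CC\XX\|^2_2 \gtrsim \tfrac{\epsilon}{k}\|\AA-\AA_k\|^2_F$ and save a factor of $1/\epsilon$ — shows that $t = \widetilde{O}(\sqrt{nk}/\epsilon^3)$ sampled rows give the desired $(1\pm\epsilon)$ regularized spectral approximation with the claimed probability.

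I expect the domination step $\rho^{k'}_i(\CC) = O(\rho^{k'}_i(\AA))$ to be the main obstacle: it is the only place where the projection-cost-preservation guarantee for $\CC$ (a statement about traces and top eigenvalues of $(\II-\XX)\CC\CC^{\top}(\II-\XX)$ over all rank-$k'$ projections $\XX$) must be upgraded to a genuine Loewner-order comparison between the regularized Gram matrices $\CC\CC^{\top}+\lambda\II$ and $\AA^2+\lambda\II$, which is what a per-coordinate quadratic form such as a leverage score is sensitive to. The remaining pieces — the Loewner manipulations of Step 1, the ridge leverage comparisons of Lemmas \ref{lem:a_half_to_a} and \ref{lem:approx_ridge_leverage}, and the matrix concentration bound — are either already available in the excerpt or standard, and the only real bookkeeping is tracking the $\epsilon$ exponents so that the choice $k' = k/\epsilon^2$ simultaneously produces the $\tfrac{\epsilon}{k}\|\AA-\AA_k\|^2_F$ additive term and the $\widetilde{O}(\sqrt{nk}/\epsilon^3)$ sample bound.
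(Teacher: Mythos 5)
You should note at the outset that the paper never proves this lemma: it is imported from Musco and Woodruff (cited as Lemma 8 of \cite{mw17}), so there is no in-paper proof to compare against, and your skeleton — prove a regularized Loewner approximation $\RR^{\top}\RR+\lambda\II\approx_{1\pm\epsilon}\CC^{\top}\CC+\lambda\II$, conjugate by $\II-\XX$, absorb $\epsilon\lambda$ using $k'=k/\epsilon^2$, and justify the sampling by dominating the ridge scores of $\CC$ through those of $\AA$ and $\AA^{1/2}$ — is indeed the Musco--Woodruff route in outline, and the reduction and absorption steps are sound. The first genuine gap is the domination step you yourself flag, which is argued in the wrong Loewner direction. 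Writing $\CC=\AA\SS$ for the column-sampling-and-rescaling matrix $\SS$, the row score is $\rho^{k'}_i(\CC)=e_i^{\top}\CC\CC^{\top}(\CC\CC^{\top}+\lambda_C\II)^{-1}e_i=1-\lambda_C\,[(\CC\CC^{\top}+\lambda_C\II)^{-1}]_{ii}$, so a lower bound $\CC\CC^{\top}+\lambda_C\II\succeq\frac{1}{O(1)}(\AA^2+\lambda_A\II)$ only yields $(\CC\CC^{\top}+\lambda_C\II)^{-1}\preceq O(1)(\AA^2+\lambda_A\II)^{-1}$, i.e.\ a \emph{lower} bound on $\rho^{k'}_i(\CC)$; it cannot rule out $\CC\CC^{\top}=\AA\SS\SS^{\top}\AA$ being huge in the direction $e_i$, which would drive $\rho^{k'}_i(\CC)$ toward $1$ while $\rho^{k'}_i(\AA)$ stays tiny. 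What the domination needs is an \emph{upper} comparison of the form $\AA\SS\SS^{\top}\AA\preceq O(1)\AA^2+O(\lambda_A)\II$ — which is supplied by the two-sided guarantee of independent ridge-leverage column sampling (Lemma \ref{lem:independent_sampling}), not by the projection-cost statement following Lemma \ref{lem:rel_col_pcp}, since the latter quantifies only over rank-$k$ projections and is not equivalent to a Loewner comparison — and even granting it, turning it into the per-coordinate bound $\rho^{k'}_i(\CC)\le O(1)\rho^{k'}_i(\AA)$ (e.g.\ via the variational characterization of ridge scores) is a real argument, precisely the content of the corresponding lemmas in \cite{mw17}, which your proposal leaves open.

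The second gap is the $\epsilon$-accounting in the concentration step. Having loosened the middle inequality to $O(\sqrt{n/k})$, your overestimates total $O(\sqrt{n/k}\cdot k')=O(\sqrt{nk}/\epsilon^2)$, and a two-sided $(1\pm\epsilon)$ matrix-Bernstein approximation of $\CC^{\top}\CC+\lambda\II$ requires the standard $\log(\cdot)/\epsilon^2$ oversampling on top of that mass, i.e.\ $t=\widetilde{O}(\sqrt{nk}/\epsilon^4)$; with the stated $t=\widetilde{O}(\sqrt{nk}/\epsilon^3)$ this only delivers $(1\pm\sqrt{\epsilon})$, and the assertion that the additive slack "saves a factor of $1/\epsilon$" is never substantiated. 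The arithmetic actually closes without any such trick if you do not discard the $\epsilon$: Lemma \ref{lem:a_half_to_a} applied at rank $k'$ gives $\rho^{k'}_i(\AA)\le 2\sqrt{n/k'}\,\rho^{k'}_i(\AA^{1/2})$ with $\sqrt{n/k'}=\epsilon\sqrt{n/k}$, so the overestimates total $O(\sqrt{nk'})=O(\sqrt{nk}/\epsilon)$, and the full $\log(\cdot)/\epsilon^2$ oversampling then lands exactly on $t=O(\sqrt{nk}\log(n)/\epsilon^3)$ with a genuine $(1\pm\epsilon)$ regularized approximation, after which your conjugation and the absorption $\epsilon\lambda\le\frac{\epsilon}{k}\|\AA-\AA_k\|^2_F$ go through verbatim. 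With the domination step proved in the correct direction and this bookkeeping repaired, your outline matches the intended argument.
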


We observe that we could compute a low-rank approximation to $\RR$ in input sparsity time, which already requires querying $\Omega(\textsf{nnz}(\RR)) = \Omega(nk/\epsilon^4)$ entries in $\AA$  and is far from optimal in terms of the dependence on $\epsilon$. It is here that we digress from the approach of Musco and Woodruff. We observe that the dependence on $n$ and $k$ is optimal and thus we instantiate the aforementioned column and row PCPs with $\epsilon=O(1)$ and $k = k/\epsilon$. While this results in weaker PCP guaranteees, the resulting matrix $\RR$ is a $\sqrt{nk/\epsilon} \times \sqrt{nk/\epsilon}$ matrix and we can now afford to read all of it and thus we can compute a rank-$k$ low-rank approximation to $\RR$ using the input sparsity time algorithm of Clarkson and Woodruff \cite{clarkson2013low}. 

However, the main technical challenge here is that we can no longer use the approach of \cite{cohenmm17,mw17,bakshi2018sublinear} to use the low-rank approximation for $\RR$ and solve regression problems to recover an $\epsilon$-approximate low-rank matrix for $\AA$. In particular, we can now only hope for an $O(1)$ approximation if we use the standard technique of iteratively solving regression problems. Our first insight is that computing a Spectral Low-Rank Approximation to $\RR$ results in a \textit{structured} projection matrix for $\CC$, from which we can compute a \textit{structured} projection matrix for $\AA$. Further, this structured projection can be computed with only $\widetilde{O}(nk/\epsilon)$ queries. We first describe how this structured projection matrix for $\AA$ results in an efficient low-rank approximation algorithm. 

\subsection{Structured Projections to Low-Rank Approximation}

\label{subsec:structured_projection_to_lra}
Our starting point is a structural result based on the Spectral-Frobenius projection (SF) property introduced by Clarkson and Woodruff in the context of approximating arbitrary matrices with low-rank PSD matrices \cite{clarkson2017low}. In this subsection, we show that if we are given a projection matrix that satisfies the SF property, we can obtain a query-optimal algorithm for Low-Rank Approximation. We begin by defining this property:

\begin{definition}(\textit{$(\epsilon, k)$-\textsf{SF} Projection.})
\label{def:sf_projection}
Given any matrix $\AA \in \mathbb{R}^{n \times n}$, integer $k$, and $\epsilon>0$, a projection matrix $\PP \in  \mathbb{R}^{n \times n}$ is $(\epsilon,k)$-\textsf{SF} w.r.t. $\AA$ if
\begin{equation*}
    \|\AA - \AA \PP \|^2_2 \leq \frac{\epsilon}{k}\|\AA -\AA_k\|^2_F
\end{equation*}
or 
\begin{equation*}
    \|\AA -  \PP \AA \|^2_2 \leq \frac{\epsilon}{k}\|\AA -\AA_k\|^2_F
\end{equation*}
\end{definition}

Intuitively, the following structural result of Clarkson and Woodruff relates an $(\epsilon,k)$-\textsf{SF} projection to a relative-error low-rank approximation. We leverage this connection heavily in subsequent sections.

\begin{Frame}[\textbf{Algorithm \ref{alg:proj_to_rank}} : Structured Projection to Low-Rank Approximation]
\label{alg:proj_to_rank}
\textbf{Input}: A PSD Matrix $\AA \in \mathbb{R}^{n \times n}$, integer $k$, $\epsilon>0$, an orthonormal matrix $\QQ \in \mathbb{R}^{n \times k'}$ such that the projection matrix $\PP = \QQ \QQ^{\top}$ satisfies $\|\AA - \PP\AA \|^2_2 \leq \frac{\epsilon}{k}\|\AA -\AA_k\|^2_F $
\begin{enumerate}
    \item  Consider the optimization problem: \begin{equation*}\min_{\textrm{rank}(\XX)\leq k}\|\AA - \QQ \XX\QQ^{\top} \|^2_F  \vspace{-0.1in}\end{equation*}
    \item For all $i \in [n]$, compute the leverage scores, $\tau_i(\QQ)$. Since $\QQ$ has orthonormal columns, $\tau_i(\QQ)= \|\QQ_{i,*}\|^2_2$ and can be computed exactly. Let $p = \{ p_1, p_2 \ldots p_n \} $ denote a distribution over rows of $\AA$ for which $p_i = \tau_i(\QQ)/\sum_{i'}\tau_{i'}(\QQ)$. 
    \item Let $t = k'/\epsilon^2$. Construct a \textit{leverage score sampling} matrix  $\SS$ by sampling $t$ rows of $\AA$, such that $\SS = \DD \mathbf{\Omega}^{\top}$, where $\DD$ is a $t \times t$ diagonal matrix and $\mathbf{\Omega}$ is an $n \times t$ sampling matrix. For all $j \in [t]$, select row index $i \in [n]$ with probability $p_i$ and set $\mathbf{\Omega}_{i,j}=1$ and $\DD_{j,j} = 1/\sqrt{t p_i}$.  Repeat this sampling process to construct another \textit{leverage score sampling} matrix $\TT$.  
    \item Consider the sketched optimization problem : \begin{equation*}\min_{\textrm{rank}(\XX)\leq k}\|\SS\AA\TT - \SS\QQ \XX\QQ^{\top}\TT \|^2_F\vspace{-0.1in}\end{equation*}
    Compute $\SS\AA\TT$, $\PP_{\SS \QQ}$, $\PP_{ \QQ^{\top}\TT}$,$(\SS \QQ)^{\dagger}$ and $(\QQ^{\top}\TT)^{\dagger}$, where $\PP_{\SS \QQ}$ and $\PP_{ \QQ^{\top}\TT}$ are the projections onto $\SS \QQ$ and $\QQ^{\top} \TT$ respectively. Compute $\textsf{SVD}(\PP_{\SS \QQ} \SS\AA\TT \PP_{\QQ^{\top}\TT})$. By Theorem \ref{thm:gen_lra} the sketched problem is minimized by  $\XX^* = (\SS \QQ)^{\dagger} [\PP_{\SS \QQ} \SS\AA\TT \PP_{\QQ^{\top}\TT}]_k(\QQ^{\top}\TT)^{\dagger}$.
    \item Let $\UU^* \in \mathbb{R}^{k' \times k}$ be an orthonormal basis for the columns of $\XX^*$. Compute an orthonormal basis $\MM$ for $\QQ\UU^*$. Consider the following regression problem: 
    $\min_{\YY \in \mathbb{R}^{k \times n}}\|\AA - \MM \YY \|^2_F$.
    For all $i \in [n]$, compute $\tau_i(\MM) = \|\MM_{i,*}\|^2_2$. Let $q = \{q_1, q_2, \ldots, q_n \}$ be a distribution over the rows of $\AA$ such that $q_i = \tau_i(\MM)/\sum_{i'\in[n]} \tau_{i'}(\MM)$. Let $\WW$ be a \textit{leverage score sampling} matrix with $k/\epsilon$ rows sampled proportional to $q$. 
    \item Consider the sketched regression problem: $\min_{\YY \in \mathbb{R}^{k \times n}} \|\WW\AA - \WW \MM \YY \|^2_F$.
    Let $\NN$ be the minimizer to this regression problem computed using the algorithm from Lemma \ref{lem:fast_regression}. 
\end{enumerate}
\textbf{Output:} $\MM , \NN^{\top} \in \mathbb{R}^{n \times k}$ such that $\|\AA - \MM \NN\|^2_F \leq (1+\epsilon)\|\AA - \AA_k\|^2_F$   
\end{Frame}

\begin{lemma}(Structured Projections and Low-Rank Approximation \cite{clarkson2017low}.)
\label{lem:structured_projection}
Let $\PP \in \mathbb{R}^{n \times n}$ be an $(\epsilon,k)$-\textsf{SF} projection w.r.t $\AA$, then
\begin{equation*}
    \|\AA - \PP \AA_k \PP \|^2_F \leq (1+\epsilon) \|\AA - \AA_k\|^2_F
\end{equation*}
\end{lemma}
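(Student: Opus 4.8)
The plan is to reduce the two-sided claim to a clean one-sided statement and then control a small correction term. Write $\Delta := \|\AA - \AA_k\|_F$, and fix the defining inequality of the $(\epsilon,k)$-\textsf{SF} projection in the form $\|(\II-\PP)\AA\|_2^2 \le \tfrac{\epsilon}{k}\Delta^2$ (the case $\|\AA(\II-\PP)\|_2^2 \le \tfrac{\epsilon}{k}\Delta^2$ is the transpose of this; and for PSD $\AA$ — the setting of every application in this paper — \emph{both} hold at once, since $\AA$ and $\PP$ are symmetric). The first ingredient I would record is the elementary fact that if matrices $\MM,\NN$ have orthogonal row spaces then $\|\MM+\NN\|_2 \ge \|\MM\|_2$: choosing a top right singular vector $w$ of $\MM$, we have $w$ in the row space of $\MM$, hence $w \perp (\text{row space of }\NN) = (\ker \NN)^\perp$, so $\NN w = 0$ and $\|(\MM+\NN)w\|_2 = \|\MM w\|_2 = \|\MM\|_2$; the column-space version follows by transposition.

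The crucial step is to apply this with $\MM = (\II-\PP)\AA_k$ and $\NN = (\II-\PP)(\AA-\AA_k)$. The row space of $\MM$ lies in that of $\AA_k$ (the top-$k$ right singular space) and the row space of $\NN$ lies in that of $\AA - \AA_k$ (the complementary right singular space), so they are orthogonal and therefore $\|(\II-\PP)\AA_k\|_2 \le \|(\II-\PP)\AA\|_2 \le \sqrt{\epsilon/k}\,\Delta$. This is exactly the point where the truncation structure of $\AA_k$ is used: the naive bound $\|(\II-\PP)\AA_k\|_2 \le \|(\II-\PP)\AA\|_2 + \|\AA-\AA_k\|_2$ would inject a $\sigma_{k+1}(\AA)$ term that ruins the estimate. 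Since $(\II-\PP)\AA_k$ has rank at most $k$, $\|(\II-\PP)\AA_k\|_F^2 \le k\,\|(\II-\PP)\AA_k\|_2^2 \le \epsilon\Delta^2$, and symmetrically $\|\AA_k(\II-\PP)\|_F^2 \le \epsilon\Delta^2$. For the one-sided conclusion, decompose $\AA - \PP\AA_k = (\AA-\AA_k) + (\II-\PP)\AA_k$; these two summands are Frobenius-orthogonal because $\AA_k(\AA-\AA_k)^\top = 0$ (the singular subspaces being orthogonal), so $\|\AA - \PP\AA_k\|_F^2 = \Delta^2 + \|(\II-\PP)\AA_k\|_F^2 \le (1+\epsilon)\Delta^2$.

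For the stated two-sided bound I would expand and complete the square: splitting $\langle \AA,\PP\AA_k\PP\rangle = \langle \AA_k,\PP\AA_k\PP\rangle + \langle \AA-\AA_k,\PP\AA_k\PP\rangle$ gives $\|\AA - \PP\AA_k\PP\|_F^2 = \Delta^2 + \|\AA_k - \PP\AA_k\PP\|_F^2 - 2\langle \AA-\AA_k,\,\PP\AA_k\PP\rangle$. The block (quadrant) orthogonality $\AA_k - \PP\AA_k\PP = (\II-\PP)\AA_k\PP + \PP\AA_k(\II-\PP) + (\II-\PP)\AA_k(\II-\PP)$ together with the previous paragraph yields $\|\AA_k - \PP\AA_k\PP\|_F^2 = \|(\II-\PP)\AA_k\|_F^2 + \|\PP\AA_k(\II-\PP)\|_F^2 \le 2\epsilon\Delta^2$. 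For PSD $\AA = \sum_i \sigma_i u_i u_i^\top$ the leftover inner product is nonnegative, $\langle \AA-\AA_k,\PP\AA_k\PP\rangle = \sum_{i\le k,\,j>k}\sigma_i\sigma_j\,(u_i^\top\PP u_j)^2 \ge 0$, so it only helps, giving $\|\AA - \PP\AA_k\PP\|_F^2 \le (1+2\epsilon)\Delta^2$; for general $\AA$ one instead applies the one-sided bound on each side in turn with the triangle inequality, losing only a constant factor in the $\epsilon$-term. Re-running the argument with the \textsf{SF} parameter rescaled by the relevant absolute constant then recovers the clean $(1+\epsilon)\Delta^2$. I expect the only genuine obstacle to be the spectral-norm comparison $\|(\II-\PP)\AA_k\|_2 \le \|(\II-\PP)\AA\|_2$ (and its column analogue); everything downstream is bookkeeping with Pythagorean and quadrant decompositions, but without that comparison one cannot avoid a spurious $\sigma_{k+1}(\AA)$ term.
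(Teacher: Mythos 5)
Your proof is correct, and it is essentially the argument the paper relies on: the paper itself only cites Clarkson--Woodruff for this lemma, but its in-paper analogue, the proof of Lemma \ref{lem:struct_proj_distance_matrix}, runs through exactly your decomposition $\|\AA-\PP\AA_k\PP\|_F^2=\|\AA-\AA_k\|_F^2+\|\AA_k-\PP\AA_k\PP\|_F^2+2\trace{(\AA-\AA_k)(\II-\PP)\AA_k\PP}$, the rank-$k$ times spectral-norm bound, and the comparison $\|\AA_k(\II-\PP)\|_2\le\|\AA(\II-\PP)\|_2$ (which the paper asserts without justification and you prove via the orthogonal row/column-space fact). The one genuine difference is the cross term: the paper controls $2\trace{(\AA-\AA_k)(\II-\PP)\AA_k\PP}$ by $2k\|\AA(\II-\PP)\|_2^2$ via Von Neumann's trace inequality, which works for any symmetric $\AA$ and is what lets the same calculation serve the negative-type (non-PSD) matrices later, whereas you observe that for PSD $\AA$ the quantity $\langle\AA-\AA_k,\PP\AA_k\PP\rangle=\sum_{i\le k,j>k}\sigma_i\sigma_j(u_i^{\top}\PP u_j)^2$ is nonnegative and can simply be dropped; your version is cleaner but PSD-specific. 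Two minor caveats: your fallback for general $\AA$ via the triangle inequality only gives $1+O(\sqrt{\epsilon})$, not a constant-factor loss in the $\epsilon$-term (the Von Neumann bound on the cross term is what avoids this), and your insistence on reading the ``or'' in Definition \ref{def:sf_projection} two-sidedly (equivalently, taking $\AA$ symmetric) is in fact necessary, not just convenient: for a genuinely one-sided \textsf{SF} projection and non-symmetric $\AA$ the statement fails, e.g.\ $\AA=e_1e_2^{\top}$, $k=1$, $\PP=e_1e_1^{\top}$ gives $\|\AA-\PP\AA\|_2=0$ but $\|\AA-\PP\AA_k\PP\|_F^2=1>0=\|\AA-\AA_k\|_F^2$; since the paper applies the lemma only to PSD $\AA$, this restriction is harmless.
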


Ignoring computational and query complexity constraints, suppose we were given a matrix $\QQ \in \mathbb{R}^{n \times k'}$ with orthonormal columns such that $\PP = \QQ \QQ^{\top}$ is an $(\epsilon,k)$-\textsf{SF} Projection, where $k'$ is the dimension of the space $\PP$ projects onto. Note, for now it suffices to set $k' =\textrm{poly}(k/\epsilon)$. As a consequence of Lemma \ref{lem:structured_projection}, we observe that solving the following constrained regression problem suffices to obtain a $(1+\epsilon)$-relative error solution to the Low-Rank Approximation problem: 
\begin{equation}
\label{eqn:approx_regression}
    \min_{\textrm{rank}(\XX) \leq k} \|\AA - \QQ \XX \QQ^{\top} \|^2_F
\end{equation}
\noindent 
However, there are several challenges pertaining to this approach. As noted above, it is not immediately clear how to obtain such a $\QQ$ with $nk/\epsilon$ queries to $\AA$. Further, it is not immediately clear how to solve Equation \ref{eqn:approx_regression} efficiently. While we have reduced to optimizing over $k' \times k'$ sized matrices $\XX$ with rank at most $k$, the problem still seems intractable in sublinear time and queries.

We begin by describing how to solve the optimization problem in Equation \ref{eqn:approx_regression} with $\widetilde{O}(nk/\epsilon)$ queries given that we have access to $\QQ $ and  $\QQ\QQ^{\top}$ is an $(\epsilon,k)$-\textsf{SF} Projection. At a high level, our approach is to sketch the problem by sampling rows and columns proportional to the row leverage scores of $\QQ$. We observe that since $\QQ$ has orthonormal columns, the row leverage scores of $\QQ$ are simply the $\ell^2_2$ norms of corresponding rows. Therefore, we create sampling matrices $\SS$ and $\TT$ that sample $\textrm{poly}(k')$ rows proportional to the leverage scores of $\QQ$ and consider the resulting optimization problem:
\begin{equation}
\label{eqn:sketched_approx_regression}
    \min_{\textrm{rank}(X)\leq k}\|\SS\AA\TT - \SS\QQ\XX\QQ^{\top}\TT\|^2_F
\end{equation}

We then show that the minimizer for Equation \ref{eqn:sketched_approx_regression} is an approximate minimizer for Equation \ref{eqn:approx_regression}. Further, the optimization problem in Equation \ref{eqn:sketched_approx_regression} is referred to as Generalized Low-Rank Approximation and admits a closed form solution:

\begin{theorem}(Generalized Low-Rank Approximation \cite{friedland2007generalized}.)
\label{thm:gen_lra}
Let $\AA \in \mathbb{R}^{n \times n}$, $\BB \in \mathbb{R}^{n \times k'}$ and $\CC \in \mathbb{R}^{k' \times n}$ and $k\in\mathbb{N}$. Then, the Generalized Low-Rank Approximation problem 
\begin{equation*}
    \min_{\textrm{rank}(\XX)\leq k} \|\AA - \BB \XX \CC \|^2_F
\end{equation*}
is minimized by $\XX = \BB^{\dagger}[\PP_B \AA \PP_C]_k \CC^{\dagger}$, where $\PP_{B},\PP_C$ are the projection matrices onto $\BB$ and $\CC$ respectively.   
\end{theorem}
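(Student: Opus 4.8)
The plan is to reduce the problem to an ordinary Eckart--Young truncation by isolating a fixed, $\XX$-independent error term. Writing $\PP_B = \BB\BB^{\dagger}$ and $\PP_C = \CC^{\dagger}\CC$ for the orthogonal projections onto the column space of $\BB$ and the row space of $\CC$, I would decompose
\begin{equation*}
\AA - \BB\XX\CC = \big(\AA - \PP_B\AA\PP_C\big) + \big(\PP_B\AA\PP_C - \BB\XX\CC\big)
\end{equation*}
and first show the two summands are orthogonal in the Frobenius inner product $\langle X, Y\rangle = \mathrm{Tr}(X^{\top}Y)$. Since $\PP_B\BB = \BB$ and $\CC\PP_C = \CC$, both $\PP_B\AA\PP_C$ and $\BB\XX\CC$ lie in the subspace $\mathcal{S} = \{\PP_B\MM\PP_C : \MM \in \mathbb{R}^{n\times n}\}$, while a short trace computation using idempotence and symmetry of $\PP_B,\PP_C$ together with cyclicity of the trace gives $\langle \AA - \PP_B\AA\PP_C,\ \PP_B\MM\PP_C\rangle = 0$ for every $\MM$; hence $\AA - \PP_B\AA\PP_C$ is Frobenius-orthogonal to $\mathcal{S}$. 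The Pythagorean theorem then yields
\begin{equation*}
\|\AA - \BB\XX\CC\|_F^2 = \|\AA - \PP_B\AA\PP_C\|_F^2 + \|\PP_B\AA\PP_C - \BB\XX\CC\|_F^2,
\end{equation*}
and since the first term is independent of $\XX$, it remains to minimize the second over matrices $\XX$ of rank at most $k$.

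For the second term I would argue that, as $\XX$ ranges over all matrices of rank at most $k$, the product $\BB\XX\CC$ ranges exactly over the set of rank-$\leq k$ matrices $\YY$ with $\mathrm{col}(\YY)\subseteq\mathrm{col}(\BB)$ and $\mathrm{row}(\YY)\subseteq\mathrm{row}(\CC)$: one inclusion is immediate, and conversely, given such a $\YY$, the matrix $\XX = \BB^{\dagger}\YY\CC^{\dagger}$ has rank at most $k$ and satisfies $\BB\XX\CC = \PP_B\YY\PP_C = \YY$ (using $\BB\BB^{\dagger}\BB = \BB$, $\CC\CC^{\dagger}\CC = \CC$, and the range inclusions). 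Thus the minimum in question equals $\min_{\YY}\|\PP_B\AA\PP_C - \YY\|_F^2$ over rank-$\leq k$ matrices $\YY$ subject to those two subspace constraints. The crux is the observation that the unconstrained best rank-$k$ approximation $[\PP_B\AA\PP_C]_k$ already satisfies both constraints --- its column space is spanned by leading left singular vectors of $\PP_B\AA\PP_C$, hence lies in $\mathrm{col}(\PP_B\AA\PP_C)\subseteq\mathrm{col}(\BB)$, and symmetrically for its row space --- so the constrained and unconstrained minima coincide and are both attained at $\YY = [\PP_B\AA\PP_C]_k$. Pulling this back through $\XX = \BB^{\dagger}\YY\CC^{\dagger}$ gives the claimed minimizer $\XX = \BB^{\dagger}[\PP_B\AA\PP_C]_k\CC^{\dagger}$.

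The routine parts are the trace identities behind the orthogonality claim and the standard pseudoinverse facts ($\BB\BB^{\dagger}\BB = \BB$, $\PP_B\YY = \YY$ when $\mathrm{col}(\YY)\subseteq\mathrm{col}(\BB)$, and their transposes). The step that genuinely needs care --- and is the heart of the argument --- is verifying that $[\PP_B\AA\PP_C]_k$ respects the two range constraints, since this is precisely what lets us discard the constraints and invoke Eckart--Young directly rather than solving a genuinely constrained low-rank approximation problem; it also transparently covers the degenerate cases $\mathrm{rank}(\BB) < k$ or $\mathrm{rank}(\CC) < k$, where the truncation simply returns fewer than $k$ nonzero singular values and remains feasible.
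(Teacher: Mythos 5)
Your proof is correct. Note that the paper does not prove this statement at all: it imports it wholesale from Friedland and Torokhti \cite{friedland2007generalized}, so there is no internal proof to compare against. Your argument is the standard and complete route to this result: the Frobenius-orthogonality of $\AA-\PP_B\AA\PP_C$ against the subspace $\{\PP_B\MM\PP_C\}$ (which follows from idempotence, symmetry, and cyclicity of the trace exactly as you indicate), the Pythagorean split that isolates an $\XX$-independent term, the identification of $\{\BB\XX\CC:\mathrm{rank}(\XX)\leq k\}$ with the rank-$\leq k$ matrices whose column and row spaces lie in $\mathrm{col}(\BB)$ and $\mathrm{row}(\CC)$ (using $\BB\BB^{\dagger}\BB=\BB$, $\CC\CC^{\dagger}\CC=\CC$), and finally the observation that $[\PP_B\AA\PP_C]_k$ automatically satisfies both range constraints, so the constrained minimum coincides with the unconstrained Eckart--Young minimum. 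You correctly flag that last step as the crux; it is exactly what makes the closed form valid, and your handling of the degenerate case $\mathrm{rank}(\PP_B\AA\PP_C)<k$ is fine since the truncation then just has fewer nonzero singular values. One small point worth keeping in mind (not a gap for this statement): this clean reduction is specific to the Frobenius norm, where Eckart--Young applies and orthogonal complements split the cost additively; the analogous closed form under the spectral norm requires a genuinely different argument.
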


We apply the above theorem to Equation \ref{eqn:sketched_approx_regression}. 
Both the query complexity and running time here contribute a lower-order term and we can afford to compute the $\textsf{SVD}$ for each term. Let $\XX^*$ be the solution to the sketched optimization problem in Equation \ref{eqn:sketched_approx_regression}. Then, we can compute $\UU^*$, an orthonormal column basis for $\XX^*$ and consider $\MM $, an orthonormal basis for $\QQ \UU^* \in \mathbb{R}^{n \times k}$ to be one of the low-rank factors for $\AA$. To find the second factor, we set up the following regression problem: 
\begin{equation}
\label{eqn:final_regression}
    \min_{\NN \in \mathbb{R}^{k \times n}}\|\AA - \MM \NN \|^2_F
\end{equation}

Again, $\MM$ has orthonormal columns and thus we can efficiently compute the corresponding row leverage scores and sample $k/\epsilon$ rows. By Lemma \ref{lem:affine_embedding} this achieves a $(1+\epsilon)$-approximation to the optimal cost in Equation \ref{eqn:final_regression} and obtains an $\NN^*$ with $\widetilde{O}(nk/\epsilon)$ queries to $\AA$. At this stage, we have obtained a $(1+\epsilon)$-approximate rank-$k$ solution to Equation \ref{eqn:approx_regression} and Lemma \ref{lem:structured_projection} implies that we are done. We now formalize this argument:

\begin{theorem}(Structured Projection to Low-Rank Approximation.)
\label{thm:struc_proj_to_lra}
Given a rank-$k'$ projection matrix $\PP = \QQ \QQ^{\top}$, such that $\PP$ is an $(\epsilon, k)$-\textsf{SF} projection, Algorithm \ref{alg:proj_to_rank} queries $\widetilde{O}(nk/\epsilon + k'^2/\epsilon^4)$
entries in $\AA$ and with probability $99/100$ outputs $\MM, \NN^{\top} \in \mathbb{R}^{n \times k}$ such that
\begin{equation*}
    \|\AA - \MM \NN \|^2_F \leq (1+\epsilon)\|\AA - \AA_k\|^2_F
\end{equation*}
Further, Algorithm \ref{alg:proj_to_rank} runs in time $\widetilde{O}( n(k/\epsilon)^{\omega-1} + nk'^{\omega-1} + (k'/\epsilon^2)^\omega)$.
\end{theorem}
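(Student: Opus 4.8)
The plan is to follow Algorithm \ref{alg:proj_to_rank} step by step, using Lemma \ref{lem:structured_projection} to certify the target cost, a single orthogonal decomposition together with two affine embeddings (Lemma \ref{lem:affine_embedding}) to pass to the sketched problem, Theorem \ref{thm:gen_lra} to solve it exactly, and Lemma \ref{lem:fast_regression} to read off the two output factors.

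First I would record what the \textsf{SF} hypothesis buys us: by Lemma \ref{lem:structured_projection}, $\|\AA-\PP\AA_k\PP\|_F^2\le(1+\epsilon)\|\AA-\AA_k\|_F^2$, and since $\PP\AA_k\PP=\QQ(\QQ^\top\AA_k\QQ)\QQ^\top$ with $\mathrm{rank}(\QQ^\top\AA_k\QQ)\le k$, the Step-1 objective satisfies
\[
\mathrm{OPT}:=\min_{\mathrm{rank}(\XX)\le k}\|\AA-\QQ\XX\QQ^\top\|_F^2\le(1+\epsilon)\|\AA-\AA_k\|_F^2,
\]
so it suffices to produce a rank-$k$ matrix with cost $(1+O(\epsilon))\,\mathrm{OPT}$. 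Next I would observe the orthogonal decomposition $\AA-\QQ\XX\QQ^\top=\EE+\QQ(\QQ^\top\AA\QQ-\XX)\QQ^\top$, where $\EE=\AA-\PP\AA\PP$ and the two summands are Frobenius-orthogonal because $\langle\AA-\PP\AA\PP,\PP\ZZ\PP\rangle=0$ for all $\ZZ$; hence $\|\AA-\QQ\XX\QQ^\top\|_F^2=\|\EE\|_F^2+g(\XX)$ with $g(\XX):=\|\PP\AA\PP-\QQ\XX\QQ^\top\|_F^2\ge0$. Since $\QQ$ has orthonormal columns its leverage scores $\tau_i(\QQ)=\|\QQ_{i,*}\|_2^2$ are exact and $\mathrm{rank}(\QQ)=k'$, so the matrices $\SS,\TT$ of Step~3, each sampling $O(k'/\epsilon^2)$ rows (resp.\ columns) proportional to $\tau_i(\QQ)$, are affine embeddings for $\QQ$ in the sense of Lemma \ref{lem:affine_embedding}. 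I would then apply the subspace-embedding part of that lemma on the left (via $\SS\QQ$) and on the right (via $\TT$) — both $\PP\AA\PP-\QQ\XX\QQ^\top$ and its transpose have column space inside that of $\QQ$ — which gives $\|\SS(\PP\AA\PP-\QQ\XX\QQ^\top)\TT\|_F^2=(1\pm O(\epsilon))\,g(\XX)$; the quantity $\|\SS\EE\TT\|_F^2$ does not depend on $\XX$; and the cross term $\langle\SS\EE\TT,\SS(\PP\AA\PP-\QQ\XX\QQ^\top)\TT\rangle$, which is the sketched image of $\langle\EE,\PP\AA\PP-\QQ\XX\QQ^\top\rangle=0$, is at most $O(\epsilon)\|\EE\|_F\sqrt{g(\XX)}$ by the approximate-matrix-product guarantee bundled into the affine embedding. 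Folding the cross term into the other two using $\|\EE\|_F^2\le\mathrm{OPT}$ and an AM--GM split, I obtain, for every rank-$\le k$ matrix $\XX$,
\[
\|\SS\AA\TT-\SS\QQ\XX\QQ^\top\TT\|_F^2=\Delta+(1\pm O(\epsilon))\,g(\XX),
\]
with $\Delta$ independent of $\XX$.

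From here the argument is routine. Since $\Delta$ does not move the minimizer and $g(\cdot)\ge0$, the optimizer $\XX^{*}$ of the Step-4 sketched problem satisfies $g(\XX^{*})\le\tfrac{1+O(\epsilon)}{1-O(\epsilon)}\,g(\XX_0)$ for $\XX_0$ attaining $\mathrm{OPT}$, and re-adding $\|\EE\|_F^2$ gives $\|\AA-\QQ\XX^{*}\QQ^\top\|_F^2\le(1+O(\epsilon))\,\mathrm{OPT}\le(1+O(\epsilon))\|\AA-\AA_k\|_F^2$. The Step-4 problem is an instance of Generalized Low-Rank Approximation with design matrices $\SS\QQ$ and $\QQ^\top\TT$, so Theorem \ref{thm:gen_lra} gives the stated closed form for $\XX^{*}$; every matrix there has dimension $\mathrm{poly}(k'/\epsilon)$, the only entries of $\AA$ touched are the $O((k'/\epsilon^2)^2)$ entries of $\SS\AA\TT$, and the dominant cost is the SVD of a $t\times t$ matrix, $\widetilde O((k'/\epsilon^2)^\omega)$. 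For the output factors I would argue as in Steps~5--6: with $\UU^{*}$ an orthonormal column basis of $\XX^{*}$, the column space of $\QQ\XX^{*}\QQ^\top$ lies in that of $\QQ\UU^{*}$, hence in that of $\MM$, so $\min_{\YY}\|\AA-\MM\YY\|_F^2\le\|\AA-\QQ\XX^{*}\QQ^\top\|_F^2\le(1+O(\epsilon))\|\AA-\AA_k\|_F^2$; since $\MM$ is orthonormal its leverage scores are exact, so sampling $O(k/\epsilon)$ rows and invoking Lemma \ref{lem:fast_regression} returns $\NN$ with $\|\AA-\MM\NN\|_F^2\le(1+O(\epsilon))\|\AA-\AA_k\|_F^2$, reading $\widetilde O(nk/\epsilon)$ entries of $\AA$ in time $\widetilde O(n(k/\epsilon)^{\omega-1})$. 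A constant rescaling of $\epsilon$ and a union bound over the $O(1)$ sketching events give the $99/100$ guarantee; summing the three stages plus $\widetilde O(n(k')^{\omega-1})$ for the leverage scores of $\QQ$, the Step-4 projections and pseudoinverses, and an orthonormal basis for $\QQ\UU^{*}$ (by block multiplication of an $n\times k'$ by a $k'\times k$ matrix) yields the claimed query and time bounds.

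The hard part, and the real content, is establishing the displayed identity $\|\SS\AA\TT-\SS\QQ\XX\QQ^\top\TT\|_F^2=\Delta+(1\pm O(\epsilon))g(\XX)$ with a \emph{single} $\XX$-independent shift. Naively applying Lemma \ref{lem:affine_embedding} once for $\TT$ and then again for the resulting problem with $\SS$ does not compose, because the residual terms the two invocations drop — which can be as large as $\|\AA-\AA\PP\|_F^2$, hence $\gg\mathrm{OPT}$ when $\AA$ is nearly rank $k$ — end up multiplied by $\XX$-dependent $(1\pm\epsilon)$ factors. The orthogonal decomposition above is what fixes this: it confines the only $\XX$-dependent piece, $\PP\AA\PP-\QQ\XX\QQ^\top$, to the row and column span of $\QQ$, so that a two-sided subspace embedding already controls it, while the cross term is left as the sketched image of an exact zero and handled by approximate matrix product. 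One must also keep in mind that the minimum \emph{value} of the sketched objective is \emph{not} a $(1+\epsilon)$-approximation of $\mathrm{OPT}$ — $\Delta$ cannot be estimated in sublinear time, which is precisely the obstruction to estimating the optimal low-rank approximation cost — so the whole argument has to be routed through the minimizer, never through the value.
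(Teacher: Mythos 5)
Your proposal follows the same skeleton as the paper's proof: Lemma \ref{lem:structured_projection} to certify that some rank-$k$ $\XX$ achieves cost $(1+\epsilon)\|\AA-\AA_k\|_F^2$, leverage-score sampling of $\QQ$ on both sides, the closed form of Theorem \ref{thm:gen_lra} for the sketched problem, an orthonormal basis $\MM$ for $\QQ\UU^*$ followed by Lemma \ref{lem:fast_regression} for the second factor, and the same query/time accounting. The one place you genuinely diverge is the argument that the sketched minimizer remains near-optimal: the paper applies Lemma \ref{lem:affine_embedding} once for $\SS$ and once for $\TT$ and treats the resulting shifts as $\XX$-independent, whereas you split $\AA-\QQ\XX\QQ^\top$ orthogonally into $\EE=\AA-\PP\AA\PP$ and $\QQ(\QQ^\top\AA\QQ-\XX)\QQ^\top$, handle the second piece with a two-sided subspace embedding, and treat the cross term by approximate matrix product. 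That variant is legitimate and makes the ``single fixed shift'' claim literally true, so it is an acceptable, arguably cleaner, route to the same conclusion.

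Two points, however. First, your stated reason for abandoning the paper's route is wrong: the double application of Lemma \ref{lem:affine_embedding} \emph{does} compose here, because the residuals it drops are never $\gg\mathrm{OPT}$. Indeed $\|(\II-\PP)\AA\|_F^2=\min_{\HH}\|\AA-\QQ\HH\|_F^2\le\min_{\mathrm{rank}(\XX)\le k}\|\AA-\QQ\XX\QQ^\top\|_F^2=\mathrm{OPT}$ (take $\HH=\XX\QQ^\top$), and likewise $\|\AA(\II-\PP)\|_F^2\le\mathrm{OPT}$; combined with Lemma \ref{lem:structured_projection}, $\mathrm{OPT}\le(1+\epsilon)\|\AA-\AA_k\|_F^2$, so even when $\AA$ is nearly rank $k$ the dropped terms shrink together with $\mathrm{OPT}$. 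Consequently the $\XX$-dependent $(1\pm\epsilon)$ factors that multiply those shifts in the composed guarantee cost only $O(\epsilon)\,\mathrm{OPT}$ — exactly the slack you yourself absorb in the AM--GM step — which is how the paper's argument (and its informal ``the additive error is fixed'' remark) is salvaged. Second, your cross-term bound $|\langle\SS\EE\TT,\SS(\PP\AA\PP-\QQ\XX\QQ^\top)\TT\rangle|\le O(\epsilon)\|\EE\|_F\sqrt{g(\XX)}$ is asserted by appeal to ``the approximate-matrix-product guarantee bundled into the affine embedding,'' but Lemma \ref{lem:affine_embedding} is stated as a black box and exposes no such guarantee. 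You need the standard AMM fact for leverage-score sampling w.r.t.\ $\QQ$ (for fixed $\BB$, $\|\QQ^\top\SS^\top\SS\BB-\QQ^\top\BB\|_F\le\epsilon\|\BB\|_F$ with constant probability), applied twice — condition on $\SS$, then use the independence of $\TT$ — together with $\QQ^\top\EE\QQ=0$ and a bound such as $\|\QQ^\top\SS^\top\SS\EE\|_F=O(\|\EE\|_F)$. This is routine and can be cited from the same sources as Lemma \ref{lem:affine_embedding}, but as written it is the one unproven step in your argument and should be made explicit.
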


\begin{proof}
Since $\PP$ is an $(\epsilon, k)$-\textsf{SF} projection and $\AA_k$ is a feasible solution to $\min_{\textrm{rank}(\YY)\leq k} \|\AA - \PP \YY \PP\|^2_F$, from Lemma \ref{lem:structured_projection} we have
\begin{equation}
\label{eqn:approx_opt}
    \min_{\textrm{rank}(\YY)\leq k}\|\AA - \PP \YY \PP\|^2_F \leq (1+\epsilon)\|\AA - \AA_k\|^2_F
\end{equation}
Since $\PP = \QQ \QQ^{\top}$, we can substitute it in Equation \ref{eqn:approx_opt} to get $\min_{\textrm{rank}(\YY)\leq k} \|\AA - \QQ \QQ^{\top} \YY \QQ \QQ^{\top}\|^2_F$. We further relax this by optimizing over all rank-$k$ matrices $\XX \in \mathbb{R}^{k'\times k'}$ instead of matrices of the form $\QQ \YY \QQ^{\top}$. Therefore,
\begin{equation}
\label{eqn:approx_opt_smaller}
    \min_{\textrm{rank}(\XX)\leq k}\|\AA - \QQ \XX \QQ^{\top}\|^2_F \leq (1+\epsilon)\|\AA - \AA_k\|^2_F
\end{equation}
where we are now optimizing over a $k' \times k'$ matrix $\XX$, which is considerably smaller than $\YY$. Let $\SS,\TT^{\top} \in \mathbb{R}^{k'/\epsilon^2 \times n}$ be the \textit{leverage score sampling} matrices as defined in Algorithm \ref{alg:proj_to_rank}. Observe, from Lemma \ref{lem:affine_embedding} we know $\SS$ has a sufficient number of rows to be an affine embedding for Equation \ref{eqn:approx_opt_smaller}. However, we cannot directly apply the affine embedding guarantee since $\AA - \QQ\XX\QQ^{\top}$ is not an affine space. Let $\HH$ be $k' \times n$ matrix, let $\HH^* =\arg\min_{\HH}\|\AA - \QQ \HH \|^2_F$ and let $\AA^*$ be $\AA -\QQ \HH^*$. Then, with probability at least $1- c_1$, for all $\HH$, 
\begin{equation}
    \label{eqn:real_affine_sketch}
    \|\SS \AA - \SS\QQ\HH \|^2_F - \|\SS \AA^*\|^2_F = (1\pm\epsilon) \|\AA - \QQ\HH \|^2_F +\|\AA^*\|^2_F
\end{equation}
Since Equation  \ref{eqn:real_affine_sketch} holds for all $\HH$, in particular it holds for all rank-$k$ matrices $\XX$ such that $\HH = \XX \QQ^{\top}$. Therefore, with probability at least $1-c_1$, for all rank $k$ matrices $\XX$,
\begin{equation}
\label{eqn:approx_opt_smaller_sketch1}
    \|\SS\AA - \SS\QQ\XX\QQ^{\top}\|^2_F -\|\SS\AA^* \|^2_F = (1 \pm \epsilon)\|\AA - \QQ\XX\QQ^{\top}\|^2_F + \| \AA^*\|^2_F
\end{equation}
Here, we observe that while we cannot estimate $\|\AA^*\|^2_F$ accurately, it is a fixed matrix independent of $\XX$ and thus we can still approximately optimize. Let $\zeta_1$ be the event that Equation \ref{eqn:approx_opt_smaller_sketch1} holds. We now use the sampling matrix $\TT$ to sketch $\|\SS \AA - \ZZ \QQ^{\top} \|^2_F$. Let $\ZZ' = \arg\min_{\ZZ}\|\SS \AA - \ZZ \QQ^{\top} \|^2_F $ and let $\SS\AA' = \SS \AA - \ZZ' \QQ^{\top}$. Then, with probability at least $1-c_2$, for all $\ZZ$,
\begin{equation}
\label{eqn:approx_opt_smaller_sketch2}
    \|\SS \AA\TT - \ZZ \QQ^{\top}\TT\|^2_F - \|\SS\AA' \TT\|^2_F = (1\pm \epsilon)\|\SS \AA - \ZZ \QQ^{\top}\|^2_F + \|\SS\AA'\|^2_F
\end{equation}
In particular, the above equation holds for all rank-$k$ matrices $\XX$ such that $\ZZ = \SS \QQ^{\top}\XX$. Let $\zeta_2$ be the event that the aforementioned equation holds. Combining equations \ref{eqn:approx_opt_smaller_sketch1} and \ref{eqn:approx_opt_smaller_sketch2} and conditioning on $\zeta_1$ and $\zeta_2$, for all rank-$k$ matrices $\XX$, 
\begin{equation}
    \|\SS \AA\TT - \SS\QQ\XX\QQ^{\top}\TT\|^2_F - \|\SS\AA' \TT\|^2_F = (1\pm\epsilon)^2 \left( \|\AA - \QQ \XX\QQ^{\top} \|^2_F + \|\SS\AA^* \|^2_F + \|\AA^*\|^2_F \right) + \|\SS\AA' \|^2_F
\end{equation}

Here, we observe that while the sketch does not preserve the cost of all $\XX$ up to relative error $(1+\epsilon)$, the additive error $\Delta \leq (1+\epsilon)\left(\|\SS\AA^* \|^2_F + \|\AA^*\|^2_F + \|\SS\AA'\|^2_F + \|\SS\AA'\TT \|^2_F \right) $ is fixed and is independent of $\XX$. 
Let $\XX^* = \arg\min_{\textrm{rank}(\XX)\leq k} \|\SS\AA\TT - \SS\QQ\XX\QQ^{\top}\TT \|^2_F$. Then, union bounding over $\zeta_1$ and $\zeta_2$, with probability $1-c_1 - c_2$, 
\begin{equation}
\label{eqn:sat_to_a_k}
    \|\AA - \QQ\XX^*\QQ^{\top}\|^2_F \leq (1 +\epsilon) \min_{\textrm{rank}(\XX)\leq k} \|\AA - \SS\QQ\XX\QQ^{\top}\|^2_F
\end{equation}
Therefore, it suffices to efficiently compute $\XX^*$. By Theorem \ref{thm:gen_lra}, we know that the sketched optimization problem above is minimized by $\XX^* = (\SS \QQ)^{\dagger} [\PP_{\SS \QQ} \SS\AA\TT \PP_{\QQ^{\top}\TT}]_k(\QQ^{\top}\TT)^{\dagger}$, which can be computed exactly as shown in Step 4 of Algorithm \ref{alg:proj_to_rank}.
We note that we can now explicitly compute $\SS \AA \TT$ by querying the relevant entries in $\AA$. Further, we can compute $\SS \QQ$ and $\QQ^{\top}\TT$ without querying $\AA$ at all. 
Recalling equation \ref{eqn:sat_to_a_k} we can approximate the optimal low rank approximation cost: $$\|\AA - \QQ\XX^*\QQ^{\top}\|^2_F \leq (1+O(\epsilon))\|\AA - \AA_k\|^2_F$$ 
While we have now approximately minimized the optimization problem from Equation \ref{eqn:approx_regression}, recall our goal was to obtain a rank-$k$ approximation to $\AA$ in factored form i.e., outputting $n \times k$ matrices $\MM, \NN^{\top}$ such that the low rank approximation is given by $\MM\NN$. 
Towards this end, we compute $\UU^*$, an orthonormal column basis for $\XX^*$  such that $\XX^* = \UU^* \VV^*$. Substituting this in the above equation we have 
\begin{equation}
\label{eqn:factorized_struc_proj}
    \|\AA -\QQ \UU^* \VV^* \QQ^{\top} \|^2_F\leq (1+\epsilon)\|\AA - \AA_k \|^2_F
\end{equation}
Let $\MM = \QQ \UU^* \in \mathbb{R}^{n \times k}$ 
be one of the low-rank factors for $\AA$. To find the second factor, we observe : 
\begin{equation}
\label{eqn:pre_final_regression}
    \min_{\YY \in \mathbb{R}^{k \times n}}\|\AA - \MM \YY \|^2_F \leq \|\AA - \MM \VV^* \QQ^{\top} \|^2_F
\end{equation}
\noindent and therefore, approximately optimizing $\min_{\YY \in \mathbb{R}^{k \times n}}\|\AA - \MM \YY \|^2_F$ suffices. Again, $\MM$ has orthonormal columns and thus we can efficiently compute the corresponding leverage scores to create a sketch $\WW$ with $O(k/\epsilon)$ rows. From Lemma \ref{lem:fast_regression},  with probability at least $1-c_3$ for all $\YY$, 
\begin{equation*}
    \|\WW\AA - \WW\MM \YY \|^2_F = (1\pm\epsilon)\| \AA - \MM \YY \|^2_F
\end{equation*}
Let $\NN$ be the optimal solution for the sketched problem as defined in Algorithm \ref{alg:proj_to_rank}. Then, with probability at least $1- c_3$,
\begin{equation}
    \|\AA - \MM \NN \|^2_F \leq \left(\frac{1+\epsilon}{1-\epsilon}\right)\min_{\YY\in \mathbb{R}^{k \times n} }\| \AA - \MM \YY \|^2_F
\end{equation}
We conclude correctness by union bounding over the failure probabilities of all the sketches and observing that with probability at least $99/100$,
\begin{equation*}
    \|\AA - \MM \NN \|^2_F \leq (1+O(\epsilon))\|\AA - \MM\VV^* \QQ^{\top} \|^2_F \leq (1+O(\epsilon))\|\AA - \AA_k\|^2_F
\end{equation*}
where the inequalities follow from Equations \ref{eqn:final_regression}, \ref{eqn:pre_final_regression} and \ref{eqn:factorized_struc_proj}.

Finally, we analyze the query complexity and running time of our algorithm.  Since Algorithm \ref{alg:proj_to_rank} is given $\QQ$ as input, computing the leverage scores in Step 2 requires no queries to $\AA$ and requires $O(nk')$ time. Next, observe we do not have to explicitly compute $\SS \AA$ or $\AA \TT$, since $\SS \AA \TT$ is simply a submatrix of $\AA$ with $(k'/\epsilon^2)^2$ entries appropriately scaled, it suffices to query them. $\SS \AA \TT$ can be computed in $O(k'^2/\epsilon^4)$ time. Next, we compute $\textsf{SVD}(\SS\QQ)$ and $\textsf{SVD}(\QQ^{\top}\TT)$, which requires no queries to $\AA$ and time $O(k'^{\omega}/\epsilon^2)$. We can then compute $(\SS\QQ)^{\dagger}, (\QQ^{\top}\TT)^{\dagger},\PP_{\SS\QQ}$ and $\PP_{\QQ^{\top}\TT}$ from the aforementioned \textsf{SVD}s. Next, we compute the matrix $\PP_{\SS\QQ}\SS\AA\TT\PP_{\QQ^{\top}\TT}$, which requires no extra queries to $\AA$ and time $O( (k'/\epsilon^2)^\omega)$, which is also the time required to compute its $\textsf{SVD}$. We can then compute $\XX^*$ in Step 4 with a total of  $O(k'^2/\epsilon^4)$ queries to $\AA$ in time $O(nk' + (k'/\epsilon^2)^{\omega})$. 

In Step 5, we can compute $\UU^*$ by computing the $\textsf{SVD}$ of $\XX^*$ and compute $\MM$ in time $O(nk'^{\omega-1} + k'^\omega)$ and do not require any queries to $\AA$. In Step 6, computing $\WW\AA$ requires $O(nk/\epsilon)$ queries to $\AA$, since $\WW$ has $\widetilde{O}(k/\epsilon)$ rows. Note, this step contributes the leading term to the query complexity and it is crucial $\WW$ does not have more rows. By Lemma \ref{lem:fast_regression}, $\NN$ can be computed in time $\widetilde{O}(nk/\epsilon + n(k/\epsilon)^{\omega-1} + k^3/\epsilon)$.  
Overall, Algorithm \ref{alg:proj_to_rank} requires $\widetilde{O}(nk/\epsilon + k'^2/\epsilon^4)$ queries to $\AA$ and runs in time $\widetilde{O}( n(k/\epsilon)^{\omega-1} + nk'^{\omega-1} + (k'/\epsilon^2)^\omega)$.
\end{proof}

In light of Theorem \ref{thm:struc_proj_to_lra}, to obtain a low rank approximation for $\AA$, it suffices to obtain an \textsf{SF} Projection. In particular, it suffices to obtain a matrix $\QQ \in \mathbb{R}^{n \times k'}$, for $k' = \textrm{poly}(k,1/\epsilon)$ such that $\PP = \QQ\QQ^{\top}$ is an $(\epsilon,k)$-\textsf{SF} projection, by querying $\widetilde{O}(nk/\epsilon)$ entries in $\AA$. 
One possible approach to computing such a $\QQ$ is to use the following result by Musco and Woodruff:


\begin{theorem}(Theorem 25, \cite{mw17}.)
Given a PSD matrix $\AA$, integer $k$, $\epsilon>0$, there exists an algorithm that reads $\widetilde{O}(nk/\epsilon^6 + nk^2/\epsilon^2)$ entries of $\AA$ and with probability at least $99/100$, outputs $\MM,\NN^{\top} \in \mathbb{R}^{n \times k}$ such that 
\begin{equation*}
    \|\AA - \MM \NN \|^2_2 \leq (1+\epsilon)\| \AA - \AA_k\|^2_2 + \frac{\epsilon}{k}\|\AA - \AA_k \|^2_F
\end{equation*}
\end{theorem}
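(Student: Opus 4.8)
The plan is to compress $\AA$ to a tiny matrix using two projection‑cost preserving sketches, compute a spectral‑norm low‑rank approximation of the tiny matrix, and lift the resulting subspace back to a factored rank‑$k$ approximation of $\AA$ via sketch‑and‑solve regression. First I would apply Lemma~\ref{lem:approx_ridge_leverage} to obtain constant‑factor approximations to the rank‑$k$ and rank‑$k/\epsilon^2$ ridge leverage scores of $\AA^{1/2}$, spending $\widetilde{O}(nk/\epsilon^2)$ queries. Oversampling columns of $\AA$ by the rank‑$k$ scores yields a Spectral‑Frobenius column PCP $\CC = \AA\TT_1$ (Lemma~\ref{lem:rel_col_pcp} and the Spectral‑Frobenius guarantee stated after it), and oversampling rows of $\CC$ by the rank‑$k/\epsilon^2$ scores yields a Spectral‑Frobenius row PCP $\RR$ (Lemma~\ref{lem:rel_row_pcp_mixed}); no entries are read so far beyond the ridge‑leverage computation. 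Taking the error parameter to be $\Theta(\epsilon)$ throughout, $\RR$ is small enough that reading all of it costs the claimed $\widetilde{O}(nk/\epsilon^6)$ queries.

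Next I would take $\ZZ$ to span the top‑$k$ right singular subspace of $\RR$, so $\RR\ZZ\ZZ^\top=\RR_k$ exactly. Chaining the two Spectral‑Frobenius PCP inequalities — upward from $\|\RR-\RR_k\|_2^2=\min_{\mathrm{rank}(\XX)\le k}\|\RR-\RR\XX\|_2^2$, using that the optimal rank‑$k$ left and right projections of $\CC$ are spectrally comparable to those of $\AA$, and downward by plugging $\ZZ\ZZ^\top$ back in — shows $\|\CC-\CC\ZZ\ZZ^\top\|_2^2\le(1+O(\epsilon))\|\AA-\AA_k\|_2^2+O(\epsilon/k)\|\AA-\AA_k\|_F^2$. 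Running the identical computation with the Frobenius PCP guarantees (constant error parameter) gives the auxiliary bound $\|\CC-\CC\ZZ\ZZ^\top\|_F^2=O(\|\AA-\AA_k\|_F^2)$, which is essential below. Letting $\PP=\CC\ZZ(\CC\ZZ)^\dagger$ be the orthogonal projection onto the $k$‑dimensional column space of $\CC\ZZ$ and using $(\II-\PP)\CC=(\II-\PP)\CC(\II-\ZZ\ZZ^\top)$ to kill the heavy part, the column PCP transfers both bounds to $\AA$: $\|\AA-\PP\AA\|_2^2\le(1+O(\epsilon))\|\AA-\AA_k\|_2^2+O(\epsilon/k)\|\AA-\AA_k\|_F^2$ and $\|\AA-\PP\AA\|_F^2=O(\|\AA-\AA_k\|_F^2)$, and $\PP\AA$ is exactly rank $k$.

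To produce a factored form I would set the left factor to $\MM=\CC\ZZ\in\mathbb{R}^{n\times k}$ and take $\NN=\widehat{\YY}$ to be an approximate minimizer of $\min_{\YY}\|\MM\YY-\AA\|_F^2$ obtained by leverage‑score sampling (Lemma~\ref{lem:fast_regression}) with accuracy parameter $\epsilon'=\Theta(\epsilon^2/k)$; reading the $\widetilde{O}(k/\epsilon')=\widetilde{O}(k^2/\epsilon^2)$ sampled rows of $\AA$ accounts for the remaining $\widetilde{O}(nk^2/\epsilon^2)$ queries. To see this suffices, write $\|\MM\widehat{\YY}-\AA\|_2\le\|\AA-\PP\AA\|_2+\|\MM\widehat{\YY}-\PP\AA\|_2$. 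Since $\MM\MM^\dagger=\PP$, Pythagoras gives $\|\MM\widehat{\YY}-\PP\AA\|_F^2=\|\MM\widehat{\YY}-\AA\|_F^2-\|\AA-\PP\AA\|_F^2\le\epsilon'\|\AA-\PP\AA\|_F^2=O(\epsilon^2/k)\|\AA-\AA_k\|_F^2$, so $\|\MM\widehat{\YY}-\PP\AA\|_2^2$ is at most this, and by AM–GM the cross term $2\|\AA-\PP\AA\|_2\,\|\MM\widehat{\YY}-\PP\AA\|_2$ is bounded by $\epsilon\|\AA-\AA_k\|_2^2+O(\epsilon/k)\|\AA-\AA_k\|_F^2$; squaring and rescaling $\epsilon$ yields the theorem.

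The main obstacle is making the lift‑back genuinely sublinear: the left factor $\MM=\CC\ZZ=\AA\TT_1\ZZ$ is an implicit object — each of its $n$ rows touches all $\widetilde{O}(\sqrt{nk}/\poly(\epsilon))$ sampled columns of $\AA$ — so one cannot form $\MM$, orthonormalize it, or read its leverage scores directly. The regression must therefore be run with the implicit representation $\CC\ZZ$ while sampling rows according to the already‑computed ridge leverage scores of $\AA^{1/2}$ rather than the true leverage scores of $\CC\ZZ$; justifying that this substitution still yields a valid affine embedding (Lemma~\ref{lem:affine_embedding}) requires showing the ridge‑leverage distribution dominates the leverage scores of the $k$‑dimensional column space of $\CC\ZZ$ up to a constant, which follows from the PCP analysis but must be made precise. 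A related subtlety worth emphasizing is that ordinary Frobenius‑accurate regression is by itself insufficient for a spectral‑norm conclusion — it is exactly the extra control $\|\AA-\PP\AA\|_F^2=O(\|\AA-\AA_k\|_F^2)$ together with the quadratically smaller accuracy $\epsilon'=\Theta(\epsilon^2/k)$ that lets the regression error be absorbed into the additive $\tfrac{\epsilon}{k}\|\AA-\AA_k\|_F^2$ term.
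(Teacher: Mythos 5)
First, note that the paper does not actually prove this statement: it is quoted verbatim from \cite{mw17}, so there is no in-paper proof to compare against, and your reconstruction has to stand on its own. Its skeleton is sound — the two weak Spectral--Frobenius PCPs, taking $\ZZ$ to be the top-$k$ right singular space of $\RR$, chaining the sandwich inequalities, transferring to $\AA$ through the rank-$k$ projection $\PP=\CC\ZZ(\CC\ZZ)^{\dagger}$, and the final accounting in which a Frobenius regression with accuracy $\epsilon'=\Theta(\epsilon^2/k)$ is absorbed into the $\tfrac{\epsilon}{k}\|\AA-\AA_k\|_F^2$ slack are all correct. The genuine gap is exactly the step you flag and then wave away: the design matrix $\MM=\CC\ZZ=\AA\TT_1\ZZ$ is implicit, and your proposed substitute — sampling rows by the precomputed ridge leverage scores of $\AA^{1/2}$, claiming they ``dominate the leverage scores of the $k$-dimensional column space of $\CC\ZZ$ up to a constant'' — is unsupported and false in general. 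Leverage scores are monotone only under inclusion of column spaces with respect to \emph{plain} leverage scores of the ambient space, not with respect to rank-$k$ ridge leverage scores: already for $\AA=\II$ the scores $\rho_i^k(\AA^{1/2})$ are uniformly $k/n$, while $\mathrm{col}(\CC\ZZ)$ can be spanned by $k$ sampled coordinate directions whose leverage scores equal $1$, so no constant-factor domination holds. And even if you scale the ridge scores up by the $\sqrt{n/k}$ oversampling factor, their sum is $\Theta(\sqrt{nk})$, so the number of sampled rows needed by Lemma \ref{lem:fast_regression} (which scales with the sum of the overestimates divided by the target accuracy $\epsilon^2/k$) would exceed the $\widetilde{O}(k^2/\epsilon^2)$-row budget by a polynomial factor. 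Converting a right-space guarantee on $\CC$ into an explicit left factor without reading $\CC$ is precisely the difficulty this paper's Section \ref{sec:spectral_regression} (Theorem \ref{thm:spectral_regression}, Lemma \ref{lem:weak_amm}) was built to handle, so it cannot be dismissed as a detail that ``follows from the PCP analysis.''

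The good news is that, for this particular statement, the additive $\tfrac{\epsilon}{k}\|\AA-\AA_k\|_F^2$ term lets you close the gap with the same trick you already use in your last step, rather than with spectral AMM machinery. Since $\ZZ^{\top}$ is explicit and orthonormal, first solve $\min_{\WW}\|\CC-\WW\ZZ^{\top}\|_F^2$ by sampling $\widetilde{O}(k^2/\epsilon^2)$ columns of $\CC$ (i.e., columns of $\AA$) proportional to the leverage scores of $\ZZ^{\top}$, i.e., run the affine embedding of Lemma \ref{lem:affine_embedding} with accuracy $\Theta(\epsilon^2/k)$; this costs $\widetilde{O}(nk^2/\epsilon^2)$ queries. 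Writing $\CC-\widehat{\WW}\ZZ^{\top}=\CC(\II-\ZZ\ZZ^{\top})+(\CC\ZZ-\widehat{\WW})\ZZ^{\top}$ and using the Pythagorean identity, the second term has squared Frobenius (hence spectral) norm $O(\epsilon^2/k)\|\AA-\AA_k\|_F^2$, so projecting onto the explicit $k$-dimensional column space of $\widehat{\WW}$ preserves your spectral bound up to the allowed additive error. Orthonormalizing $\widehat{\WW}$ to get an explicit $\QQ$ and then running your final regression $\min_{\YY}\|\QQ\YY-\AA\|_F^2$ with exactly computable leverage scores completes the argument within $\widetilde{O}(nk/\epsilon^6+nk^2/\epsilon^2)$ queries. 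As written, however, the proposal's lift-back step does not go through.
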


\noindent Instantiating this theorem with $\epsilon = O(1)$ and $k = k/\epsilon$, we obtain a matrix $\MM,\NN^{\top} \in \mathbb{R}^{n \times k/\epsilon}$ such that
\begin{equation*}
\begin{split}
    \|\AA -  \MM \NN^{\top}\|^2_2 & \leq O(1)\|\AA - \AA_{k/\epsilon} \|^2_2 + O\left(\frac{\epsilon}{k}\right) \|\AA - \AA_{k/\epsilon}\|^2_F\\
    & \leq O\left(\frac{\epsilon}{k}\right) \|\AA - \AA_k\|^2_F
\end{split}
\end{equation*}
\noindent where the last inequality follows from observing $\|\AA -\AA_{k/\epsilon}\|^2_F\leq \|\AA -\AA_k\|^2_F$ and
\begin{equation*}
    \|\AA - \AA_k\|^2_F = \sum^n_{j = k+1} \sigma^2_j(\AA) \geq \left(\frac{k}{\epsilon}- k\right)\sigma^2_{k/\epsilon}\geq \left(\frac{k}{\epsilon}- k\right)\|\AA -\AA_{k/\epsilon}\|^2_2
\end{equation*}
We can then compute an orthonormal basis for $\MM$ and denote it by $\QQ$. Here, we observe $\PP = \QQ\QQ^{\top}$ is an $(\epsilon,k)$-\textsf{SF} projection matrix. 
Further, the algorithm of Musco and Woodruff instantiated with the above parameters queries $\widetilde{O}(nk^2/\epsilon^2)$ entries in $\AA$.  
As a corollary of Theorem \ref{thm:struc_proj_to_lra}, providing the rank-$k/\epsilon$ projection matrix $\QQ$ as input to Algorithm \ref{alg:proj_to_rank}, implies an algorithm for low rank approximation which queries $\widetilde{O}(nk^2/\epsilon^2)$ entries in $\AA$. This already improves the $\epsilon$-dependence in the query complexity of best known algorithm for PSD low-rank approximation, since the algorithm of Musco and Woodruff requires $\widetilde{O}(nk/\epsilon^{2.5})$ queries \cite{mw17}. Note, this algorithm has worse dependence on $k$. However, our goal is to obtain linear dependence on both $k$ and $1/\epsilon$. Towards this end, we focus on obtaining an \textsf{SF} projection with fewer queries to $\AA$.  


\subsection{Spectral Regression}
\label{sec:spectral_regression}
In this subsection, we consider the \textit{Spectral Regression} problem. This problem is a natural generalization of least-squares regression, when the response variable is a matrix. \textit{Spectral Regression} arises in the context of Regularized Least Squares Classification, for instance \cite{chen2010multiclass}.  Given matrices $\AA \in \mathbb{R}^{n \times d}$, $\XX \in \mathbb{R}^{d \times m}$ and $\BB \in \mathbb{R}^{n \times m}$, the \textit{Spectral Regression} problem considers the following optimization problem: 
\begin{equation*}
    \min_{\XX} \|\AA\XX - \BB \|_2
\end{equation*}
We note that this is natural variant of multi-response regression where we minimize the difference between $\AA\XX$ and $\BB$ in spectral norm as opposed to the extensively studied and well-understood Frobenius norm. To the best of our knowledge the only relevant related work on \textit{Spectral Regression} is by Clarkson and Woodruff \cite{clarkson2009numerical} and Cohen et. al. \cite{cohen2015optimal}. Both these works provide oblivious sketches to reduce the dimension of the problem, which unfortunately do not suffice for our application. Instead of \textit{Spectral Regression} in its full generality we focus on the following special case:  

Given an $n \times n$ PSD matrix $\AA$, a rank parameter $k$, and an accuracy parameter $\epsilon$, let $\CC$ be a $n \times \sqrt{nk/\epsilon}$ matrix such that it is a \textit{column PCP} for $\AA$, satisfying the guarantees of Lemma \ref{lem:rel_col_pcp}, instantiated with $k = k/\epsilon$, and $\epsilon =O(1)$. Let $\ZZ^{\top}$ be a 
$k/\epsilon \times \sqrt{nk/\epsilon}$ matrix with orthonormal rows such that the corresponding projection matrix  $\ZZ \ZZ^{\top} $ is an $(O(1),k/\epsilon)$-\textsf{SF} Projection for $\CC$. Then, we consider the following \textit{Spectral Regression} problem:
\begin{equation}
\label{eqn:spec_regression}
    \min_{\WW \in \mathbb{R}^{n \times k/\epsilon}}\|\CC -\WW\ZZ^{\top}\|_2
\end{equation}


Our main technical contribution here is to obtain a new algorithm to solve this optimization problem. We subsequently show how understanding this special case is crucial to obtaining \textit{optimal} algorithms for low rank approximation of PSD matrices. The techniques we develop here may be of independent interest and find applications to other problems. Formally, we prove the following:

\begin{theorem}(Approximate Spectral Regression.)
\label{thm:spectral_regression}
Let $\CC \in \mathbb{R}^{n \times \sqrt{nk/\epsilon}}$ be a column PCP for $\AA$ satisfying the guarantees of Lemma \ref{lem:rel_col_pcp} instantiated with $k = k/\epsilon$ and $\epsilon=O(1)$. Let $\ZZ \in \mathbb{R}^{\sqrt{nk/\epsilon}\times k/\epsilon}$ be an orthonormal matrix such that $\ZZ \ZZ^{\top}$ is an $(O(1),k/\epsilon)$-\textsf{SF} projection for $\CC$. Then, Algorithm \ref{alg:approx_spectral_regression} queries $\widetilde{O}(nk/\epsilon)$ entries in $\AA$ and with probability $99/100$ computes $\widehat{\WW}$ such that 
\begin{equation*}
    \|\CC - \widehat{\WW} \ZZ^{\top} \|^2_2 \leq \widetilde{O}(1)\left(\min_{\WW \in \mathbb{R}^{n \times k/\epsilon}} \|\CC - \WW\ZZ^{\top}\|^2_2 + \frac{\epsilon}{k}\|\CC- \CC_{k/\epsilon} \|^2_F\right)
\end{equation*}
Further, the algorithm runs in time $\widetilde{O}(nk/\epsilon + (k/\epsilon)^{\omega})$.  
\end{theorem}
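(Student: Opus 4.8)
The plan is to output $\widehat{\WW}$ as the exact minimizer of a \emph{sketched} spectral regression and then bound the residual $\|\CC-\widehat{\WW}\ZZ^\top\|_2$ directly; as the paper notes, no sketch can preserve the optimal cost itself up to $1+\epsilon$, so the useful identity must come from the algebra of the minimizer. Since $\ZZ$ has orthonormal columns, its leverage scores are $\tau_i(\ZZ)=\|\ZZ_{i,*}\|_2^2=(\ZZ\ZZ^\top)_{i,i}\in[0,1]$, which we read off for free from the small given matrix $\ZZ$ without touching $\CC$ or $\AA$. Let $\SS$ be the $\sqrt{nk/\epsilon}\times t$ leverage-score sampling-and-rescaling matrix for $\ZZ$ with $t=\widetilde{O}(k/\epsilon)$ columns, so $\CC\SS$ is $\CC$ restricted to $t$ rescaled columns; because $\CC$ is a column PCP for $\AA$ (Lemma~\ref{lem:rel_col_pcp}), each column of $\CC$ is a rescaled column of $\AA$, hence forming $\CC\SS$ reads only $\widetilde{O}(k/\epsilon)$ columns, i.e.\ $\widetilde{O}(nk/\epsilon)$ entries, of $\AA$. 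Set $\widehat{\WW}=\CC\SS(\ZZ^\top\SS)^\dagger$: this minimizes $\min_{\WW}\|\CC\SS-\WW\ZZ^\top\SS\|_2$, because for $\BB$ of full row rank $\min_{\WW}\|\MM-\WW\BB\|_2=\|\MM(\II-\BB^\dagger\BB)\|_2$ is attained at $\WW=\MM\BB^\dagger$ (the lower bound being $\|\MM-\WW\BB\|_2\ge\|(\MM-\WW\BB)(\II-\BB^\dagger\BB)\|_2$, since $\II-\BB^\dagger\BB$ is a contraction). Because $t=\widetilde{O}(k/\epsilon)$, Lemma~\ref{lem:subspace_embedding} applied to the $k/\epsilon$-column matrix $\ZZ$ at constant accuracy gives that $\ZZ^\top\SS$ has full row rank, $(\ZZ^\top\SS)(\ZZ^\top\SS)^\dagger=\II_{k/\epsilon}$, and $\|\ZZ^\top\SS\SS^\top\ZZ-\II\|_2\le\tfrac12$.

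Write $\CC=\CC\ZZ\ZZ^\top+\CC^*$ with $\CC^*:=\CC(\II-\ZZ\ZZ^\top)$. Using $(\ZZ^\top\SS)(\ZZ^\top\SS)^\dagger=\II$, the first term is reproduced exactly by sketch-and-solve, $\CC\SS(\ZZ^\top\SS)^\dagger\ZZ^\top=\CC\ZZ\ZZ^\top+\CC^*\SS(\ZZ^\top\SS)^\dagger\ZZ^\top$, so $\CC-\widehat{\WW}\ZZ^\top=\CC^*-\CC^*\SS(\ZZ^\top\SS)^\dagger\ZZ^\top$. Writing $(\ZZ^\top\SS)^\dagger=\SS^\top\ZZ(\ZZ^\top\SS\SS^\top\ZZ)^{-1}$ and using $\|(\ZZ^\top\SS\SS^\top\ZZ)^{-1}\|_2\le2$ and $\|\ZZ^\top\|_2=1$ gives $\|\CC^*\SS(\ZZ^\top\SS)^\dagger\ZZ^\top\|_2\le2\|\CC^*\SS\SS^\top\ZZ\|_2$, whence $\|\CC-\widehat{\WW}\ZZ^\top\|_2\le\|\CC^*\|_2+2\|\CC^*\SS\SS^\top\ZZ\|_2$. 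Since $\ZZ\ZZ^\top$ is $(O(1),k/\epsilon)$-\textsf{SF} for $\CC$, $\|\CC^*\|_2^2=\|\CC-\CC\ZZ\ZZ^\top\|_2^2\le O(\epsilon/k)\|\CC-\CC_{k/\epsilon}\|_F^2$. So the theorem follows once we show $\|\CC^*\SS\SS^\top\ZZ\|_2^2\le\widetilde{O}(1)\bigl(\|\CC^*\|_2^2+\tfrac{\epsilon}{k}\|\CC-\CC_{k/\epsilon}\|_F^2\bigr)$. Observe $\CC^*\ZZ=\CC(\II-\ZZ\ZZ^\top)\ZZ=0$ and $\EX[\SS\SS^\top]=\II$, so $\CC^*\SS\SS^\top\ZZ$ is an unbiased estimate of the zero product $\CC^*\cdot\ZZ$: this is a \emph{weak, one-sided approximate matrix product} --- one-sided because $\SS$ is built from $\ZZ$ alone and is completely oblivious to $\CC^*$.

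This last bound, which is the content of Lemma~\ref{lem:weak_amm}, is the crux and is where all the difficulty sits. The usual spectral approximate matrix product (e.g.\ \cite{cohen2015optimal}) either uses an oblivious sketch or samples proportionally to the column norms of \emph{both} factors, and a naive matrix-Bernstein argument fails: a column $\CC^*_{*,i}$ of large norm lying where $\ZZ$ has tiny leverage is sampled with tiny probability but, once sampled, is inflated by the huge factor $1/\sqrt{t\,\tau_i(\ZZ)/(k/\epsilon)}$, so the summands of $\CC^*\SS\SS^\top\ZZ$ are heavy-tailed and unbounded. The plan is to avoid Bernstein: first reduce to controlling the spectral norm of the rescaled column-submatrix of $\CC^*$ picked by $\SS$; then partition the rows of $\CC$ into $O(\log n)$ level sets on which the row $\ell_2^2$-norms agree up to a factor $2$, and within each such set partition the columns (equivalently, the rows of $\ZZ$) into $O(\log n)$ level sets on which $\tau_i(\ZZ)$ agrees up to a constant. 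On each of the $O(\log^2 n)$ resulting blocks, leverage-score sampling restricted to that block is, after a coupling with a slightly larger \emph{uniform} sample, just uniform sampling of the columns of a matrix whose columns have (roughly) equal norm, so the Rudelson and Vershynin spectral-decay bound applies --- in fact the \emph{rectangular} variant \eqref{eqn:spectral_decay_rv}, whose $1/b$ gain ($b=\lceil n/m\rceil$) is exactly what a non-square block needs. Summing $\|\cdot\|_2^2$ over the blocks, the $\sqrt{q/n}\,\|\MM\|_2$ terms recombine to $O(\|\CC^*\|_2^2)$, and the $\|\MM\|_{(n/q)}$ terms, using $\|\MM\|_{(n/q)}^2\le\tfrac{q}{n}\|\MM\|_F^2$ together with the sampling rescalings and the $(O(1),k/\epsilon)$-\textsf{SF} property, contribute at most $\widetilde{O}(\epsilon/k)\,\|\CC-\CC_{k/\epsilon}\|_F^2$. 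Executing the coupling cleanly --- in particular establishing \eqref{eqn:spectral_decay_rv} and handling the fluctuation in how many samples land in each block --- is the main obstacle; the rest is bookkeeping.

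For the accounting: forming $\CC\SS$ costs $\widetilde{O}(nk/\epsilon)$ queries to $\AA$, and computing $\{\tau_i(\ZZ)\}_i$, $\ZZ^\top\SS$, $(\ZZ^\top\SS)^\dagger$ and the factored output $\widehat{\WW}=(\CC\SS)\,(\ZZ^\top\SS)^\dagger$ takes $\widetilde{O}(nk/\epsilon+(k/\epsilon)^{\omega})$ time and no further queries. A union bound over the subspace-embedding event of Lemma~\ref{lem:subspace_embedding} and the success event of Lemma~\ref{lem:weak_amm} (each with constants set so the total failure probability is at most $1/100$) completes the proof.
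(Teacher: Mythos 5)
Your proposal is essentially the paper's own proof: the same sketched closed-form solution $\widehat{\WW}=\CC\SS(\ZZ^{\top}\SS)^{\dagger}$, the same reduction via the $O(1)$-subspace-embedding bound on $\|(\ZZ^{\top}\SS\SS^{\top}\ZZ)^{-1}\|_2$ to controlling $\|\CC^*\SS\SS^{\top}\ZZ\|_2$, and the same crux, namely the one-sided weak spectral approximate matrix product handled by row-norm level sets, scaling-factor buckets, a coupling to uniform sampling, and the rectangular Rudelson--Vershynin decay bound, which is exactly the content of Lemma~\ref{lem:weak_amm} and Corollary~\ref{thm:rudelson_vershynin}. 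The one caution is that this crux is only outlined in your write-up, and when executing the coupling you must sample columns independently as in Algorithm~\ref{alg:approx_spectral_regression} and Lemma~\ref{lem:independent_sampling}, rather than with replacement via Lemma~\ref{lem:subspace_embedding}, since comparing the bucketed sample to a uniform subset relies on that independence --- this is precisely why the paper switches to independent sampling at this point.
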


\begin{Frame}[\textbf{Algorithm \ref{alg:approx_spectral_regression}} : Approximate Spectral Regression]
\label{alg:approx_spectral_regression}
\textbf{Input}: A PSD Matrix $\AA \in \mathbb{R}^{n \times n}$, integer $k$, and $\epsilon>0$. $\CC \in \mathbb{R}^{n \times \sqrt{nk/\epsilon}}$, a \textit{column PCP} for $\AA$ satisfying the guarantees of Lemma \ref{lem:rel_col_pcp} instantiated with $k = k/\epsilon$ and $\epsilon=O(1)$. $\ZZ \in \mathbb{R}^{\sqrt{nk/\epsilon}\times k/\epsilon}$ be an orthonormal matrix such that $\ZZ \ZZ^{\top}$ is an $(O(1),k/\epsilon)$-\textsf{SF} projection for $\CC$. 
\begin{enumerate}
    \item Consider the \textit{Spectral Regression} problem:
    $$\min_{\WW}\|\CC - \WW \ZZ^{\top}\|^2_2 $$
     Let $t = \sqrt{nk/\epsilon}$. For all $j \in [t]$, compute $\tau_j(\ZZ^{\top}) = \|\ZZ_{j,*}\|^2_2$. Let $q =\{q_1, q_2, \ldots, q_t\}$ be a distribution of columns of $\CC$ such that for all $j \in [t]$, $q_j = \min( \tau_{j}(\ZZ^{\top}), 1)$.
    \item Construct a sampling matrix $\SS$ such that $\CC \SS$ selects each column $\CC_{*,j}$ independently with probability $q_j$ and scales it by $1/\sqrt{q_j}$. Similarly, construct $\ZZ^{\top} \SS$. Consider the sketched optimization problem :
    $$\min_{\WW}\| \CC \SS - \WW\ZZ^{\top}\SS \|^2_2$$
    \item Compute $(\ZZ^{\top}\SS)^{\dagger} = \SS^{\top}\ZZ(\ZZ^{\top}\SS \SS^{\top}\ZZ)^{-1}$. Let $\widehat{\WW} = \CC\SS (\ZZ^{\top}\SS)^{\dagger}$ be the solution to the sketched optimization problem. 
\end{enumerate}
\textbf{Output:} $\widehat{\WW} \in \mathbb{R}^{n \times k/\epsilon}$ such that $\|\CC - \widehat{\WW}\ZZ^{\top}\|^2_2 \leq \widetilde{O}(1)\min_{\WW}\|\CC - \WW\ZZ^{\top}\|^2_F +\widetilde{O}(\epsilon/k)\|\CC - \CC_{k/\epsilon}\|^2_F$   
\end{Frame}

We begin by characterizing the optimal solution and optimal cost for the \textit{Spectral Regression} problem. We prove a structural result that shows the optimal solution for \textit{Spectral Regression} is given by projecting $\CC$ away from the span of $\ZZ^{\top}$. This matches the characterization of the optimal solution to regression under the Frobenius norm, given by the well-known normal equations. Recall, by definition of the Moore-Penrose pseudoinverse, this projection matrix is $(\ZZ^{\top})^{\dagger}\ZZ^{\top}$. 

Then, the optimal cost for Equation \ref{eqn:spec_regression} is $\|\CC - \CC (\ZZ^{\top})^{\dagger}\ZZ^{\top} \|^2_2$ and is achieved by $\WW^* = \CC(\ZZ^{\top})^{\dagger}$. Intuitively, we show that any feasible $\WW$ must incur the above cost by analyzing $\|y^{\top}(\CC - \CC (\ZZ^{\top})^{\dagger}\ZZ^{\top})\|^2_2$ for a fixed vector $y$. This enables us to exploit the geometry of Euclidean space and instantiate $y$ as needed to relate it back to the spectral norm.

\begin{lemma}(Characterizing Opt for Spectral Regression.)
\label{lem:characterizing_opt}
Let $\CC$ and $\ZZ$ be matrices as defined in Theorem \ref{thm:spectral_regression}. 
Let $\WW^* = \CC(\ZZ^{\top})^{\dagger} = \CC \ZZ (\ZZ^{\top} \ZZ)^{-1}$, such that $\WW^*\ZZ^{\top}$ is the projection of $\CC$ on the colspan$(\ZZ)$. Let $\CC^* = \CC - \WW^*\ZZ^{\top}$ be the projection of $\CC$ orthogonal to colspan$(\ZZ)$. Then, 
\begin{equation*}
    \|\CC^*\|^2_2 = \min_{\WW} \|\CC - \WW \ZZ^{\top} \|^2_2
\end{equation*}
and the corresponding minimizer is $\WW^*$.
\end{lemma}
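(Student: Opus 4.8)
The plan is to mimic the normal-equations characterization of Frobenius-norm least squares, but to carry it out one test vector at a time so that the argument transfers to the spectral norm. Since $\ZZ$ has orthonormal columns, $(\ZZ^\top)^\dagger = \ZZ$ and $\PP_\ZZ := \ZZ(\ZZ^\top\ZZ)^{-1}\ZZ^\top = \ZZ\ZZ^\top$ is the orthogonal projector of $\mathbb{R}^{\sqrt{nk/\epsilon}}$ onto $\mathrm{colspan}(\ZZ)$, so $\PP_\ZZ^2 = \PP_\ZZ = \PP_\ZZ^\top$ and $\ZZ^\top\PP_\ZZ = \ZZ^\top$. With this notation $\WW^*\ZZ^\top = \CC\PP_\ZZ$ and $\CC^* = \CC(\II - \PP_\ZZ)$, so the upper bound $\min_\WW\|\CC - \WW\ZZ^\top\|_2 \le \|\CC - \WW^*\ZZ^\top\|_2 = \|\CC^*\|_2$ is immediate.

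For the matching lower bound I would fix an arbitrary $\WW \in \mathbb{R}^{n \times k/\epsilon}$ and decompose
\[
\CC - \WW\ZZ^\top = \CC(\II - \PP_\ZZ) + (\CC\ZZ - \WW)\ZZ^\top = \CC^* + \RR, \qquad \RR := (\CC\ZZ - \WW)\ZZ^\top .
\]
The point is that the two summands occupy orthogonal row spaces: $\CC^*\PP_\ZZ = \CC(\II - \PP_\ZZ)\PP_\ZZ = 0$, so every row of $\CC^*$ is orthogonal to $\mathrm{colspan}(\ZZ)$, while $\RR = \RR\PP_\ZZ$ (using $\ZZ^\top\PP_\ZZ = \ZZ^\top$), so every row of $\RR$ lies in $\mathrm{colspan}(\ZZ)$. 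Hence for every $y \in \mathbb{R}^n$ the row vectors $y^\top\CC^*$ and $y^\top\RR$ are orthogonal, and by the Pythagorean theorem
\[
\|y^\top(\CC - \WW\ZZ^\top)\|_2^2 = \|y^\top\CC^*\|_2^2 + \|y^\top\RR\|_2^2 \ \ge\ \|y^\top\CC^*\|_2^2 .
\]

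Finally I would pass back from vectors to the operator norm using $\|\MM\|_2 = \max_{\|y\|_2 = 1}\|y^\top\MM\|_2$. Since the displayed inequality holds pointwise in $y$, evaluating it at a unit vector attaining $\max_{\|y\|_2=1}\|y^\top\CC^*\|_2 = \|\CC^*\|_2$ gives $\|\CC - \WW\ZZ^\top\|_2^2 \ge \|\CC^*\|_2^2$ for every $\WW$; together with the upper bound this shows $\WW^*$ is a minimizer with optimal cost $\|\CC^*\|_2$. I do not expect a real obstacle here — this is essentially the normal equations transported to spectral norm — but the two points to be careful about are (i) that the pointwise inequality is used in the direction that makes the two maxima comparable (rather than trying to share one maximizer between the two norms), and (ii) the small amount of projector bookkeeping needed if one prefers not to substitute $\ZZ^\top\ZZ = \II$ outright.
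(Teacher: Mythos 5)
Your proposal is correct and follows essentially the same route as the paper's proof: feasibility of $\WW^*$ gives the upper bound, and for the lower bound you decompose $\CC - \WW\ZZ^{\top}$ into $\CC^*$ plus a term lying in the row span of $\ZZ^{\top}$ (your $\RR=(\CC\ZZ-\WW)\ZZ^{\top}$ is exactly the paper's $-(\WW-\WW^*)\ZZ^{\top}$), apply the Pythagorean theorem to $y^{\top}(\CC-\WW\ZZ^{\top})$ for each test vector $y$, and conclude by choosing a unit $y$ attaining $\|y^{\top}\CC^*\|_2=\|\CC^*\|_2$. No gaps.
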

\begin{proof}
Note, by definition $\|\CC - \WW^* \ZZ^{\top} \|^2_2 = \|\CC^*\|^2_2$ and since $\WW^*$ is feasible, $$\min_{\WW}\|\CC - \WW\ZZ^{\top} \|^2_2\leq \| \CC^* \|^2_2.$$ 
Therefore, it suffices to show any $\WW$ must incur cost at least $\|\CC^*\|^2_2$.  By definition, we have
\begin{equation*}
    \begin{split}
        \CC = \CC (\ZZ^{\top})^{\dagger} \ZZ^{\top} + \CC (\II -(\ZZ^{\top})^{\dagger} \ZZ^{\top}) = \CC(\ZZ^{\top})^{\dagger} \ZZ^{\top} + \CC^* \\
    \end{split}
\end{equation*}
By definition of spectral norm, $\|\CC - \WW \ZZ^{\top}\|^2_2 \geq   \|y^{\top}\CC - y^{\top} \WW \ZZ^{\top}\|^2_2$, for all $y$ such that $\| y\|_2=1$. Next, for any unit vector $y \in \mathbb{R}^{n}$, \begin{equation}
\label{eqn:pythagorean}
\begin{split}
    \|y^{\top}\CC - y^{\top}\WW\ZZ^{\top} \|^2_2 & = \| y^{\top}(\CC(\ZZ^{\top})^{\dagger} \ZZ^{\top} + \CC^*) - y^{\top}\WW\ZZ^{\top} \|^2_2 \\
    & = \|y^{\top} \CC^* - y^{\top}(\WW - \WW^*)\ZZ^{\top}  \|^2_2 \\
    & = \|y^{\top} \CC^* \|^2_2 + \|y^{\top}(\WW -\WW^*)\ZZ^{\top} \|^2_2 + 2\langle y^{\top} \CC^* , y^{\top} (\WW - \WW^*)\ZZ^{\top} \rangle 
\end{split}
\end{equation}
We observe that $\WW \ZZ^{\top} = \CC (\ZZ^{\top})^{\dagger} \ZZ^{\top}$ is the projection of $\CC$ on the rowspan of $\ZZ^{\top}$ and $\CC^*$ is the projection of $\CC$ on the orthogonal complement of rowspan of $\ZZ^{\top}$. Therefore, $\langle \CC (\ZZ^{\top})^{\dagger} \ZZ^{\top} , \CC^* \rangle = 0$. Further, for any $y$, $y^{\top} (\WW - \WW^*)\ZZ^{\top}$ is in the row span of $\ZZ^{\top}$ and is thus perpendicular to $y^{\top} \CC^*$. Plugging this back in to Equation \ref{eqn:pythagorean}, we have 
\begin{equation}
\label{eqn:pytha_2}
    \|y^{\top}\CC - y^{\top}\WW\ZZ^{\top} \|^2_2  = \|y^{\top} \CC^* \|^2_2 + \|y^{\top}(\WW -\WW^*)\ZZ^{\top} \|^2_2 \geq \|y^{\top} \CC^*\|^2_2 
\end{equation}
where the inequality follows from non-negativity of norms. 
Since Equation \ref{eqn:pytha_2} holds for all $y$, we can pick $y$ such that $\| y^{\top} \CC^*\|^2_2 = \|\CC^*\|^2_2$. Therefore, $\|\CC - \WW\ZZ^{\top}\|^2_2 \geq \|y^{\top}\CC - y^{\top} \WW \ZZ^{\top}\|^2_2 \geq \|\CC^*\|^2_2$. This completes the proof. 
\end{proof}

Next, we  sketch the \textit{Spectral Regression} problem from Equation \ref{eqn:spec_regression} such that we approximately preserve the spectral norm cost of all $\WW \in \mathbb{R}^{n \times k/\epsilon}$. A natural approach here would be to follow the Affine Embedding idea for Frobenius norm and hope a similar guarantee holds for spectral norm as well. 
However, since $\ZZ$ could have rank as large as $k/\epsilon$ and we can no longer obtain a relative-error $(1+\epsilon)$-approximate Affine Embedding even for Frobenius norm without incurring a larger dependence on $\epsilon$. 
Instead, we relax the notion of approximation for our sketch. We note that it suffices to construct a sketch $\SS$ such that if 
\begin{equation*}
    \widehat{\WW} = \arg\min_{\WW}\|\CC\SS- \WW\ZZ^{\top} \SS\|^2_2
\end{equation*}
then 
\begin{equation*}
    \|\CC -\widehat{\WW}\ZZ^{\top} \|^2_2 \leq \widetilde{O}(1)\left(\min_{\WW \in \mathbb{R}^{n \times k/\epsilon}} \|\CC - \WW\ZZ^{\top}\|^2_2 + \frac{\epsilon}{k}\|\CC- \CC_{k/\epsilon} \|^2_F\right)
\end{equation*}
as stated in Theorem \ref{thm:spectral_regression}. Note, here we only need to weakly preserve the cost of the optimal $\WW$ for the sketched problem as opposed to preserving the cost of all matrices $\WW$. At a high level, this comes down to analyzing the spectrum of $\| \CC^* \SS \SS^{\top}  \|_2$
We begin with the definition of the Approximate Matrix Multiplication (AMM) guarantee and discuss its application in approximately minimizing \textit{Spectral Regression}. \textcolor{red}{} 

\begin{definition}\textit{($(\epsilon,  k)$-Spectral AMM.)}
Given matrices $\AA \in \mathbb{R}^{n \times m}$ and $\BB \in \mathbb{R}^{m \times d}$, a sketch $\PI \in \mathbb{R}^{m \times t}$ satisfies $(\epsilon, k)$-Spectral AMM if with probability at least $1-\delta$,
\begin{equation*}
    \|\AA \PI\PI^{\top} \BB - \AA \BB \|_2 \leq \epsilon \sqrt{\left(\| \AA \|^2_2 + \frac{\| \AA\|^2_F}{k}\right) \cdot \left(\|\BB \|^2_2 + \frac{\|\BB \|^2_F}{k}  \right)}
\end{equation*}
\end{definition}

\textit{Approximate Matrix Multiplication} was introduced by  Drineas et al. \cite{drineas2006fast} with respect to the Frobenius norm, as opposed to the spectral norm above. Subsequent work by Cohen et al. \cite{cohen2015optimal} studied the spectral norm bound and showed that any sketch $\PI$ that is an oblivious subspace embedding (i.e., satisfies Lemma \ref{lem:subspace_embedding} with $\PI$ being an oblivious sketch) implies an AMM guarantee, as long as $\PI$ has $\Theta(k +\log(1/\delta)/\epsilon^2)$ columns. The \textit{Spectral AMM} property combined with an $O(1)$-Subspace Embedding suffice to approximately minimize the Spectral Regression problem :

\begin{theorem}(Theorem 3, \cite{cohen2015optimal})
\label{thm:gen_regression}
Let $\AA$, $\BB$ and $\PI$ be as defined above. If $\PI$ is an $(\sqrt{\epsilon}, \textrm{rank}(\AA))$-\textit{Spectral AMM} for $\UU_A$ and $(\II -\PP_{\AA})\BB$, and an $O(1)$-Subspace Embedding for $\AA$,  and  $\widehat{\XX} =  \arg\min_{\XX}\|\PI\AA\XX- \PI\BB \|^2_2$, then with probability $99/100$,
\begin{equation*}
    \|\AA \widehat{\XX} - \BB \|^2_2 \leq (1+\epsilon)\|\PP_{\AA}\BB - \BB \|^2_2 + \frac{\epsilon}{k}\|\PP_{\AA}\BB - \BB\|^2_F
\end{equation*}
where $\UU_{\AA}$ is an orthonormal basis for $\AA$ and $\PP_{\AA}$ is the projection onto the span of $\AA$.
\end{theorem}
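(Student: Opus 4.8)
The plan is to run the standard ``sketch-and-solve'' argument for multi-response regression, but carried out entirely in spectral norm and driven by the Spectral AMM hypothesis in place of the usual Frobenius approximate matrix product. Write $\UU := \UU_{\AA}$ for an orthonormal basis of $\mathrm{colspan}(\AA)$, $\PP_{\AA} = \UU\UU^{\top}$, and $\BB^{\perp} := (\II - \PP_{\AA})\BB$ for the residual of the unsketched problem; by an argument identical to that of Lemma~\ref{lem:characterizing_opt} one has $\|\BB^{\perp}\|_2^2 = \min_{\XX}\|\AA\XX-\BB\|_2^2$, attained at $\AA^{\dagger}\BB$, and likewise $\widehat{\XX} = (\PI\AA)^{\dagger}\PI\BB$ is a minimizer of the sketched spectral objective; being the pseudoinverse solution it also satisfies the (Frobenius) normal equations $\AA^{\top}\PI^{\top}\PI(\AA\widehat{\XX}-\BB) = 0$. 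One does have to commit to this particular argmin, since a generic minimizer of a spectral objective need not satisfy the normal equations.

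First I would decompose the error: since $\AA\widehat{\XX}$ and $\PP_{\AA}\BB$ both lie in $\mathrm{colspan}(\AA)$, write $\AA\widehat{\XX}-\PP_{\AA}\BB = \UU\YY$ for a suitable matrix $\YY$, so that $\AA\widehat{\XX}-\BB = \UU\YY - \BB^{\perp}$, where the column space of $\UU\YY$ is orthogonal to that of $\BB^{\perp}$. Applying this orthogonality one right-hand vector at a time gives the spectral Pythagorean bound $\|\AA\widehat{\XX}-\BB\|_2^2 \le \|\UU\YY\|_2^2 + \|\BB^{\perp}\|_2^2 = \|\YY\|_2^2 + \|\BB^{\perp}\|_2^2$, reducing the task to bounding $\|\YY\|_2$.

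Next I would read $\YY$ off the normal equations: substituting $\AA\widehat{\XX}-\BB = \UU\YY-\BB^{\perp}$ into $\AA^{\top}\PI^{\top}\PI(\AA\widehat{\XX}-\BB)=0$, using $\AA^{\top}=\AA^{\top}\UU\UU^{\top}$ and that $\AA^{\top}\UU$ has full column rank, the equation collapses to $\GG\YY = \UU^{\top}\PI^{\top}\PI\BB^{\perp}$ with $\GG := \UU^{\top}\PI^{\top}\PI\UU$. The $O(1)$-subspace-embedding hypothesis (its only use) forces $\GG \succeq c\,\II$ for an absolute constant $c>0$, hence $\|\GG^{-1}\|_2 = O(1)$ and $\|\YY\|_2 = O(1)\,\|\UU^{\top}\PI^{\top}\PI\BB^{\perp}\|_2$. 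This last quantity is the crux: because $\BB^{\perp} \perp \mathrm{colspan}(\AA)$ we have $\UU^{\top}\BB^{\perp}=0$, so $\UU^{\top}\PI^{\top}\PI\BB^{\perp} = \UU^{\top}\PI^{\top}\PI\BB^{\perp} - \UU^{\top}\BB^{\perp}$ is precisely the error of approximating the product $\UU^{\top}\cdot\BB^{\perp}$ by its sketched version. A crude subspace-embedding bound $\|\PI\BB^{\perp}\|_2 \approx \|\BB^{\perp}\|_2$ is useless here (it carries no $\sqrt{\epsilon}$ gain), whereas the Spectral AMM hypothesis for $\UU_{\AA}$ and $(\II-\PP_{\AA})\BB$ with parameter $\mathrm{rank}(\AA)$ gives $\|\UU^{\top}\PI^{\top}\PI\BB^{\perp}\|_2 \le \sqrt{\epsilon}\,\sqrt{\bigl(\|\UU\|_2^2 + \|\UU\|_F^2/\mathrm{rank}(\AA)\bigr)\bigl(\|\BB^{\perp}\|_2^2 + \|\BB^{\perp}\|_F^2/\mathrm{rank}(\AA)\bigr)}$, and since $\|\UU\|_2=1$ and $\|\UU\|_F^2 = \mathrm{rank}(\AA)$, the first factor equals $2$.

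Combining the pieces, $\|\AA\widehat{\XX}-\BB\|_2^2 \le \|\YY\|_2^2 + \|\BB^{\perp}\|_2^2 \le O(\epsilon)\bigl(\|\BB^{\perp}\|_2^2 + \|\BB^{\perp}\|_F^2/\mathrm{rank}(\AA)\bigr) + \|\BB^{\perp}\|_2^2$; rescaling the AMM accuracy to absorb the $O(\cdot)$ constant and recalling $\|\BB^{\perp}\|_2 = \|\PP_{\AA}\BB-\BB\|_2$, $\|\BB^{\perp}\|_F = \|\PP_{\AA}\BB-\BB\|_F$, $\mathrm{rank}(\AA) = k$ yields the claimed $(1+\epsilon)\|\PP_{\AA}\BB-\BB\|_2^2 + \tfrac{\epsilon}{k}\|\PP_{\AA}\BB-\BB\|_F^2$, with the $1/100$ failure probability inherited from the AMM and subspace-embedding events. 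Besides the AMM step just flagged, the only other mild points are the full-column-rank cancellation of $\AA^{\top}\UU$ and the bookkeeping needed to pin down the pseudoinverse argmin so the normal equations are available in the spectral setting; the AMM step itself — recognizing that the error must be routed through $\UU^{\top}\PI^{\top}\PI\BB^{\perp}$ as an approximate-matrix-product term rather than bounded directly — is the part I expect to require the most care.
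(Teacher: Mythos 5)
The paper never proves this statement—it is quoted verbatim from Cohen et al.\ \cite{cohen2015optimal}—and your argument is a correct reconstruction of the standard proof used there: the spectral Pythagorean split $\AA\widehat{\XX}-\BB=\UU\YY-\BB^{\perp}$, the normal equations together with the $O(1)$-subspace embedding to get $\|\YY\|_2=O(1)\,\|\UU^{\top}\PI^{\top}\PI\BB^{\perp}\|_2$, and the Spectral AMM hypothesis (with $\|\UU\|_2=1$, $\|\UU\|_F^2=\mathrm{rank}(\AA)$) to finish, up to the usual constant rescaling of $\epsilon$. Your caveat about committing to the pseudoinverse minimizer, so that the normal equations are available, is well taken and consistent with how the result is actually instantiated both in the source and in this paper (e.g.\ $\widehat{\WW}=\CC\SS(\ZZ^{\top}\SS)^{\dagger}$ in Algorithm \ref{alg:approx_spectral_regression}).
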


However, all the constructions presented for the sketch in \cite{cohen2015dimensionality} are either oblivious sketches or require sampling proportional to both $\AA$ and $\BB$. Applying an oblivious sketch $\SS$ in our problem requires computing $\CC \SS$ which would query $\Omega(\textsf{nnz}(\CC)) = \Omega(n^{1.5}\sqrt{k/\epsilon})$ entries in $\AA$. Therefore, the main challenge here is to construct a sampling matrix $\SS$ while reading $\widetilde{O}(nk/\epsilon)$ entries in $\AA$  such that $\SS$ is an $(\widetilde{O}(1), k/\epsilon)$-\textit{Spectral AMM} and an $O(1)$-Subspace Embedding. We construct $\SS$ by sampling $\widetilde{O}(k/\epsilon)$ columns of $\ZZ^{\top}$ proportional to the leverage scores of $\ZZ^{\top}$. While it is easy to show $\SS$ is a Subspace Embedding, observe that our sampling probabilities are computed without reading $\CC$.   




\textbf{Proof of Theorem \ref{thm:spectral_regression}.}
As a starting point, we observe that yet again, since $\ZZ^{\top}$ has orthonormal rows, the leverage scores are simply the $\ell^2_2$ norms of the columns of $\ZZ^{\top}$. 
Therefore, one possible approach is to construct a \textit{leverage score sampling} sketch $\SS$ for $\CC$, by sampling columns proportional to the leverage 
scores of $\ZZ^{\top}$. We note we can afford to sample at most $\widetilde{O}(k/\epsilon)$ columns, since our algorithm queries all entries in the resulting sketched matrix $\CC \SS$.  

Further, for reasons to be discussed later, it is crucial that we sample columns of $\CC$ independently, as opposed to the standard way of sampling with replacement we have used thus far. The independent sampling process can be described as follows:  for all $j \in [\sqrt{nk/\epsilon}]$, we sample $\CC^*_{*,j}$ with probability $\min(\|\ZZ^{\top}_{*,j}\|^2_2,1)$. We use the following lemma from \cite{cohenmm17} to show that independently sampling columns satisfies some desirable properties.

\begin{lemma}(Lemma 21, \cite{cohenmm17}.)
\label{lem:independent_sampling}
Given a matrix $\MM \in \mathbb{R}^{n \times m}$, for all $j \in [m]$ let $\bar{\rho}^k_j(\MM) = \Theta(\rho^k_j(\MM)$ be estimates of the rank-$k$ column ridge-leverage scores of $\MM$ and let $q_j= \min(\bar{\rho}^k_j(\MM)\log(k/\delta)/\epsilon^2,1)$. Then, construct $\MM\SS$ by selecting each column $\MM_{*,j}$ with probability $q_j$ and scale it by $1/\sqrt{q_j}$. Then, with probability at least $1-\delta$, $\MM \SS $  has $\sum_{j\in [m]}\bar{\rho}^k_j\cdot \log(k/\delta)/\epsilon^2$ columns and 
\begin{equation*}
    (1-\epsilon)\MM\SS \SS^{\top}\MM^{\top} - \frac{\epsilon}{k}\| \MM  - \MM_k\|^2_F\II \preceq \MM \MM^{\top} \preceq (1+\epsilon)\MM\SS \SS^{\top}\MM^{\top} + \frac{\epsilon}{k}\| \MM  - \MM_k\|^2_F\II
\end{equation*}
\end{lemma}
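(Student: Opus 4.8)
The plan is to reduce the two-sided Loewner bound to a single spectral-norm estimate and then invoke a matrix concentration inequality. Put $\lambda = \|\MM - \MM_k\|_F^2/k$ (assume $\lambda>0$; the degenerate case $\MM=\MM_k$ is handled by restricting to $\mathrm{range}(\MM)$), write $\mathbf{m}_j = \MM_{*,j}$, let $\chi_j\in\{0,1\}$ be the indicator that column $j$ is sampled (so $\Pr[\chi_j=1]=q_j$, independently), and set $v_j = (\MM\MM^\top+\lambda\II)^{-1/2}\mathbf{m}_j$. Since $\MM\SS\SS^\top\MM^\top = \sum_j \tfrac{\chi_j}{q_j}\mathbf{m}_j\mathbf{m}_j^\top$, we get
\[
(\MM\MM^\top+\lambda\II)^{-1/2}\big(\MM\SS\SS^\top\MM^\top-\MM\MM^\top\big)(\MM\MM^\top+\lambda\II)^{-1/2} \;=\; \sum_j\Big(\tfrac{\chi_j}{q_j}-1\Big)v_jv_j^\top \;=:\;\sum_j X_j .
\]
It suffices to show $\smallnorm{\sum_j X_j}_2\le\epsilon$ with probability $\ge 1-\delta$: conjugating $-\epsilon\II\preceq\sum_j X_j\preceq\epsilon\II$ by $(\MM\MM^\top+\lambda\II)^{1/2}$ gives $-\epsilon(\MM\MM^\top+\lambda\II)\preceq \MM\SS\SS^\top\MM^\top-\MM\MM^\top\preceq \epsilon(\MM\MM^\top+\lambda\II)$, and rearranging (running the argument with $\epsilon/3$ so that the $1/(1\pm\epsilon)$ slack is absorbed) yields exactly the two displayed inequalities, with additive term $\epsilon\lambda\II=\tfrac{\epsilon}{k}\|\MM-\MM_k\|_F^2\,\II$. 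The crucial identity is $\|v_j\|_2^2 = \mathbf{m}_j^\top(\MM\MM^\top+\lambda\II)^{-1}\mathbf{m}_j = \rho^k_j(\MM)$, the rank-$k$ column ridge leverage score, so $q_j \ge \min\!\big(c\,\rho^k_j(\MM)\log(k/\delta)/\epsilon^2,\,1\big)$ for the constant $c$ arising from $\bar\rho^k_j=\Theta(\rho^k_j)$.

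Next I would apply the intrinsic-dimension version of matrix Bernstein to $\sum_j X_j$, a sum of independent, symmetric, mean-zero matrices. For the uniform bound: if $q_j=1$ then $X_j=0$; otherwise $\|X_j\|_2\le \|v_j\|_2^2/q_j\le \epsilon^2/(c\log(k/\delta))=:L$. For the predictable quadratic variation,
\[
\textstyle\sum_j\expec{}{X_j^2} \;=\; \sum_j \tfrac{1-q_j}{q_j}\,\|v_j\|_2^2\, v_jv_j^\top \;\preceq\; \tfrac{\epsilon^2}{c\log(k/\delta)}\sum_j v_jv_j^\top \;=\; \tfrac{\epsilon^2}{c\log(k/\delta)}\,(\MM\MM^\top+\lambda\II)^{-1/2}\MM\MM^\top(\MM\MM^\top+\lambda\II)^{-1/2} \;\preceq\; \tfrac{\epsilon^2}{c\log(k/\delta)}\,\II,
\]
so the variance proxy has operator norm $\sigma^2=O(\epsilon^2/\log(k/\delta))$; moreover $\trace{(\MM\MM^\top+\lambda\II)^{-1}\MM\MM^\top}=\sum_i \sigma_i^2(\MM)/(\sigma_i^2(\MM)+\lambda)=O(k)$, since the top $k$ terms contribute at most $k$ and the remaining ones at most $\|\MM-\MM_k\|_F^2/\lambda=k$, so the intrinsic dimension of the variance proxy is $O(k)$. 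Matrix Bernstein then gives $\Pr\big[\smallnorm{\sum_j X_j}_2>\epsilon\big]\le O(k)\cdot\exp\!\big(-\Omega(\log(k/\delta))\big)\le\delta$ once $c$ is a large enough absolute constant.

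Finally, for the bound on the number of sampled columns: it equals $\sum_j\chi_j$, a sum of independent Bernoullis with mean $\sum_j q_j \le \sum_j\bar\rho^k_j(\MM)\log(k/\delta)/\epsilon^2 = O(k\log(k/\delta)/\epsilon^2)$ (using $\sum_j\rho^k_j(\MM)\le 2k$), so a scalar Chernoff bound shows it is $O\big(\sum_j\bar\rho^k_j(\MM)\log(k/\delta)/\epsilon^2\big)$ with probability $\ge 1-\delta$; a union bound combines the two events. The main obstacle — the only step beyond bookkeeping — is the probability analysis: a black-box matrix Bernstein/Chernoff bound carries the ambient dimension $n$ in front of the exponential, which would force an oversampling factor of $\log n$ rather than $\log k$; avoiding this requires the effective/intrinsic-dimension refinement of matrix Bernstein together with the ridge-leverage trace bound $\trace{(\MM\MM^\top+\lambda\II)^{-1}\MM\MM^\top}=O(k)$, so that the relevant ``dimension'' is $k$. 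A secondary subtlety is the truncation $q_j=\min(\cdot,1)$: those columns are deterministically included, contribute $X_j=0$, and hence affect neither the deviation nor the variance, but this must be checked explicitly.
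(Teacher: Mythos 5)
This lemma is imported by the paper from \cite{cohenmm17} without proof, so there is no in-paper argument to compare against; your reconstruction follows what is essentially the standard (CMM-style) route: whiten by $(\MM\MM^\top+\lambda\II)^{-1/2}$ with $\lambda=\|\MM-\MM_k\|_F^2/k$, identify $\|v_j\|_2^2$ with the column ridge leverage score, apply an intrinsic-dimension matrix Bernstein bound, and use $\trace{(\MM\MM^\top+\lambda\II)^{-1}\MM\MM^\top}\leq 2k$ plus a scalar Chernoff bound for the number of sampled columns. The reduction to $\|\sum_j X_j\|_2\leq\epsilon$, the $\epsilon/3$ rearrangement, the almost-sure bound $\|X_j\|_2\leq\epsilon^2/(c\log(k/\delta))$, the treatment of $q_j=1$ columns, and the variance computation are all correct.

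There is, however, one step that fails as written: the inference ``$\trace{(\MM\MM^\top+\lambda\II)^{-1}\MM\MM^\top}=O(k)$, so the intrinsic dimension of the variance proxy is $O(k)$.'' Intrinsic dimension is trace divided by operator norm, and the operator norm of your proxy $\frac{\epsilon^2}{c\log(k/\delta)}(\MM\MM^\top+\lambda\II)^{-1/2}\MM\MM^\top(\MM\MM^\top+\lambda\II)^{-1/2}$ is $\frac{\epsilon^2}{c\log(k/\delta)}\cdot\frac{\sigma_1^2(\MM)}{\sigma_1^2(\MM)+\lambda}$, which can be far below $\frac{\epsilon^2}{c\log(k/\delta)}$. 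Concretely, if $\MM$ has a flat spectrum with all squared singular values equal to $\sigma^2$, then $\lambda\approx\frac{n}{k}\sigma^2$, every whitened eigenvalue is $\approx k/n$, and the intrinsic dimension of your proxy is $\Theta(n)$, not $O(k)$; the improvement in the exponent you gain from the smaller $\|V\|$ is only a factor $1/\epsilon$ (the $Lt/3$ term dominates), so the prefactor $\Theta(n)$ is not cancelled and the argument as stated would again force $\log n$ rather than $\log(k/\delta)$ oversampling. The fix is small and standard: the intrinsic-dimension Bernstein inequality only requires a PSD \emph{upper bound} $V$ on $\sum_j\expecf{}{X_j^2}$, so take $V=\frac{\epsilon^2}{c\log(k/\delta)}\bigl((\MM\MM^\top+\lambda\II)^{-1/2}\MM\MM^\top(\MM\MM^\top+\lambda\II)^{-1/2}+uu^\top\bigr)$ for any fixed unit vector $u$. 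Then $\trace{V}\leq\frac{\epsilon^2}{c\log(k/\delta)}(2k+1)$ while $\|V\|_2\geq\frac{\epsilon^2}{c\log(k/\delta)}$, so the intrinsic dimension is at most $2k+1$, and since $\|V\|_2\leq\frac{2\epsilon^2}{c\log(k/\delta)}$ the exponent remains $-\Omega(c\log(k/\delta))$, giving failure probability $O(k)\cdot(\delta/k)^{\Omega(c)}\leq\delta$. (A final cosmetic point: the constant $c$ must ultimately come from a sufficiently large oversampling constant hidden in the lemma's $q_j$, not from the $\Theta(\cdot)$ quality of the estimates alone; this is implicit in the statement's constants, but worth saying explicitly.)
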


The above lemma independently samples columns proportional to the \textit{ridge leverage scores}.  
In our setting, we can set the ridge parameter $\lambda =0$, and sample according to the exact leverage scores of $\ZZ^{\top}$. Formally, let $q = \{q_1, q_2,\ldots, q_m\}$ be the corresponding distribution over columns of $\ZZ^{\top}$ such that $q_j = \min(\|\ZZ^{\top}\|^2_2 \log(k),1)$. Since $\ZZ^{\top}$ has $k/\epsilon$ orthonormal rows, the leverage scores sum up to $\textrm{rank}(\ZZ^{\top}) \leq k/\epsilon$. We then use Lemma \ref{lem:independent_sampling} by setting $\epsilon =1/10, \delta = 0.01$ and thus with probability at least $99/100$, $\ZZ^{\top} \SS$ has $\sum_{j \in [\sqrt{nk/\epsilon}]}\tau_j(\ZZ^{\top})\log(n) = \widetilde{O}(k/\epsilon)$ rows and 
\begin{equation}
\label{eqn:constant_se}
    \frac{9}{10}\ZZ^{\top} \SS \SS^{\top} \ZZ \preceq \ZZ^{\top} \ZZ \preceq \frac{11}{10} \ZZ^{\top} \SS \SS^{\top} \ZZ
\end{equation}
If the guarantee in Equation \ref{eqn:constant_se} holds for a sketch $\SS$, we refer to $\SS$ as satisfying an $O(1)$-Subspace Embedding property. Observe, this is equivalent to $\SS$ preserving all singular values of $\ZZ^{\top}$ up to a constant.  


We can now obtain a closed form solution for the  \textit{Spectral Regression} problem in the sketched space. 
By Lemma \ref{lem:characterizing_opt}, the optimal solution to the optimization problem in Step 2 of Algorithm \ref{alg:approx_spectral_regression} is given by $\widehat{\WW} = \CC\SS(\ZZ^{\top} \SS)^{\dagger}$. Since $\SS$ satisfies the $O(1)$-Subspace Embedding property in Equation \ref{eqn:constant_se}, it preserves the rank of $\ZZ^{\top}$. Therefore, $\ZZ^{\top} \SS$ has full row rank and $(\ZZ^{\top} \SS)^{\dagger} = \SS^{\top} \ZZ (\ZZ^{\top}\SS \SS^{\top} \ZZ)^{-1}$ and thus $\widehat{\WW} = \CC \SS \SS^{\top} \ZZ (\ZZ^{\top}\SS \SS^{\top} \ZZ)^{-1}$ is the optimal solution. Next, we bound the cost of $\widehat{\WW}$ in the original problem. Let $\PP_{\ZZ^{\top}} = \ZZ^{\dagger}\ZZ^{\top}$ be the orthogonal projection matrix onto $\ZZ^{\top}$. Using the fact that $\|\MM \|^2_2 = \max_{\| y\|_2=1}\|y^{\top}\MM\|^2_2$ and the Pythagorean Theorem for Euclidean space we have
\begin{equation}
\label{eqn:sketched_solution}
    \begin{split}
        \|\CC - \widehat{\WW}\ZZ^{\top}\|^2_2 & = \|\CC - \CC \SS \SS^{\top} \ZZ (\ZZ^{\top}\SS \SS^{\top} \ZZ)^{-1} \ZZ^{\top} \|^2_2\\
        & = \max_{\|y\|^2_2=1} \|y^{\top} \CC\PP_{\ZZ^{\top}}  - y^{\top}\CC  \SS \SS^{\top} \ZZ (\ZZ^{\top}\SS \SS^{\top} \ZZ)^{-1} \ZZ^{\top}\PP_{\ZZ^{\top}}\|^2_2 +\\
        & \hspace{0.58in}  \| y^{\top} \CC(\II-\PP_{\ZZ^{\top}}) - y^{\top}\CC  \SS \SS^{\top} \ZZ (\ZZ^{\top}\SS \SS^{\top} \ZZ)^{-1} \ZZ^{\top}(\II - \PP_{\ZZ^{\top}}) \|^2_2 \\
    \end{split}
\end{equation}
Here, we observe $y^{\top}\CC  \SS \SS^{\top} \ZZ (\ZZ^{\top}\SS \SS^{\top} \ZZ)^{-1} \ZZ^{\top}$ is a vector in the row space of $\ZZ^{\top}$ and$(\II - \PP_{\ZZ^{\top}})$ is the projection on the orthogonal complement of rowspan$(\ZZ^{\top})$, thus this evaluates to $0$. Since $\CC(\II - \PP_{\ZZ^{\top}}) = \CC^*$, we can upper bound $\|y^{\top}\CC(\II  - \PP_{\ZZ^{\top}})  \|_2 $ by $\|\CC^*\|^2_2$. Similarly, we can upper bound the first term by its spectral norm. Therefore, plugging this back into Equation \ref{eqn:sketched_solution},
\begin{equation}
\label{eqn:sketched_solution_2}
    \begin{split}
        \|\CC - \widehat{\WW}\ZZ^{\top}\|^2_2 & \leq \| \CC(\ZZ^{\top})^{\dagger}\ZZ^{\top}  - \CC \SS \SS^{\top} \ZZ (\ZZ^{\top}\SS \SS^{\top} \ZZ)^{-1} \ZZ^{\top}\|^2_2 + \| \CC^*\|^2_2 \\
        & = \| \left(\CC(\ZZ^{\top})^{\dagger}\ZZ^{\top}\SS \SS^{\top} \ZZ   - \CC \SS \SS^{\top} \ZZ \right)(\ZZ^{\top}\SS \SS^{\top} \ZZ)^{-1} \|^2_2 + \| \CC^*\|^2_2 \\
        & \leq \| \CC(\ZZ^{\top})^{\dagger}\ZZ^{\top}\SS \SS^{\top} \ZZ   - \CC \SS \SS^{\top} \ZZ\|^2_2 \|(\ZZ^{\top}\SS \SS^{\top} \ZZ)^{-1} \|^2_2 + \| \CC^*\|^2_2 \\
    \end{split}
\end{equation}
where we use that $\ZZ^{\top}$ has orthonormal columns and the sub-multiplicativity of the spectral norm. From Equation \ref{eqn:constant_se}, it follows that for all $i \in [k/\epsilon]$,  $\sigma_i^2(\ZZ^{\top}\SS\SS^{\top}\ZZ) = (1 \pm 0.1)^2\sigma^2_{i}(\ZZ^{\top}\ZZ) = (1 \pm 0.1)^2$. Therefore, $\|(\ZZ^{\top}\SS\SS^{\top}\ZZ)^{-1}\|^2_2 = 1/\sigma^2_{\min}(\ZZ^{\top}\SS\SS^{\top}\ZZ) \leq 100/81$. 
Substituting this back into Equation \ref{eqn:sketched_solution_2}, we have
\begin{equation}
    \begin{split}
        \|\CC - \widehat{\WW}\ZZ^{\top}\|^2_2 & \leq O(1) \|(\CC(\ZZ^{\top})^{\dagger} \ZZ^{\top} -\CC )\SS\SS^{\top}\ZZ\|^2_2 + \| \CC^*\|^2_2\\
        & \leq O(1) \|\CC^*\SS\SS^{\top}\ZZ \|^2_2 + \| \CC^*\|^2_2\\
    \end{split}
\end{equation}
where the last inequality follows from the definition of $\CC^*$. 
In order to bound the cost above, we focus on analyzing $\| \CC^*\SS\SS^{\top}\ZZ \|^2_2$. Since we want to compare $\|\CC^*\SS\SS^{\top}\ZZ\|^2_2$ to $\|\CC^* \ZZ\|^2_2$, a natural way to proceed would be to interpret this term as an instance of \textit{Approximate Matrix Product}. Therefore, we next show that the leverage score sampling matrix $\SS$ satisfies the \textit{Spectral AMM} property for $\CC^*$ and $\ZZ^{\top}$. Here, we want to analyze how sampling columns of $\CC^*$ proportional to the \textit{leverage scores} of $\ZZ^{\top}$ affects the spectrum of $\CC^*$. 
An important tool in this analysis is the following result by Rudelson and Vershynin on how the spectral norm of a matrix degrades when we sample a uniformly random subset of rows of a matrix: 

\begin{theorem}( Theorem 1.8 in \cite{rudelson2007sampling})
\label{thm:rudelson_vershynin_original}
Given a matrix $\AA \in \mathbb{R}^{n \times n}$, let $\mathcal{Q}$ be a uniformly random subset of $[n]$ s.t. $\expecf{}{\mathcal{Q}} = q$. Let $\AA_{|\mathcal{Q}}$ denote the submatrix restricted to the rows indexed by $\mathcal{Q}$. Then,
\begin{equation*}
     \expecf{}{\big\|\AA_{|\mathcal{Q}}\big\|_2} = O\left( \sqrt{\frac{q}{n}}\|\AA \|_2 + \sqrt{\log(q)}\|\AA \|_{(n/q)}\right)
\end{equation*}
where $\|\AA \|_{(n/q)}$ is the average of the largest $n/q$ $\ell_2$-norms of columns of $\AA$. 
\end{theorem}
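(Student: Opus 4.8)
The plan is to reduce the spectral norm of the row-subsampled matrix to a Bernoulli sum of rank-one operators and control it by combining symmetrization, Rudelson's non-commutative Khintchine-type lemma, and a union-bound estimate for the largest sampled row norm. Write the rows of $\AA$ as $a_1,\dots,a_n \in \mathbb{R}^n$ and let $\delta_1,\dots,\delta_n$ be independent $\mathrm{Bernoulli}(q/n)$ selectors encoding membership in $\mathcal{Q}$, so that $\|\AA_{|\mathcal{Q}}\|_2^2 = \big\|\sum_i \delta_i a_i a_i^\top\big\|_2$. (The statement is phrased with column norms because in the source one samples columns; since $\|\AA_{|\mathcal{Q}}\|_2 = \|\AA_{|\mathcal{Q}}^\top\|_2$ this is only a transposition, so I will argue with the sampled rows.)

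First I would center the sum, writing $\sum_i \delta_i a_i a_i^\top = \tfrac{q}{n}\AA^\top\AA + \sum_i (\delta_i - \tfrac{q}{n})\,a_i a_i^\top$; the deterministic part contributes exactly $\tfrac{q}{n}\|\AA\|_2^2$, which accounts for the $\sqrt{q/n}\|\AA\|_2$ term after a square root, so it remains to bound $\mathbb{E}\big\|\sum_i (\delta_i - \tfrac{q}{n})a_i a_i^\top\big\|_2$. Standard symmetrization (introduce an independent copy $\delta'$ of $\delta$ and Rademacher signs $\epsilon_i$) bounds this by $2\,\mathbb{E}_{\delta,\epsilon}\big\|\sum_i \epsilon_i \delta_i a_i a_i^\top\big\|_2$. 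Conditioning on $\delta$ and invoking Rudelson's lemma (the non-commutative Khintchine inequality specialized to rank-one summands), and using that $\sum_i \epsilon_i\delta_i a_i a_i^\top$ has rank at most $|\mathcal{Q}|$, gives
\[
\mathbb{E}_\epsilon\Big\|\sum_i \epsilon_i \delta_i a_i a_i^\top\Big\|_2 \;\lesssim\; \sqrt{\log |\mathcal{Q}|}\;\Big(\max_{i\in\mathcal{Q}}\|a_i\|_2\Big)\;\Big\|\sum_i \delta_i a_i a_i^\top\Big\|_2^{1/2}.
\]

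Next I would set $R := \mathbb{E}\big\|\sum_i \delta_i a_i a_i^\top\big\|_2$, apply Cauchy--Schwarz to the expectation of the product on the right (and truncate to the likely event $|\mathcal{Q}| \le 2q$, handling the exponentially small tail separately so that $\log|\mathcal{Q}|$ becomes $\log q$), to obtain a self-bounding inequality of the shape $R \le \tfrac{q}{n}\|\AA\|_2^2 + C\sqrt{\log q}\,\big(\mathbb{E}\max_{i\in\mathcal{Q}}\|a_i\|_2^2\big)^{1/2}\sqrt{R}$. The quadratic formula then yields $R \lesssim \tfrac{q}{n}\|\AA\|_2^2 + \log q \cdot \mathbb{E}\max_{i\in\mathcal{Q}}\|a_i\|_2^2$. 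To convert the last term into an averaged norm, order the rows by decreasing norm, $\|a_{(1)}\|_2 \ge \|a_{(2)}\|_2 \ge \cdots$, and split on whether $\mathcal{Q}$ meets the top $n/q$ rows: this gives $\mathbb{E}\max_{i\in\mathcal{Q}}\|a_i\|_2^2 \le \|a_{(n/q)}\|_2^2 + \tfrac{q}{n}\sum_{m\le n/q}\|a_{(m)}\|_2^2 \le 2\|\AA\|_{(n/q)}^2$, where the last step uses that $\|\AA\|_{(n/q)}^2$ is the average of the top $n/q$ squared norms and that the smallest of those lies below their average. Taking square roots, using Jensen ($\mathbb{E}\|\AA_{|\mathcal{Q}}\|_2 \le R^{1/2}$) and $\sqrt{a+b}\le\sqrt a + \sqrt b$, gives the claimed bound.

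The main obstacle is the symmetrization-plus-Rudelson step, and specifically getting $\sqrt{\log q}$ rather than $\sqrt{\log n}$: one must argue that the random rank-one Rademacher sum lives on a subspace of dimension $|\mathcal{Q}|$, control $|\mathcal{Q}|$ (it concentrates around $q$, but the rare event $|\mathcal{Q}| \gg q$ needs a separate crude bound so it does not pollute the expectation), and then feed this back through the self-improving inequality without losing the exponent. The remaining bookkeeping — the quadratic-formula resolution of the self-bounding inequality and the exchange of $\mathbb{E}\max$ with the ordering argument — is routine by comparison.
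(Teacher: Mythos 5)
The paper never proves this statement: it is imported verbatim from Rudelson and Vershynin, so your attempt has to be judged against their original argument. Your opening moves are exactly theirs — Bernoulli selectors, peeling off the deterministic $\frac{q}{n}\AA^{\top}\AA$ part, symmetrization, Rudelson's non-commutative Khintchine lemma conditioned on the draw, Cauchy--Schwarz, and the self-bounding quadratic — and that portion is sound (the $\log|\mathcal{Q}|\to\log q$ truncation is indeed routine bookkeeping, and the row/column transposition remark is fine).

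The gap is in your final conversion step. The claimed inequality $\expecf{}{\max_{i\in\mathcal{Q}}\|a_i\|_2^2}\le 2\|\AA\|_{(n/q)}^2$ is false: $\|\AA\|_{(n/q)}$ is the \emph{arithmetic mean} of the top $n/q$ norms, so $\|\AA\|_{(n/q)}^2$ is the square of that mean, whereas your second term $\frac{q}{n}\sum_{m\le n/q}\|a_{(m)}\|_2^2$ is the mean of the \emph{squares}; these can differ by a factor of $n/q$. Concretely, take one row of norm $L$ and all others zero: then $\expecf{}{\max_{i\in\mathcal{Q}}\|a_i\|_2^2}=\frac{q}{n}L^2$ while $\|\AA\|_{(n/q)}^2=\bigl(\frac{q}{n}L\bigr)^2$. (Your bound $\|a_{(n/q)}\|_2^2\le\|\AA\|_{(n/q)}^2$ for the first term is fine; it is the second term that breaks.) The problem is structural, not cosmetic: any route that pushes the random maximum through Cauchy--Schwarz against $\bigl(\expecf{}{\|\AA_{|\mathcal{Q}}\|_2^2}\bigr)^{1/2}$ can only produce the root-mean-square of the top $n/q$ norms, and in the same example that quantity exceeds the stated bound by an unbounded factor, so the theorem with the arithmetic mean cannot be recovered this way. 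The missing idea is a split \emph{before} the probabilistic machinery: let $\BB$ consist of the $n/q$ rows of largest norm and $\CC$ the rest. For $\BB$, the crude triangle-inequality bound $\|\BB_{|\mathcal{Q}}\|_2\le\sum_{i\in\mathcal{Q}}\|a_i\|_2$ gives $\expecf{}{\|\BB_{|\mathcal{Q}}\|_2}\le\frac{q}{n}\sum_{m\le n/q}\|a_{(m)}\|_2=\|\AA\|_{(n/q)}$, which is exactly where the arithmetic mean enters. For $\CC$, every row has norm at most $\|a_{(n/q)}\|_2\le\|\AA\|_{(n/q)}$ \emph{deterministically}, so your symmetrization/Rudelson/self-bounding argument applies with a deterministic maximum and no expected maximum ever appears; combining the two pieces yields the stated bound. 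With that repair the rest of your outline goes through.
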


We extend the above statement to rectangular matrices: 

\begin{corollary}[Spectral Decay for Rectangular Matrices]
\label{thm:rudelson_vershynin}
Given a matrix $\AA \in \mathbb{R}^{n \times m}$, s.t. for all $j,j' \in [m]$, $\| \AA_{*,j}\|^2_2 =\Theta(\| \AA_{*,j'}\|^2_2)$.
Let $\mathcal{Q}$ be a uniformly random subset of $[n]$ s.t. $\expecf{}{\mathcal{Q}} = q$. Let $b= \ceil{n/m}$ and $\AA_{|\mathcal{Q}}$ denote the submatrix restricted to the rows indexed by $\mathcal{Q}$. Then,
\begin{equation*}
     \expecf{}{\big\|\AA_{|\mathcal{Q}}\big\|_2} =  O\left( \sqrt{\frac{q}{n}}\|\AA \|_2 + \sqrt{\log(q)/b}\|\AA \|_{(n/q)}\right)
\end{equation*}
\end{corollary}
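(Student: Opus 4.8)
The plan is to reduce directly to the square case, Theorem~\ref{thm:rudelson_vershynin_original}, via a column-duplication trick. The point is that replicating the columns of $\AA$ scales its spectral norm by a square root but, \emph{precisely because all columns of $\AA$ have comparable $\ell_2$ norm}, leaves the truncated column-norm functional $\norm{\cdot}_{(n/q)}$ unchanged up to constants; dividing the resulting Rudelson--Vershynin bound back by that square root is exactly what produces the extra $1/\sqrt{b}$ savings on the second term.

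In detail: we may assume $m \le n$ (the regime in which the corollary is applied), set $r = \lfloor n/m \rfloor \ge 1$, and note $r = \Theta(b)$ since $b = \ceil{n/m}$. Let $\bar c^2 = \Theta(\norm{\AA_{*,j}}^2_2)$ be the common order of magnitude of the squared column norms. I would form the $n \times n$ matrix $\AA' = [\,\AA\ \AA\ \cdots\ \AA\ \mathbf{0}\,]$ consisting of $r$ side-by-side copies of $\AA$ padded by $n - rm$ zero columns, and record three facts: (i) $\AA'(\AA')^{\top} = r\,\AA\AA^{\top}$, so $\norm{\AA'}_2 = \sqrt{r}\,\norm{\AA}_2$, and for every row subset $\mathcal{Q} \subseteq [n]$ the restriction $\AA'_{|\mathcal{Q}}$ is $r$ copies of $\AA_{|\mathcal{Q}}$ padded with zeros, so $\norm{\AA'_{|\mathcal{Q}}}_2 = \sqrt{r}\,\norm{\AA_{|\mathcal{Q}}}_2$; (ii) $\AA'$ has $rm \ge n/2$ nonzero columns, each of $\ell_2$ norm $\Theta(\bar c)$, so (for $q \ge 2$, the remaining cases being trivial) the average of its largest $n/q$ column norms equals $\Theta(\bar c) = \Theta(\norm{\AA}_{(n/q)})$; (iii) $\AA$ and $\AA'$ have the same number of rows, so a uniformly random row subset $\mathcal{Q}$ with $\expecf{}{\abs{\mathcal{Q}}}=q$ is the \emph{same} object for both matrices. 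Applying Theorem~\ref{thm:rudelson_vershynin_original} to the square matrix $\AA'$ and substituting (i)--(iii),
\begin{equation*}
\begin{split}
\sqrt{r}\ \expecf{}{\norm{\AA_{|\mathcal{Q}}}_2} &= \expecf{}{\norm{\AA'_{|\mathcal{Q}}}_2} = O\!\left(\sqrt{\tfrac{q}{n}}\,\norm{\AA'}_2 + \sqrt{\log q}\,\norm{\AA'}_{(n/q)}\right) \\
&= O\!\left(\sqrt{r}\,\sqrt{\tfrac{q}{n}}\,\norm{\AA}_2 + \sqrt{\log q}\,\norm{\AA}_{(n/q)}\right),
\end{split}
\end{equation*}
and dividing through by $\sqrt{r} = \Theta(\sqrt{b})$ yields exactly the claimed bound $\expecf{}{\norm{\AA_{|\mathcal{Q}}}_2} = O\big(\sqrt{q/n}\,\norm{\AA}_2 + \sqrt{\log(q)/b}\,\norm{\AA}_{(n/q)}\big)$.

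I do not expect a genuine obstacle — this is a black-box reduction — but the step that must be handled carefully, and the one that actually uses the hypothesis, is (ii): one must verify that duplicating columns does not shrink $\norm{\cdot}_{(n/q)}$. If the column norms of $\AA$ were lopsided, $\norm{\AA'}_2$ would still grow by $\sqrt{r}$ while $\norm{\AA'}_{(n/q)}$ could behave arbitrarily, and the reduction would fail; the equal-norm assumption $\norm{\AA_{*,j}}^2_2 = \Theta(\norm{\AA_{*,j'}}^2_2)$ for all $j,j'$ is precisely what pins $\norm{\AA'}_{(n/q)} = \Theta(\norm{\AA}_{(n/q)})$. The remaining bookkeeping — confirming $rm = \Theta(n)$ so the top $n/q$ columns of $\AA'$ are all nonzero, $r = \Theta(b)$, and the degenerate small-$q$ cases — is routine.
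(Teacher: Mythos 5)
Your proof is correct and takes essentially the same route as the paper: the paper likewise concatenates $\ceil{n/m}$ copies of $\AA$, uses that the spectral norm of the full matrix and of any row restriction scales by $\sqrt{b}$ while the equal-column-norm hypothesis keeps the (max/average) column-norm functional unchanged, and then divides through by $\sqrt{b}$. The only cosmetic difference is that you pad with zero columns so that Theorem~\ref{thm:rudelson_vershynin_original} applies directly to a square matrix, whereas the paper first extends that theorem to wide matrices via the SVD and then concatenates the copies.
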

\begin{proof}
First, consider the case when $m \geq n$.  To see this, let $\textsf{SVD}(\AA) = \UU \Sig \VV^{\top}$ where $\UU \Sig$ is an $n \times n$ matrix. Now, $\|\AA \|_2 = \|\UU \Sig\|_2$ and applying Theorem \ref{thm:rudelson_vershynin_original} to $\UU\Sig$, we have
\begin{equation}
\label{eqn:more_cols}
\begin{split}
    \expecf{}{\big\|\AA_{|\mathcal{Q}}\big\|_2} =\expecf{}{\big\|(\UU\Sig)_{|\mathcal{Q}}\big\|_2} & = O\left( \sqrt{\frac{q}{n}}\|\UU \Sig \|_2 + \sqrt{\log(q)}\|\UU\Sig \|_{(n/q)}\right)\\
    & = O\left( \sqrt{\frac{q}{n}}\|\AA \|_2 + \sqrt{\log(q)}\|\AA \|_{(n/q)}\right)
\end{split}
\end{equation}
where we repeatedly use that $\VV^T$ has orthonormal rows. Here, we note that since the columns of $\AA$ have the same squared norm up to a constant,$\|\AA \|_{(n/q)} = \Theta(\|\AA \|_{1\to 2})$, i.e. the max column norm of $\AA$. 

Next, consider the case where $m< n$. Let $b = \ceil{n/m}$. In order to analyze the spectral norm of $\AA_{\mathcal{Q}}$, we create $b$ copies of $\AA$ and concatenate them such that the resulting matrix $\AA^*$ has more columns than rows. Applying Equation \eqref{eqn:more_cols} to $\AA^*$ and substituting the average with max, we have 
\begin{equation}
    \expecf{}{\big\|\AA^*_{|\mathcal{Q}}\big\|_2} = O\left( \sqrt{\frac{q}{n}}\|\AA^* \|_2 + \sqrt{\log(q)}\|\AA^* \|_{1\to 2}\right) 
\end{equation}
Observe, $\AA^*_{\mid \mathcal{Q}}$ selects uniformly random rows of $\AA^*$ and $\big\|\AA^*_{|\mathcal{Q}}\big\|_2 = \max_{\|x\|_2=1} \|x^{\top}\AA^*_{|\mathcal{Q}}\|_2$ and for any vector $x$, $\|x^{\top}\AA^*_{|\mathcal{Q}}\|_2 = \sqrt{b} \|x^{\top}\AA_{|\mathcal{Q}}\|_2 $. Therefore, $\expecf{}{\big\|\AA^*_{|\mathcal{Q}}\big\|_2} = \sqrt{b}\cdot \expecf{}{\big\|\AA_{|\mathcal{Q}}\big\|_2}$ and $\|\AA^* \|_2 = \sqrt{b} \cdot \|\AA^* \|_2$. Finally, it is easy to see that since the columns of $\AA^*$ are copies of columns of $\AA$, the max column norm does not change. Therefore, 
\eqref{eqn:more_cols} to $\AA^*$, we have 
\begin{equation}
    \expecf{}{\big\|\AA_{|\mathcal{Q}}\big\|_2} = O\left( \sqrt{\frac{q}{n}}\|\AA \|_2 + \sqrt{\log(q)/b}\|\AA \|_{1\to 2}\right)
    = O\left( \sqrt{\frac{q}{n}}\|\AA^* \|_2 + \sqrt{\log(q)}\|\AA^* \|_{(n/q)}\right) 
\end{equation}
and the claim follows.
\end{proof}

Intuitively, there are two technical challenges in applying Corollary \ref{thm:rudelson_vershynin}. First, a \textit{leverage score sampling} matrix need not sample columns uniformly at random, since we have no control over the column norms of $\ZZ^{\top}$. Second, the $\|\cdot\|_{(n/q)}$ norm only shrinks when all columns of $\AA$ have roughly the same squared norm. We overcome these challenges by partitioning the matrix, first according to row norms, such that each partition does indeed have the same row norm, up to a factor of $2$. Next, we further partition each such matrix according to the sampling probabilities, such that within each partition, the sampling process is \textit{close} to uniform sampling. Formally, 


\begin{lemma}(Weak Spectral Approximate Matrix Product.)
\label{lem:weak_amm}
Let $\ZZ, \CC^*$ and $\SS$ be as defined in Lemma \ref{lem:characterizing_opt}. Then, with probability at least $99/100$, $\SS$ satisfies $(\widetilde{O}(1), k/\epsilon)$-\textit{Spectral AMM}, i.e., 
\begin{equation*}
    \|\CC^* \SS \SS^{\top} \ZZ \|^2_2 \leq \widetilde{O}(1)\left( \frac{\epsilon}{k}\| \CC^*\|^2_F + \|\CC^*\|^2_2\right)
\end{equation*}
\end{lemma}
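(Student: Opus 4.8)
The plan is to bound $\expecf{}{\norm{\CC^* \SS \SS^\top \ZZ}_2}$ and conclude by Markov's inequality, taking a union bound with the $O(1)$-subspace-embedding event of Equation \ref{eqn:constant_se} (which holds with probability $\geq 99/100$ by Lemma \ref{lem:independent_sampling}); on that event $\norm{\SS^\top\ZZ}_2=O(1)$, and in fact $\norm{(\SS^\top\ZZ)_{|T}}_2=O(1)$ for any row-subset $T$ of the sampled columns. The structural fact driving everything is that $\CC^*=\CC(\II-\PP_{\ZZ^\top})$ lies in the orthogonal complement of $\mathrm{colspan}(\ZZ)$, so $\CC^*\ZZ=0$ and the quantity to control is exactly the spectral matrix-product error $\norm{\CC^*\SS\SS^\top\ZZ-\CC^*\ZZ}_2$. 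Writing $\SS\SS^\top=\sum_j(\delta_j/q_j)e_je_j^\top$ with $\delta_j$ the indicator that column $j$ is sampled, this error is $\sum_j(\delta_j/q_j-1)\,\CC^*_{*,j}\,\ZZ_{j,*}$, a sum of independent mean-zero rank-one matrices. A naive matrix-Bernstein bound fails because a single column of $\CC^*$ can be much larger than $\norm{\CC^*}_F$ while its sampling probability $q_j$ is tiny, so the per-term bound is uncontrolled; this is precisely why the argument is routed through the Rudelson--Vershynin spectral-decay estimate (Theorem \ref{thm:rudelson_vershynin_original}, Corollary \ref{thm:rudelson_vershynin}) rather than through a concentration inequality that needs such a bound.

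Next I would reduce to a single well-structured block via a \emph{double dyadic partition}. First partition the column indices by the scale of $q_j$: let $J_\ell=\{j: q_j\in(2^{-\ell-1},2^{-\ell}]\}$ for $\ell=0,\dots,O(\log(nk/\epsilon))$; columns with $q_j$ below an inverse-polynomial threshold can be discarded, since $\sum_j q_j=\widetilde O(k/\epsilon)$ bounds the expected number of them ever sampled and hence their contribution is negligible. Within each $J_\ell$, partition the rows of $\CC^*$ by the dyadic scale of $\norm{\CC^*_{i,*}}_2^2$ into $O(\log n)$ further groups. This yields $O(\log^2 n)$ blocks $\EE$, each with (up to a factor of $2$) a single row-norm $\gamma'$ and a single column-sampling scale $2^{-\ell}$. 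Since $\CC^*\SS\SS^\top\ZZ=\sum_{\text{blocks}}\EE\,\SS_{\EE}\SS_{\EE}^\top\ZZ_{\EE}$ (restricting the columns/rows of $\SS$ and $\ZZ$ to the block), by the triangle inequality it suffices to show $\expecf{}{\norm{\EE\,\SS_{\EE}\SS_{\EE}^\top\ZZ_{\EE}}_2}\le\widetilde O(1)\big(\sqrt{\epsilon/k}\,\norm{\CC^*}_F+\norm{\CC^*}_2\big)$ for each block and sum.

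For a block $\EE$ with $r'$ rows of squared norm $\Theta(\gamma')$ and columns $J_\ell$ each sampled with probability $q_j\in(2^{-\ell-1},2^{-\ell}]$, I would couple the leverage-score sample with a uniformly random subset $\mathcal{Q}$ of $J_\ell$ where each index is kept independently with probability $2^{-\ell}$, dominating the true sample; since all rescalings $1/\sqrt{q_j}$ lie within $\sqrt2$ of $2^{\ell/2}$, this gives $\norm{\EE\,\SS_{\EE}}_2\le\sqrt2\cdot2^{\ell/2}\norm{\EE_{|\mathcal{Q}}}_2$, with $\EE_{|\mathcal{Q}}$ keeping only the columns in $\mathcal{Q}$. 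Now $\EE^\top$ has equal column norms (they are the row norms of $\EE$), so Corollary \ref{thm:rudelson_vershynin} applies to $\EE^\top$ with $q=2^{-\ell}\abs{J_\ell}$, $n=\abs{J_\ell}$, $m=r'$, $b=\ceil{\abs{J_\ell}/r'}$; combined with $\norm{\SS_{\EE}^\top\ZZ_{\EE}}_2=O(1)$ and submultiplicativity (which is lossless here because the sampling is near-uniform within a block), this yields $\expecf{}{\norm{\EE\SS_{\EE}\SS_{\EE}^\top\ZZ_{\EE}}_2}\le O(\norm{\EE}_2)+\widetilde O(1)\cdot2^{\ell/2}\sqrt{r'/\abs{J_\ell}}\,\sqrt{\gamma'}$. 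Using $\norm{\EE}_F^2=\Theta(r'\gamma')$ and $\norm{\EE}_2\le\norm{\CC^*}_2$, the first term is acceptable and the second is $\widetilde O(1)(2^{\ell}/\abs{J_\ell})^{1/2}\norm{\CC^*}_F$, which is at most $\widetilde O(1)\sqrt{\epsilon/k}\,\norm{\CC^*}_F$ whenever $\abs{J_\ell}=\widetilde\Omega(2^{\ell}k/\epsilon)$.

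The main obstacle is the complementary regime where $\abs{J_\ell}$ is small (even a single column), so $2^{\ell}/\abs{J_\ell}$ is not $O(\epsilon/k)$ and the Rudelson--Vershynin route alone does not close the bound. For these "light" scales the rank-one contributions are sampled only with probability $2^{-\ell}$, so I would instead bound the block directly in second moment: $\expecf{}{\norm{\EE\SS_{\EE}\SS_{\EE}^\top\ZZ_{\EE}}_2}\le\sqrt{\expecf{}{\norm{\EE\SS_{\EE}\SS_{\EE}^\top\ZZ_{\EE}}_F^2}}$, expanding the Frobenius norm; the diagonal terms give $\Theta(\norm{\EE}_F^2/\mathrm{polylog})$, further suppressed by the factor $2^{-\ell}$ coming from $\norm{\ZZ_{j,*}}_2^2=\Theta(q_j/\mathrm{polylog})$, and the off-diagonal terms are handled using $\CC^*\ZZ=0$. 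This produces a bound of order $2^{-\ell/2}\norm{\EE}_F$, which is $\le\sqrt{\epsilon/k}\,\norm{\CC^*}_F$ exactly in the regime $2^{\ell}=\widetilde\Omega(k/\epsilon)$ where the Rudelson--Vershynin bound is too weak. Taking, scale by scale, the better of the two bounds covers all $O(\log^2 n)$ blocks; summing and squaring yields the stated inequality. The delicate points are: the coupling that makes the sampling on each block genuinely near-uniform so the equal-column-norm hypothesis of Corollary \ref{thm:rudelson_vershynin} is legitimately invoked, the bookkeeping of the $\mathrm{polylog}$ factors hidden in the $q_j$'s, and checking that the two regimes together exhaust every dyadic scale.
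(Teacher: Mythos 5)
Your main route is the same as the paper's: on the $O(1)$-subspace-embedding event you drop $\SS^{\top}\ZZ$ by submultiplicativity, partition the rows of $\CC^*$ into $O(\log n)$ dyadic norm classes and the sampled columns into $O(\log n)$ classes according to the scale of $q_j$, couple each class with a dominating near-uniform sample, and invoke the rectangular Rudelson--Vershynin bound (Corollary \ref{thm:rudelson_vershynin}), summing over the $O(\log^2 n)$ blocks. Up to that point your proposal and the paper's proof coincide; the paper does not case-split any further --- it applies the coupled RV bound to every (row-norm class)\,$\times$\,(scaling bucket) pair and charges each pair $O\bigl(\|\CC^*_{\mid\mathcal{B}_\ell}\|_2^2+\tfrac{\epsilon\log(k/\epsilon)}{k}\|\CC^*_{\mid\mathcal{B}_\ell}\|_F^2\bigr)$, handling only the tiny-row-norm remainder separately via its Frobenius norm.

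The genuine gap is in your treatment of the ``light'' scales. Writing the block contribution as $\sum_{j\in J_\ell}(\delta_j/q_j)\,\EE_{*,j}\ZZ_{j,*}$, the diagonal (variance) terms of the Frobenius second moment are $\sum_{j}q_j^{-1}\|\EE_{*,j}\|_2^2\,\|\ZZ_{j,*}\|_2^2$; since by your own normalization $\|\ZZ_{j,*}\|_2^2=\Theta(q_j/\mathrm{polylog})$ whenever $q_j<1$, the $q_j$ coming from $\|\ZZ_{j,*}\|_2^2$ is exactly cancelled by the $1/q_j$ importance rescaling hidden in $\SS\SS^{\top}$, leaving $\Theta(\|\EE\|_F^2/\mathrm{polylog})$ with no $2^{-\ell}$ suppression at all. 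Because the column norms of $\CC^*$ are unrelated to the leverage scores of $\ZZ^{\top}$ (the sketch is oblivious to $\CC$), $\|\EE\|_F$ can be $\Theta(\|\CC^*\|_F)$ at any scale $\ell$, so your claimed bound of order $2^{-\ell/2}\|\EE\|_F$ is unjustified and the light-scale case is not closed; it is off by up to a $\sqrt{k/\epsilon}$ factor. Two further problems: the identity $\CC^*\ZZ=0$ holds row-wise but does not survive restriction to a column block $J_\ell$, so ``the off-diagonal terms are handled using $\CC^*\ZZ=0$'' is not valid blockwise (the block mean $\EE\,\ZZ_{J_\ell,*}$ is nonzero; it is controllable only in spectral norm via $\|\EE\,\ZZ_{J_\ell,*}\|_2\le\|\CC^*\|_2$, and routing it through the Frobenius norm as you do is again lossy by up to $\sqrt{k/\epsilon}$); and even granting your light-scale estimate, the two regimes $\abs{J_\ell}=\widetilde{\Omega}(2^{\ell}k/\epsilon)$ and $2^{\ell}=\widetilde{\Omega}(k/\epsilon)$ do not exhaust the dyadic scales (e.g.\ $2^{\ell}\approx\sqrt{k/\epsilon}$ with $\abs{J_\ell}=O(1)$ satisfies neither), so the concluding ``take the better of the two bounds'' step does not cover all blocks. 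The subtlety you sensed about buckets carrying little leverage mass is worth flagging, but the patch you propose does not repair it, whereas the paper's argument simply runs the coupling plus Corollary \ref{thm:rudelson_vershynin} on every bucket after the one-shot submultiplicativity reduction.
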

\begin{proof}
By sub-multiplicativity of the spectral norm and $\SS$ being an $O(1)$-subspace embedding for $\ZZ^{\top}$, we have 
\begin{equation}
    \begin{split}
        \|\CC^* \SS \SS^{\top} \ZZ \|^2_2 & \leq \|\CC^* \SS \|^2_2\cdot \|\SS^{\top}\ZZ \|^2_2 \\
        & \leq O(1) \|\CC^* \SS \|^2_2
    \end{split}
\end{equation}
where the second inequality follows from $\ZZ^{\top}$ having orthonormal rows. 

We begin by observing that Corollary \ref{thm:rudelson_vershynin} requires the squared row norms of $\CC^*$ to be roughly the same, which need not be the case in general. Note, here the sampling matrix subsamples columns of $\CC^*$, as opposed to rows in Corollary \ref{thm:rudelson_vershynin}.
Thus, we partition the rows of $\CC^*$ into $O(\log(n))$ blocks such that either the squared column norms are the same up to a factor of $2$ or they are at most $\|\CC^* \|^2_F/\poly(n)$. Formally, 
for all $\ell \in [c\log(n)]$, let  $$\mathcal{B}_\ell = \left\{ i \in [n] : \frac{\|\CC^*\|^2_F}{2^{\ell+1}}  \leq \| \CC^*_{i,*}\|^2_2 \leq \frac{\|\CC^*\|^2_F}{2^{\ell}}\right\}$$ 
represent the blocks for rows with large squared norm. Let $\mathcal{B}_r = [n] \setminus \cup_{\ell \in [\log(n)]} \mathcal{B}_{\ell}$ be the remaining rows, which have norm at most $\|\CC^*\|^2_F/\textrm{poly}(n)$. Since the set of indices in the blocks form a partition of the rows of $\CC^*$, we can write $\|\CC^*\|^2_F = \sum_{\ell \in [\log(n)]}\| \CC^*_{\mathcal{B}_\ell}\|^2_F + \|\CC^*_{\mathcal{B}_r}\|^2_F$. Similarly, we can bound the spectral norm as follows: 
\begin{equation}
\label{eqn:bounding_op_norm}
\begin{split}
    \|\CC^* \SS \|^2_2 = \max_{\|y\|^2_2=1} \|\CC^*\SS y\|^2_2 & \leq O\left(\sum_{\ell \in [\log(n)]} \|\CC^*_{\mathcal{B}_{\ell}}\SS y\|^2_2 + \|\CC^*_{\mathcal{B}_r}\SS y\|^2_2\right)\\
    & \leq O\left(\sum_{\ell \in [\log(n)]} \|\CC^*_{\mathcal{B}_{\ell}}\SS \|^2_2 + \|\CC^*_{\mathcal{B}_r}\SS \|^2_2\right)
\end{split}
\end{equation}
We now handle the two separately. Since $\SS $ is an unbiased estimator of the squared Frobenius norm of $\ZZ^{\top}$, it is an unbiased estimator of the squared Frobenius norm of $\CC^*$. Therefore, with probability at least $99/100$, 
\begin{equation}
\label{eqn:small_row_norm}
    \|\CC^*_{| \mathcal{B}_r}\SS\|^2 _2 \leq \|\CC^*_{| \mathcal{B}_r}\SS\|^2 _F = O( \|\CC^*_{| \mathcal{B}_r}\|^2_F) \leq \frac{\|\CC^* \|_F^2}{\textrm{poly}(n)} << \frac{\epsilon}{k} \|\CC^*\|^2_F
\end{equation}
For the remaining terms, we cannot use this n{\"a}ive analysis as this would only leave us with an upper bound of $\| \CC^*\|^2_F$, which is too large. 

If instead of a leverage score sampling matrix, $\SS$ were a uniform sampling sketch that samples $k/\epsilon$ columns of $\CC^*$ in expectation, we could apply Corollary \ref{thm:rudelson_vershynin} for each $\ell$, with $q = k/\epsilon$ and $n = \sqrt{nk/\epsilon}$ and $b =\ceil{ \sqrt{nk}/(\sqrt{\epsilon}|\mathcal{B}_{\ell}|)}$, to obtain
\begin{equation}
\label{eqn:uni_large_row_norm}
\begin{split}
    \expecf{}{\big\|\CC^*_{|\mathcal{B}_\ell}\SS \big\|^2_2}  =  \sqrt{\frac{n \epsilon}{k}} \expecf{}{\big\|(\CC^*_{|\mathcal{B}_\ell})_{Q} \big\|^2_2}
    & \leq O\left( \big\|\CC^*_{|\mathcal{B}_\ell}\big\|^2_2 + \frac{\epsilon \log(k/\epsilon)|\mathcal{B}_{\ell}|}{k} \big\|(\CC^*)^{\top}_{|\mathcal{B}_\ell}\big\|^2_{\sqrt{\epsilon n/k}}\right) \\
    & \leq O\left(  \big\|\CC^*_{|\mathcal{B}_\ell}\big\|^2_2 + \frac{\epsilon \log(k/\epsilon)}{k} \big\|\CC^*_{|\mathcal{B}_\ell}\big\|^2_{F}\right)
\end{split}
\end{equation}
where $Q$ is the subset of columns selected by $\SS$ and the second inequality follows from observing that the all the row norms of $(\CC^*)_{\mid \mathcal{B}_\ell}$ are within a factor of $2$ of each other and thus the max squared row norm times the size of the set is the squared Frobenius norm. 

Using Equations \ref{eqn:small_row_norm} and \ref{eqn:uni_large_row_norm} to upper bound the two terms in Equation \ref{eqn:bounding_op_norm} suffices to finish the proof. Unfortunately, a similar analysis does not immediately go through when we replace a uniform sampling matrix with a leverage score sketch. 
Instead, we partition the sketch $\SS$ into buckets such that 
each bucket corresponds to rows in $\SS$ that scale columns of $\CC^*$ within a factor of $2$. For notational convenience, 
let $m = \sqrt{nk/\epsilon}$ and $t = k/\epsilon$. Recall, we 
construct $\SS$ by sampling the $j$-th column of $\ZZ^{\top} $ 
independently with probability $q_j = \min(\|\ZZ_{j,*}\|^2_2 \log(k), 1)$ and scale this column by $1/\sqrt{q_j}$.  We 
group the scaling factors into buckets. Note, if for some $j$, $q_j < 1/n^3$, we can ignore the corresponding column.

Let $\zeta_j$ be the indicator for a column of $\ZZ^{\top}$ to be 
sampled by $\SS$. Then, $\Pr[\zeta_j =1] = q_j = \min(\|\ZZ^{\top}_{*,j}\|^2_2\log(k), 1)$. Since $q_j \leq 1/n^3 $,
we can union bound over at most $m$ such events and conclude 
with probability at least $1- 1/n^2$, for all $j \in m$, no 
column $\ZZ^{\top}_{*,j}$ is sampled such that $q_j \leq 1/n^3$.  
Further, since $q_j \leq 1$,  $1/\sqrt{q_j} \in [1, n^{1.5}]$. Therefore, it suffices to bucket values in the range $[1, n^{1.5}]$. 
For all $h \in [c\log(n)]$, let $\S$ denote the set of column 
indices from $\ZZ^{\top}$ that were sampled by the sketch $\SS$. Then,  
\begin{equation*}
    \T_{h} = \left\{ j \in \S :2^{h} \leq \frac{1}{\sqrt{q_j}} \leq 2^{h+1} \right\}
\end{equation*}
Let $\SS_{\T_h}$ be the subset of rows of $\SS$ which are indexed by the set $\T_h$. Since $t$ is fixed and the scaling factors in $\T_h$ differ by at most a factor of $2$, the corresponding sampling probabilities in $\DD$ differ by at most $\sqrt{2}$, which is still not uniform. To fix this, we change the sampling process and independently sample each column indexed by $j \in \T_h$ with probability $2^{h+1}$, while still scaling it by $1/\sqrt{tq_j}$. Let this new distribution be denoted by $q'$. Under the new sampling process, we now sample rows independently and therefore, we are at least as likely to see all the rows sampled by $\SS_{\T_h}$ in our new sampling process. Therefore, it now holds that 
\begin{equation*}
    \expecf{q}{\|\CC^*_{\mid \mathcal{B}_\ell}\SS_{\T_h}\|^2_2} \leq \expecf{q'}{\|\CC^*_{\mid \mathcal{B}_\ell}\SS_{\T_h}\|^2_2}
\end{equation*}
Further, in the new sampling process, each row restricted to the set $\T_h$ is uniformly sampled with probability $1/2^{(h+1)/2}$ and thus we can apply Corollary \ref{thm:rudelson_vershynin} to $\CC^*_{\mid \mathcal{B}_\ell}\SS_{\T_h}$.  
\begin{equation}
    \begin{split}
        \expecf{q'}{\|\CC^*_{\mid \mathcal{B}_\ell}\SS_{\T_h}\|^2_2} & \leq O\left( \|\CC^*_{\mid\mathcal{B}_\ell} \|^2_2 + \frac{\epsilon \log(k/\epsilon) |\mathcal{B}_{\ell}| }{k }  \big\|(\CC^*_{\mid \mathcal{B}_\ell})^{\top} \big\|_{(\sqrt{\epsilon n/k})} \right)  \\
        & \leq O\left( \|\CC^*_{\mid \mathcal{B}_\ell} \|^2_2 + \frac{\epsilon\log(k/\epsilon)}{k} \big\|\CC^*_{\mid\mathcal{B}_\ell} \big\|^2_F \right) 
    \end{split}
\end{equation}
where the second inequality follows from squared row norms in $\CC^*_{\mid\mathcal{B}_\ell}$ being equal up to a factor of $2$. 
Therefore, with probability at least $1- 1/c'\log(n)$, 
\begin{equation}
\label{eqn:spec_bound_bucket}
    \big\|\CC^*_{\mid \mathcal{B}_\ell}\SS_{\T_h} \big\|^2_2 \leq \widetilde{O}\left(\|\CC^*_{\mid \mathcal{B}_\ell} \|^2_2 + \frac{\epsilon\log(k/\epsilon)}{k} \|\CC^*_{\mid \mathcal{B}_\ell} \|^2_F \right) 
\end{equation}
Let $\eta_h$ be the event that the above bound holds. Then, union bounding over all $c\log(n)$ such events, with probability at least $99/100$, simultaneously for all $h$,  
\begin{equation}
\label{eqn:union_bound_bucket}
\begin{split}
    \|\CC^*_{\mid \mathcal{B}_\ell}\SS \|^2_2 & \leq O\left(\sum_{h \in [c\log(n)]} \|\CC^*_{\mid \mathcal{B}_\ell}\SS_{\T_h} \|^2_2\right)\\
    & \leq \widetilde{O}\left(  \|\CC^*_{\mid \mathcal{B}_\ell} \|^2_2 + \frac{\epsilon}{k} \|\CC^*_{\mid \mathcal{B}_\ell} \|^2_F \right)
\end{split}
\end{equation}
which follows from Equation \ref{eqn:spec_bound_bucket}.
Substituting this back into Equation \ref{eqn:bounding_op_norm}, 
\begin{equation}
\begin{split}
    \|\CC^* \SS \|^2_2  & \leq O\left(\sum_{\ell \in [\log(n)]} \|\CC^*_{\mid\mathcal{B}_{\ell}}\SS \|^2_2 + \|\CC^*_{\mid \mathcal{B}_r}\SS \|^2_2\right)\\
    & \leq \widetilde{O}\left( \|\CC^*\|^2_2 + \frac{\epsilon}{k} \|\CC^* \|^2_F \right) \\
\end{split}
\end{equation}
where the second inequality follows from Equation \ref{eqn:union_bound_bucket} and observing that $\|\CC^*_{\mathcal{B}_\ell} \|^2_2 \leq \|\CC^*\|^2_2 $ and $\sum_{\ell}\|\CC^*_{\mathcal{B}_\ell} \|^2_F = \|\CC^*\|^2_F$, which completes the proof. 
\end{proof}

Combining the above lemma with \ref{eqn:sketched_solution}, and observing that $\|\CC^*\|^2_F \leq \|\CC - \CC_{k/\epsilon}\|^2_F$, we can bound the cost of $\widehat{\WW}$ 
\begin{equation*}
    \|\CC - \widehat{\WW}\ZZ^{\top}\|^2_2 \leq \widetilde{O}\left(\min_{\WW} \|\CC - \WW\ZZ^{\top}\|^2_2 + \frac{\epsilon}{k}\|\CC- \CC_{k/\epsilon} \|^2_F\right)
\end{equation*}
which completes the correctness proof of Theorem \ref{thm:spectral_regression}. Next, we analyze the running time. In Step 1 of Algorithm \ref{alg:approx_spectral_regression}, we compute a distribution over the columns of $\ZZ^{\top}$, which does not require reading any entries in $\AA$ and takes time $\sqrt{nk/\epsilon} \cdot k/\epsilon = \sqrt{n}(k/\epsilon)^{1.5}$. Step 2 requires computing $\CC \SS$ and $\ZZ^{\top}\SS$. Note since $\SS$ samples $\widetilde{O}(k/\epsilon)$ columns in $\CC$, we have to query $n\cdot \widetilde{O}(k/\epsilon)$ entries in $\AA$ to explicitly compute $\CC\SS$ and can be computed in as much time. Since $\ZZ^{\top}$ has fewer rows the running time is dominated by computing $\CC\SS$. For Step 3, we compute $(\ZZ^{\top}\SS\SS^{\top}\ZZ)^{-1}$, which requires no queries to $\AA$ and runs in time $\widetilde{O}((k/\epsilon)^{\omega})$ and thus $(\ZZ^{\top} \SS)^{\dagger}$ can be computed in the same time. Therefore, the total query complexity of Algorithm \ref{alg:approx_spectral_regression} is $\widetilde{O}(nk/\epsilon)$ and the running time is $\widetilde{O}(nk/\epsilon + (k/\epsilon)^{\omega})$, which concludes the proof.  


\subsection{Sample-Optimal Algorithm}

In this subsection, we describe our main algorithm for PSD Low-Rank Approximation. Given a PSD matrix $\AA$, our algorithm queries $\widetilde{O}(nk/\epsilon)$ entries in $\AA$ and runs in time $\widetilde{O}(n(k/\epsilon)^{\omega-1} +  (k/\epsilon^3)^\omega)$. This resolves an open question on the $\epsilon$-dependence of the query complexity and matches the lower bound of $\Omega(nk/\epsilon)$ up to $\textrm{polylog}$ factors from \cite{mw17}. 
At a high level, our algorithm consists of two stages: first, we use the existing machinery developed by Musco and Woodruff to obtain weak PCPs by setting $\epsilon$ to be a constant. By observing that their algorithms have linear dependence on the rank, we can afford to rank-$(k/\epsilon)$ PCPs instead. This enables us to find a structured subspace that contains a spectral low-rank approximation for the PCP.

Since our PCPs are accurate only up to $O(1)$-error, we cannot directly extract a $(1+\epsilon)$ relative error approximation for $\AA$. However, we show that the PCPs have enough structure to obtain a structured subspace that spans a $(1+\epsilon)$-approximate solution for $\AA$. A key ingredient to recover this structured subspace is an efficient algorithm for \textit{Spectral Regression}.

Following the approach of Musco and Woodruff we use the ridge leverage scores of $\AA^{1/2}$ to compute $\CC$, a column PCP for $\AA$ and $\RR$ a row PCP for $\CC$, with a minor tweak: we instantiate their theorems (Lemmas \ref{lem:rel_col_pcp} and \ref{lem:rel_row_pcp_mixed}) with $k=k/\epsilon$ and $\epsilon=O(1)$.  
While the precise guarantees satisfied by our PCPs are weaker than the PCPs used by Musco and Woodruff, the dimensions of our PCPs are smaller.

\begin{Frame}[\textbf{Algorithm \ref{alg:sample_opt_rel}} : Sample Optimal PSD  Low-Rank Approximation]
\label{alg:sample_opt_rel}
\textbf{Input}: A PSD Matrix $\AA \in \mathbb{R}^{n \times n}$, integer $k$, and $\epsilon>0$.
\begin{enumerate}
    \item  Let $t= c\sqrt{\frac{nk}{\epsilon}}\log(n)$, for some constant $c$ and let $k' = \widetilde{O}(k/\epsilon)$. For all $j \in [n]$, let $\bar{\rho}^{k'}_j(\AA^{1/2})$ be the approximate column ridge-leverage scores that satisfy Lemma \ref{lem:approx_ridge_leverage}. Let $q = \{ q_1, q_2 \ldots q_n \} $ denote a distribution over columns of $\AA$ such that $q_j = \rho_j^{k'}(\AA^{1/2})/\sum_{j}\rho_j^{k'}(\AA^{1/2})$. 
    \item Construct a \textit{column PCP} for $\AA$ by sampling $t$ columns of $\AA$ such that each column is set to $\frac{\AA_{*,j}}{\sqrt{t q_j}}$ with probability $q_j$, for all $j \in [n]$. Let $\CC$ be the resulting $n \times t$ matrix that satisfies the guarantee of Lemma \ref{lem:rel_col_pcp} instantiated with $k =k'$ and $\epsilon=O(1)$.
    \item  Construct a \textit{row PCP} for $\CC$ by sampling $t$ rows of $\CC$ such that each row is set to $\frac{\CC_{i,*}}{\sqrt{t q_i}}$ with probability $q_i$, for all $i \in [n]$. Let $\RR$ be the resulting $t \times t$ matrix that satisfies the guarantee of Lemma \ref{lem:rel_row_pcp_mixed} instantiated with $k = k/\epsilon$ and $\epsilon=O(1)$.
    \item Run the \textit{input-sparsity} algorithm from Lemma \ref{lem:input_sparsity_spectral_lra} to compute a rank-$k/\epsilon$  matrix $\ZZ$ with orthonormal columns such that $\|\RR - \RR\ZZ\ZZ^{\top} \|^2_2\leq O\left(\frac{\epsilon}{k}\right)\|\RR - \RR_{k/\epsilon}\|^2_F $.
    \item Run Algorithm \ref{alg:approx_spectral_regression} with parameters $k$, $\epsilon$ on the \textit{Spectral Regression} problem 
    $$\min_{\WW}\|\CC - \WW \ZZ^{\top}\|_2$$
    Let $\widehat{\WW}$ be the output of  Algorithm \ref{alg:approx_spectral_regression}. Compute an orthonormal basis $\QQ$ for $\WW$. Note, $\QQ\QQ^{\top}$ is an $(O(1),k/\epsilon)$-\textsf{SF} projection for $\AA$.  
    \item Run Algorithm \ref{alg:proj_to_rank} with input $\AA$, $\QQ$, $k$ and $\epsilon$ to approximately minimize $\|\AA  - \QQ \XX\QQ^{\top} \|^2_F$ over rank $k$ matrices $\XX$. Let $\MM, \NN$ be the output of Algorithm \ref{alg:proj_to_rank}.
\end{enumerate}
\textbf{Output:} $\MM , \NN^{\top} \in \mathbb{R}^{n \times k}$ such that $\|\AA - \MM \NN\|^2_F \leq (1+\epsilon)\|\AA - \AA_k\|^2_F$   
\end{Frame}

 In particular, we obtain a row PCP $\RR$, which is a $\sqrt{nk/\epsilon}\times \sqrt{nk/\epsilon}$ matrix (ignoring polylogarithmic factors) and we can afford to read all of it. The input-sparsity time algorithm from Lemma \ref{lem:input_sparsity_spectral_lra}  queries $\textsf{nnz}(\RR) = \widetilde{O}(nk/\epsilon)$ entries to obtain a  rank-$(k/\epsilon)$ matrix $\ZZ$ with orthonormal columns such that 
\begin{equation}
\label{eqn:sf_projection}
    \|\RR - \RR \ZZ \ZZ^{\top} \|^2_2 \leq \frac{\epsilon}{k} \|\RR  - \RR_{k}\|^2_F
\end{equation}
Since $\RR$ is a \textit{Spectral-Frobenius PCP} for $\CC$, $\ZZ\ZZ^{\top}$ satisfies $\|\CC - \CC\ZZ\ZZ^{\top} \|^2_2 \leq \frac{\epsilon}{k}\|\CC -\CC_{k} \|^2_F$. Since $\CC$ is a \textit{Spectral-Frobenius PCP} for $\AA$, it suffices to obtain a projection for the column space of $\CC$ that also satisfies the above guarantee. Therefore, we solve the following \textit{Spectral Regression} problem: $\min_{\WW}\|\CC - \WW\ZZ^{\top}\|^2_2$. Recall, we can approximately optimize this using Algorithm \ref{alg:approx_spectral_regression}. Let $\widehat{\WW}$ be the resulting solution. 
We can then compute an orthonormal basis for $\widetilde{\WW}$ (denoted by $\QQ$) and show that $\QQ \QQ^{\top}$ is an $(O(1),k/\epsilon)$-\textsf{SF} projection for $\AA$. Then, we can obtain a low rank approximation for $\AA$ by simply running Algorithm \ref{alg:proj_to_rank}.

\textbf{Proof of Theorem \ref{thm:sample_opt_psd_lra}.}
Let $k' = \widetilde{O}(k/\epsilon)$.
It follows from Lemma \ref{lem:approx_ridge_leverage} that we can compute the rank-$k'$ ridge leverage scores of $\AA^{1/2}$, up to a constant factor using the algorithm of Musco and Musco \cite{musco2017recursive}. By Lemma \ref{lem:a_half_to_a}, the ridge leverage scores of $\AA^{1/2}$ are a $\sqrt{\epsilon n/k'}$-approximation to the ridge leverage scores of $\AA$. Let $q$ be a distribution over rows and columns of $\AA$ as defined in Algorithm \ref{alg:sample_opt_rel}. Since we sample $t=O(\sqrt{nk/\epsilon}\log(n))$ columns of $\AA$ proportional to $q$,  instantiating Lemma \ref{lem:rel_row_pcp_mixed} with $k=k'$ and $\epsilon=0.1$, we obtain a mixed Spectral-Frobenius column PCP $\CC$ such that with probability at least $1-c_1$, for all rank-$k'$ projections $\XX$, 
\begin{equation}
\label{eqn:col_pcp}
\frac{9}{10}\|\AA - \XX\AA \|^2_2 - \frac{1}{10k'} \|\AA - \AA_{k'} \|^2_F \leq \|\CC - \XX \CC \|^2_2 \leq \frac{11}{10}\|\AA - \XX\AA \|^2_2 + \frac{1}{10k'} \|\AA - \AA_{k'} \|^2_F
\end{equation}
Let $\zeta_1$ be the indicator for $\CC$ satisfying the above guarantee. 
Similarly, sampling $t$ rows of $\CC$ proportional to $q$, results in a mixed Spectral-Frobenius row PCP for $\RR$ such that with probability at least $1-c_2$, for all rank-$k'$ projection matrices $\XX$,
\begin{equation}
\label{eqn:row_pcp}
\frac{9}{10}\|\CC - \CC\XX \|^2_2 - \frac{1}{10k'} \|\CC - \CC_{k'} \|^2_F \leq \|\RR - \RR \XX \|^2_2 \leq \frac{11}{10}\|\CC - \CC\XX \|^2_2 + \frac{1}{10k'} \|\CC - \CC_{k'} \|^2_F
\end{equation}
Further, it is well-known that with the same probability $\| \RR - \RR_{k'}\|^2_F = \|\CC - \CC_{k'}\|^2_F$. Let $\zeta_2$ be the event that $\RR$ satisfies the above guarantee. 
Next, we compute a Spectral Low-Rank Approximation for $\RR$, using the algorithm from Lemma \ref{lem:input_sparsity_spectral_lra}, with $k = k'$ and $\epsilon =0.1$. As a result, with probability at least $1- c_3$, we obtain a rank-$k'$ matrix $\ZZ\in \mathbb{R}^{t \times k}$, such that $\ZZ \ZZ^{\top}$ is a $(0.1, k')$-\textsf{SF} projection for $\RR$, i.e., 
\begin{equation}
\label{eqn:sf_projection_1}
    \|\RR - \RR \ZZ \ZZ^{\top} \|^2_2 \leq \frac{1}{10k'} \|\RR  - \RR_{k'}\|^2_F
\end{equation}
Let $\zeta_3$ be the event that $\ZZ$ satisfies the above guarantee. Union bounding over $\zeta_1, \zeta_2, \zeta_3$, we know that all of them hold with probability at least $1- (c_1 + c_2 + c_3)$. 
Since $\RR$ is a Spectral-Frobenius row PCP for $\CC$ and $\ZZ \ZZ^{\top}$ is a rank-$k'$ projection matrix,  it follows from Equation \ref{eqn:row_pcp}
\begin{equation}
\label{eqn:sf_proj_c}
\begin{split}
    \|\CC - \CC \ZZ \ZZ^{\top} \|^2_2 & \leq \frac{10}{9} \|\RR - \RR\ZZ\ZZ^{\top} \|^2_2 + \frac{1}{9k'}\|\RR - \RR_{k'}\|^2_F \\
    & \leq \frac{1}{10k'}\|\RR - \RR_k \|^2_F + \frac{1}{9k'}\|\RR - \RR_{k'}\|^2_F\\
    & \leq \widetilde{O}\left(\frac{\epsilon}{k}\right) \|\CC - \CC_{k'} \|^2_F
\end{split}
\end{equation}
where the second inequality follows from Equation \ref{eqn:sf_projection} and the third follows from the fact that PCPs preserve Frobenius norm up to a constant factor. While conditioning on $\zeta_3$, it follows from Equation \ref{eqn:sf_proj_c} that $\ZZ \ZZ^{\top}$ is an $(\widetilde{O}(1), k/\epsilon)$-\textsf{SF} projection for $\CC$, our goal is to compute an \textsf{SF} projection for $\AA$. Since $\ZZ \ZZ^{\top}$ is a $t \times t$ matrix, it does not even match the dimensions of $\AA$. Therefore, we set up the following \textit{Spectral Regression} problem: 
\begin{equation}
\label{eqn:sr_2}
    \min_{\WW \in \mathbb{R}^{n \times k'}} \|\CC - \WW \ZZ^{\top} \|^2_2
\end{equation}
Let $\widehat{\WW}$ be the approximate minimizer of the above problem obtained by running Algorithm \ref{alg:approx_spectral_regression}. Then, it follows from Theorem \ref{thm:spectral_regression} that with probability at least $99/100$, 
\begin{equation}
\label{eqn:spec_reg_output}
\begin{split}
\|\CC - \widehat{\WW}\ZZ^{\top} \|^2_2 & \leq \widetilde{O}(1)\left( \min_{\WW} \|\CC - \WW \ZZ^{\top} \|^2_2 + \frac{\epsilon}{k}\|\CC- \CC_{k'} \|^2_F \right) \\
& \leq \widetilde{O}(1)\left( \|\CC - \CC_{k'} \|^2_2 + \frac{\epsilon}{k}\|\CC- \CC_{k'} \|^2_F \right) \\
& \leq \widetilde{O}(1)\left( \frac{\epsilon}{k}\|\CC- \CC_{k'} \|^2_F \right) 
\end{split}
\end{equation}
where the second inequality follows from $\|\CC - \CC\ZZ\ZZ^{\top}\|^2_2 \leq \widetilde{O}\left(\frac{\epsilon}{k}\right)\|\CC - \CC_{k'} \|^2_F$ (by definition of an \textsf{SF} projection) and observing that $\WW =\CC\ZZ^{\top}$ is a feasible solution to Equation \ref{eqn:sr_2}. Let $\zeta_4$ be the event that Equation \ref{eqn:spec_reg_output} holds. 
Next, let $\QQ$ be an orthonormal basis for $\WW$. We observe that $\QQ\QQ^{\top}\CC$ is the orthogonal projection of $\CC$ onto the subspace spanned by $\QQ$ and the matrix $\widehat{\WW}\ZZ^{\top}$ also lies in the subspace. Therefore, by the Pythagorean Theorem, for any fixed unit vector $y$, 
\begin{equation*}
    \|\CC y - \QQ\QQ^{\top}\CC y \|^2_2 \leq \|\CC y - \widehat{\WW}\ZZ^{\top}y \|^2_2 \leq \|\CC  - \widehat{\WW}\ZZ^{\top} \|^2_2
\end{equation*}
Picking $y$ such that $\| \CC y - \QQ\QQ^{\top}\CC y \|^2_2 = \|\CC - \QQ\QQ^{\top}\CC \|^2_2$, and combining it with Equation \ref{eqn:spec_reg_output} we have
\begin{equation}
\label{eqn:spec_bound_projective}
\|\CC - \QQ\QQ^{\top}\CC \|^2_2 \leq \widetilde{O}(1)\left( \frac{\epsilon}{k}\|\CC- \CC_{k'} \|^2_F \right) 
\end{equation}
Conditioning on event $\zeta_1$, we know that $\|\CC - \CC_{k'} \|^2_F = \|\AA - \AA_{k'}\|^2_F$. 
Since $\QQ\QQ^{\top}$ is a rank-$k'$ projection matrix and $\CC$ is a mixed Spectral-Frobenius column PCP for $\AA$, it follows from Equation \ref{eqn:col_pcp},
\begin{equation}
\begin{split}
     \|\AA - \QQ \QQ^{\top} \AA \|^2_2 & \leq \frac{10}{9}\|\CC - \QQ\QQ^{\top}\CC \|^2_2 + \frac{1}{9k'}\|\AA - \AA_{k'} \|^2_F \\
     & \leq \widetilde{O}(1)\left( \frac{\epsilon}{k}\|\AA- \AA_{k'} \|^2_F \right) 
\end{split}
\end{equation}
where the last inequality follows from Equation \ref{eqn:spec_bound_projective}. Therefore, $\QQ \QQ^{\top}$ is a $(0.1, k')$-\textsf{SF} projection for $\AA$. Finally, we run Algorithm \ref{alg:proj_to_rank} on $\min_{\textrm{rank}(\XX)=k} \|\AA - \QQ\XX\QQ^{\top}\|^2_F$. Here, we note that for Algorithm \ref{alg:proj_to_rank}, a $(0.1, k')$-\textsf{SF} projection is equivalent to an $(\epsilon, k)$-\textsf{SF} projection, up to polylogarithmic factors. Therefore, Theorem \ref{thm:struc_proj_to_lra} holds as is. Then,
by Theorem \ref{thm:struc_proj_to_lra}, we know that with probability at least $99/100$ Algorithm \ref{alg:proj_to_rank} outputs matrices $\MM, \NN$ such that $\|\AA - \MM \NN^{\top} \|^2_F \leq (1+\epsilon)\| \AA - \AA_k\|^2_F$. Let $\zeta_5$ be the event that the aforementioned algorithm succeeds. Then, union bounding over $\zeta_1, \zeta_2, \zeta_3, \zeta_4$ and $\zeta_5$, with probability at least $9/10$, $\MM, \NN$ is a relative-error Low-Rank Approximation for $\AA$, which concludes correctness. 

Next, we analyze the query complexity and running time of Algorithm \ref{alg:sample_opt_rel}. Step $1$ computes the rank-$k'$ \textit{ridge leverage scores} of $\AA^{1/2}$ and by Lemma \ref{lem:approx_ridge_leverage}, requires reading  $O(nk'\log(k')) = \widetilde{O}(nk/\epsilon)$ entries in $\AA$ and runs in time $\widetilde{O}(n (k/\epsilon)^{\omega-1})$. Steps $2$ and $3$ require no queries to $\AA$ and the sampling can be performed in $O(n)$ time.  In step $4$, the \textit{input sparsity} algorithm from Lemma \ref{lem:input_sparsity_spectral_lra} queries $\textsf{nnz}(\RR) = t^2 = \widetilde{O}(nk/\epsilon)$ entries in $\AA$ and runs in $\widetilde{O}(nk/\epsilon + \sqrt{n}\textrm{poly}(k/\epsilon)) = \widetilde{O}(nk/\epsilon)$ time. We know from Theorem \ref{thm:struc_proj_to_lra} that Step $5$ requires $\widetilde{O}(nk/\epsilon)$ queries to $\AA$ and runs in $\widetilde{O}(nk/\epsilon + (k/\epsilon)^{\omega})$ time. Finally, in Step $6$, we run Algorithm \ref{alg:proj_to_rank} such that $\QQ$ is a $n \times k'$ matrix. Therefore, it follows from Theorem \ref{thm:struc_proj_to_lra},  that the total number of queries to $\AA$ is $\widetilde{O}(nk/\epsilon + k^2/\epsilon^6)$ and the running time is $\widetilde{O}(n(k/\epsilon)^{\omega-1} +  (k/\epsilon^3)^\omega)$. The final query complexity and running time follows, and this concludes the proof. 

\vspace{0.2in}
\noindent \textbf{Outputting a PSD Low-Rank Approximation.} Here, we extend our algorithm to show that we can obtain a relative-error low-rank approximation matrix $\BB$ such that $\BB$ itself is a PSD matrix, using the same sample complexity and running time as in Theorem \ref{thm:sample_opt_psd_lra}. Outputting a PSD low-rank approximation was first considered by Clarkson and Woodruff \cite{clarkson2017low}, who obtain an input-sparsity algorithm for arbitrary $\AA$. When $\AA$ is PSD, Musco and Woodruff show that this problem can be solved with $\widetilde{O}(nk/\epsilon^3+ nk^2/\epsilon^2)$ queries, in time $\widetilde{O}(n (k/\epsilon)^{\omega} + nk^{\omega-1}/\epsilon^{3(\omega-1)})$.  

We run Algorithm \ref{alg:sample_opt_rel} till Step 5, i.e., we recover $\QQ$ such that $\QQ \QQ^{\top}$ is a \textsf{SF} projection for $\AA$. We then modify Algorithm \ref{alg:proj_to_rank} by considering the following optimization problem instead: 
\begin{equation}
\label{eqn:psd_lra}
    \min_{\substack{\textrm{rank}(\XX) \leq k \\ \XX \succeq 0 } }\|\AA - \QQ\XX\QQ^{\top} \|^2_F
\end{equation}
As before, we sketch on both sides by sampling proportional to the leverage scores of $\QQ$. Let the resulting sampling matrices be denoted by $\SS, \TT$. Then, we have the following sketched optimization problem: 
\begin{equation}
\label{eqn:sketched_psd_lra}
    \min_{\substack{\textrm{rank}(\XX) \leq k \\ \XX \succeq 0 } }\|\SS\AA\TT - \SS\QQ\XX\QQ^{\top}\TT \|^2_F
\end{equation}
Following Step 4 in Algorithm \ref{alg:proj_to_rank}, we can compute $\SS \AA\TT$, $\PP_{\SS\QQ}$, $\PP_{\QQ^{\top}\TT}$. We then compute $\widehat{\XX} = (\SS \QQ)^{\dagger} \PP_{\SS \QQ} \SS\AA\TT \PP_{\QQ^{\top}\TT}(\QQ^{\top}\TT)^{\dagger}$ and  $\XX^* = [(\widehat{\XX} + \widehat{\XX}^{\top})/2]_{k+}$, where for any matrix $\MM$, $[\MM]_{k+}$ is defined by setting all but the top-$k$ positive eigenvalues to $0$. Finally, we output $\NN \NN^{\top}$ where $\NN = \QQ (\XX^*)^{1/2}$.

\begin{corollary}(Outputting a PSD Low-Rank Approximation.)
Given an $n \times n$ PSD matrix $\AA$, an integer $k$, and $1>\epsilon >0$, there exists an algorithm that samples $\widetilde{O}(nk/\epsilon)$ entries in $\AA$ and outputs a rank-$k$ $\MM \MM^{\top} $ such that with probability at least $9/10$, 
\begin{equation*}
    \|\AA - \MM \MM^{\top}\|^2_F \leq (1+\epsilon) \|\AA - \AA_k\|^2_F
\end{equation*}
Further, the algorithm runs in $\widetilde{O}(n(k/\epsilon)^{\omega-1} +  (k/\epsilon^3)^\omega)$ time.
\end{corollary}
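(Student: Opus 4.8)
The plan is to reuse Algorithm~\ref{alg:sample_opt_rel} up through Step~5 --- producing, with the failure analysis of Theorem~\ref{thm:sample_opt_psd_lra}, an orthonormal $\QQ\in\mathbb{R}^{n\times k'}$ with $k'=\widetilde{O}(k/\epsilon)$ such that $\PP:=\QQ\QQ^\top$ is an $(\epsilon,k)$-\textsf{SF} projection for $\AA$, at a cost of $\widetilde{O}(nk/\epsilon)$ queries and $\widetilde{O}(n(k/\epsilon)^{\omega-1}+(k/\epsilon^3)^\omega)$ time --- and then to redo the ``structured-projection-to-low-rank-approximation'' step (Algorithm~\ref{alg:proj_to_rank}) with a positive-semidefiniteness constraint carried through. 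First I would observe that, since $\AA\succeq 0$, the optimum $\AA_k$ is itself PSD, so $\XX_0:=\QQ^\top\AA_k\QQ$ is PSD of rank $\le k$ with $\QQ\XX_0\QQ^\top=\PP\AA_k\PP$; hence Lemma~\ref{lem:structured_projection} gives $\min_{\XX\succeq 0,\ \mathrm{rank}(\XX)\le k}\|\AA-\QQ\XX\QQ^\top\|_F^2\le (1+\epsilon)\|\AA-\AA_k\|_F^2$, and it suffices to compute, in factored PSD form and with $\widetilde{O}(nk/\epsilon)$ queries, a $(1+O(\epsilon))$-approximate minimizer of this constrained problem.

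Next I would sketch the constrained problem on both sides, forming leverage-score sampling matrices $\SS,\TT$ for $\QQ$ with $\widetilde{O}(k'/\epsilon^2)$ rows, and passing to $\min_{\XX\succeq 0,\ \mathrm{rank}(\XX)\le k}\|\SS\AA\TT-\SS\QQ\XX\QQ^\top\TT\|_F^2$. Applying the affine-embedding lemma (Lemma~\ref{lem:affine_embedding}) twice --- once with $\SS$ to $\min_{\HH}\|\AA-\QQ\HH\|_F^2$ and once with $\TT$ to $\min_{\ZZ}\|\SS\AA-\ZZ\QQ^\top\|_F^2$, exactly as in the proof of Theorem~\ref{thm:struc_proj_to_lra} --- yields, simultaneously for every rank-$\le k$ matrix $\XX$,
\[
\|\SS\AA\TT-\SS\QQ\XX\QQ^\top\TT\|_F^2 = (1\pm O(\epsilon))\|\AA-\QQ\XX\QQ^\top\|_F^2 + \Delta,
\]
where $\Delta\ge 0$ is fixed (independent of $\XX$). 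The extra observation I would make here --- not needed in Theorem~\ref{thm:struc_proj_to_lra}, which exploits an \emph{exact} sketched minimizer, but essential now --- is that $\Delta=O(\|\AA-\AA_k\|_F^2)$: every summand of $\Delta$ is either $\|\AA-\QQ\HH^*\|_F^2\le\|\AA-\PP\AA_k\PP\|_F^2\le (1+\epsilon)\|\AA-\AA_k\|_F^2$ (again Lemma~\ref{lem:structured_projection}) or a sampled copy of a submatrix of it, and $\SS,\TT$ are unbiased estimators of Frobenius norm, so with constant probability each summand stays $O(\|\AA-\AA_k\|_F^2)$. Plugging any PSD rank-$\le k$ matrix that is a $(1+O(\epsilon))$-approximate minimizer of the sketched problem back into the displayed identity then gives, using $\Delta=O(\|\AA-\AA_k\|_F^2)$, an approximation to $\AA$ of relative error $1+O(\epsilon)$.

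To solve the sketched problem, note that $\SS,\TT$ are subspace embeddings for $\QQ$ (Lemma~\ref{lem:subspace_embedding}), so $\SS\QQ$ and $\QQ^\top\TT$ have full rank with all singular values in $[1-O(\epsilon),1+O(\epsilon)]$. Since $\SS\QQ\XX\QQ^\top\TT$ always has column space inside $\mathrm{col}(\SS\QQ)$ and row space inside $\mathrm{row}(\QQ^\top\TT)$, the Pythagorean theorem gives, for all $\XX$,
\[
\|\SS\AA\TT-\SS\QQ\XX\QQ^\top\TT\|_F^2 = \|\SS\AA\TT-\PP_{\SS\QQ}\SS\AA\TT\PP_{\QQ^\top\TT}\|_F^2 + \|\SS\QQ(\widehat\XX-\XX)\QQ^\top\TT\|_F^2,
\]
where $\widehat\XX=(\SS\QQ)^\dagger\SS\AA\TT(\QQ^\top\TT)^\dagger$ is exactly the matrix $(\SS\QQ)^\dagger\PP_{\SS\QQ}\SS\AA\TT\PP_{\QQ^\top\TT}(\QQ^\top\TT)^\dagger$ computed in the algorithm (the unconstrained Generalized-LRA optimum of Theorem~\ref{thm:gen_lra} with the rank cap dropped) and the first term does not depend on $\XX$. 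Because $\SS\QQ$ and $\QQ^\top\TT$ act as $(1\pm O(\epsilon))$-isometries, $\|\SS\QQ\MM\QQ^\top\TT\|_F=(1\pm O(\epsilon))\|\MM\|_F$ for every $k'\times k'$ matrix $\MM$, so minimizing the sketched objective over PSD rank-$\le k$ $\XX$ is, up to a $(1+O(\epsilon))$ factor, the same as computing the best PSD rank-$\le k$ Frobenius approximation of $\widehat\XX$. Using that symmetric and antisymmetric matrices are Frobenius-orthogonal, that approximation is $[(\widehat\XX+\widehat\XX^\top)/2]_{k+}=:\XX^*$, precisely the algorithm's choice; hence $\XX^*$ is a $(1+O(\epsilon))$-approximate minimizer of the sketched problem, and $\NN:=\QQ(\XX^*)^{1/2}$ gives $\NN\NN^\top=\QQ\XX^*\QQ^\top$, PSD and of rank $\le k$, with $\|\AA-\NN\NN^\top\|_F^2\le (1+O(\epsilon))\|\AA-\AA_k\|_F^2$; rescaling $\epsilon$ finishes correctness. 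For the resource bounds, the modified Algorithm~\ref{alg:proj_to_rank} reads only the $\widetilde{O}((k/\epsilon^3)^2)$-entry block $\SS\AA\TT$ plus entries already read, and otherwise factors/truncates/square-roots $\mathrm{poly}(k/\epsilon)$-size matrices and forms $\QQ(\XX^*)^{1/2}$, staying within $\widetilde{O}(nk/\epsilon)$ queries and $\widetilde{O}(n(k/\epsilon)^{\omega-1}+(k/\epsilon^3)^\omega)$ time; a union bound over the events of Theorem~\ref{thm:sample_opt_psd_lra}, the two affine embeddings, the two subspace embeddings and the unbiasedness bound on $\Delta$ gives success probability $\ge 9/10$.

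The hard part will be this last step: arguing that symmetrizing $\widehat\XX$ and truncating to its top-$k$ positive eigenvalues is near-optimal for the \emph{two-sided, asymmetric} sketched problem (here $\SS\ne\TT$, so $\SS\AA\TT$ and $\SS\QQ\XX\QQ^\top\TT$ are not symmetric and the problem has no direct closed form). The two ideas that carry it through --- the Pythagorean reduction of the sketched problem to ``best PSD rank-$k$ approximation of $\widehat\XX$'' up to the benign $(1\pm O(\epsilon))$ distortion coming from $\SS\QQ,\QQ^\top\TT$ being near-isometries, and the bound $\Delta=O(\|\AA-\AA_k\|_F^2)$ which lets a merely approximate sketched minimizer still translate into a $(1+\epsilon)$-relative-error approximation of $\AA$ --- are the parts needing care; the remainder is a verbatim reprise of Theorems~\ref{thm:struc_proj_to_lra} and~\ref{thm:sample_opt_psd_lra}.
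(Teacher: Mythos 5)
Your algorithm is exactly the paper's (same $\QQ$ from Steps 1--5, same leverage-score sketches $\SS,\TT$, same $\widehat\XX$, same $\XX^*=[(\widehat\XX+\widehat\XX^\top)/2]_{k+}$, same output $\QQ(\XX^*)^{1/2}$), but your justification of the key step differs. The paper simply invokes the extension of Lemma \ref{lem:structured_projection} to the PSD case and cites Lemma 15 of Clarkson--Woodruff to assert that $\XX^*$ is the \emph{exact} minimizer of the sketched constrained problem, after which the fixed affine-embedding shift cancels exactly as in Theorem \ref{thm:struc_proj_to_lra} and no bound on it is needed. You instead prove, self-containedly, that $\XX^*$ is only a $(1+O(\epsilon))$-approximate sketched minimizer --- via the Pythagorean split of the sketched objective onto $\{\SS\QQ\MM\QQ^\top\TT\}$, the near-isometry of $\SS\QQ$ and $\QQ^\top\TT$ (which does require the $(1\pm\epsilon)$-accurate embeddings that $t=\widetilde{O}(k'/\epsilon^2)$ rows provide), and the classical fact that the best PSD rank-$k$ Frobenius approximant of $\widehat\XX$ is its symmetrization truncated to the top-$k$ positive eigenvalues --- and you then compensate for the loss of exactness by bounding the shift $\Delta=O(\|\AA-\AA_k\|_F^2)$ with constant probability, so that approximate sketched minimization still transfers to a $(1+O(\epsilon))$ relative-error guarantee. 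This route is more honest about the asymmetric two-sided sketch ($\SS\neq\TT$, where exact optimality of symmetrize-and-truncate is not obvious) and avoids the external citation, at the price of extra probabilistic bookkeeping; your feasibility observation $\XX_0=\QQ^\top\AA_k\QQ\succeq 0$ also lets you use the plain Lemma \ref{lem:structured_projection} instead of its PSD variant. One small point to tighten: the shift terms $\|\SS\AA'\|_F^2$ and $\|\SS\AA'\TT\|_F^2$ are not literally ``sampled copies of a submatrix'' of $\AA-\QQ\HH^*$; you should first bound $\|\SS\AA'\|_F^2\le\|\SS(\AA-\QQ\QQ^\top\AA)\|_F^2$ by choosing $\ZZ=\SS\QQ\QQ^\top\AA$ in the inner regression, and only then apply unbiasedness of $\SS$ (and of $\TT$, conditioned on $\SS$) together with Markov --- a one-line fix that completes your bound on $\Delta$.
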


\begin{proof}
We first note that an extension of Lemma \ref{lem:structured_projection} holds for outputting a PSD matrix as well. As a consequence of the following lemma, obtaining an approximate solution to the optimization problem in Equation \ref{eqn:psd_lra} suffices.

\begin{lemma}(Structured Projections and PSD LRA \cite{clarkson2017low}.)
\label{psd_structured_projection}
Let $\PP \in \mathbb{R}^{n \times n}$ be an $(\epsilon,k)$-\textsf{SF} projection w.r.t $\AA$, then
\begin{equation*}
    \|\AA - \PP \AA_{k+} \PP \|^2_F \leq (1+\epsilon) \|\AA - \AA_{k+}\|^2_F
\end{equation*}
\end{lemma}

\noindent We then use the analysis of Lemma 15 from \cite{clarkson2017low} to conclude that $\XX^*$ is the minimizer for Equation \ref{eqn:sketched_psd_lra}. Finally, we note that the running time and query complexity is dominated by computing $\QQ$ and thus is the same as Theorem \ref{thm:sample_opt_psd_lra}. Computing $\XX^*$ requires no additional queries to $\AA$ and only contributes a lower order term to the running time. 
\end{proof}

\subsection{Negative-Type Distances}
In this subsection, we consider the problem of computing low-rank approximation for distance matrices. Here, the input matrix $\AA$ is formed by the pairwise distances
between a set of points $\PP = \{p_1, \ldots, p_n\}$ in an underlying metric space $d$, i.e., $A_{i,j} = d(p_i, p_j)$. Low-rank approximation for distance matrices was introduced by Bakshi and Woodruff \cite{bakshi2018sublinear} who obtained sublinear time \textit{additive-error} algorithms for arbitrary metrics. Subsequently, Indyk et. al. \cite{indyk2019sample} provided sample-optimal algorithms for additive-error low-rank approximation. For arbitrary distance matrices, it is known that relative-error algorithms require $\Omega(\textrm{nnz}(\AA))$ queries \cite{bakshi2018sublinear}.

Here, we focus on the special case of negative-type (Euclidean Squared) metrics \cite{schoenberg1938metric}. Negative-type metrics have numerous applications in algorithm design since it is possible to optimize over them using a semidefinite program (SDP). One significant algorithmic application of negative-type metrics appears in the Arora-Rao-Vazirani algorithm for the Sparsest Cut problem \cite{arora2009expander}. We refer the reader to extensive subsequent work on embeddability of such metrics and the references therein \cite{arora2008euclidean,arora2007frechet,chawla2005embeddings}. It is well-known that negative-type metrics include $\ell_1$ and $\ell_2$ metrics, spherical metrics and hyper metrics \cite{deza2009geometry,terwilliger1987classification}.  Therefore, our algorithms extend to distance matrices that arise from all such metrics. 

For negative-type metrics, Bakshi and Woodruff obtain a bi-criteria relative-error low-rank approximation algorithm that queries $\widetilde{O}(nk/\epsilon^{2.5})$ entries in $\AA$ and output a rank $k+4$ matrix. In contrast, we obtain a sample-optimal algorithm that does not require a bi-criteria guarantee. As noted above, our algorithm works for any distance matrix where the distance can be realized as a negative-type metric.

\begin{theorem}[Sample-Optimal Negative-Type LRA]
\label{thm:optimal_euclidean_lra}
Let $\AA \in \mathbb{R}^{n \times n}$ be a negative-type distance matrix. Given $\epsilon>0$ and $k\in [n]$, there exists an algorithm that queries $\widetilde{O}(nk/\epsilon)$ entries in $\AA$ and outputs matrices $\MM, \NN^{\top} \in \mathbb{R}^{n \times k}$ such that with probability $99/100$,
\begin{equation*}
    \|\AA - \MM \NN \|^2_F \leq (1+\epsilon)\|\AA - \AA_k \|^2_F
\end{equation*}
Further, the algorithm runs in time $\widetilde{O}(n (k/\epsilon)^{\omega-1})$. 
\end{theorem}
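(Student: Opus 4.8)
The plan is to reduce to the PSD case via the standard realization of a negative-type matrix as a Gram matrix plus two rank-one ``shift'' terms, and then transport a Spectral--Frobenius projection from the Gram matrix to $\AA$ using the machinery of Theorems~\ref{thm:sample_opt_psd_lra} and~\ref{thm:struc_proj_to_lra}. First I would fix a point set $x_1,\dots,x_n$ with $\AA_{i,j}=\|x_i-x_j\|_2^2$, translate so that $x_1=0$ (so $\AA_{1,i}=\|x_i\|_2^2$), and write $\AA=\RR_1+\RR_2-2\BB$, where $\RR_1=u\mathbf{1}^\top$ with $u_i=\|x_i\|_2^2=\AA_{1,i}$, $\RR_2=\RR_1^\top=\mathbf{1}u^\top$, and $\BB$ is the PSD Gram matrix with $\BB_{i,j}=\langle x_i,x_j\rangle=\tfrac12(\AA_{1,i}+\AA_{1,j}-\AA_{i,j})$. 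Every entry of $\BB$, $\RR_1$, $\RR_2$ is recoverable from $O(1)$ entries of $\AA$, and the entire diagonal of $\BB$ is read off from the first row of $\AA$; hence the PSD pipeline of Algorithm~\ref{alg:sample_opt_rel} (in particular its ridge-leverage estimation for $\BB^{1/2}$) can be simulated on $\BB$ with only an $O(1)$-factor overhead in queries to $\AA$.

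Next I would run Algorithm~\ref{alg:sample_opt_rel} on $\BB$ through the Spectral Regression step (Step 5), obtaining in $\widetilde O(nk/\epsilon)$ queries an $n\times k'$ matrix $\QQ$ with orthonormal columns, $k'=\widetilde O(k/\epsilon)$, such that $\QQ\QQ^\top$ is an $(O(1),k/\epsilon)$-\textsf{SF} projection for $\BB$. I then augment: let $\mathbf{\Omega}$ be the orthogonal projection onto $\mathrm{colspan}(\QQ)+\mathrm{span}(\mathbf{1},u)$, of rank at most $k'+2=\widetilde O(k/\epsilon)$, whose orthonormal basis is computed with no further queries. Since $\mathrm{rowspan}(\RR_1)=\mathrm{span}(\mathbf{1}^\top)$ and $\mathrm{rowspan}(\RR_2)=\mathrm{span}(u^\top)$ both lie in $\mathrm{colspan}(\mathbf{\Omega})$, we get $\RR_1(\II-\mathbf{\Omega})=\RR_2(\II-\mathbf{\Omega})=0$, hence $\AA(\II-\mathbf{\Omega})=-2\BB(\II-\mathbf{\Omega})$; and because $\II-\mathbf{\Omega}$ projects onto a subspace of $\mathrm{range}(\II-\QQ\QQ^\top)$, monotonicity of $\|(\II-\PP)\BB^2(\II-\PP)\|_2$ under shrinking $\PP$ gives $\|\AA(\II-\mathbf{\Omega})\|_2\le 2\|\BB(\II-\QQ\QQ^\top)\|_2$.

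The technical heart (this is Lemma~\ref{lem:struct_proj_distance_matrix}) is showing $\mathbf{\Omega}$ is an \textsf{SF} projection for $\AA$. From the \textsf{SF} guarantee for $\QQ\QQ^\top$, $\|\BB(\II-\QQ\QQ^\top)\|_2^2\le O(\epsilon/k)\|\BB-\BB_{k/\epsilon}\|_F^2$. Since $\BB=-\tfrac12(\AA-\RR_1-\RR_2)$ is a rank-$\le 2$ perturbation of $-\tfrac12\AA$, Weyl's inequality gives $\sigma_{i+2}(\BB)\le\tfrac12\sigma_i(\AA)$, whence $\|\BB-\BB_{k/\epsilon}\|_F^2\le\|\BB-\BB_{k+2}\|_F^2\le\tfrac14\|\AA-\AA_k\|_F^2$. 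Combining, $\|\AA-\mathbf{\Omega}\AA\|_2^2=\|\AA(\II-\mathbf{\Omega})\|_2^2\le O(\epsilon/k)\|\AA-\AA_k\|_F^2$, so after rescaling $\epsilon$, $\mathbf{\Omega}$ is an $(\epsilon,k)$-\textsf{SF} projection for $\AA$. (Equivalently, following the overview, one verifies $\|\AA-\mathbf{\Omega}\AA_k\mathbf{\Omega}\|_F^2\le(1+O(\epsilon))\|\AA-\AA_k\|_F^2$ directly: expand the square, kill the cross term with $\AA_k$ using $\langle\AA-\AA_k,\AA_k\rangle=0$, which holds since $\AA$ is symmetric, and bound $\|\AA_k-\mathbf{\Omega}\AA_k\mathbf{\Omega}\|_F^2+2\,\mathrm{Tr}\!\big((\AA-\AA_k)(\II-\mathbf{\Omega})\AA_k\mathbf{\Omega}\big)$ by $O(k\|\AA(\II-\mathbf{\Omega})\|_2^2)$ via Von Neumann's trace inequality, since each of the $\le k$ directions of $\AA_k$ contributes $O(\|\AA(\II-\mathbf{\Omega})\|_2^2)$.) I expect this step to be the main obstacle: the delicate part is making the two rank-one shifts ``disappear'' by enlarging the subspace while keeping the degradation only a constant factor, which forces the combination of (i) annihilating $\RR_1,\RR_2$ by the augmentation and (ii) the $\le 2$-shift eigenvalue interlacing relating $\|\BB-\BB_{k+2}\|_F$ to $\|\AA-\AA_k\|_F$; a secondary subtlety is confirming that the structured-projection-to-low-rank machinery of Theorem~\ref{thm:struc_proj_to_lra} only uses symmetry of $\AA$ (not PSD-ness), via Lemma~\ref{lem:structured_projection} for symmetric matrices.

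Finally I would feed the $n\times(k'+2)$ orthonormal basis of $\mathbf{\Omega}$ into Algorithm~\ref{alg:proj_to_rank}; by Theorem~\ref{thm:struc_proj_to_lra} this outputs $\MM,\NN^\top\in\mathbb{R}^{n\times k}$ with $\|\AA-\MM\NN\|_F^2\le(1+\epsilon)\|\AA-\AA_k\|_F^2$ using $\widetilde O\!\big(nk/\epsilon+k'^2/\epsilon^4\big)=\widetilde O(nk/\epsilon)$ queries (lower-order $\mathrm{poly}(k/\epsilon)$ terms absorbed) and $\widetilde O\!\big(n(k/\epsilon)^{\omega-1}\big)$ time, with everything translated through the $O(1)$-query simulation of $\BB$; the success probability is boosted to $99/100$ by adjusting constants or a constant number of repetitions, completing the proof of Theorem~\ref{thm:optimal_euclidean_lra}.
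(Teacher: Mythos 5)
Your proposal is correct and takes essentially the same route as the paper: the decomposition $\AA = \RR_1+\RR_2-2\BB$ with query access to $\BB$ simulated through the first row of $\AA$, an \textsf{SF} projection for $\BB$ from the PSD pipeline augmented by $\mathrm{span}(\mathbf{1},u)$ so that $\RR_1,\RR_2$ are annihilated (this is exactly Lemma~\ref{lem:struct_proj_distance_matrix}, including the Von Neumann trace bound), followed by a call to Algorithm~\ref{alg:proj_to_rank}. The only cosmetic difference is that you relate $\|\BB-\BB_{k+2}\|_F^2\le\tfrac14\|\AA-\AA_k\|_F^2$ via Weyl's rank-two singular-value interlacing, whereas the paper notes that $(\RR_1+\RR_2-\AA_k)/2$ has rank at most $k+2$ and writes $\|\AA-\AA_k\|_F^2=4\|\BB-(\RR_1+\RR_2-\AA_k)/2\|_F^2\ge 4\|\BB-\BB_{k+2}\|_F^2$; both give the same constant.
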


To demonstrate the connection between negative-type metrics and PSD matrices, we observe that a negative-type distance matrix $\AA$ can be realized as the distances corresponding to a point set $\mathcal{P}=\{x_1, x_2, \ldots x_n\}$ such that $\AA_{i,j}= \|x_i - x_j\|^2_2 = \|x_i \|^2_2 + \| x_j\|^2_2 - 2\langle x_i,x_j\rangle$. Therefore, we can rewrite $\AA$ as $\RR_1 + \RR_2 - 2\BB$, where for all $j\in[n]$, $(\RR_1)_{i,j}=\|x_i \|^2_2$, $\RR_2 = \RR_1^{\top}$ and $\BB$ is PSD. Further, we can obtain query access to $\BB$ by simply assuming w.l.o.g. that $x_1$ is centered at the origin and the $i$-th entry in the first row corresponds to $\|x_i\|^2_2$. Therefore, we can simulate our PSD low-rank approximation algorithms on the matrix $\BB$ by only having query access to $\AA$. 

Our main contribution here is to show that if $\PP=\QQ\QQ^{\top}$ is an $(O(1),k/\epsilon)$-\textsf{SF} projection matrix for $\BB$, then adjoining $\QQ^{\top}$ with the row span of $\RR_1$ and $\RR_2$ results in an \textsf{SF}-projection matrix for $\AA$. Here, the row span of $\RR_1$ is $\mathbf{1}^{\top}/\sqrt{n}$ and $\RR_2$ is $v$ such that for all $i \in [n]$ $v_i = \|x_i\|^2_2/\sum_{i}\|x_i\|^2_2$. We note that once we obtain an \textsf{SF} projection for $\AA$, we can run Algorithm \ref{alg:proj_to_rank} to output a $(1+\epsilon)$ relative-error low-rank approximation. 

\begin{lemma}[Structured Projections for Distance Matrices]
\label{lem:struct_proj_distance_matrix}
Let $\AA$ be a negative-type matrix such that $\AA = \RR_1 + \RR_2 - 2\BB$, as defined above and let $\epsilon>0$. Given an $(O(1), k/\epsilon)$-\textsf{SF} projection $\PP=\QQ \QQ^{\top}$ for $\BB$, let $\Om^{\top}$ be a basis for $\QQ^{\top}$ appended with the basis vectors for rowspan$(\RR_1)$ and rowspan$(\RR_2)$. Then, with probability at least $99/100$, 
\begin{equation*}
    \min_{\textrm{rank}(\XX) \leq k}\|\AA - \Om \XX \Om^{\top} \|^2_2 \leq (1+\epsilon)\|\AA - \AA_k \|^2_F
\end{equation*}
\end{lemma}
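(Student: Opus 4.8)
The plan is to show that the enlarged orthogonal projection $\PP_\Om:=\Om\Om^\top$ (we may assume $\Om$ has orthonormal columns, since $\PP_\Om$ depends only on $\mathrm{colspan}(\Om)$) is an $(O(\epsilon),k)$-\textsf{SF} projection for $\AA$, and then to invoke Lemma~\ref{lem:structured_projection}. Once $\|\AA-\AA\PP_\Om\|_2^2\le O(\epsilon/k)\,\|\AA-\AA_k\|_F^2$ is in hand, Lemma~\ref{lem:structured_projection} yields $\|\AA-\PP_\Om\AA_k\PP_\Om\|_F^2\le (1+O(\epsilon))\|\AA-\AA_k\|_F^2$; rescaling $\epsilon$ by the hidden constant turns this into $1+\epsilon$. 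Since $\AA_k$ has rank at most $k$, the matrix $\XX^\star:=\Om^\top\AA_k\Om$ has rank at most $k$, is feasible for the minimization, and satisfies $\Om\XX^\star\Om^\top=\PP_\Om\AA_k\PP_\Om$; hence, using $\|M\|_2\le\|M\|_F$,
\[
\min_{\mathrm{rank}(\XX)\le k}\|\AA-\Om\XX\Om^\top\|_2^2\le\|\AA-\PP_\Om\AA_k\PP_\Om\|_F^2\le(1+\epsilon)\|\AA-\AA_k\|_F^2,
\]
which is the claim. (One could equivalently expand $\|\AA-\PP_\Om\AA_k\PP_\Om\|_F^2$ directly and bound the cross terms by $O(k\,\|\AA(\II-\PP_\Om)\|_2^2)$ via von Neumann's trace inequality, as sketched in Section~\ref{sec:tech}.)

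It therefore remains to establish the \textsf{SF} bound, and this is where the decomposition $\AA=\RR_1+\RR_2-2\BB$ enters. Writing $\AA\PP_\Om=\RR_1\PP_\Om+\RR_2\PP_\Om-2\BB\PP_\Om$, observe that $\RR_1,\RR_2$ are rank-one with row spans $\mathrm{span}(\mathbf{1}^\top)$ and $\mathrm{span}(v^\top)$ (with $v_i\propto\|x_i\|_2^2$), and by construction $\mathbf{1}$ and $v$ both lie in $\mathrm{colspan}(\Om)$ — they are exactly the two appended directions — so $\RR_1\PP_\Om=\RR_1$ and $\RR_2\PP_\Om=\RR_2$, giving $\AA-\AA\PP_\Om=-2\BB(\II-\PP_\Om)$ and $\|\AA-\AA\PP_\Om\|_2=2\|\BB(\II-\PP_\Om)\|_2$. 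Next, since $\mathrm{colspan}(\QQ)\subseteq\mathrm{colspan}(\Om)$, for any unit vector $x$ the vector $(\II-\PP_\Om)x$ lies in $\mathrm{colspan}(\QQ)^\perp$ with norm at most $1$, so $\|\BB(\II-\PP_\Om)x\|=\|\BB(\II-\QQ\QQ^\top)(\II-\PP_\Om)x\|\le\|\BB-\BB\QQ\QQ^\top\|_2$, hence $\|\BB(\II-\PP_\Om)\|_2\le\|\BB-\BB\QQ\QQ^\top\|_2$. The \textsf{SF} hypothesis on $\QQ$ (valid in this right-projection orientation by symmetry of $\BB$) gives $\|\BB-\BB\QQ\QQ^\top\|_2^2\le O(\epsilon/k)\,\|\BB-\BB_{k/\epsilon}\|_F^2$. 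Finally, $\BB+\tfrac12\AA=\tfrac12(\RR_1+\RR_2)$ has rank at most $2$, so $\tfrac12(\RR_1+\RR_2)-\tfrac12\AA_k$ is a rank-$(k+2)$ matrix at Frobenius distance $\tfrac12\|\AA-\AA_k\|_F$ from $\BB$; thus $\|\BB-\BB_{k+2}\|_F\le\tfrac12\|\AA-\AA_k\|_F$, and since $r\mapsto\|\BB-\BB_r\|_F$ is nonincreasing and $k/\epsilon\ge k+2$ (as $\epsilon<1$; padding $\Om$ with a couple of extra directions handles tiny $k$), $\|\BB-\BB_{k/\epsilon}\|_F\le\tfrac12\|\AA-\AA_k\|_F$. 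Chaining these three bounds, $\|\AA-\AA\PP_\Om\|_2^2=4\|\BB(\II-\PP_\Om)\|_2^2\le O(\epsilon/k)\,\|\AA-\AA_k\|_F^2$, so $\PP_\Om$ is $(O(\epsilon),k)$-\textsf{SF} for $\AA$ and the first paragraph concludes. The $99/100$ success probability is inherited from the randomized routine that supplies $\QQ$ as an \textsf{SF} projection for $\BB$; conditioned on that event everything above is deterministic.

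I expect the main obstacle to be the orientation and index bookkeeping rather than anything deep: matching the right-projection identities $\RR_i\PP_\Om=\RR_i$ with the orientation of the \textsf{SF} inequality actually certified for $\BB$, verifying cleanly that $\mathbf{1}$ and $v$ sit in $\mathrm{colspan}(\Om)$ so that $\RR_1,\RR_2$ are annihilated by $\II-\PP_\Om$ on the right, and reconciling the three rank parameters $k$, $k+2$, and $k/\epsilon$ that appear. An off-by-one or a stray transpose in any of these would break the chain, so they deserve care, but each individual step is elementary.
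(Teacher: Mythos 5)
Your proposal is correct and follows essentially the same route as the paper: the key steps — $(\RR_1+\RR_2)(\II-\PP)=0$ for $\PP=\Om\Om^{\top}$ so that $\|\AA(\II-\PP)\|_2 \le 2\|\BB(\II-\QQ\QQ^{\top})\|_2$, the use of the \textsf{SF} hypothesis on $\BB$, and the rank-$(k+2)$ comparison $\|\BB-\BB_{k+2}\|_F\le\tfrac12\|\AA-\AA_k\|_F$ — are exactly those in the paper's proof. The only cosmetic difference is that you invoke Lemma~\ref{lem:structured_projection} as a black box after certifying that $\Om\Om^{\top}$ is an $(O(\epsilon),k)$-\textsf{SF} projection for $\AA$, whereas the paper re-derives that inequality in place via the decomposition identity and von Neumann's trace inequality, which is the equivalent alternative you yourself note.
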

\begin{proof}
By Lemma 7 in \cite{clarkson2017low}, for any symmetric matrices $\YY,\ZZ$ such that $(\YY-\ZZ)\ZZ=0$ and projection matrix $\PP$, the following holds:
\begin{equation}
\label{eqn:projection_equality}
    \|\YY - \PP \ZZ\PP \|^2_F = \|\YY -\ZZ \|^2_F + \|\ZZ - \PP\ZZ\PP \|^2_F + 2 \trace{(\YY-\ZZ)(\II-\PP)\ZZ\PP}
\end{equation}
Applying Equation \ref{eqn:projection_equality} with $\YY =\AA$ and $\ZZ=\AA_k$, for any projection matrix $\PP$, we have 
\begin{equation}
\label{eqn:decomposition}
    \|\AA - \PP \AA_k\PP \|^2_F = \|\AA -\AA_k \|^2_F + \|\AA_k - \PP\AA_k\PP \|^2_F + 2 \trace{(\AA-\AA_k)(\II-\PP)\AA_k\PP}
\end{equation}
Next, we bound the $\|\AA_k - \PP\AA_k\PP \|^2_F$  as follows: 
\begin{equation}
\label{eqn:middle_term}
    \begin{split}
        \|\AA_k - \PP\AA_k\PP \|^2_F & \leq 2 \|\AA_k (\II - \PP) \|^2_F \\
        & \leq 2k \|\AA_k (\II - \PP) \|^2_2\\
        & \leq 2k \|\AA(\II - \PP)  \|^2_2\\
    \end{split}
\end{equation}
To bound the trace, we use the Von Neuman trace inequality,
\begin{equation}
\label{eqn:last_term}
\begin{split}
    2\trace{(\AA-\AA_k)(\II-\PP)\AA_k\PP} & =  2 \trace{(\AA-\AA_k)(\II-\PP)^2 \AA_k\PP} \\
    & \leq 2 \sum_{i \in [n]} \sigma_i((\AA-\AA_k)(\II-\PP)) \sigma_i((\II -\PP)\AA_k \PP)\\
    & \leq 2k \|(\AA-\AA_k)(\II - \PP) \|_2 \|(\II -\PP)\AA_k \PP \|_2\\
    & \leq 2k \|\AA(\II -\PP) \|^2_2
\end{split}
\end{equation}
It suffices to bound $\|\AA(\II-\PP)\|^2_2$ for $\PP = \Om \Om^{\top}$. Since $\AA = \RR_1 + \RR_2 -2\BB$ and $(\RR_1 + \RR_2)(\II - \PP) = 0$, we have 
\begin{equation*}
\begin{split}
    \|\AA(\II-\PP)\|^2_2 &\leq 2\|\BB (\II - \PP)\|^2_2 \\
    & \leq 2\|\BB (\II - \QQ\QQ^{\top})\|^2_2\\
    & \leq O\left(\frac{\epsilon}{k}\right)\|\BB - \BB_{k+2} \|^2_F \\
\end{split}
\end{equation*}
To relate $\|\BB - \BB_{k}\|^2_F$ back to $\AA$, observe 
\begin{equation*}
    \begin{split}
        \|\AA - \AA_k \|^2_F = \|\RR_1 + \RR_2 - 2\BB - \AA_k \|^2_F &= 4\| \BB - (\RR_1+\RR_2 - \AA_k)/2\|^2_F \\
        & \geq 4\|\BB - \BB_{k+2} \|^2_F
    \end{split}
\end{equation*}
Therefore, $\|\AA(\II- \PP)\|^2_2 \leq O(\epsilon/k)\|\AA - \AA_k\|^2_F$. We can thus bound Equations \ref{eqn:last_term} and \ref{eqn:middle_term} with $O(\epsilon)\|\AA - \AA_k \|^2_F$. Substituting this into Equation \ref{eqn:decomposition}, we conclude that $\|\AA - \PP \AA_k \PP \|^2_F \leq (1+O(\epsilon))\| \AA-\AA_k\|^2_F$, for $\PP = \Om \Om^{\top}$ and the claim follows. 
\end{proof}

Recall, we can compute an \textsf{SF} projection for the PSD matrix $\BB$ efficiently using Algorithm \ref{alg:sample_opt_rel} and then solve the optimization problem in Lemma \ref{lem:struct_proj_distance_matrix} using Algorithm \ref{alg:proj_to_rank}.  We can therefore reduce low-rank approximation of negative-type matrices to PSD low-rank approximation with only $O(n)$ additional queries and Theorem \ref{thm:optimal_euclidean_lra} follows. 

\subsection{Ridge Regression}
We consider the following regression problem: given a PSD matrix $\AA$, a vector $y$ and a ridge parameter $\lambda$, 
$$\min_{x}\|\AA x - y \|^2_2 + \lambda \| x\|^2_2.$$
As a corollary of Theorem \ref{thm:sample_opt_psd_lra}, we obtain a faster algorithm for the aforementioned problem. We begin with the following simple lemma from \cite{mw17}:

\begin{lemma}[Lemma 26 in \cite{mw17}]
Given a PSD matrix $\AA$, vector $y$, and $\lambda>0$, let $\BB$ be a matrix such that $\|\AA - \BB \|^2_2 \leq \epsilon^2 \lambda$. Then, for any vector $\tilde{x}$ such that 
\[
\|\BB \tilde{x} - y \|^2_2 + \lambda\|\tilde{x} \|^2_2 \leq (1+\epsilon')\left(\min_{x}\|\BB x - y \|^2_2 + \lambda\| x \|^2_2  \right)
\]
we have 
\[
\|\AA \tilde{x} - y \|^2_2 +  \lambda\|\tilde{x} \|^2_2 \leq (1+\epsilon')(1+5\epsilon)\left(\min_{x}\|\AA x - y \|^2_2 + \lambda\| x \|^2_2 \right) 
\]
\end{lemma}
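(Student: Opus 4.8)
The plan is to recast both ridge objectives as ordinary least-squares problems over a vertically stacked matrix, and then show that the two stacked matrices are spectrally close \emph{relative to their smallest singular values} — which are both at least $\sqrt{\lambda}$ — so that approximate minimizers transfer between them losing only a $(1+O(\epsilon))$ factor. Concretely, set $\AA_\lambda = \begin{bmatrix} \AA \\ \sqrt{\lambda}\,\II\end{bmatrix}$, $\BB_\lambda = \begin{bmatrix}\BB \\ \sqrt{\lambda}\,\II\end{bmatrix}$, $\bar y = \begin{bmatrix} y \\ 0\end{bmatrix}$, and write $f_\AA(x) = \|\AA_\lambda x - \bar y\|_2^2 = \|\AA x - y\|_2^2 + \lambda\|x\|_2^2$, and likewise $f_\BB(x) = \|\BB_\lambda x - \bar y\|_2^2$. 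First I would record two elementary facts: (i) $\|\AA_\lambda - \BB_\lambda\|_2 = \|\AA - \BB\|_2 \le \epsilon\sqrt{\lambda}$, since stacking appends an identical block; and (ii) for every $x$, $f_\BB(x) \ge \lambda\|x\|_2^2$ and $f_\AA(x) \ge \lambda\|x\|_2^2$, because the bottom block of the residual is exactly $\sqrt{\lambda}\,x$.

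Next I would combine these with the triangle inequality to get a pointwise multiplicative comparison of $f_\AA$ and $f_\BB$. For any $x$,
\[
\sqrt{f_\AA(x)} \le \sqrt{f_\BB(x)} + \|(\AA_\lambda - \BB_\lambda)x\|_2 \le \sqrt{f_\BB(x)} + \epsilon\sqrt{\lambda}\,\|x\|_2 \le (1+\epsilon)\sqrt{f_\BB(x)},
\]
where the last inequality uses $\sqrt{\lambda}\|x\|_2 \le \sqrt{f_\BB(x)}$ from fact (ii). Swapping the roles of $\AA$ and $\BB$ gives the reverse bound $\sqrt{f_\BB(x)} \le (1+\epsilon)\sqrt{f_\AA(x)}$ as well. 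Squaring, $f_\AA(x) \le (1+\epsilon)^2 f_\BB(x)$ and $f_\BB(x) \le (1+\epsilon)^2 f_\AA(x)$ for all $x$ simultaneously.

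The conclusion would then follow from a short chain. Let $x^\star$ be an exact minimizer of $f_\AA$. Then
\[
f_\AA(\tilde x) \le (1+\epsilon)^2 f_\BB(\tilde x) \le (1+\epsilon)^2(1+\epsilon')\min_x f_\BB(x) \le (1+\epsilon)^2(1+\epsilon')\,f_\BB(x^\star) \le (1+\epsilon)^4(1+\epsilon')\,f_\AA(x^\star),
\]
using the hypothesis on $\tilde x$ in the middle step and feasibility of $x^\star$ for the $\BB$-problem. Since $(1+\epsilon)^4 \le 1 + 5\epsilon$ in the relevant range of $\epsilon$, this is exactly $\|\AA\tilde x - y\|_2^2 + \lambda\|\tilde x\|_2^2 \le (1+\epsilon')(1+5\epsilon)\big(\min_x \|\AA x - y\|_2^2 + \lambda\|x\|_2^2\big)$. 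If one prefers to avoid any smallness assumption, the AM--GM refinement $2\epsilon\sqrt{\lambda}\|x\|_2\sqrt{\|\BB x - y\|_2^2} \le \epsilon(\lambda\|x\|_2^2 + \|\BB x - y\|_2^2)$ sharpens the pointwise comparison to $f_\AA(x) \le (1+\epsilon+\epsilon^2) f_\BB(x)$ and tightens the final constant.

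The only delicate point — and really the conceptual content of the lemma — is recognizing that the Tikhonov term makes the stacked operators well-conditioned from below at scale $\sqrt{\lambda}$, which is precisely the scale at which the perturbation $\AA - \BB$ is controlled by hypothesis; this is what lets the error term $\epsilon\sqrt{\lambda}\|x\|_2$ be reabsorbed into $\sqrt{f_\BB(x)}$ rather than growing unboundedly with $\|x\|$. Everything else is triangle inequality and bookkeeping of the three multiplicative factors.
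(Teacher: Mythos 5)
Your proof is correct, and since the paper states this lemma without proof (importing it as Lemma 26 of \cite{mw17}), the comparison is to the standard argument there: your stacked-operator formulation is a clean repackaging of exactly that argument --- triangle inequality on the residuals, with the perturbation term $\epsilon\sqrt{\lambda}\|x\|_2$ absorbed into the ridge part of the objective via $\lambda\|x\|_2^2 \leq f_{\BB}(x)$ (resp.\ $f_{\AA}(x)$), applied in both directions and chained through the near-minimizer $\tilde{x}$ and the exact minimizer $x^\star$. The only caveat is bookkeeping of constants: $(1+\epsilon)^4 \leq 1+5\epsilon$, or $(1+\epsilon+\epsilon^2)^2 \leq 1+5\epsilon$ with your AM--GM refinement, needs $\epsilon$ bounded by a modest constant, which is the regime in which the lemma is invoked.
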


Therefore, it suffices to find a rank-$k$ matrix $\BB$ such that $\|\AA - \BB \|^2_2 \leq \epsilon^2 \lambda$. Let $\widetilde{s}_{\lambda}$ be an upper bound on the statistical dimension $s_{\lambda} = \trace{(\AA^2+ \lambda \II)^{-1} \AA^2}$ 
Setting $k = \widetilde{s}_{\lambda}/\epsilon^2$, we can bound $\|\AA - \AA_k \|^2_F$ as follows: 
\begin{equation*}
    \begin{split}
        \frac{\epsilon^2}{\widetilde{s}_{\lambda}}\|\AA - \AA_k \|^2_F & \leq \epsilon^2 \frac{\sum^n_{i=k+1} \lambda_i^2(\AA)}{\sum^n_{i=1} \lambda_i^2(\AA)/(\lambda^2_i(\AA)+\lambda)}\\
        &\leq \epsilon^2 \frac{\sum^n_{i=k+1} \lambda_i^2(\AA)}{\sum^n_{i=k+1} \lambda_i^2(\AA)/(\lambda^2_i(\AA)+\lambda)} \\
        &\leq c \epsilon^2 \lambda
    \end{split}
\end{equation*}

\noindent We can then solve the regression problem $\min_{x}\|\BB x - y \|^2_2+ \lambda \|x\|^2_2$ exactly in time $O(n k^{\omega-1})$ and obtain the following result:

\begin{theorem}[Ridge Regression]
\label{thm:ridge_regression}
Given a PSD matrix $\AA$, a regularization parameter $\lambda$ and an upper bound $\widetilde{s}_{\lambda}$ on the statistical dimension $s_{\lambda} = \trace{ (\AA^2 +\lambda \II)^{-1} \AA^2}$, there exists an algorithm that queries $\widetilde{O}(n \widetilde{s}_{\lambda}/\epsilon^2)$ entries of $\AA$ and with probability $99/100$ outputs $\widehat{x}$ such that for all $y \in \mathbb{R}^d$,
\begin{equation*}
    \|\AA \widehat{x} - y \|^2_2 + \lambda \|\widehat{x}\|^2_2 \leq (1+\epsilon)\left(\min_{x} \|\AA x - y \|^2_2 + \lambda \|x\|^2_2\right)
\end{equation*}
Further, the algorithm runs in $\widetilde{O}(n (\tilde{s}_{\lambda}/\epsilon^2)^{\omega-1})$ time. 
\end{theorem}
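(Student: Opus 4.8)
The plan is to apply the preceding lemma of \cite{mw17}: it reduces the problem to producing a low-rank matrix $\BB$ that spectrally approximates $\AA$, namely $\|\AA-\BB\|_2^2\le\epsilon^2\lambda$, and then solving ridge regression on $\BB$ exactly. Set $k:=\widetilde{s}_{\lambda}/\epsilon^2$; the statistical-dimension estimate established just above the theorem gives $\tfrac1k\|\AA-\AA_k\|_F^2=\tfrac{\epsilon^2}{\widetilde{s}_{\lambda}}\|\AA-\AA_k\|_F^2\le c\epsilon^2\lambda$, so after rescaling $\epsilon$ by $\sqrt c$ it suffices to compute, using $\widetilde O(nk)$ queries and $\widetilde O(nk^{\omega-1})$ time, a matrix $\BB$ of rank $\widetilde O(k)$ in \emph{factored form} $\BB=\QQ\widehat{\YY}$ with $\QQ$ having orthonormal columns and $\|\AA-\BB\|_2^2\le\tfrac1k\|\AA-\AA_k\|_F^2\le\epsilon^2\lambda$. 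Given such a factorization, for any $y$ the problem $\min_x\|\BB x-y\|_2^2+\lambda\|x\|_2^2$ is solved exactly by Woodbury: since $\BB^\top\BB=\widehat{\YY}^\top\widehat{\YY}$, we have $\widehat x=\tfrac1\lambda\big(\BB^\top y-\widehat{\YY}^\top(\lambda\II+\widehat{\YY}\widehat{\YY}^\top)^{-1}\widehat{\YY}\BB^\top y\big)$, which only requires inverting the $\widetilde O(k)\times\widetilde O(k)$ matrix $\lambda\II+\widehat{\YY}\widehat{\YY}^\top$ together with a few $\widetilde O(k)$-width matrix products and matrix--vector products, costing $\widetilde O(nk^{\omega-1})$ time and no further queries. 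The lemma of \cite{mw17} then delivers the $(1+\epsilon)$ guarantee for $\AA$, after one more constant rescaling of $\epsilon$; in particular the factored $\BB$ is a coreset that handles all right-hand sides $y$ simultaneously.

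To build $\QQ$, I would run Steps 1--5 of Algorithm \ref{alg:sample_opt_rel} with rank parameter $k$ and a \emph{constant} accuracy parameter $\epsilon_0=\Theta(1)$. Exactly as in the proof of Theorem \ref{thm:sample_opt_psd_lra}, this reads $\widetilde O(nk)$ entries of $\AA$, runs in $\widetilde O(nk^{\omega-1}+\poly(k))$ time, and outputs an orthonormal $\QQ\in\mathbb{R}^{n\times k'}$ with $k'=\widetilde O(k)$ such that $\PP=\QQ\QQ^\top$ is an $(\widetilde O(1),k)$-\textsf{SF} projection for $\AA$, i.e.
\[
 \|\AA-\AA\QQ\QQ^\top\|_2^2\;\le\;\widetilde O(1)\cdot\tfrac1k\|\AA-\AA_k\|_F^2\;=\;\widetilde O(\epsilon^2\lambda).
\]
Moreover, applying Lemma \ref{lem:structured_projection} with the feasible rank-$k$ matrix $\AA_k$ shows that the column span of $\QQ$ also contains a good \emph{Frobenius} approximation: using symmetry of $\AA$ for the first identity,
$\|(\II-\QQ\QQ^\top)\AA\|_F^2=\min_{\YY}\|\AA-\QQ\YY\|_F^2\le\|\AA-\QQ\QQ^\top\AA_k\QQ\QQ^\top\|_F^2\le O(1)\|\AA-\AA_k\|_F^2\le O(k\epsilon^2\lambda)$.

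The remaining step --- ``applying'' $\QQ$ to $\AA$ while keeping the spectral error small --- is handled by invoking the Spectral Regression machinery a second time. The point is that $\min_{\YY}\|\QQ\YY-\AA\|_2=\min_{\WW}\|\AA-\WW\QQ^\top\|_2$ (by transposition and $\AA^\top=\AA$) is precisely an instance of Equation \ref{eqn:spec_regression} with the roles of $\CC$ and $\ZZ$ played by $\AA$ and the orthonormal-column matrix $\QQ$. Running Algorithm \ref{alg:approx_spectral_regression} on it --- which samples $\widetilde O(k')$ rows of $\AA$ proportional to the exactly computable leverage scores $\|\QQ_{i,*}\|_2^2$, reads those rows in full ($\widetilde O(nk)$ queries), and solves the tiny sketched least-spectral-norm problem in $\widetilde O(nk^{\omega-1}+k^\omega)$ time --- produces $\widehat{\YY}$ for which Theorem \ref{thm:spectral_regression} gives
\[
 \|\AA-\QQ\widehat{\YY}\|_2^2\;\le\;\widetilde O(1)\Big(\min_{\YY}\|\AA-\QQ\YY\|_2^2+\tfrac1k\,\|(\II-\QQ\QQ^\top)\AA\|_F^2\Big)\;\le\;\widetilde O(1)\big(\widetilde O(\epsilon^2\lambda)+O(\epsilon^2\lambda)\big),
\]
since the first term equals $\|(\II-\QQ\QQ^\top)\AA\|_2^2=\|\AA-\AA\QQ\QQ^\top\|_2^2=\widetilde O(\epsilon^2\lambda)$ by the \textsf{SF} property and symmetry, and the second is $\tfrac1k\cdot O(k\epsilon^2\lambda)=O(\epsilon^2\lambda)$ by the previous paragraph. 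Rescaling $\epsilon$ down by the hidden polylogarithmic factor makes this $\le\epsilon^2\lambda$, so $\BB:=\QQ\widehat{\YY}$, of rank $\le k'=\widetilde O(\widetilde{s}_{\lambda}/\epsilon^2)$ and output in the factored form $(\QQ,\widehat{\YY})$, is the desired coreset; the query and time costs add up to $\widetilde O(n\widetilde{s}_{\lambda}/\epsilon^2)$ and $\widetilde O(n(\widetilde{s}_{\lambda}/\epsilon^2)^{\omega-1})$ respectively.

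The main obstacle is exactly this last transfer: obtaining an \emph{explicit} low-rank factorization that is good in \emph{spectral} norm using only $\widetilde O(nk)$ queries. A Frobenius-accurate sketch of $\QQ^\top\AA$ would be cheap to form but useless here, since passing from a Frobenius error back to a spectral error loses a factor of $\sqrt{\mathrm{rank}}=\sqrt k$, which is fatal at this accuracy. The resolution hinges on the observation that the additive slack $\tfrac1k\|(\II-\QQ\QQ^\top)\AA\|_F^2$ incurred by the weak Spectral AMM of Lemma \ref{lem:weak_amm} is already of order $\epsilon^2\lambda$ --- which in turn relies on the interplay between the \textsf{SF} and Frobenius guarantees of $\QQ$ (Lemma \ref{lem:structured_projection}) and the bound $\|\AA-\AA_k\|_F^2\le ck\epsilon^2\lambda$ for the specific choice $k=\widetilde{s}_{\lambda}/\epsilon^2$. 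A secondary point is to verify that the factored coreset genuinely serves all $y$ with only $\widetilde O(nk^{\omega-1})$ post-processing time and no re-access to $\AA$, which is immediate from Woodbury as above.
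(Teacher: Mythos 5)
Your proposal is correct and takes essentially the paper's route: reduce via Lemma 26 of \cite{mw17} to producing a rank-$\widetilde{O}(\widetilde{s}_{\lambda}/\epsilon^2)$ matrix $\BB$ with $\|\AA-\BB\|_2^2\le\epsilon^2\lambda$, using the bound $\tfrac{1}{k}\|\AA-\AA_k\|_F^2\le c\epsilon^2\lambda$ for $k=\widetilde{s}_{\lambda}/\epsilon^2$, and then solve the small ridge problem exactly. Where you add value is that the paper's own write-up stops at this reduction and never spells out how $\BB$ is obtained in factored form with only $\widetilde{O}(nk)$ queries; you supply a concrete construction from the paper's toolkit: Steps 1--5 of Algorithm \ref{alg:sample_opt_rel} at constant accuracy to get the orthonormal $\QQ$ with the \textsf{SF} guarantee, followed by a second run of Algorithm \ref{alg:approx_spectral_regression} with $\AA$ in the role of $\CC$ and $\QQ$ in the role of $\ZZ$, with the additive slack controlled by $\tfrac{1}{k}\|(\II-\QQ\QQ^{\top})\AA\|_F^2\le O(\epsilon^2\lambda)$ via Lemma \ref{lem:structured_projection}; the Woodbury solve, the ``all $y$ simultaneously'' coreset claim, and the query/time accounting are all right. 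One caveat you should state explicitly: Theorem \ref{thm:spectral_regression} is phrased only for $\CC$ an $n\times\sqrt{nk/\epsilon}$ column PCP with $\ZZ\ZZ^{\top}$ an \textsf{SF} projection for it, so invoking it with the $n\times n$ matrix $\AA$ is formally outside its hypotheses; the proof (in particular Lemma \ref{lem:weak_amm} and the bucketing argument) never uses the PCP property or the specific column count, so the extension is sound, but it needs to be asserted and checked rather than cited verbatim. In fact, your direct bound on $\|\CC^*\|_F^2$ through Lemma \ref{lem:structured_projection} is cleaner than the paper's own passage from $\|\CC^*\|_F^2$ to $\|\CC-\CC_{k/\epsilon}\|_F^2$ in that proof, so this is a welcome tightening rather than a gap.
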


\begin{remark}
Observe that we can derive a data structure from our algorithm that preserves the objective cost (up to $1+\epsilon$) for all $x$ and $y$ simultaneously and thus we obtain a coreset for Ridge Regression. 
\end{remark}

To complement the above algorithmic result, we present a new lower bound for coreset constructions for ridge regression, which matches our upper bound in all parameters. At a high level, our hard instance for constant $s_{\lambda}$ consists of $1/\epsilon^2$ blocks of all $1$s, each of size $\epsilon \sqrt{n} \times \epsilon \sqrt{n}$, placed randomly across the matrix. Since any coreset construction must preserve the cost of all $x, y$, we pick pairs $(x,y)$ to be the eigenvectors of $\AA$ (scaled appropriately) and show that in order to preserve the cost of all pairs, the coreset algorithm must find all the blocks, which requires $\Omega(n/\epsilon^2)$ queries to $\AA$. Repeating the above construction $s_{\lambda}$-times suffices to obtain a linear lower bound in terms of $s_{\lambda}$. Formally, 

\begin{theorem}[Coreset Lower Bound for Ridge Regression]
\label{thm:coreset_lb_for_ridge}
Given a PSD matrix $\AA$ and $\epsilon, \lambda>0$ let $s_{\lambda} = \trace{ (\AA^2 +\lambda \II)^{-1} \AA^2}$ denote the statistical dimension of $\AA$. Then, any coreset construction $\mathcal{C}$ that with constant probability, preserves the ridge regression cost up to $(1+\epsilon)$ simultaneously for all $x, y$, must read $\Omega(n s_{\lambda}/\epsilon^2)$ entries in $\AA$.
\end{theorem}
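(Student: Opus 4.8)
The plan is to reduce to the $\Omega(nk/\delta)$ lower bound of Musco and Woodruff \cite{mw17} for $(1+\delta)$-relative-error rank-$k$ low-rank approximation of PSD matrices, instantiated with $k = \Theta(s_\lambda)$ and $\delta = \Theta(\epsilon^2)$, so that $\Omega(nk/\delta)$ already reads $\Omega(n s_\lambda/\epsilon^2)$. Concretely I would reuse (a variant of) their hard instance: $\AA = \II_n + \MM$, where $\MM$ is supported on $T$ pairwise-disjoint uniformly random coordinate sets $S_1,\dots,S_T\subseteq[n]$ and equals the all-ones matrix with its diagonal deleted on each $S_\ell\times S_\ell$. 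Deleting the diagonal keeps $\AA_{ii}\equiv 1$, so the diagonal of $\AA$ reveals nothing about the planted blocks --- otherwise an algorithm could locate all of $\cup_\ell S_\ell$ with just $n$ queries. The sizes $|S_\ell|$, the number of blocks $T$, and the regularizer $\lambda$ are tuned so that $\trace{(\AA^2+\lambda\II)^{-1}\AA^2} = \Theta(s_\lambda)$ and so that the $\Theta(s_\lambda)$ ``signal'' eigendirections of $\AA$ (the vectors $\mathbf 1_{S_\ell}/\sqrt{|S_\ell|}$) carry eigenvalue much larger than $\sqrt\lambda$, while the residual tail $\|\AA - \AA_{s_\lambda}\|_F^2$ stays a constant fraction of $\|\AA\|_F^2$.

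The heart of the reduction is a quadratic-amplification claim: any coreset --- a factored low-rank matrix $\BB$ with $\min_x\|\BB x-y\|^2+\lambda\|x\|^2 = (1\pm\epsilon)(\min_x\|\AA x-y\|^2+\lambda\|x\|^2)$ for all $y$ --- can be turned, \emph{without any further queries to $\AA$}, into a $(1+O(\epsilon^2))$-relative-error rank-$\Theta(s_\lambda)$ approximation of $\AA$. The idea is that the optimal ridge value equals $\lambda\, y^\top(\BB\BB^\top+\lambda\II)^{-1}y$, so the coreset $(1+\epsilon)$-preserves the quadratic form $(\AA^2+\lambda\II)^{-1}$; evaluated along the signal directions this pins down each corresponding squared eigenvalue to within a $(1\pm\epsilon)$ factor, hence the eigenvalue itself to additive error $O(\epsilon\,\sigma_\ell)$, and squaring and summing these over the $\Theta(s_\lambda)$ signal directions (which all have comparable magnitude in the planted-block instance, so the errors do not telescope across scales) yields Frobenius error $O(\epsilon^2)\|\AA\|_F^2 = O(\epsilon^2)\|\AA - \AA_{s_\lambda}\|_F^2$. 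The one subtlety in \emph{extracting} this approximation is that the decoder does not know the $S_\ell$'s; following the technical overview, I would have it probe the coreset on the pair $(\mathbf 1_U/\sqrt{|U|},\ \AA\,\mathbf 1_U/\sqrt{|U|})$ --- the right-hand side being whatever the coreset itself reports --- for \emph{every} coordinate set $U$ of size $|S_\ell|$, retain those $U$ of small reported ridge cost, and use the monotonicity of that cost in $\langle\mathbf 1_U,\mathbf 1_{S_\ell}\rangle^2$ together with the disjointness of the $S_\ell$ to cluster the retained sets back into $S_1,\dots,S_T$ up to $o(|S_\ell|)$ error, which is enough to reinstate the blocks and build $\BB$.

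The step I expect to be the main obstacle is getting this amplification and the parameter calibration simultaneously tight: one needs a single family of instances on which (i) the statistical dimension is $\Theta(s_\lambda)$, (ii) the regularizer is small enough relative to the signal eigenvalues that the ``squared-eigenvalue to eigenvalue'' conversion loses only a constant (not a $1/\epsilon$) factor, (iii) the residual tail is $\Theta(\|\AA\|_F^2)$ so that an $O(\epsilon^2)$-absolute Frobenius error is an $O(\epsilon^2)$-\emph{relative} error, and (iv) the planted structure is still information-theoretically hard to find --- each block occupies only an $O(1/(nT))$ fraction of $[n]^2$, so $q$ adaptive queries locate all $T$ disjoint blocks only if $q = \widetilde\Omega(nT)$, and one must check $nT = \widetilde\Theta(n s_\lambda/\epsilon^2)$ under the chosen parameters. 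Getting the final $1/\epsilon^2$ rather than $1/\epsilon$ is precisely where the coreset's ``correct for all $y$'' guarantee is essential: a data structure tuned to a single $y$ (ordinary ridge regression) is only forced to resolve one bit per block and yields $\Omega(n s_\lambda/\epsilon)$, whereas a coreset must be correct on the whole family of planted test pairs, which is the quadratic-versus-linear gap the reduction exploits. The remaining ingredients --- PSD-ness and eigenvalue bookkeeping for $\AA = \II_n + \MM$, the birthday-paradox / no-hitting argument for adaptive queries against randomly planted disjoint blocks, the monotonicity facts used in decoding, and Musco--Woodruff's lower bound itself --- are routine or already available.
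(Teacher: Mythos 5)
There is a genuine gap, and it sits exactly where you flagged ``the main obstacle'': your amplification step and your parameter calibration cannot both hold. Your reduction extracts information from the coreset only through the \emph{optimal} ridge value $\min_x\|\BB x-y\|_2^2+\lambda\|x\|_2^2=\lambda\, y^{\top}(\BB\BB^{\top}+\lambda\II)^{-1}y$, so the relative effect of a planted direction with eigenvalue $\sigma$ on any such value is at most of order $\sigma^2/(\sigma^2+\lambda)$ (the quantity is degree-$2$ homogeneous in $y$, so rescaling $y$ buys nothing). But the calibration you need elsewhere forces this to be $\Theta(\epsilon^2)$: the identity-like bulk pins $\lambda=\Theta(n/s_{\lambda})$ (else $\trace{(\AA^2+\lambda\II)^{-1}\AA^2}\gg s_{\lambda}$ or the tail condition (iii) fails), and the query-hardness calculation for $\Omega(ns_{\lambda}/\epsilon^2)$ forces the planted mass to be an $\epsilon^2$-fraction of $\|\AA\|_F^2$, i.e.\ $\sigma^2=\Theta(\epsilon^2 n/s_{\lambda})=\Theta(\epsilon^2\lambda)$. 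Hence a $(1+\epsilon)$-accurate oracle for the optimal value need not distinguish a planted block from an identity block at all, your ``pins down each squared eigenvalue to $(1\pm\epsilon)$'' claim fails (the additive error is $O(\epsilon(\sigma^2+\lambda))=O(\sigma^2/\epsilon)$), and the support-recovery probes $\mathbf 1_U/\sqrt{|U|}$ are equally blind, so no low-rank approximation can be decoded and no contradiction with Musco--Woodruff is obtained. If instead you raise $\sigma^2\gtrsim\lambda$ so that eigenvalues are resolvable, then either the statistical dimension, the tail condition, or the hardness collapses (the blocks become large and findable with $o(ns_{\lambda}/\epsilon^2)$, indeed at most $O(n^2/\sigma^2)\ll ns_{\lambda}$, queries), capping your bound at $\Omega(ns_{\lambda})$. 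Your fallback probe $(x,y)=(\mathbf 1_U/\sqrt{|U|},\BB\,\mathbf 1_U/\sqrt{|U|})$ only yields a spectral bound $\|(\AA-\BB)x\|_2^2=O(\epsilon\lambda)$, which is again too weak since $\sigma^2=\epsilon^2\lambda\ll\epsilon\lambda$.

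The paper's proof avoids this trap by using the coreset guarantee at \emph{chosen suboptimal} pairs $(x,y)$ with $y=\sqrt{n/s_{\lambda}}\,x$ and $x$ an indicator of a candidate support: there the planted eigenvalue enters the cost \emph{linearly} through the cross term $-2y^{\top}\AA x\approx-2\epsilon n/s_{\lambda}$, which is a $\Theta(\epsilon)$-fraction of the total cost $\Theta(n/s_{\lambda})$ and hence visible to a $(1+\epsilon)$-coreset, and it instantiates the Musco--Woodruff instance with rank $k_0=s_{\lambda}/\epsilon^2$ and \emph{constant} LRA accuracy, so only a constant-factor rank-$k_0$ approximation must be decoded; the $1/\epsilon^2$ comes from the rank, not from an $\epsilon^2$-accurate LRA. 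If you want to salvage your route, you would need a probing scheme whose response is linear (not quadratic/regularizer-suppressed) in the planted eigenvalue, which in effect means using the ``all $(x,y)$'' guarantee as the paper does rather than only optimal ridge values.
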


We recall the lower bound instance for low rank approximation of PSD matrices shown by Musco and Woodruff : 

\begin{theorem}[Lower Bound for PSD LRA (\cite{mw17})]
Given an $n \times n $ PSD matrix $\AA$, $\epsilon_0 >0$ and $k_0 \in [n]$, any randomized algorithm that outputs a rank $k_0$ matrix $\BB$ such that with probability at least $9/10$, 
\[
\|\AA - \BB \|^2_F \leq (1+\epsilon_0) \|\AA - \AA_{k_0} \|^2_F
\]
must query $\Omega(nk_0/\epsilon_0)$ entries in $\AA$. 
\end{theorem}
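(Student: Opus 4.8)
The plan is to prove this via Yao's minimax principle: exhibit a distribution $\mathcal{D}$ over $n\times n$ PSD matrices — in fact over correlation matrices, so the same instance serves the robust lower bound referenced later — and argue that every \emph{deterministic} algorithm reading at most $t$ entries succeeds with probability less than $9/10$ over $\mathcal{D}$ unless $t=\Omega(nk_0/\epsilon_0)$. The guiding design principle is that the one thing always cheap on a PSD matrix is reading the $n$ diagonal entries, so the hard instances must have a completely uninformative diagonal. I would therefore take instances of the form ``scaled identity plus a hidden, essentially rank-$k_0$ bump'', with the bump confined to a small random set of off-diagonal locations and all diagonal entries forced to $1$: for concreteness, $\AA$ block-diagonal with $k_0$ planted rank-one all-ones blocks $\mathbf{1}_{S_b}\mathbf{1}_{S_b}^{\top}$ on random pairwise-disjoint coordinate sets $S_b$ of size $w=\Theta(\sqrt{\epsilon_0 n})$ and the identity elsewhere. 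I would first do the routine bookkeeping: the instance is PSD, it is a correlation matrix, $\AA_{k_0}=\sum_b \mathbf{1}_{S_b}\mathbf{1}_{S_b}^{\top}$, and $\|\AA-\AA_{k_0}\|_F^2 = n-k_0 w=\Theta(n)$.

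The second ingredient is a structural lemma: any rank-$k_0$ $\BB$ with $\|\AA-\BB\|_F^2\le(1+\epsilon_0)\|\AA-\AA_{k_0}\|_F^2$ must ``account for'' each of the $k_0$ blocks, i.e.\ its column space must correlate strongly with every $\mathbf{1}_{S_b}$; failing to capture even one block forces $\BB$ to pay $\gtrsim w^2=\Theta(\epsilon_0 n)$, which, after fixing the constant in $w$, exceeds the budget $\epsilon_0\|\AA-\AA_{k_0}\|_F^2\le\epsilon_0 n$. Here one has to be careful because $\BB$ may re-allocate its rank budget, so the comparison should be against the best rank-$k_0$ fit whose column space contains $\mathbf{1}_{S_b}$, bounding the loss from rotating away via the gap between the planted eigenvalue and the bulk. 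Combined with Yao, the theorem reduces to showing that an algorithm reading $t$ entries can describe all $k_0$ planted blocks only if $t=\Omega(nk_0/\epsilon_0)$; the relevant facts are that the only queries carrying information about block $b$ are the $\Theta(w^2)$ off-diagonal entries inside $S_b\times S_b$ (every diagonal entry is $1$ regardless of $\mathcal{D}$), and that disjointness and block-diagonality mean a query informative for one block is useless for another, so the $k_0$ recovery tasks are independent with no amortization.

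The step I expect to be the main obstacle is extracting exactly the product $k_0\cdot(1/\epsilon_0)$, and in particular a \emph{polynomial} rather than merely logarithmic dependence on $1/\epsilon_0$. A naive hitting argument — ``to learn block $b$ you must query one of its $\Theta(\epsilon_0 n)$ informative entries among $\Theta(n^2)$, hence $\Omega(n/\epsilon_0)$ queries'' — only gives $\Omega(n/\epsilon_0)$ overall, since one query hits whichever block it lands in; recovering the extra factor $k_0$ needs a genuinely simultaneous argument (a hybrid argument swapping one block at a time, or an anti-concentration bound on the posterior location of an un-hit block), and one must also confirm the two halves of the reduction are compatible at this scale: the planted magnitude is capped by $1$ (unit diagonal plus PSD-ness), so it must be just large enough that missing a block breaks the $(1+\epsilon_0)$ guarantee yet the bump must still be information-theoretically $\Omega(nk_0/\epsilon_0)$-hard to recover. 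Reconciling these — and pushing past the $\log(1/\epsilon_0)$ one would get from a crude metric-entropy/packing bound to the required $1/\epsilon_0$ — is the delicate part, and is where I would expect to spend most of the effort, most plausibly via an information-theoretic (Fano / mutual-information) argument showing each query reveals only $O(\epsilon_0)$ ``units'' of the hidden low-rank structure.
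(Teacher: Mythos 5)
There is a genuine gap, and it is exactly where you suspected: your construction cannot produce the $k_0$ factor, and the missing ``simultaneous'' argument is not a technicality one can patch with Fano or a hybrid on top of your parameters — the parameters themselves are wrong. With planted all-ones blocks of side $w=\Theta(\sqrt{\epsilon_0 n})$, each block occupies a $w^2/n^2=\Theta(\epsilon_0/n)$ fraction of (randomly located) positions, so a uniformly random probe hits block $b$ with probability $\Theta(\epsilon_0/n)$; since you calibrated $w$ so that missing \emph{any single} block is fatal, the task is just ``hit every block once and then map it out,'' and there is an explicit algorithm beating your bound on your own distribution: $O((n/\epsilon_0)\log k_0)$ uniform probes hit all $k_0$ blocks with high probability, and once $(i,j)$ inside a block is found, scanning row $i$ ($n$ further queries per block) reveals the whole block, so $O((n/\epsilon_0)\log k_0 + nk_0)$ queries suffice to output $\AA_{k_0}$ exactly. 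For superconstant $k_0$ and small $\epsilon_0$ this is $o(nk_0/\epsilon_0)$, so no distinguishing/information argument over this distribution can certify the theorem.

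The fix used by Musco and Woodruff (and recorded in this paper as Definition \ref{def:hard_dist}, which is only restated here — the theorem itself is imported from \cite{mw17}, not reproved) is to scale the blocks \emph{down}, not up: $k_0$ hidden index sets $\mathcal{S}_\ell$ of size $n/(2k_0)$, each chosen as a random non-contiguous subset of $[n]$, and inside each (with probability $1/2$) a planted all-ones principal submatrix $\mathcal{T}_\ell\times\mathcal{T}_\ell$ of side $c\sqrt{\epsilon_0|\mathcal{S}_\ell|}=\Theta(\sqrt{\epsilon_0 n/k_0})$. Then missing one block costs only $\Theta(\epsilon_0 n/k_0)$, so (for a suitable constant $c$) a $(1+\epsilon_0)$-approximation must correctly account for a \emph{constant fraction} of the $k_0$ blocks rather than all of them; at the same time the total planted support has density only $O(\epsilon_0/n)$, each query lands in it with probability $O(\epsilon_0/n)$ and is informative about at most one block, and (because the $\mathcal{S}_\ell$ are hidden, non-contiguous random sets) the unprobed blocks remain essentially uniform under adaptivity. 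Hence the expected number of distinct blocks an algorithm with $t$ queries can even detect is $O(t\epsilon_0/n)$, and requiring $\Omega(k_0)$ detected blocks forces $t=\Omega(nk_0/\epsilon_0)$, which is then made rigorous via Yao and a total-variation/conditioning argument as in \cite{mw17} (the same style of argument this paper does spell out for its robust lower bound, Theorem \ref{thm:robust_lb}). In short: the $k_0$ factor comes from shrinking each block so that many blocks must be recovered while keeping the union of planted entries $\epsilon_0 n$-sparse, not from a refined information-theoretic analysis of your larger blocks; your bookkeeping lemmas (PSD-ness, correlation structure, cost of a missed block) are fine, but they need to be redone at the $\sqrt{\epsilon_0 n/k_0}$ scale.
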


We consider the hard distribution defined by Musco and Woodruff, and show that we can obtain a low rank approximation to this instance with strengthened parameters by using a coreset for ridge regression. 

\begin{definition}[Hard Input Distribution for LRA (\cite{mw17})]
\label{def:hard_dist}
Let $\MM$ be an $n \times n$ matrix and let  $\epsilon_0 >0$, $k_0 \in [n]$. Let $\gamma(n, \epsilon_0 , k_0)$ be a distribution over $\MM$ such that $\S \subset [n]$ is a uniformly random subset of size $n/2$, which is further partitioned into subsets $\S_1, \S_2, \ldots \S_{k_0}$ such that for all $\ell \in [k]$, $\S_{\ell}$ is picked uniformly at random and $|\S_{\ell}| = n/(2k_0)$. For each subset $\S_{\ell}$, let $\AA_{\S_{\ell}}$ denote the principle submatrix of $A$ indexed by the set $\S_{\ell}$. Then, with probability $1/2$, $\AA_{\S_\ell}$ is such that all the diagonal entries are set to $1$ and a uniformly random principle submatrix of $\AA_{\S_\ell}$, indexed by the set $\T_{\ell}$, such that $|\T_{\ell}| = c\sqrt{\epsilon_0 |\S_{\ell}|}$ is set to all $1$s. With the remaining probability $\AA_{\S_\ell}$ is set to the $\II$.
\end{definition}

We show that we can derive a low-rank matrix $\BB$ that satisfies the relative-error guarantee above from a coreset for ridge regression.

\begin{proof}[Proof of Theorem \ref{thm:coreset_lb_for_ridge}]
We show a proof by contradiction, where the high level idea is that a coreset for ridge regression can be used to derive a low-rank approximation to $\AA$, when $\AA$ is picked from $\gamma(n, O(1), s_{\lambda}/\epsilon^2)$ (the hard distribution defined in \ref{def:hard_dist}). First, we observe that with probability at least $99/100$, the input distribution has $\Omega(s_{\lambda}/\epsilon^2)$ blocks that contain a principle submatrix with all $1$s. To see this let $X_{1}, \ldots X_{s_{\lambda}/\epsilon^2 }$ be indicators for the corresponding blocks $\AA_{\S_{\ell}}$ having a principle submatrix of all $1$s. Then, 
\begin{equation}
    \expecf{}{\sum_{ \ell \in [s_{\lambda}/\epsilon^2]} X_{\ell} } = \frac{\epsilon^2 n}{2 s_{\lambda}}
\end{equation}
Since the $X_{\ell}$'s are independent, by a Chernoff bound we have 
\begin{equation}
    \prob{}{ \sum_{ \ell \in [s_{\lambda}/\epsilon^2]} X_{\ell} \leq (1-\delta) \frac{\epsilon^2 n}{2 s_{\lambda}} } \leq \exp\left(-\frac{c\delta \epsilon^2 n }{s_{\lambda}}\right)
\end{equation}
For $n \geq \Omega(s_{\lambda}/\epsilon^2)$, we can bound the above probability by $1/100$. 
We begin by showing that for our input instance, $s_{\lambda} = \Theta( n/\lambda)$ and thus the aforementioned equations differ by $O(\epsilon n/s_{\lambda})$. To see this observe
\begin{equation}
    s_{\lambda} = \sum_{i\in [n]} \frac{ \sigma_i^2(\AA)}{\sigma_i^2(\AA) + \lambda }
\end{equation}
Then, there are $s_\lambda/2\epsilon^2$ large eigenvalues, each of magnitude $\epsilon \sqrt{n/s_{\lambda}}$ and thus the total contribution is 
\[
\frac{s_{\lambda}}{2\epsilon^2} \cdot \frac{ (\epsilon^2 n/s_{\lambda}) }{(\epsilon^2 n/s_{\lambda}) + \lambda } = \frac{n}{\epsilon^2n/s_{\lambda} + \lambda}
\]
The remaining eigenvalues simply contribution $1/(1+\lambda)$ to the sum and since there are at most $n$ of them, the total contribution is $n/(1+\lambda)$. Therefore, we can conclude $s_{\lambda} = \Theta(n/\lambda)$.

For a block in $\AA$ indexed by $\ell$, let $\hat{x}_{\ell}$ be the eigenvector supported on indices in $\S_{\ell}$ and let $\hat{y}_{\ell} = \sqrt{n /s_{\lambda}} \hat{x}_{\ell}$. For non-identity blocks, $\AA \hat{x}_{\ell} =  |\T_{\ell} |\hat{x}_{\ell} = \sqrt{\epsilon^2 n /s_{\lambda}} \hat{x}_{\ell}$ and the regression cost is 
\begin{equation}
    \| (1 - \epsilon)\sqrt{n/s_{\lambda}} \hat{x}_{\ell} \|^2_2 + \lambda = (1-2\epsilon) n/s_{\lambda} + cn/s_{\lambda}
\end{equation}
When the block indexed by $\ell$ is the identity block, we get $\AA \hat{x}_{\ell} = \hat{x}$ and the regression cost is 
\begin{equation}
    \| (\sqrt{n/s_{\lambda}} -1) \hat{x}_{\ell} \|^2_2 + \lambda = (n/s_{\lambda} + 1 - 2\sqrt{n/s_{\lambda}}) + cn/s_{\lambda}
\end{equation}
Instead consider a vector that intersects an eigenvector $\tilde{x}_{\ell}$ on a $(1-\gamma)$-fraction of the support and the rest is arbitrary. Then, when an an all $1$s block exists, $\AA \tilde{x}_{\ell} \geq  (1-\gamma)^2 |\T_{\ell} |\hat{x}_{\ell} =  (1-\gamma)^2 \sqrt{\epsilon^2 n /s_{\lambda}} \hat{x}_{\ell}$ and thus the regression cost is at most 
\[
(1 - \epsilon(1-2\gamma))^2 n/s_{\lambda} + cn/s_\lambda 
\]
Further, when the block is simply the identity, a similar calculation shows that the regression cost is at least $(1- 2\epsilon\gamma)^2n/s_{\lambda} + cn/s_{\lambda} $. 
Therefore, the ridge regression cost determines the existence of a $(1-2\gamma)$-fraction of an all $1$s principle submatrix even when $\hat{x}_{\ell}$ intersects with an eigenvector on a $(1-\gamma)$-fraction of coordinates. 


Consider a coreset $\mathcal{C}$ for the above instance. Since this coreset preserves the ridge regression objective upto a $(1+\epsilon/1000)$ factor for all $x, y$, as per our above discussion we can query the coreset on the tuples $(\hat{x}_{\ell}, \hat{y}_{\ell})$, which represent the eigenvectors of each block, to determine if a block contains a principle submatrix with all $1$s. However, a priori we do not know the support of the eigenvector within each block $\AA_{\S_{\ell}}$. 

Instead we query the coreset on all possible supports and show we can determine the right one as follows: let $\tilde{x}$ be supported on a set that intersects with a principle submatrix of all $1$s on at most a $\gamma$ fraction. Observe that $\AA \tilde{x} \leq \frac{1}{\gamma}\left( \gamma^2 \epsilon \tilde{x} \right)$ and thus the ridge regression cost can be lower bounded as follows: 
\begin{equation}
(1 - \epsilon\gamma)^2n/s_{\lambda} + cn/s_{\lambda}
\end{equation}
We therefore take the set of all vectors on which the coreset cost is less than the above cost and let the resulting list be $\mathcal{L}$. Note, this list must include the eigenvectors and further, only includes vectors which intersect an all $1$'s submatrix on a $1-\gamma$-fraction. Therefore, picking a set of $\epsilon^2 n/s_{\lambda}$ vectors that have maximum support suffices. 

Since we detect a $(1-\gamma)$-fraction of all principle submatrices in $\AA$, it follows that we can output a $1+c'$-approximate low-rank approximation for $\AA$, for a fixed small constant $c'$. To see this, observe that the optimal low-rank approximation to $\AA$ is given by the matrix that selects all the principle submatrices with all $1$s and thus $\|\AA - \AA_{k_0}\|^2_F = n - k_0 = n - s_{\lambda}/\epsilon^2$. Further, our approximation to $\AA_k$, denoted by $\BB$, matches $\AA_k$ on a $(1-\gamma)$-fraction of each principle submatrix of all $1$s and thus we match $\AA_k$ on these entries. Subsequently we bound the additional cost that $\BB$ incurs which $\AA_k$ does not. This includes entries that are $1$ in $\AA_k$ and $0$ in $\BB$ and vice versa. 

To bound the cost of the entries that exist in $\AA_k$ but do not exist in $\BB$, observe on each principle submatrix, $\BB$ and $\AA$ intersect in at least $(1-\gamma)^2$-fraction of the entries and thus the remaining entries are at most $(1-(1-\gamma)^2)   \cdot\frac{\epsilon^2 n}{c s_{\lambda}}\cdot \frac{s_{\lambda}}{\epsilon^2} = 4\gamma n/c$, since the size of each block is $\frac{\epsilon^2 n}{c s_{\lambda}}$ and the number of blocks are at most $\frac{s_{\lambda}}{\epsilon^2}$.
Finally, observe that since we do not pick exact eigenvectors we can have non-zero off diagonal entries in $\BB$ that do not exist in $\AA_k$. However, we have at most $\gamma$-fraction of the support on each indicator vector remaining and contributing to two rectangular blocks of $1$s, each of size $\gamma \cdot\frac{\epsilon^2 n}{c s_{\lambda}}\cdot \frac{s_{\lambda}}{\epsilon^2} = \gamma n/c$. Therefore, the additional non-zero entries in $\BB$ that do not appear in $\AA$ are $2\gamma n/c$ in number.

Therefore, the overall cost $\| \AA - \BB \|^2_F \leq   (1+6\gamma/c) n$. By setting the constants $\gamma$ and $c$ and observing that $s_{\lambda}/\epsilon^2 \ll n$, we obtain a $(1 + c')$-low-rank approximation to $\AA$, for any arbitrary small constant $c'$. 
Therefore, our reduction suffices to solve the hard instance above and a lower bound of $\Omega(n k_0/\epsilon_0) = \Omega(n s_{\lambda}/\epsilon^2)$ queries follows.
\end{proof}
\section{Robust Low-Rank Approximation}

One drawback of relative-error guarantees is that the corresponding algorithms cannot tolerate any amount of noise. Therefore, we introduce a robust model for low-rank approximation by relaxing the requirement from relative-error guarantees to additive-error guarantees.  
In the robustness model we consider, we begin with an $n \times n$ PSD matrix $\AA$. An adversary is then allowed to arbitrarily corrupt $\AA$ by adding a corruption matrix $\NN$ such that the corruption in each row is an fixed constant times the $\ell^2_2$ row norm of the row and the total corruption is an $\eta$-fraction of squared Frobenius norm of $\AA$. While the adversary may corrupt any number of entries of $\AA$, the norm of the corruption matrix is bounded and the algorithm has query access to $\AA + \NN$. We parameterize our lower bound and algorithms by the largest ratio between a diagonal entry of $\AA$ and $\AA+\NN$, denoted by $\phi_{\max} = \max_{j \in [n]} \AA_{j,j} / |(\AA+\NN)_{j,j}|$. This captures the intuition that the diagonal is crucial for sublinear time low rank approximation and the sample complexity degrades as we corrupt larger diagonals entries. 

\subsection{Lower Bound for Robust PSD Low-Rank Approximation}
In this subsection, we show a query lower bound of $\Omega(\eta^2n^2k/\epsilon^2) = \Omega(\phi^2_{\max}nk/\epsilon)$ for any algorithm that outputs a low rank approximation up to additive-error $(\epsilon+\eta)\|\AA \|^2_F$. Note, obtaining error smaller than $\eta\|\AA\|^2_F$ is information-theoretically impossible and reflected in the query lower bound.   


Our lower bound holds for randomized algorithms, and uses Yao's minimax principle \cite{yao_minimax}. The overall strategy is to demonstrate a lower bound for deterministic algorithms on a carefully chosen input distribution. We construct our input distributions as follows: let $\AA \in\mathbb{R}^{n \times n}$ be a block diagonal matrix with such that $\BB_1$ is $5\epsilon/\eta \times 5\epsilon/\eta$ randomly positioned, non-contiguous block with all entries $\sqrt{n\eta^2/5\epsilon}$ and $\BB_2$ is the identity matrix on the remaining indices. $\AA$ is clearly a PSD matrix since each principle submatrix is PSD. Observe $\|\AA \|^2_F = (25\epsilon/\eta^2)\cdot(n\eta^2/5\epsilon) + (n- 5\epsilon/\eta) =  (1+5\epsilon)n - o(n)$.  Further, the dense block $\BB_1$ contributes a total squared Frobenius norm of at least $4\epsilon n$ and the diagonal entry contributes an $\eta/\epsilon$ fraction of each row. Since $\epsilon\geq \eta$, the diagonal contributes at most the entire $\ell^2_2$ norm. 
The corresponding diagonal of $\BB_{1}$ also has $\ell^2_2$ squared norm $ 5\epsilon/\eta \cdot n\eta^2/5\epsilon = \eta n$. 

At a high level, the adversary can then corrupt the diagonal and set each diagonal entry to be $1$, making it hard for the algorithm to find rows corresponding to $\BB_1$. We show that any $\epsilon$-additive-error low-rank approximation must detect at least one entry in $\BB_1$ to adaptively sample the corresponding row and column,  but the diagonals no longer provide any useful information. Thus any algorithm must query most entries in $\AA$. Further, in our construction, note $\phi_{\max} = \sqrt{n\eta^2/5\epsilon}$. 

We first describe intuitively why a low rank approximation needs to recover many rows from the block $\BB_1$. Since $\AA$ has this block structure, the best rank-$1$ approximation satisfies $\|\AA - \AA_k\|^2_F = n - |\BB_1|$. Therefore, assuming the cardinality of $\BB_1$ is negligible, in order to obtain an overall error bound of $\epsilon\|\AA\|^2_F \geq \epsilon n$, the algorithm must find a constant fraction of off-diagonal entries in $\BB_1$. This is because $\BB_1$ contributes at least $5\epsilon n$ norm. However, since the diagonals no longer convey any information about the off-diagonals, and the block $\BB_1$ is placed on a random subset of indices,  any deterministic algorithm must read arbitrary off-diagonal entries until it finds a non-zero entry. Since there are only $25\epsilon^2/\eta^2$ non-zeros in $\BB_1$, to find one in expectation (over the input distribution) requires sampling $\epsilon^2 n^2/\eta^2$ entries. While the above serves well as intuition, a rigorous proof requires many additional steps. We begin by defining a distribution over the input matrices :

\begin{definition}
Given $n \in \mathbb{N}, \epsilon >\eta>0$, let $\mathcal{S} \subset [n]$ be a uniformly random subset of size $\lceil 5\epsilon/\eta\rceil$. Let  $\mu(n,\epsilon, \eta)$ be a distribution over matrices $\MM \in \mathbb{R}^{n \times n}$ such that $\forall i \in[n]$, $\MM_{i,i} = 1$ and $\forall i,j \in \mathcal{S}$, $\MM_{i,j} = \sqrt{\eta^2n/5\epsilon}$. All remaining entries in $\MM$ are $0$. 
\end{definition}

Next, we show that any $\MM$ sampled from $\mu(n,\epsilon, \eta)$ can be decomposed into $\AA+\NN$ such that $\AA$ is PSD and $\|\NN\|^2_F \leq \eta \|\AA \|^2_F$. To see this, let $\NN$ be a diagonal matrix such that for all $i\in \mathcal{S}$, $\NN_{i,i} = -\sqrt{\eta^2n/5\epsilon}$ and let $\AA = \MM - \NN$. We give an algebraic proof that $\AA$ is PSD, but $\AA$ can also be decomposed into a rank-$1$ block of all $\sqrt{\eta^2n/5\epsilon}$-s corresponding to all $i,j\in\mathcal{S}$ and identity on the remaining indices. For all $x \in \mathbb{R}^n$,
\begin{equation}
\begin{split}
    x^T\AA x &= \sum_{i,j \in \mathcal{S}} \AA_{i,j} x_i x_j + \sum_{i,j \notin \mathcal{S}} \AA_{i,j} x_i x_j \\
    & = \sqrt{\eta^2 n/5\epsilon} \sum_{i,j \in \mathcal{S}} x_i x_j  + \sum_{i \notin \mathcal{S}} x_i^2 \\
     & = \sqrt{\eta^2 n/5\epsilon}\left(\sum_{i \in \mathcal{S}}x_i\right)^2 + \sum_{i \notin \mathcal{S}} x_i^2 \\
     & \geq 0
\end{split}
\end{equation}
and thus $\AA$ is PSD. Further, $\|\AA \|^2_F =(25\epsilon^2/\eta^2)\cdot(n\eta^2/5\epsilon) + (n- 5\epsilon/\eta) =  (1+5\epsilon)n - 5\epsilon/\eta$.
Then, $\|\NN\|^2_F = 5\epsilon/\eta \cdot \eta^2 n/\epsilon = \eta \| \AA\|^2_F$, as desired. Intuitively, we show that if $\BB$ is a rank-$k$ matrix that is a good low-rank approximation for $\MM$ sampled from $\mu$, then it cannot be a good low-rank approximation for $\II$. 
To this end, we consider a distribution where $\MM$ is drawn from $\mu(n, \epsilon, \eta)$ with probability $1/2$ and is $\II_{n \times n}$ with probability $1/2$. 

\begin{definition}(Hard Distribution)
Given $n\in \mathbb{N}, \epsilon > \eta > 0$, let $\nu(n,\epsilon,\eta)$ be a distribution over $\MM \in \mathbb{R}^{n \times n}$ such that with probability $1/2$, $\MM$ is sampled from $\mu(n,\epsilon,\eta)$ and with probability $1/2$, $\MM = \II_{n\times n}$.
\end{definition}

We now show that a low-rank approximation to $\MM$ can be used as a certificate to separate the mixture $\nu(n,\epsilon,\eta)$ since it can distinguish between the input being identity or far from it. Thus if the distributions are close in a statistical sense, any algorithm to distinguish between the two would require querying many entries in $\MM$. Formally, 

\begin{lemma}(LRA as a Distinguisher.)
\label{lem:distinguish}
Let $\MM$ be a matrix drawn from $\mu(n,\epsilon,\eta)$ and let $\BB$ be a rank-$k$ matrix that is the candidate low-rank approximation to $\MM$ such that $\|\MM -\BB \|^2_F \leq \epsilon n$. Then, $\|\MM - \II \|^2_F > 1.1 \epsilon n$.  
\end{lemma}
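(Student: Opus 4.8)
The plan is to reduce the stated inequality to a one-line Frobenius-norm computation and then read off the \emph{distinguisher} content. The conceptual point is that $\MM$ and $\II$ agree on their entire diagonal (both are all $1$'s there) and on every entry outside the planted block $\mathcal{S}\times\mathcal{S}$; consequently the diagonal is useless for telling them apart, and all of the discrepancy $\MM-\II$ is supported on the $|\mathcal{S}|(|\mathcal{S}|-1)$ off-diagonal cells $(i,j)$ with $i,j\in\mathcal{S}$, $i\ne j$, each of which equals $\sqrt{\eta^2 n/5\epsilon}$ in $\MM$ and $0$ in $\II$.

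Concretely, I would first write, directly from the definition of $\mu(n,\epsilon,\eta)$,
\[
\|\MM-\II\|_F^2 \;=\; |\mathcal{S}|\,\bigl(|\mathcal{S}|-1\bigr)\cdot\frac{\eta^2 n}{5\epsilon},
\]
and then plug in $|\mathcal{S}|=\lceil 5\epsilon/\eta\rceil\ge 5\epsilon/\eta$. Since $\epsilon>\eta$ forces $|\mathcal{S}|\ge 2$, we have $|\mathcal{S}|(|\mathcal{S}|-1)\ge |\mathcal{S}|^2/2\ge \tfrac12(5\epsilon/\eta)^2 = 25\epsilon^2/(2\eta^2)$, so
\[
\|\MM-\II\|_F^2 \;\ge\; \frac{25\epsilon^2}{2\eta^2}\cdot\frac{\eta^2 n}{5\epsilon} \;=\; \frac{5}{2}\,\epsilon n \;>\; 1.1\,\epsilon n,
\]
which is the claimed bound (a slightly more careful count using $|\mathcal{S}|\ge 5$ gives $\ge 4\epsilon n$, if a larger constant is wanted downstream). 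Finally, to record the role of the hypothesis on $\BB$: by the triangle inequality for $\|\cdot\|_F$ together with $\|\MM-\BB\|_F^2\le\epsilon n$, one gets $\|\II-\BB\|_F\ge\|\MM-\II\|_F-\|\MM-\BB\|_F\ge\sqrt{5\epsilon n/2}-\sqrt{\epsilon n}>0$, so any rank-$k$ matrix that $\epsilon n$-approximates $\MM$ stays bounded away from $\II$ in Frobenius norm and hence serves as a certificate for which branch of $\nu(n,\epsilon,\eta)$ produced the input.

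I do not expect a genuine obstacle here: the lemma is a pure counting computation, so the only care needed is bookkeeping --- handling the ceiling in $|\mathcal{S}|$ via the crude bound $|\mathcal{S}|\ge 2$ (which uses $\epsilon>\eta$), and tracking the exact planted value $\sqrt{\eta^2 n/5\epsilon}$ so that the $\eta$'s cancel against $|\mathcal{S}|^2\approx 25\epsilon^2/\eta^2$ to leave a clean multiple of $\epsilon n$. The substantive work is deferred to the surrounding argument, where the observation that $\MM$ deviates from $\II$ only on the $\Theta(\epsilon^2/\eta^2)$ randomly placed off-diagonal cells of the block is what will drive the $\Omega(\eta^2 n^2 k/\epsilon^2)$ query lower bound via Yao's principle.
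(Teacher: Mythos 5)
Your computation is correct for the statement as printed, and it shows that this printed statement is essentially vacuous: $\|\MM-\II\|_F^2$ is supported on the $|\mathcal{S}|\,(|\mathcal{S}|-1)$ off-diagonal cells of the planted block, giving $\|\MM-\II\|_F^2\ge 2.5\epsilon n$ (in fact about $5\epsilon n-O(\eta n)$) with no reference to $\BB$ at all. But this is not what the paper proves, and the mismatch matters. The paper's own proof of Lemma \ref{lem:distinguish} concludes $\|\BB-\II\|_F^2>1.2\epsilon n$ --- the $\MM$ in the printed conclusion is evidently a typo for $\BB$ --- and it is that version which functions as a distinguisher in Theorem \ref{thm:robust_lb}: the separation has to be a property of the algorithm's \emph{output}, since a comparison between the unseen input $\MM$ and $\II$ cannot be certified from $\BB$. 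The paper obtains it by arguing that $\|\MM-\BB\|_F^2\le\epsilon n$ forces $\BB$ to carry $\Omega(\epsilon n)$ squared mass on the block's off-diagonal support, and since $\II$ vanishes off the diagonal this mass lower-bounds $\|\BB-\II\|_F^2$.

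Your triangle-inequality aside does recover a statement of the right kind, but only with constant $(\sqrt{2.5}-1)^2\approx 0.34$ (or $(\sqrt{4}-1)^2=1$ if you use your sharper count $\|\MM-\II\|_F^2\ge 4\epsilon n$), which falls short of the claimed $1.1\epsilon n$. If the conclusion is read as being about $\BB$, the fix is to apply the reverse triangle inequality restricted to the block's off-diagonal support, where $\MM$ carries roughly $5\epsilon n$ squared mass while $\MM-\BB$ carries at most $\epsilon n$; this gives $\|\BB-\II\|_F^2\ge(\sqrt{5}-1)^2\epsilon n\approx 1.5\epsilon n$, and is in fact a cleaner route than the paper's entry-counting argument. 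So: your proof is fine for the literal statement, but the substantive content of the lemma --- the lower bound on $\|\BB-\II\|_F^2$ that drives the distinguishing argument --- appears in your write-up only as an aside, and there with a constant too weak to meet the stated threshold; as the main argument it needs the support-restricted version above.
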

\begin{proof}
Since $\|\MM -\BB \|^2_F \leq \epsilon n$, $\BB$ must have at least $4\epsilon n$ mass on the off-diagonal entries of $\MM$. So, $\BB$ must have at least $10\epsilon^2/\eta^2$ non-zero off-diagonal entries. Therefore, it must have at least $5\epsilon^2/\eta^2$ entries with squared mass $\epsilon n/2$. To see why, assume there is a subset of at least $12\epsilon^2/\eta^2$ entries, each being at most  $\sqrt{ n \eta^2 /10\epsilon}$. Restricted to only these entries, the squared Frobenius norm difference between $\MM$ and $\BB$ is already at least $1.2\epsilon n$, contradicting our assumption. 
Given that there exists a subset of $5\epsilon^2/\eta^2$ off-diagonal entries having squared mass $1.2 \epsilon n$, $\| \BB - \II\|^2_F > 1.2\epsilon n$, and thus $\BB$ is not an additive error low-rank approximation for $\II$. 
\end{proof}

\begin{theorem}(Lower bound for PSD Matrices.)
\label{thm:robust_lb}
Let $\AA$ be a PSD matrix, $k\in \mathbb{Z}$ and $\epsilon>0$ be any constant. Let $\NN$ be an arbitrary matrix such that $\|\NN \|^2_F\leq \eta \|\AA \|^2_F$. Any randomized algorithm $\mathcal{A}$ that only has query access to $\AA + \NN$, with probability at least $2/3$, computes a rank-$k$ matrix $\BB$ such that 
\begin{equation*}
    \|\AA - \BB \|^2_F \leq \| \AA - \AA_k \|^2_F + \epsilon\|\AA \|^2_F
\end{equation*}
must read $\Omega\left( \phi_{\max}^2 nk/\epsilon \right)$ entries of $\AA+\NN$ on some input, possibly adaptively, in expectation.
\end{theorem}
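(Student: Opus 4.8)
The plan is to invoke Yao's minimax principle: it suffices to exhibit a distribution over inputs $\AA+\NN$ obeying the noise model on which every \emph{deterministic} algorithm that outputs a correct rank-$k$ approximation with probability at least $2/3$ must, in expectation, read $\Omega(\phi_{\max}^2 nk/\epsilon)$ entries. I would use the $k$-fold version of the single-block construction already set up above. Take roughly $2k$ pairwise-disjoint index sets $\T_1,\dots,\T_{2k}\subseteq[n]$, each a uniformly random subset of size $t=\Theta(\epsilon/\eta)$, and independently for each $\ell$, with probability $1/2$, put an all-$v$ (rank-$1$) principal submatrix on $\T_\ell\times\T_\ell$ with $v=\Theta(\eta\sqrt{n/(k\epsilon)})$, and otherwise leave those rows as the identity; put the identity on the rest of $[n]$. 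This $\AA$ is PSD (a sum of rank-$1$ PSD blocks and the identity), and as in the single-block case one checks $\|\AA\|_F^2=\Theta(n)$, that the diagonal correction $\NN$ which resets every diagonal entry of $\AA+\NN$ to $1$ (so $\NN$ is supported on the planted $\T_\ell$'s) satisfies $\|\NN\|_F^2\le\eta\|\AA\|_F^2$ and $\|\NN_{i,*}\|_2^2\le c\,\|\AA_{i,*}\|_2^2$, and that $\phi_{\max}=\max_j\AA_{j,j}=v$. The parameters are tuned so that each planted block contributes $\Theta(\epsilon n/k)$ with a large hidden constant to $\|\AA\|_F^2$, so that the additive budget $\epsilon\|\AA\|_F^2=\Theta(\epsilon n)$ can absorb at most a small constant fraction of the $\Theta(k)$ planted blocks.

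\paragraph{Reduction to finding planted blocks.} Next I would observe that, since $\AA+\NN$ has every diagonal entry equal to $1$ independently of the planting, and every off-diagonal entry outside $\bigcup_\ell\T_\ell\times\T_\ell$ equals $0$, a deterministic algorithm's adaptive query sequence, and hence its output $\BB$, depends on the instance only through which positions inside $\bigcup_\ell\T_\ell\times\T_\ell$ it has queried so far. A valid output satisfies $\|\AA-\BB\|_F^2\le\|\AA-\AA_k\|_F^2+\epsilon\|\AA\|_F^2=O(n)$ and $\|\AA\|_F=O(\sqrt n)$, hence $\|\BB\|_F=O(\sqrt n)$, $\|\BB\|_2=O(\sqrt n)$, and $|\trace{\BB}|=O(\sqrt{kn})$. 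The key structural step (the $k$-block analogue of Lemma~\ref{lem:distinguish}) is: if the transcript has not touched the planted submatrices of a set $P'$ of planted slots, then for a generic choice of the $\T_\ell,\ \ell\in P'$, the fixed output $\BB$ must be incorrect once $|P'|=\Omega(k)$. Since the matrices $D_\ell:=v\,\mathbf{1}_{\T_\ell}\mathbf{1}_{\T_\ell}^{\top}-\II_{\T_\ell}$ have disjoint supports, $\|\AA-\BB\|_F^2=\|\AA_{\mathrm{disc}}-\BB\|_F^2+\sum_{\ell\in P'}\big(\|D_\ell\|_F^2+2\langle\II-\BB,D_\ell\rangle\big)$, where $\AA_{\mathrm{disc}}$ keeps only the blocks the algorithm has discovered; bounding the first term below by the best rank-$k$ cost $\|\AA_{\mathrm{disc}}-(\AA_{\mathrm{disc}})_k\|_F^2$ and noting $|\langle\II-\BB,D_\ell\rangle|$ is dominated by $v\,|\mathbf{1}_{\T_\ell}^{\top}\BB\,\mathbf{1}_{\T_\ell}|$, which a first/second-moment computation shows is $o(\epsilon n/k)$ with high probability over the random $\T_\ell$ (uniformly over transcripts, because $\BB$ is fixed before $\T_\ell$ is drawn), we get $\|\AA-\BB\|_F^2\ge\|\AA_{\mathrm{disc}}-(\AA_{\mathrm{disc}})_k\|_F^2+|P'|\cdot\Theta(\epsilon n/k)-o(\epsilon n)$, which exceeds the target once $|P'|=\Omega(k)$. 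Hence any correct deterministic algorithm must, with probability at least $2/3$, query a position inside $\T_\ell\times\T_\ell$ for all but a small constant fraction of the $\Theta(k)$ planted slots.

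\paragraph{Query counting.} Finally I would count: since each $\T_\ell$ is a random size-$t$ subset, a fixed off-diagonal query lands inside a fixed planted $\T_\ell\times\T_\ell$ with probability $\Theta(t^2/n^2)=\Theta(\epsilon^2/(\eta^2 n^2))$, so a standard needle-in-a-haystack adversary argument (condition on an undiscovered block avoiding all queried positions) shows that after $Q$ queries the algorithm has touched, in expectation, at most $O(k)\cdot\big(1-e^{-cQ\epsilon^2/(\eta^2 n^2)}\big)$ distinct planted slots; forcing it to touch all but a small constant fraction of them therefore requires $Q=\Omega(n^2\eta^2/\epsilon^2)=\Omega(\phi_{\max}^2 nk/\epsilon)$, where the last equality uses $\phi_{\max}^2=\Theta(n\eta^2/(k\epsilon))$. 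Combining with the previous step, converting ``correct with probability $2/3$'' to an expected query bound via Markov, and invoking Yao's principle completes the proof.

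\paragraph{Main obstacle.} I expect the hardest part to be the structural step: showing that the large baseline term $\|\AA-\AA_k\|_F^2=\Theta(n)$ in the additive guarantee does not let an algorithm ``hide'' its failure to recover the planted blocks. This requires simultaneously (i) the optimality fact that $\|\AA_{\mathrm{disc}}-\BB\|_F^2$ is at least the best rank-$k$ cost of the partially revealed matrix, (ii) a high-probability upper bound on the cross term $v\,\mathbf{1}_{\T_\ell}^{\top}\BB\,\mathbf{1}_{\T_\ell}$ over the random placement of each undiscovered block, holding uniformly over the at most $2^{O(k)}$ transcripts consistent with the queries, and (iii) a careful accounting of exactly how much of the $\epsilon\|\AA\|_F^2$ budget a single missed block consumes, so that the threshold $|P'|=\Omega(k)$ is the right one.
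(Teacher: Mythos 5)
Your overall strategy is sound and the bound comes out right, but it is a genuinely different execution from the paper's written proof. The paper reduces to a two-point distinguishing problem: it plants a \emph{single} hidden block, mixes that distribution with the identity matrix, shows (its Lemma on ``LRA as a Distinguisher'') that a valid additive-error output for the planted instance cannot simultaneously be valid for $\II$, and then bounds the total-variation distance between the transcript distributions by $|T|\epsilon^2/(\eta^2 n^2)$, giving $|T|=\Omega(\eta^2 n^2/\epsilon^2)$; the linear dependence on $k$ is obtained by citing the standard ``$k$ disjoint copies'' trick of Musco--Woodruff rather than proving it. You instead work directly with the $k$-block distribution and prove the multi-block statement yourself, via a structural lemma (``a fixed rank-$k$ output that has not touched $\Omega(k)$ planted blocks must exceed the error budget'') plus a needle-in-a-haystack count. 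What the paper's route buys is that the TV-distance argument completely sidesteps your structural lemma -- no cross-term analysis, no conditioning on transcripts is needed; what your route buys is a self-contained proof of the $k$-dependence that the paper only gestures at, with the parameters ($t=\Theta(\epsilon/\eta)$, $v=\Theta(\eta\sqrt{n/(k\epsilon)})$, $\phi_{\max}^2=\Theta(\eta^2 n/(k\epsilon))$) correctly rescaled so that the $k$-block instance still satisfies $\|\NN\|_F^2\le\eta\|\AA\|_F^2$.

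Two soft spots in your sketch need repair before it is a proof. First, the claim that the cross-term bound holds ``uniformly over the at most $2^{O(k)}$ transcripts'' is wrong as stated: a deterministic adaptive algorithm can generate far more than $2^{O(k)}$ transcripts (each query answer branches the execution), so you cannot union-bound over transcripts; the correct fix is to condition on the transcript, note that it determines $\BB$ and the set of discovered blocks, and work with the conditional law of the undiscovered $\mathcal{T}_\ell$'s, which is uniform over placements consistent with the zero answers and hence only slightly tilted when the number of queries is below the target bound. Second, the cross-term step $v\,|\mathbf{1}_{\mathcal{T}_\ell}^{\top}\BB\,\mathbf{1}_{\mathcal{T}_\ell}|=o(\epsilon n/k)$ genuinely cannot be made deterministic: with only $\|\BB\|_F=O(\sqrt n)$ available, a worst-case placement gives $\sum_\ell v\,|\mathbf{1}_{\mathcal{T}_\ell}^{\top}\BB\,\mathbf{1}_{\mathcal{T}_\ell}|$ as large as $\Theta(\sqrt{\epsilon}\,n)\gg\epsilon n$, so the argument must exploit the randomness of the unqueried placements through a conditional first/second-moment (or Markov-over-the-sum) bound, accepting that a few missed blocks may have large cross terms and arguing only about a constant fraction of them; you identify this as the main obstacle but it is exactly the heart of the lemma and is only sketched. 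Finally, a small but real accounting point: if you want each block's mass to be a large constant $C$ times $\epsilon n/k$, the diagonal correction per block is the block mass divided by the side length, so you must scale the side length with $C$ as well (otherwise $\|\NN\|_F^2\approx C\eta n$ violates the $\eta\|\AA\|_F^2$ budget); this changes $\phi_{\max}$ only by a constant and the bound survives.
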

\begin{proof}
Let Algorithm $\mathcal{A}$ be a deterministic algorithm that outputs a rank-$k$ matrix $\BB$ such that it is an additive-error low-rank approximation $\MM$. Let $T \subset [n^2]$ be the subset of entries read by $\mathcal{A}$. Let $L(\mu)$ denote the distribution of $T$ conditioned on $\MM \sim \mu(n,\epsilon,\eta)$ and $L(i)$ be the distribution of $T$ conditioned on $\MM=\II$. By Lemma \ref{lem:distinguish}, since the output of $\mathcal{A}$ can be used to distinguish between the two distributions, it is well-known that the success probability over the randomness in $T$ is at most $1/2 + D_{TV}(L(\mu), L(i))/2$ \cite{bar2002complexity}. Since we assume $\mathcal{A}$ succeeds with probability at least
$2/3$, 
\begin{equation}
\label{eqn:tv_lb}
    D_{TV}(L(\mu), L(i)) \geq 1/3
\end{equation}
It remains to upper bound $D_{TV}$ in terms of $|T|$. Recall, $\mathcal{S}$ is the random set of indices where $\mu(n,\epsilon,\eta)$ is non-zero. Let $\widehat{\mathcal{S}}$ be the subset of $\mathcal{S}$ restricted to the off-diagonal entries of $\MM$. When $\MM \sim \mu(n,\epsilon,\eta)$, $\forall i,j \in \widetilde{\mathcal{S}}$, $\MM_{i,j}$ is non-zero and when $\MM = \II$, the same entries are $0$. Observe, for all $i,j \notin \widetilde{\mathcal{S}}$, $\MM_{i,j}$ are fixed. Further, $\mathcal{S}$ is a uniform subset of $[n]$. Therefore,
\begin{equation}
    \Pr\left[ (i,j) \in T \mid (i,j) \in \mathcal{S} \right]= \frac{|T|\epsilon^2}{\eta^2 n^2}
\end{equation}
Then, with probability at least $1- |T|\epsilon^2/\eta^2 n^2$, $\mathcal{A}$ queries the same entries for both $L(\mu)$ and $L(i)$. 
Therefore $$D_{TV}(L(\mu), L(i)) \leq |T|\epsilon^2/\eta^2 n^2.$$ 
Combined with Equation \ref{eqn:tv_lb}, if $\mathcal{A}$ succeeds with probability at least $2/3$, 
$|T|\epsilon^2/\eta^2 n^2 \geq 1/3$
and thus $|T| = \Omega(\eta^2 n^2/\epsilon^2)$. Given that any deterministic algorithm must query  $\Omega(\eta^2 n^2/\epsilon^2) = \Omega(\phi_{\max}^2 n/\epsilon)$ entries for $\nu(n,\epsilon,\eta)$, to now obtain a linear dependence on the rank $k$, we can use the standard approach of creating $k$ disjoint copies of the block $\BB_1$ in the hard distribution, as shown in \cite{mw17}.
The theorem follows from Yao's minimax principle. 
\end{proof}

\subsection{Robust Sublinear Low-Rank Approximation Algorithms}

In this subsection, we provide a robust algorithm for the model discussed above. We parameterize our algorithms and lower bound by the largest ratio between a diagonal entry of $\AA$ and $\AA+\NN$, denoted by $\phi_{\max} = \max_{j \in [n]} \AA_{j,j} / |(\AA+\NN)_{j,j}|$. In addition, we provide robust \textit{PCP} constructions, by introducing a new sampling procedure to construct projection-cost preserving sketches. Our sampling procedure is straightforward: we sample each column proportional to the diagonal entry in that column. This sampling requires $n$ queries to the matrix $\AA$ to obtain an additive-error projection cost preservation guarantee.  
Further, for the special case of correlation matrices, we can uniformly sample columns of $\AA$ to obtain a smaller matrix such that all rank $k$ projections in the column and row space are preserved.

For our algorithms, we assume we know $\phi_{\max}$. In practice, this assumption may not hold, but we can query as many entries in $\AA + \NN$ as our budget allows, given that correctness holds only when the queries are at least $\widetilde{O}(\phi^2_{\max} nk/\epsilon)$. Since we read the diagonals of $\AA+\NN$ and we know $\phi_{\max}$, we can obtain an upper bound on $\AA_{i,i}$ and $\AA_{j,j}$.  Therefore, whenever we query an off-diagonal entry in $\AA + \NN$, we can truncate it to $\phi_{\max}\sqrt{|(\AA+\NN)_{i,i}|\cdot|(\AA+\NN)_{j,j}|}$ without increasing the corruption in our input.

\vspace{0.1in}
\textbf{Robust Projection-Cost Preserving Sketches.}
Here, we show that diagonal sampling is a robust sampling procedure to create \textit{projection-cost preserving sketches}.
We begin by relating the $\ell^2_2$ row (or column) norms of a PSD matrix to it's spectral norm. Let $\AA$ be a PSD matrix and let $\UU \Sig \UU^T$ be the \textsf{SVD} for $
\AA$. 

\begin{lemma}
\label{lem:row_norm_spectral_norm}
Given an $n \times n$ PSD matrix $\AA$, for all $i \in [n]$, $\|\AA_{i,*}\|^2_2 \leq  \| \AA \|_2 \cdot \AA_{i,i}$.
\end{lemma}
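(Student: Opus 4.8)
The plan is to invoke the spectral theorem for $\AA$ and compare the two quantities eigenvalue by eigenvalue. Write the eigendecomposition $\AA = \UU \Sig \UU^\top = \sum_{j} \sigma_j \uu_j \uu_j^\top$, where the $\uu_j$ are orthonormal eigenvectors and $\sigma_1 \ge \sigma_2 \ge \cdots \ge 0$ are the eigenvalues (nonnegative, since $\AA$ is PSD), so that $\|\AA\|_2 = \sigma_1$. Since $\AA$ is symmetric, its $i$-th row equals $\AA_{i,*} = e_i^\top \AA$, and hence $\|\AA_{i,*}\|_2^2 = e_i^\top \AA \AA^\top e_i = e_i^\top \AA^2 e_i = (\AA^2)_{i,i}$, while $\AA_{i,i} = e_i^\top \AA e_i$.

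First I would expand both expressions in the eigenbasis. Setting $c_j = \langle \uu_j, e_i \rangle$, we get $\AA_{i,i} = \sum_j \sigma_j c_j^2$ and $\|\AA_{i,*}\|_2^2 = \sum_j \sigma_j^2 c_j^2$. Now each summand $\sigma_j c_j^2$ is nonnegative (this is where PSD-ness enters), and $\sigma_j \le \sigma_1 = \|\AA\|_2$ for every $j$, so
\begin{equation*}
\|\AA_{i,*}\|_2^2 = \sum_j \sigma_j^2 c_j^2 \le \sigma_1 \sum_j \sigma_j c_j^2 = \|\AA\|_2 \cdot \AA_{i,i},
\end{equation*}
which is exactly the claim.

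Equivalently, and perhaps more slickly, one can note that $\|\AA\|_2 \AA - \AA^2 = \AA\,(\|\AA\|_2 \II - \AA)$ is a product of two commuting PSD matrices — both are polynomials in $\AA$, and $\|\AA\|_2 \II - \AA \succeq 0$ because $\|\AA\|_2 = \lambda_{\max}(\AA)$ — and is therefore itself PSD. Evaluating its quadratic form at $e_i$ gives $e_i^\top \AA^2 e_i \le \|\AA\|_2\, e_i^\top \AA e_i$, i.e.\ the same bound.

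There is essentially no obstacle here: the statement is an elementary consequence of the spectral theorem. The only points that need a word of care are that positivity of the eigenvalues is what validates the term-by-term comparison (the inequality is false for general symmetric $\AA$), and that the identity $\|\AA_{i,*}\|_2^2 = (\AA^2)_{i,i}$ relies on symmetry of $\AA$; both hold because $\AA$ is PSD.
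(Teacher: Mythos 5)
Your proof is correct and follows essentially the same route as the paper: expand $\|\AA_{i,*}\|_2^2 = \sum_j \sigma_j^2 \UU_{i,j}^2$ and $\AA_{i,i} = \sum_j \sigma_j \UU_{i,j}^2$ in the eigenbasis and compare termwise using $\sigma_j \le \|\AA\|_2$, with PSD-ness ensuring nonnegativity of each term. The alternative remark via $\AA(\|\AA\|_2\II - \AA) \succeq 0$ is a fine equivalent packaging but not needed.
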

\begin{proof}
Observe, $\AA_{i,*} = \UU_i \Sig \UU^T$ and $\AA_{i,i} = (\UU_{i,*} \Sig \UU^T)_i = \sum^n_{j=1} \sigma_j(\AA) \UU^2_{i,j}$. Then,
\begin{equation*}
    \begin{split}
        \|\AA_{i,*}\|^2_2 =  \AA_{i,*}\AA^T_{i,*}  = \UU_{i,*} \Sig \UU^T \UU \Sig \UU_{i,*}^T
        & = \sum_{j=1}^n \sigma^2_j(\AA) \UU_{i,j}^2 \\
        & \leq \| \AA \|_2 \sum_{j=1}^n \sigma_j \UU_{i,j}^2 \\
        &= \| \AA \|_2\cdot \AA_{i,i}
    \end{split}
\end{equation*}
\end{proof}

\noindent An immediate consequence of Lemma \ref{lem:row_norm_spectral_norm} is that the $\ell^2_2$ norm of a row or column of a PSD matrix is at most $\frac{\|\AA\|_F}{\AA_{i,i}}$. Note, this precludes matrices where most of the mass in concentrated on a small number of rows or columns.   
Recall, we observe as input the matrix $\AA + \NN$ and our goal is to obtain a PCP for this in sublinear time and queries.  

Musco and Musco \cite{musco2017recursive} describe how to approximately compute the ridge leverage scores of $\AA^{\frac{1}{2}}$ (if $\AA$ is PSD) using a Nystrom approximation. \cite{mw17} use this method to compute the ridge leverage scores of $\AA^{\frac{1}{2}}$ with $O(nk)$ queries, where $\AA = \AA^{\frac{1}{2}}\cdot \AA^{\frac{1}{2}}$. However, these approaches do not apply when we perturb the input and it may no longer be PSD. Therefore, the best known construction by Cohen et. al.  \cite{cohenmm17} would require $\Omega(nnz(\AA))$ time to compute approximate ridge-leverage scores of $\AA$. Note, this does not use the structure that $\AA$ has. 

In contrast, we show that sampling columns proporitional to the diagonal entries suffices to obtain a PCP. Note, 
we only need to query the diagonal of $\AA$ to compute the distribution over columns exactly.
The main technical challenge here is to obtain the correct dependence on $n$ and $k$ and account for the perturbation to the input, given that our sampling probabilities are straightforward to compute and do not rely on spectral properties of $\AA+\NN$.  Note, the following is a structural result and while we do not know $\AA$, we can still show the following :

\begin{theorem}(Robust Spectral Bound.)\label{thm:pcp1}
Let $\AA$ be an $n \times n$ PSD matrix and $\NN$ be an arbitrary matrix such that $\|\NN \|^2_F \leq \eta \|\AA \|^2_F$ and for all $j \in [n]$, $ \|\NN_{*,j} \|^2_2 \leq c \|\AA_{*,j}\|^2_2$, for any fixed constant $c$. Let $\phi_{\max} = \max_j \AA_{j,j}/(\AA+\NN)_{j,j} $ and let $q = \{q_1, q_2 \ldots q_n\}$ be a distribution over the columns of $\AA + \NN$ such that for all $j$, $q_j = (\AA+\NN)_{j,j}/\trace{\AA + \NN}$ and let $t =  O\left(\phi_{\max}\sqrt{n}k^2\log(n/\delta)/\epsilon^2\right)$. Then, construct a sampling matrix $\TT$ that samples $t$ columns of $\AA+\NN$ such that it samples column$(\AA+\NN)_{*,j}$ with probability $q_j$ and scales it by $1/\sqrt{t q_j}$. With probability at least $1-\delta$, for any rank-$k$ orthogonal projection $\XX$,
\[
\AA\AA^T  - \left(\frac{\epsilon}{k} \right)\| \AA\|^2_F \II  \preceq \AA\TT (\AA\TT)^T \preceq \AA\AA^T +  \left(\frac{\epsilon}{k}\right)\| \AA\|^2_F\II 
\]
\end{theorem}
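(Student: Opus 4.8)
<br>

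The plan is to prove the desired spectral bound by viewing $\AA\TT(\AA\TT)^\top$ as a sum of independent rank-one matrices, one for each sampled column, and applying a matrix concentration inequality (matrix Bernstein) after showing that the sampled columns of $\AA+\NN$ are, up to the noise contribution, a good proxy for the columns of $\AA$. Concretely, write $\AA\TT(\AA\TT)^\top = \sum_{\ell=1}^t Y_\ell$, where each $Y_\ell = \tfrac{1}{t q_{j_\ell}} \AA_{*,j_\ell} \AA_{*,j_\ell}^\top$ and $j_\ell$ is drawn according to $q$. Since $\expec{}{Y_\ell} = \sum_j q_j \cdot \tfrac{1}{tq_j}\AA_{*,j}\AA_{*,j}^\top = \tfrac{1}{t}\AA\AA^\top$, the mean is exactly $\tfrac1t\AA\AA^\top$ and we only have to control the deviation. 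The key quantities for matrix Bernstein are the per-term operator norm bound $\|Y_\ell\|_2 \le \max_j \tfrac{\|\AA_{*,j}\|_2^2}{t q_j}$ and the variance proxy $\sum_\ell \expec{}{Y_\ell^2}$, whose operator norm is at most $\tfrac1t \big(\max_j \tfrac{\|\AA_{*,j}\|_2^2}{q_j}\big)\cdot \|\AA\AA^\top\|_2 / t \cdot t = \tfrac1t \big(\max_j \tfrac{\|\AA_{*,j}\|_2^2}{q_j}\big)\|\AA\|_2^2$ type expression. So everything reduces to bounding $\max_j \|\AA_{*,j}\|_2^2 / q_j$, and this is exactly where the structure of PSD matrices and the parameter $\phi_{\max}$ enter.

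The crucial estimate is $\|\AA_{*,j}\|_2^2/q_j \le \phi_{\max}\,\trace{\AA+\NN}\cdot \tfrac{\|\AA_{*,j}\|_2^2}{(\AA+\NN)_{j,j}}$, and I want to bound $\tfrac{\|\AA_{*,j}\|_2^2}{(\AA+\NN)_{j,j}}$. By definition of $\phi_{\max}$ we have $\AA_{j,j} \le \phi_{\max} |(\AA+\NN)_{j,j}|$, i.e. $\tfrac{1}{|(\AA+\NN)_{j,j}|} \le \tfrac{\phi_{\max}}{\AA_{j,j}}$, and by Lemma~\ref{lem:row_norm_spectral_norm}, $\|\AA_{*,j}\|_2^2 \le \|\AA\|_2 \cdot \AA_{j,j}$. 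Combining, $\tfrac{\|\AA_{*,j}\|_2^2}{q_j} \le \phi_{\max}^2 \cdot \|\AA\|_2 \cdot \trace{\AA+\NN}$. Now $\trace{\AA+\NN} \le \trace{\AA} + |\trace{\NN}| \le \sqrt{n}\|\AA\|_F + \sqrt{n}\|\NN\|_F \le \sqrt{n}(1+\sqrt{\eta})\|\AA\|_F = O(\sqrt n \|\AA\|_F)$ using Cauchy--Schwarz on the diagonal (at most $n$ entries) together with $\|\NN\|_F \le \sqrt\eta \|\AA\|_F$ and $\eta < 1$. Therefore $\max_j \|\AA_{*,j}\|_2^2/q_j = O(\phi_{\max}^2 \sqrt n \|\AA\|_2 \|\AA\|_F)$. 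Feeding this into matrix Bernstein with $t = O(\phi_{\max}\sqrt n k^2 \log(n/\delta)/\epsilon^2)$ chosen columns — note the theorem statement has $\phi_{\max}$, not $\phi_{\max}^2$, so I will want to double-check the exponent, but the mechanics are the same — gives, with probability $1-\delta$, $\|\tfrac1t\AA\TT(\AA\TT)^\top - \tfrac1t\AA\AA^\top\|_2 \le \tfrac{\epsilon}{k}\|\AA\|_F^2$ after rescaling, which after clearing the $1/t$ scaling (absorbed into $\TT$) yields the two-sided Loewner bound $\AA\AA^\top - \tfrac{\epsilon}{k}\|\AA\|_F^2 \II \preceq \AA\TT(\AA\TT)^\top \preceq \AA\AA^\top + \tfrac\epsilon k\|\AA\|_F^2\II$.

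One subtlety is that $\TT$ samples columns of $\AA+\NN$, not of $\AA$, so strictly speaking the random matrix is $\sum_\ell \tfrac{1}{tq_{j_\ell}}(\AA+\NN)_{*,j_\ell}(\AA+\NN)_{*,j_\ell}^\top$, and the theorem's conclusion is about $\AA\TT$. I read the statement as: the quantity $\AA\TT(\AA\TT)^\top$ is literally $\AA$ with the columns subsampled/reweighted by the same $\TT$ (i.e.\ $\TT$ is the sampling-and-rescaling operator, applied here to $\AA$ as a thought experiment), so the noise $\NN$ only enters through the \emph{choice} of sampling probabilities $q$, not through the entries being summed. With that reading the argument above is exactly right: the estimator is unbiased for $\AA\AA^\top$ and the only effect of corruption is that $q_j$ may be distorted from the ``ideal'' $\AA_{j,j}/\trace\AA$, a distortion absorbed into the $\phi_{\max}$ factor via the chain $\|\AA_{*,j}\|_2^2/q_j = O(\phi_{\max}^2 \sqrt n\|\AA\|_2\|\AA\|_F)$. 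The main obstacle I anticipate is bookkeeping the correct power of $\phi_{\max}$ and of $k$ in $t$: matrix Bernstein naturally produces a $\phi_{\max}^2$ from the variance term, so to land on $t = O(\phi_{\max}\sqrt n k^2 \cdots)$ rather than $\phi_{\max}^2$ one presumably uses that the \emph{projection-cost} guarantee only needs to hold against rank-$k$ projections (so one can intersect with a $k$-dimensional subspace and trade a $\phi_{\max}$ factor for a $\poly(k)$ factor), or there is a tighter bound on the variance exploiting $\|\AA\AA^\top\|_2 = \|\AA\|_2^2$ versus $\|\AA\|_F^2$; I would reconcile this carefully with the downstream use in Theorems~\ref{thm:pcp_main_col} and \ref{thm:pcp_main_row}. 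The remaining steps — instantiating matrix Bernstein, the union bound giving probability $1-\delta$, and rescaling — are routine.
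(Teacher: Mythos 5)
Your proposal follows essentially the same route as the paper's proof: interpret $\TT$ as the sampling-and-rescaling operator applied to $\AA$ (so the corruption enters only through the probabilities $q_j$), write $\AA\TT(\AA\TT)^T-\AA\AA^T$ as a sum of independent centered rank-one terms, bound the per-term operator norm and the variance via Lemma \ref{lem:row_norm_spectral_norm} together with $\trace{\AA+\NN}\leq(1+\sqrt{\eta})\sqrt{n}\,\|\AA\|_F$, and finish with matrix Bernstein; your reading of the statement is exactly the intended one.

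The one unresolved point you flag, $\phi_{\max}^2$ versus $\phi_{\max}$, is a bookkeeping slip in your own chain, and neither of your proposed repairs is what is needed. Since $q_j=(\AA+\NN)_{j,j}/\trace{\AA+\NN}$ exactly, you have $\|\AA_{*,j}\|_2^2/q_j=\trace{\AA+\NN}\cdot\|\AA_{*,j}\|_2^2/(\AA+\NN)_{j,j}$ with no $\phi_{\max}$ at this stage; the extra $\phi_{\max}$ you inserted in your first display is spurious and is what doubles the exponent. Applying $\|\AA_{*,j}\|_2^2\leq\|\AA\|_2\,\AA_{j,j}$ and then $\AA_{j,j}/|(\AA+\NN)_{j,j}|\leq\phi_{\max}$ exactly once gives $\max_j\|\AA_{*,j}\|_2^2/q_j=O(\phi_{\max}\sqrt{n}\,\|\AA\|_2\|\AA\|_F)$, which is the paper's bound. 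Feeding this into matrix Bernstein at deviation $\epsilon\|\AA\|_F^2$ yields an exponent of order $\epsilon^2\|\AA\|_F^4\,t\,/\bigl(\phi_{\max}\sqrt{n}\,\|\AA\|_2\|\AA\|_F(\|\AA\|_2^2+\epsilon\|\AA\|_F^2)\bigr)=\Omega\bigl(\epsilon^2 t/(\phi_{\max}\sqrt{n})\bigr)$ after using $\|\AA\|_2\leq\|\AA\|_F$, so $t=O(\phi_{\max}\sqrt{n}\log(n/\delta)/\epsilon^2)$ already suffices at accuracy $\epsilon$; the $k^2$ in the theorem's $t$ arises simply from instantiating this bound with accuracy $\epsilon/k$, as demanded by the Loewner conclusion with error $(\epsilon/k)\|\AA\|_F^2\,\II$. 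No intersection with a rank-$k$ subspace, and no trade of $\phi_{\max}$ against $\poly(k)$, is involved.
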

\begin{proof}
First, we note that we cannot explicitly compute $\AA\TT$, but we can show that the sampling probabilities we have access to result in a PCP for $\AA$.  
Let $\Y = \AA\TT (\AA\TT)^T - \AA\AA^T$. For notational convenience let $\AA_j = \AA_{*,j}$. We can then write $\Y = \sum_{j \in [t]} \left(\CC_{*,j} \CC_{*,j}^T -\frac{1}{t}\AA\AA^T\right) = \sum_{j\in [t]} \XX_j$, where $\XX_j = \frac{1}{t}(\frac{1}{q_j}\AA_{j}\AA^T_{j} - \AA\AA^T )$ with probability $q_j$. We observe that $\expec{}{\XX_j} = \expec{}{\CC_{*,j} \CC_{*,j}^T -\frac{1}{t}\AA\AA^T}=  0$, and therefore, $\expec{}{\Y} =0$. Next, we bound the operator norm of $\Y$. To this end, we use the Matrix Bernstein inequality, which in turn requires a bound on the operator norm of $\XX_j$ and variance of $\Y$. 
Recall, 
\begin{equation}
\label{eqn:spec_x_j}
    \begin{split}
        \| \XX_j \|_2 & = \left\| \frac{1}{tq_j}\AA_{j}\AA_{j} - \frac{1}{t}\AA\AA^T \right\|_2 \\
        &\leq  \frac{\trace{\AA+\NN}}{t (\AA+\NN)_{j,j}} \|\AA_{j}\|^2_2 + \frac{1}{t} \| \AA\|^2_2\\
        & \leq \frac{2\trace{\AA}+|\trace{\NN}|}{t(\AA+\NN)_{j,j}}\left((1+\eta)\|\AA\|_2\AA_{j,j} \right) \\
        & \leq \frac{c(\trace{\AA}+|\trace{\NN}|) \|\AA\|_2 \phi_{\max}}{t}  \\
        & \leq \frac{c \phi_{\max}\sqrt{n}\|\AA \|_F \|\AA\|_2 }{t} 
    \end{split}
\end{equation}
where we use triangle inequality for operator norm to obtain the first inequality, triangle inequality up to a factor of $2$ for $\ell^2_2$ norms for the second inequality,  $\|\NN_{j}\|^2_2 \leq \eta \|\AA_{*,j}\|^2_2$ and  $\|\AA_j \|^2_2\leq \| \AA \|_2 \cdot \AA_{j,j}$ (from Lemma \ref{lem:row_norm_spectral_norm}) for the third inequality and definition of $\phi_{\max}$ and $\eta = O(1)$ for the fourth. Finally, we relate the trace of $\AA$ and $\NN$ to their respective Frobenius norm using Cauchy-Schwarz:  
\begin{equation*}
    \begin{split}
        \trace{\AA} = \sum^n_{i=1}\sigma_i(\AA) \leq \sqrt{\sum^n_{i=1} \sigma_i^2(\AA) \cdot n} = \sqrt{ n \|\AA\|^2_F } 
    \end{split}
\end{equation*}
and
\begin{equation*}
    \begin{split}
        |\trace{\NN}| = \left|\sum^n_{i=1}\sigma_i(\AA)\right| \leq \sqrt{\sum^n_{i=1} \sigma_i^2(\NN) \cdot n} = \sqrt{ n \|\NN\|^2_F } \leq \sqrt{n\eta} \|\AA\|_F
    \end{split}
\end{equation*}
where the last inequality follows from $\|\NN \|_F \leq \sqrt{\eta}\|\AA\|_F$.
Next, we bound $\var{\Y}\leq \expecf{}{\Y^2}$.

\begin{equation}
    \begin{split}
        \expecf{}{\Y^2}  & =  t\expecf{}{\left((\AA\TT)_{*,j} (\AA\TT)_{*,j}^T -\frac{1}{t}\AA\AA^T\right)^2} \\
        & = t\expecf{}{ \left((\AA\TT)_{*,j} (\AA\TT)_{*,j}^T\right)^2 +\frac{1}{t^2}\left(\AA\AA^T\right)^2 - \frac{2}{t}(\AA\TT)_{*,j} (\AA\TT)_{*,j}^T \AA\AA^T } \\
        & = \frac{1}{t}\left( \sum_{j \in [n]} \frac{(\AA_{j}\AA^T_{j})^2}{q_j} + (\AA\AA^T)^2 - \sum_{j \in [n]} 2 \AA_{j}\AA^T_{j} \AA\AA^T \right) \\
        & \preceq \frac{\trace{\AA+\NN}}{t \AA_{j,j}}\left( \sum_{j \in [n]}  (\AA_{j}\AA^T_{j})^2 \right) \\
        & \preceq \frac{c\phi_{\max}\sqrt{n}\|\AA\|_F \|\AA \|_2 }{t} \AA \AA^T \\
    \end{split}
\end{equation}
where we use linearity of expectation, $(\AA\AA^T)^2 \succeq 0$ and $\|\AA_{j,*}\|^2_2\leq \| \AA\|_2\cdot \AA_{j,j}$.
Applying the Matrix Bernstein inequality, 
\begin{equation*}
    \begin{split}
        \prob{}{\| \Y \|_2 \geq  \epsilon \| \AA \|^2_F } & \leq 2n \exp{\left(- \frac{ \epsilon^2 \| \AA \|^4_F }{  \frac{c\phi_{\max}\sqrt{n}\| \AA \|_F \|\AA \|^3_2 }{t}  +  \frac{2\phi_{\max}\sqrt{n}\|\AA\|_F \|\AA\|_2 \left( \epsilon \|\AA \|^2_F \right)}{3t} }\right)} \\
        & \leq 2n \exp{\left(-\frac{\epsilon^2t }{c\phi_{\max}\sqrt{n}\| \AA \|_F \|\AA \|_2 }\left(  \frac{  \|\AA \|^4_F }{\|\AA \|^2_2 + \epsilon \|\AA \|^2_F  }\right) \right)} \\
        & \leq 2n \exp{\left(- \frac{ \epsilon^2 t  }{c'\phi_{\max}\sqrt{n} } \right)} \\
        &\leq \delta/2
    \end{split}
\end{equation*}
where the last inequality follows from setting $t = O(\phi_{\max}\sqrt{n}\log(n/\delta)/\epsilon^2)$. To yield the claim, we set $\epsilon = \epsilon/k$. 
\end{proof}

\noindent We use the above spectral bound to show that sampling proportional to diagonal entries preserves the projection cost of the columns of $\AA$ on to any $k$-dimensional subspace up to an additive $(\epsilon +\sqrt{\eta}) \|\AA\|^2_F$.

\begin{theorem}(Column Projection-Cost Preservation.)
\label{thm:pcp_main_col}
Given $\AA + \NN$, where $\AA$ is an $n \times n$ PSD matrix and $\NN$ is an arbitrary noise matrix as defined above, $k \in \mathbb{Z}$ and $\epsilon>\eta >0$, let $q = \{q_1, q_2 \ldots q_n\}$ be a probability distribution over the columns of $\AA+\NN$ such that $q_j = \frac{(\AA+\NN)_{j,j}}{\trace{\AA+\NN}}$.  Let $t =  O\left(\phi_{\max}\sqrt{n}k^2\log(\frac{n}{\delta})/\epsilon^2\right)$. Then, construct $\CC$ using $t$ columns of $\AA+\NN$ and set each one to $\frac{(\AA+\NN)_{*,j}}{\sqrt{t q_j}}$ with probability $q_j$. With probability at least $1-c$, for any rank-$k$ orthogonal projection $\XX$,
\[
\| \CC - \XX\CC \|^2_F = \| \AA - \XX\AA \|^2_F \pm (\epsilon+ \sqrt{\eta})\| \AA \|^2_F
\]
for a fixed constant $c$. 
\end{theorem}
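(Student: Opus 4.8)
The plan is to pass to the trace formulation. Write $\MM = \AA+\NN$, so that $\CC$ consists of $t$ independent rescaled column samples of $\MM$ and $\CC\CC^{\top}$ is an unbiased estimator of $\MM\MM^{\top}$. Since $\XX$ is an orthogonal projection, $\II-\XX$ is a symmetric idempotent, so $\|\CC-\XX\CC\|^2_F=\trace{(\II-\XX)\CC\CC^{\top}}$ and $\|\AA-\XX\AA\|^2_F=\trace{(\II-\XX)\AA\AA^{\top}}$. Thus the theorem reduces to bounding $\bigl|\trace{(\II-\XX)(\CC\CC^{\top}-\AA\AA^{\top})}\bigr|$ uniformly over all rank-$k$ projections $\XX$, and I would split this as a deterministic noise term $\trace{(\II-\XX)(\MM\MM^{\top}-\AA\AA^{\top})}$ plus a random sampling term $\trace{(\II-\XX)(\CC\CC^{\top}-\MM\MM^{\top})}$.

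For the noise term, expand $\MM\MM^{\top}-\AA\AA^{\top}=\AA\NN^{\top}+\NN\AA^{\top}+\NN\NN^{\top}$. Since $\trace{(\II-\XX)\AA\NN^{\top}}=\langle(\II-\XX)\AA,\NN\rangle$ and $\II-\XX$ is a contraction, this is at most $\|\AA\|_F\|\NN\|_F\le\sqrt{\eta}\|\AA\|^2_F$, and symmetrically for $\NN\AA^{\top}$; also $\trace{(\II-\XX)\NN\NN^{\top}}=\|(\II-\XX)\NN\|^2_F\le\|\NN\|^2_F\le\eta\|\AA\|^2_F$. So the noise term contributes $O(\sqrt{\eta})\|\AA\|^2_F$. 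The crucial point is that this estimate carries no factor of $k$, which is exactly why the noise must be peeled off before applying any spectral bound below (routing the whole discrepancy through a $k\|\cdot\|_2$ estimate would cost $O(k\sqrt{\eta})\|\AA\|^2_F$).

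For the sampling term, set $\EE=\CC\CC^{\top}-\MM\MM^{\top}$ and write $\trace{(\II-\XX)\EE}=\trace{\EE}-\trace{\XX\EE}$. Here $\trace{\EE}=\|\CC\|^2_F-\|\MM\|^2_F$ is a sum of $t$ i.i.d.\ terms $\|\MM_{*,j}\|^2_2/(tq_j)$, and using $\|\MM_{*,j}\|^2_2\le 2(1+c)\|\AA_{*,j}\|^2_2\le 2(1+c)\|\AA\|_2\AA_{j,j}$ (Lemma \ref{lem:row_norm_spectral_norm}), $\AA_{j,j}/q_j\le\phi_{\max}\trace{\MM}$, and $\trace{\MM}\le\trace{\AA}+|\trace{\NN}|=O(\sqrt{n}\|\AA\|_F)$ (Cauchy-Schwarz), each term is $O(\phi_{\max}\sqrt{n}\|\AA\|^2_F/t)$; a scalar Bernstein bound then gives $|\trace{\EE}|\le\epsilon\|\AA\|^2_F$ with probability $1-\delta$ once $t=\widetilde{O}(\phi_{\max}\sqrt{n}/\epsilon^2)$, which is dominated by the stated sample size. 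For the other piece, $|\trace{\XX\EE}|\le\sum_{i\le k}\sigma_i(\EE)\le k\,\|\EE\|_2$ since $\XX$ has rank $k$, so it remains to show $\|\CC\CC^{\top}-\MM\MM^{\top}\|_2\le(\epsilon/k)\|\AA\|^2_F$. This is precisely the Matrix Bernstein computation behind Theorem \ref{thm:pcp1}, run with $\AA$ replaced by $\MM$: the per-sample operator-norm and variance bounds there use only $\|\AA_{*,j}\|^2_2\le\|\AA\|_2\AA_{j,j}$, the ratio $\AA_{j,j}/q_j\le\phi_{\max}\trace{\MM}$, and $\|\NN_{*,j}\|^2_2\le c\|\AA_{*,j}\|^2_2$, all of which still hold and keep the right-hand side proportional to $\|\AA\|^2_F$; this yields the spectral bound with probability $1-\delta$ once $t=O(\phi_{\max}\sqrt{n}k^2\log(n/\delta)/\epsilon^2)$, matching the claimed sample size.

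Union bounding over these two probabilistic events (with $\delta$ a small constant absorbed into $c$) and adding the deterministic noise estimate gives $\bigl|\trace{(\II-\XX)(\CC\CC^{\top}-\AA\AA^{\top})}\bigr|=O(\epsilon+\sqrt{\eta})\|\AA\|^2_F$ simultaneously for all rank-$k$ projections $\XX$; rescaling $\epsilon$ finishes the proof. The main obstacle is the spectral step: one must re-derive the Matrix Bernstein estimates of Theorem \ref{thm:pcp1} for the non-PSD matrix $\MM$, leaning on the PSD structure of $\AA$ (not of $\MM$) and on both hypotheses on $\NN$ to keep the dependence in terms of $\phi_{\max}$ and $\|\AA\|_F$. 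The other, more conceptual, point is recognizing that the noise contributes only an $O(\sqrt{\eta})$ additive error and hence must be separated from the sampling error before the $\trace{\XX\cdot}\le k\|\cdot\|_2$ bound is invoked.
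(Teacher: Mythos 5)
Your proposal is correct, but it routes the argument differently from the paper. The paper keeps the clean and noisy parts of the sampled matrix together: it writes $\CC = \AA\TT + \NN\TT$, expands $\CC\CC^{\top}$ into $(\AA\TT)(\AA\TT)^{\top}$ plus cross and noise terms, applies Theorem \ref{thm:pcp1} (the spectral bound for the uncomputable clean object $\AA\TT$) to the first term after writing $\XX = \ZZ\ZZ^{\top}$ and summing over the $k$ columns of $\ZZ$, and then disposes of $\trace{\XX(\NN\TT)(\NN\TT)^{\top}\XX}$ and the cross terms by Markov bounds on $\|\NN\TT\|_F$ together with Cauchy--Schwarz, which is where its $O(\sqrt{\eta})\|\AA\|_F^2$ error arises; the head term $\|\CC\|_F^2$ versus $\|\AA\|_F^2$ is a separate scalar Chernoff bound. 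You instead peel the noise off deterministically at the level of $\MM\MM^{\top}-\AA\AA^{\top}$ (Cauchy--Schwarz with $\|\NN\|_F\le\sqrt{\eta}\|\AA\|_F$, no factor of $k$ and no randomness), and then control the pure sampling error $\EE = \CC\CC^{\top}-\MM\MM^{\top}$ via $\trace{(\II-\XX)\EE}=\trace{\EE}-\trace{\XX\EE}$ with $|\trace{\XX\EE}|\le k\|\EE\|_2$. The one step that is not literally a citation is the spectral bound $\|\EE\|_2\le(\epsilon/k)\|\AA\|_F^2$: Theorem \ref{thm:pcp1} as stated concerns $\AA\TT(\AA\TT)^{\top}$ versus $\AA\AA^{\top}$, so you must indeed rerun Matrix Bernstein with $\MM$ in place of $\AA$, as you flag. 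That re-derivation does go through with the same $t$: the needed ingredients are $\|\MM_{*,j}\|_2^2\le 2(1+c)\|\AA_{*,j}\|_2^2\le O(\|\AA\|_2\AA_{j,j})$, $\AA_{j,j}/q_j\le\phi_{\max}\trace{\MM}=O(\phi_{\max}\sqrt{n}\|\AA\|_F)$, and $\|\MM\|_2^2\le 2\|\AA\|_2^2+2\eta\|\AA\|_F^2$, and the extra $\eta\|\AA\|_F^2$ in the last bound is harmless since the Bernstein denominator stays $O(\phi_{\max}\sqrt{n}\|\AA\|_F^4/t)$. In exchange for redoing that computation, your decomposition avoids the paper's probabilistic Markov steps for the noise and cross terms and makes transparent why the noise only costs $O(\sqrt{\eta})\|\AA\|_F^2$ independent of $k$; the paper's version, conversely, reuses Theorem \ref{thm:pcp1} verbatim and needs only one matrix concentration argument.
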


\begin{proof}
Here, the matrix $\CC$ is actually a matrix we can compute. Observe that we can relate $\CC$ to the sampling matrix $\TT$ as defined in Theorem \ref{thm:pcp1} as $\CC = (\AA + \NN)\TT$. 
We follow the proof strategy of the relative error guarantees in \cite{cohenmm17} and additive error guarantees in \cite{bakshi2018sublinear} but note, our spectral bounds from Theorem \ref{thm:pcp1} apply to matrices that we do not actually compute. 
Observe, $\| \AA - \XX\AA \|^2_F = \trace{ (\mathbb{I}-\XX) \AA \AA^T (\mathbb{I}-\XX )}$. Then, 
\begin{equation}
    \begin{split}
        \trace{ (\mathbb{I}-\XX) \AA \AA^T (\mathbb{I}-\XX)} & = \trace{  \AA \AA^T} + \trace{ \XX\AA \AA^T \XX} - \trace{\AA \AA^T \XX} - \trace{ \XX\AA \AA^T} \\
        & = \trace{  \AA \AA^T} + \trace{ \XX\AA \AA^T \XX} - \trace{\AA \AA^T \XX\XX} - \trace{ \XX\XX\AA \AA^T} \\
        & = \trace{  \AA \AA^T} + \trace{ \XX\AA \AA^T \XX} - \trace{\XX\AA \AA^T \XX} - \trace{ \XX\AA \AA^T \XX} \\
        & = \trace{  \AA \AA^T} - \trace{ \XX\AA \AA^T \XX}
    \end{split}
\end{equation}
where we used the fact that for any projection matrix $\XX= \XX^2$ in addition to the cyclic property of the trace.
Similarly, 
\begin{equation}
    \begin{split}
        \| \CC - \XX\CC \|^2_F  = \trace{(\mathbb{I}-\XX) \CC \CC^T (\mathbb{I}-\XX)} 
        & = \trace{  \CC \CC^T} - \trace{ \XX\CC \CC^T \XX} 
    \end{split}
\end{equation}

\noindent We first relate $\trace{  \AA \AA^T}$ and $\trace{  \CC \CC^T}$. Recall, $$\expecf{}{\trace{\CC \CC^T}} = \expecf{}{\|\CC\|^2_F} = \|\AA +\NN \|^2_F \leq \trace{\AA\AA^T} + 2\sqrt{\eta}\| \AA \|^2_F  $$
Using a scalar Chernoff bound, we show that with probability at least $1-1/\textrm{poly}(n)$, $\|\CC\|^2_F = (1\pm \epsilon)\|\AA+\NN\|^2_F$. This is equivalent to $| \|\CC \|^2_F - \|\AA +\NN\|^2_F | \leq \epsilon \| \AA +\NN\|^2_F$. Observe, for all $j \in [t]$, $\CC_{*,j} = \frac{1}{\sqrt{q_{j'} t}} (\AA+\NN)_{*,j'}$ for some $j'\in [n]$. Then, 
\begin{equation}
\begin{split}
    \|\CC_{*,j} \|^2_2  = \frac{1}{q_{j'} t} \| (\AA+\NN)_{*,j'} \|^2_2 
     & = \frac{\trace{\AA+\NN} \epsilon^2}{ \phi_{\max}\sqrt{n}k \log(n/\delta) (\AA+\NN)_{j',j'}} \| (\AA+\NN)_{*,j'} \|^2_2 \\
     & \leq \frac{c\sqrt{n}\|\AA\|_F \epsilon^2}{ \sqrt{n} \log(n/\delta)} \| \AA\|_2 \\
     & \leq \frac{c\epsilon^2}{ k \log(n/\delta)} \| \AA+\NN \|^2_F
\end{split}
\end{equation}
where we use $\trace{\AA} \leq \sqrt{n} \|\AA\|_F$, $\trace{\NN} \leq \sqrt{ \eta n }\|\AA\|_F$ and $t = O(\phi_{\max}\sqrt{n}k\log(n/\delta)/\epsilon^2)$.
Therefore, $\frac{ k \log(n/\delta)}{\epsilon^2 \|\AA \|^2_F} \|\CC_{*,j} \|^2_2 \in [0,1]$. By a Chernoff bound, 

\begin{equation}
    \begin{split}
        \prob{}{\|\CC \|^2_F \geq  (1 +2\epsilon) \|\AA+\NN \|^2_F } & = \prob{}{ \frac{k \log(n/\delta)}{\epsilon^2 \|\AA+\NN \|^2_F} \|\CC \|^2_F \geq \frac{k \log(n/\delta)}{\epsilon^2} (1 +\epsilon)} \\ 
        & \leq \exp\left(- \frac{k \epsilon^2 \log(n/\delta)}{\epsilon^2} \right)\\
        &\leq \frac{\delta}{2}
    \end{split}
\end{equation}
We can repeat the above argument to lower bound $\|\CC \|^2_F$.
Therefore, with probability $1- \delta$, we have 
\begin{equation*}
    |\|\CC \|^2_F - \|\AA +\NN\|^2_F| \leq \epsilon \|\AA+\NN\|^2_F
\end{equation*}
Here, we can upper bound this by observing $\|\AA + \NN \|^2_F \leq \| \|^2_F  + \|\NN\|^2_F + 2\langle\AA, \NN \rangle \leq \|\AA \|^2_F + 3\sqrt{\eta}\|\AA \|^2_F$. Therefore, 
\begin{equation}
\label{eqn:pcp1}
|\|\CC \|^2_F - \|\AA\|^2_F| \leq \epsilon\|\AA \|^2_F + (1+\epsilon)\sqrt{\eta}\|\AA\|^2_F \leq (\epsilon +2 \sqrt{\eta})\|\AA\|^2_F 
\end{equation}

\noindent Next, we relate $\trace{ \XX\CC \CC^T \XX} $ and $\trace{ \XX\AA \AA^T \XX}$. First, we observe 
\begin{equation}
    \CC \CC^T = (\AA\TT + \NN\TT)(\AA\TT + \NN\TT)^T = (\AA\TT)(\AA\TT)^T + (\AA\TT)(\NN\TT)^T + (\NN\TT)(\AA\TT)^T+(\NN\TT)(\NN\TT)^T
\end{equation}
We begin by first bounding $ \trace{ \XX(\AA\TT)(\AA\TT)^T \XX}$.
Observe, $\XX$ is a rank $k$ projection matrix and we can represent it as $\ZZ \ZZ^T$, where $\ZZ \in \mathbb{R}^{n \times k}$ and has orthonormal columns. By the cyclic property of the trace, we have 
$$
\trace{\ZZ \ZZ^T (\AA\TT)(\AA\TT)^T \ZZ \ZZ^T}  =\trace{\ZZ^T (\AA\TT)(\AA\TT)^T \ZZ}  =  \sum_{j \in [k]} \ZZ^T_{*,j}(\AA\TT)(\AA\TT)^T \ZZ_{*,j}
$$
Similarly, $\trace{\ZZ \ZZ^T \AA\AA^T  \ZZ \ZZ^T}  = \sum_{j \in [k]} \ZZ^T_{*,j} \AA \AA^T  \ZZ_{*,j}$. By Theorem \ref{thm:pcp1} , we have 

\begin{equation}
    \begin{split}
         \sum_{ j \in [k]} \left(  \ZZ^T_{*,j} \AA  \AA^T  \ZZ_{*,j} - \left(\frac{\epsilon}{k}  \right) \| \AA \|_F^2 \ZZ^T_{*,j} \II \ZZ_{*,j} \right)  &\leq  \sum_{j \in [k]} \left( \ZZ^T_{*,j} (\AA\TT)(\AA\TT)^T  \ZZ_{*,j}  \right) \\
         &\leq  \sum_{ j \in [k]} \left(  \ZZ^T_{*,j} \AA  \AA^T  \ZZ_{*,j} + \left(\frac{\epsilon}{k}  \right) \| \AA \|_F^2 \ZZ^T_{*,j} \II \ZZ_{*,j}\right)
     \end{split}
\end{equation}
Since $\ZZ^T_{*,j} \ZZ_{*,j} = 1$ and $\trace{\ZZ^T\AA\AA^T\ZZ} = \trace{\XX\AA\AA^T\XX}$, we have  
\begin{equation}
\label{eqn:pcp2_head}
    \trace{\XX \AA \AA^T \XX} - \epsilon  \| \AA \|_F^2 \leq  \trace{ \XX(\AA\TT)(\AA\TT)^T \XX} \leq \trace{\XX \AA \AA^T \XX} + \epsilon \| \AA \|_F^2 
\end{equation}
Next, we focus on $\trace{\XX (\NN\TT)(\NN\TT)^T\XX} = \|\XX\NN\TT \|^2_F$. Observe, since $\TT$ is an unbiased estimator of Frobenius norm, by Markov's inequality we can show with probability at least $1-c$, $\|\XX\NN\TT\|_F = c \| \NN\|_F = O(\sqrt{\eta})\|\AA\|_F$. Therefore, we can upper bound $\trace{\XX (\NN\TT)(\NN\TT)^T\XX}$ by $O(\eta)\|\AA \|^2_F$.  Now, we focus on the cross terms. By Cauchy-Schwartz, and a Markov bound, with probability at least $1-c$,
\begin{equation}
\label{eqn:pcp3}
    \trace{\XX(\AA\TT)(\NN\TT)^T\XX} \leq \|\AA \TT\|_F \cdot \|\NN\TT\|_F \leq O(\sqrt{\eta})\|\AA\|^2_F
\end{equation}

Combining equations \ref{eqn:pcp1}, \ref{eqn:pcp2_head}, \ref{eqn:pcp3} and union bounding over the success of the random events, with probability $1-c$, 
\begin{equation*}
    \| \AA - \XX\AA \|^2_F - O(\epsilon + \sqrt{\eta})\| \AA \|^2_F \leq \| \CC - \XX\CC \|^2_F \leq \| \AA - \XX\AA \|^2_F + O(\epsilon + \sqrt{\eta})\| \AA \|^2_F
\end{equation*}
\end{proof}

\noindent \textbf{Robust Row Projection Cost Preserving Sketches.} We now extend the diagonal sampling algorithm to construct a row projection cost preserving sketch for the matrix $\CC$. We note that following the construction for $\AA$ does not immediately give a row PCP for $\CC$ since $\CC$ is no longer PSD or even square matrix. Here, all previous approaches to construct a PCP with sublinear queries hit a roadblock, since the matrix $\CC$ need not have any well defined structure apart from being a scaled subset of the columns of a PSD matrix. However, we show that sampling rows of $\CC$ proportional to the diagonal entries of $\AA$ results in a row PCP.  

We begin by relating the row norms of $\CC$ to the row norms of $\AA$. Note, we do not expect to obtain concentration here, since such a sampling procedure would then help us estimate row norms of $\AA$ up to a constant and we would be done by using \cite{fkv04}. Therefore, we obtain the following one-sided guarantee: 
\begin{lemma}
\label{lem:row_norm_bound}
Let $\AA\TT \in \mathbb{R}^{n \times t}$ be a column projection-cost preserving sketch for $\AA$ as described in Theorem \ref{thm:pcp1}.
For all $i \in [n]$, with probability at least $1 - 1/n^c$,  $$\| (\AA\TT)_{i,*}\|^2_2 \leq O\left(\log(n)\max\left\{ \|\AA_{i,*}\|^2_2, \frac{ \phi_{\max} \sqrt{n}\|\AA \|_F\AA_{i,i}}{t} \right\}\right)$$ where $c$ is a fixed constant.
\end{lemma}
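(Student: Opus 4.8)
The plan is to fix a row index $i \in [n]$ and analyze $\|(\AA\TT)_{i,*}\|^2_2$ as a sum of $t$ independent nonnegative random variables, one per sampled column, and then apply a Bernstein-type concentration bound. Recall that $\AA\TT$ selects column $\AA_{*,j}$ with probability $q_j = (\AA+\NN)_{j,j}/\trace{\AA+\NN}$ and rescales by $1/\sqrt{t q_j}$; the $i$-th entry of the $\ell$-th sampled column is then $\AA_{i,j}/\sqrt{t q_j}$ where $j$ is the (random) index chosen for slot $\ell$. So $\|(\AA\TT)_{i,*}\|^2_2 = \sum_{\ell \in [t]} Y_\ell$ where $Y_\ell = \AA_{i,j}^2/(t q_j)$ with probability $q_j$. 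First I would compute $\expecf{}{Y_\ell} = \frac{1}{t}\sum_{j} \AA_{i,j}^2 = \frac{1}{t}\|\AA_{i,*}\|^2_2$, so that $\expecf{}{\|(\AA\TT)_{i,*}\|^2_2} = \|\AA_{i,*}\|^2_2$, establishing that the claimed bound holds in expectation up to the $\log(n)$ factor and the second term in the max.

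Next I would bound the per-sample maximum. For any $j$, $Y_\ell \le \frac{\AA_{i,j}^2}{t q_j} = \frac{\trace{\AA+\NN}}{t(\AA+\NN)_{j,j}}\AA_{i,j}^2$. Using $\AA_{i,j}^2 \le \AA_{i,i}\AA_{j,j}$ (Cauchy–Schwarz on the rows of $\AA^{1/2}$, as noted in the Preliminaries), this is at most $\frac{\trace{\AA+\NN}}{t}\cdot\frac{\AA_{i,i}\AA_{j,j}}{(\AA+\NN)_{j,j}} \le \frac{\phi_{\max}\AA_{i,i}\trace{\AA+\NN}}{t}$ by definition of $\phi_{\max}$. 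Then $\trace{\AA+\NN}\le \trace{\AA}+|\trace{\NN}| = O(\sqrt{n}\|\AA\|_F)$ by Cauchy–Schwarz on singular values together with $\|\NN\|_F \le \sqrt{\eta}\|\AA\|_F$ and $\eta = O(1)$. Hence $Y_\ell = O\!\left(\frac{\phi_{\max}\sqrt{n}\|\AA\|_F\,\AA_{i,i}}{t}\right) =: R$ almost surely. I would also need a variance bound: $\sum_\ell \expecf{}{Y_\ell^2} \le t\cdot R\cdot \expecf{}{Y_\ell} = R\cdot\|\AA_{i,*}\|^2_2$, since $Y_\ell \le R$ pointwise.

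With mean $\mu_i := \|\AA_{i,*}\|^2_2$, variance proxy $\sigma^2 \le R\mu_i$, and uniform bound $R$, a Bernstein inequality gives, for deviation $\lambda$, $\Pr[\sum_\ell Y_\ell \ge \mu_i + \lambda] \le \exp\!\left(-\Omega\!\left(\frac{\lambda^2}{R\mu_i + R\lambda}\right)\right)$. Choosing $\lambda = \Theta(\max\{\mu_i, R\}\log n)$ makes the exponent $\Omega(\log n)$, so the bad event has probability at most $1/n^c$. This yields $\|(\AA\TT)_{i,*}\|^2_2 \le \mu_i + O(\log n)\max\{\mu_i, R\} = O(\log n)\max\{\|\AA_{i,*}\|^2_2, R\}$, which is exactly the claim. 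I do not expect a genuine obstacle here — the statement is a one-sided tail bound and all the needed inequalities ($\AA_{i,j}^2 \le \AA_{i,i}\AA_{j,j}$, $\|\AA_j\|^2_2 \le \|\AA\|_2 \AA_{j,j}$ from Lemma \ref{lem:row_norm_spectral_norm}, trace-to-Frobenius via Cauchy–Schwarz) are already in hand; the only mild subtlety is being careful that the sampling is of columns of $\AA$ (not $\AA+\NN$) inside the spectral-bound machinery of Theorem \ref{thm:pcp1}, whereas here we genuinely look at $\AA\TT$ directly, so the analysis is self-contained and the factor $R$ is precisely what the per-sample truncation forces. The one place to be slightly cautious is that the number of samples $t$ appears only through $R = \Theta(\phi_{\max}\sqrt{n}\|\AA\|_F\AA_{i,i}/t)$, so larger $t$ only helps; no lower bound on $t$ is needed for this particular lemma.
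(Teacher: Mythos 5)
Your proposal is correct and follows essentially the same route as the paper's proof: expressing $\|(\AA\TT)_{i,*}\|^2_2$ as a sum of independent terms with mean $\|\AA_{i,*}\|^2_2$, bounding the per-sample maximum and variance via $\AA_{i,j}^2 \leq \AA_{i,i}\AA_{j,j}$, the definition of $\phi_{\max}$, and $\trace{\AA+\NN} = O(\sqrt{n}\|\AA\|_F)$, and then applying Bernstein with a deviation of order $\log(n)\max\{\|\AA_{i,*}\|^2_2,\, \phi_{\max}\sqrt{n}\|\AA\|_F\AA_{i,i}/t\}$. The only cosmetic difference is that you keep the variance as $R\,\|\AA_{i,*}\|^2_2$ while the paper converts it via AM--GM into $R^2 + \|\AA_{i,*}\|^4_2$; both feed into Bernstein identically.
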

\begin{proof}
Observe that $\| (\AA\TT)_{i,*}\|^2_2 = \sum_{j \in [t]} (\AA\TT)_{i,j}^2$, where  $(\AA\TT)_{i,j}^2=\frac{\trace{\AA+\NN}}{t(\AA+\NN)_{j,j}} \AA_{i,j}^2$ 
with probability $\frac{(\AA+\NN)_{j,j}}{\trace{\AA+\NN}}$. Then, 
$\expec{}{\| (\AA\TT)_{i,*}\|^2_2} =  \sum^{n}_{i=1}\AA^2_{i,j} =
\|\AA_{i,*} \|^2_2$. Next, we compute the variance of $\|(\AA\TT)_{i,*}\|^2_2$. $\varf{\|(\AA\TT)_{i,*}\|^2_2} = t \varf{(\AA\TT)^2_{i,j}} \leq \expecf{}{(\AA\TT)^4_{i,j}}$. Then, 
\begin{equation*}
    \begin{split}
        t \expecf{}{(\AA\TT)^4_{i,j}}  =  \sum_{j \in [n]} \frac{1}{t q_j}\AA_{i,j}^4
        & \leq \sum_{j \in [n]} \frac{\trace{\AA+\NN}}{t (\AA+\NN)_{j,j}}\AA_{i,j}^2 \AA_{i,i} \AA_{j,j} \\
        & \leq \frac{\trace{\AA +\NN}\phi_{\max}\AA_{i,i}}{t}\|\AA_{i,*}\|^2_2 \\
        & \leq \left(\frac{2\phi_{\max}\sqrt{n}\|\AA\|_F\AA_{i,i}}{t}\right)^2 + \|\AA_{i,*}\|^4_2 \hspace{0.4in} \textrm{ [AM-GM]} \\
    \end{split}
\end{equation*}
where we use $\AA^2_{i,j} \leq \AA_{i,i}\AA_{j,j}$, which follows from applying Cauchy-Schwarz to $\langle \AA^{1/2}_{i,*},\AA^{1/2}_{j,*}\rangle$, i.e.,
$$\AA^2_{i,j} = \langle \AA^{1/2}_{i,*},\AA^{1/2}_{j,*}\rangle^2 \leq  \|\AA^{1/2}_{i,*}\|^2_2 \|\AA^{1/2}_{j,*}\|^2_2 = \AA_{i,i}\AA_{j,j} $$
Similarly, we bound $$(\AA\TT)^2_{i,j} = \frac{\trace{\AA+\NN}}{t(\AA+\NN)_{j,j}} \AA^2_{i,j} \leq \frac{2\phi_{\max}\sqrt{n}\|\AA\|_F}{t} \AA_{i,i}$$
Applying Bernstein's inequality, 
\begin{equation*}
    \begin{split}
        &\prob{}{|\|(\AA\TT)_{i,*} \|^2_2 - \|\AA_{i,*} \|^2_2  \geq  \gamma \max\left\{ \|\AA_{i,*}\|^2_2, \frac{ \phi_{\max} \sqrt{n}\|\AA \|_F\AA_{i,i}}{t} \right\}} \\
         & \leq 2 \exp\left(- \frac{\gamma^2 \max\left\{ \|\AA_{i,*}\|^4_2,  \left(\frac{ \phi_{\max} \sqrt{n}\|\AA \|_F\AA_{i,i}}{t}\right)^2 \right\}  }{ \left(\frac{2\phi_{\max}\sqrt{n}\|\AA\|_F\AA_{i,i}}{t}\right)^2 + \|\AA_{i,*}\|^4_2 + \gamma \max\left\{ \frac{ \phi_{\max} \sqrt{n}\|\AA \|_F\AA_{i,i} \|\AA_{i,*}\|^2_2}{t}, \left(\frac{ \trace{\AA}\AA_{i,i}}{t}\right)^2 \right\} } \right) \\
        & \leq 2 \exp\left(- \frac{\gamma \max\left\{ \|\AA_{i,*}\|^4_2,  \left(\frac{ \phi_{\max} \sqrt{n}\|\AA \|_F\AA_{i,i}}{t}\right)^2 \right\}  }{c' \left(\frac{\phi_{\max} \sqrt{n}\|\AA \|_F\AA_{i,i}}{t}\right)^2 + c'\|\AA_{i,*}\|^4_2  } \right) \\
    \end{split}
\end{equation*}
where  $\frac{\phi_{\max} \sqrt{n}\|\AA \|_F\AA_{i,i}}{t}\|\AA_{i,*}\|^2_2 \leq \left( \frac{\phi_{\max} \sqrt{n}\|\AA \|_F\AA_{i,i}}{t}\right)^2 + \| \AA_{i,*}\|^4_2$ follows from the AM-GM inequality. Setting $\gamma= \Omega(\log(n))$ completes the proof.  


\end{proof}

To construct a row projection cost preserving sketch of $\CC$, we sample $t$ rows of $\CC$ proportional to the corresponding diagonal entries of $\AA$. Formally, we consider a probability distribution, $p =\{p_1, p_2, \ldots p_{n}\}$, over the rows of $\CC$ such that $p_{i} = \frac{\AA_{i,i}}{\trace{\AA}}$. Let $\RR$ be a $t \times t$ matrix where each row of $\RR$ is set to $\frac{1}{\sqrt{tp_i}}\CC_{i,*}$ with probability $p_i$. As before, $\RR $ can be represented as $\SS \CC = \SS(\AA\TT + \NN \TT)$. We first obtain a spectral guarantee for $\SS\AA\TT$, while we cannot actually compute this. 

\begin{theorem}(Spectral Bounds.)\label{thm:pcp_spectral_row}
Let $\AA\TT$ be an $n \times t$ matrix constructed as shown in Theorem \ref{thm:pcp1}. Let $p = \{p_1, p_2 \ldots p_n\}$ be a probability distribution over the rows of $\AA\TT$ such that $p_i = \frac{(\AA+\NN)_{i,i}}{\trace{\AA+\NN}}$.  Let $t =  O\left(\frac{\sqrt{n} k^2}{\epsilon^2}\log(\frac{n}{\delta})\right)$. Construct a sampling matrix $\SS$ that samples $t$ rows of $\AA\TT$ such that row $(\AA\TT)_{i,*}$ is picked with with probability $p_i$ and scaled by $\frac{1}{\sqrt{t p_i}}$ . Then, with probability at least $1-\delta$,
\[
(\AA\TT)^T(\AA\TT)  - \frac{\epsilon}{k}\| \AA\|^2_F \II \preceq (\SS\AA\TT)^T (\SS\AA\TT) \preceq (\AA\TT)^T(\AA\TT) +  \frac{\epsilon}{k}\| \AA\|^2_F\II
\]
\end{theorem}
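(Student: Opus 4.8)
\noindent The plan is to establish the two-sided operator inequality via the Matrix Bernstein inequality, following the template of the proof of Theorem \ref{thm:pcp1} but with rows and columns interchanged, and with the extra wrinkle that the matrix $\AA\TT$ being sub-sampled is itself random and is never explicitly formed. Write $\MM \defeq \AA\TT \in \mathbb{R}^{n\times t}$ with rows $\mm_i \defeq \MM_{i,*}$, and set $\YY \defeq (\SS\MM)^\top(\SS\MM) - \MM^\top\MM$. Then $\YY = \sum_{\ell\in[t]}\XX_\ell$ where $\XX_\ell = \frac1t\big(\frac{1}{p_i}\mm_i^\top\mm_i - \MM^\top\MM\big)$ with probability $p_i$, the $\XX_\ell$ are i.i.d., and $\expecf{}{\XX_\ell}=0$ so $\expecf{}{\YY}=0$. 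It then remains to control $L \defeq \max_\ell\|\XX_\ell\|_2$ and the matrix variance $\sigma^2 \defeq \big\|\expecf{}{\YY^2}\big\|_2$ and to invoke Bernstein.

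First I would condition on the joint success event of (i) Theorem \ref{thm:pcp1}, which in particular controls $\|\MM\|_2^2 = \|\AA\TT\|_2^2 \le \|\AA\|_2^2 + \frac{\epsilon}{k}\|\AA\|_F^2$, and (ii) Lemma \ref{lem:row_norm_bound} union-bounded over all $n$ indices, which gives $\|\mm_i\|_2^2 = O(\log n)\max\{\|\AA_{i,*}\|_2^2,\ \phi_{\max}\sqrt n\|\AA\|_F\AA_{i,i}/t\}$ simultaneously for all $i$. On this event I bound, using $p_i \ge \AA_{i,i}/(\phi_{\max}\trace{\AA+\NN})$ (from the definition of $\phi_{\max}$), $\trace{\AA+\NN} \le \trace{\AA}+|\trace{\NN}| = O(\sqrt n)\|\AA\|_F$ (Cauchy--Schwarz with $\|\NN\|_F \le \sqrt\eta\|\AA\|_F$ and $\eta=O(1)$), and $\|\AA_{i,*}\|_2^2 \le \|\AA\|_2\AA_{i,i}$ (Lemma \ref{lem:row_norm_spectral_norm}):
\[
\tfrac{1}{tp_i}\|\mm_i\|_2^2 \;\le\; \widetilde O\!\left(\tfrac{\phi_{\max}\sqrt n\,\|\AA\|_F}{t}\right)\max\!\left\{\|\AA\|_2,\ \tfrac{\phi_{\max}\sqrt n\,\|\AA\|_F}{t}\right\},
\]
so by sub-additivity of the operator norm $L = \widetilde O\big(\tfrac{\phi_{\max}\sqrt n\,\|\AA\|_F\|\AA\|_2}{t} + \tfrac{\phi_{\max}^2 n\,\|\AA\|_F^2}{t^2}\big)$. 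For the variance, $\expecf{}{\YY^2} = t\,\expecf{}{\XX_\ell^2} \preceq \frac1t\sum_i \frac{1}{p_i}\|\mm_i\|_2^2\,\mm_i^\top\mm_i \preceq (tL')\cdot\frac1t\MM^\top\MM = L'\,\MM^\top\MM$ for the same per-row scalar bound $L' = \widetilde O(L)$, hence $\sigma^2 \le L'\,\|\MM\|_2^2 \le L'\big(\|\AA\|_2^2 + \frac{\epsilon}{k}\|\AA\|_F^2\big)$.

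Plugging $L$ and $\sigma^2$ into Matrix Bernstein with deviation $\epsilon'\|\AA\|_F^2$ gives failure probability $2t\exp\!\big(-\Omega(\epsilon'^2 t/(\phi_{\max}\sqrt n\cdot\mathrm{polylog}(n)))\big)$; taking $t$ of the order in the theorem statement (equivalently $t = O(\phi_{\max}\sqrt n\log(n/\delta)/\epsilon'^2)$ before the substitution $\epsilon' = \epsilon/k$) makes this at most $\delta/2$, and a union bound with events (i)--(ii) yields the claim with probability $1-\delta$. The main obstacle is exactly the two-level randomness: because $\MM = \AA\TT$ is random and only implicitly defined, there is no clean deterministic spectral bound on its rows to feed into Bernstein, so one must first land on the high-probability event where Lemma \ref{lem:row_norm_bound} holds for \emph{every} row --- and the ``large'' branch $\phi_{\max}\sqrt n\|\AA\|_F\AA_{i,i}/t$ of that lemma's $\max$ is precisely what forces the extra $\phi_{\max}$-dependence relative to the noiseless setting --- and only then run Bernstein over the fresh randomness of $\SS$. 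A minor recurring subtlety, inherited from Theorem \ref{thm:pcp1}, is that $\AA\TT$ is never materialized, so this is a purely structural spectral bound, to be combined later with a rank-$k$ projection $\XX=\ZZ\ZZ^\top$ and summed over the $k$ coordinate directions to give the $O(\epsilon)\|\AA\|_F^2$ additive row-PCP guarantee for $\CC$.
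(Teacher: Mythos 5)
Your proposal is correct and follows essentially the same route as the paper's proof: the identical i.i.d.\ decomposition $\YY=\sum_\ell \XX_\ell$, conditioning on the high-probability row-norm bounds of Lemma \ref{lem:row_norm_bound} (together with Lemma \ref{lem:row_norm_spectral_norm} and the spectral control of $\AA\TT$ from Theorem \ref{thm:pcp1}) to bound $\|\XX_\ell\|_2$ and the matrix variance, and then Matrix Bernstein at deviation $\epsilon/k\cdot\|\AA\|_F^2$ with a union bound over the conditioning events. The only differences are cosmetic (you make the union bound over all $n$ rows explicit and state the dimension factor as $t$ rather than $n$), so no further comparison is needed.
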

\begin{proof}
Let $\Y = (\SS\AA\TT)^T(\SS\AA\TT) - (\AA\TT)^T(\AA\TT)$. For notational 
convenience let $(\AA\TT)_{i} = (\AA\TT)_{{i,*}}$ and $(\SS\AA\TT)_{i} = (\SS\AA\TT)_{{i,*}}$. We can then write $\Y = \sum_{i \in [t]} \left((\SS\AA\TT)^T_{{i}} 
(\SS\AA\TT)_{{i}} -\frac{1}{t}(\AA\TT)^T(\AA\TT)\right) = 
\sum_{i\in [t]} \XX_i$, where $\XX_i = 
\frac{1}{t}\left(\frac{1}{p_i}(\AA\TT)^T_{i}(\AA\TT)_{i} - (\AA\TT)^T(\AA\TT) \right)$ with probability $p_i$. We 
observe that $\expec{}{\XX_i} = \expec{}{(\SS\AA\TT)^T_{i} (\SS\AA\TT)_{i} 
-\frac{1}{t}(\AA\TT)^T(\AA\TT)}= \sum_{i}\frac{p_i}{p_i}(\AA\TT)^T_{i}(\AA\TT)_{i} - (\AA\TT)^T 
(\AA\TT)= 0$, and therefore, $\expec{}{\Y} =0$. Next,
we bound the operator norm of $\Y$. To this end,
we use the Matrix Bernstein inequality, which in turn requires a bound on the operator norm of 
$\XX_i$ and variance of $\Y$. 
Recall, for some $i' \in [n]$
\begin{equation}
\label{eqn:spec_x_row}
    \begin{split}
        \| \XX_i \|_2  & = \left\| \frac{1}{tp_{i'}}(\AA\TT)^T_{{i'}}(\AA\TT)_{{i'}} - \frac{1}{t}(\AA\TT)^T(\AA\TT) \right\|_2 \\
        & \leq \frac{1}{t p_{i'}} \|(\AA\TT)^T_{{i'}}(\AA\TT)_{{i'}} \|_2 + \frac{1}{t} \| (\AA\TT)^T(\AA\TT)\|_2 \\
        & = \frac{ \|(\AA\TT)_{{i'}} \|^2_2}{tp_{i'}} + \frac{\| (\AA\TT) \|^2_2}{t}\\
        & \leq \frac{\log(n)}{tp_{i'}} \max\left\{ \|\AA_{i',*}\|^2_2, \frac{ \phi_{\max} \sqrt{n}\|\AA \|_F \AA_{i,i}}{t} \right\} + \frac{\|(\AA\TT)\|^2_2}{t} \hspace{0.5in} \textrm{[by Lemma \ref{lem:row_norm_bound}]}\\
        & \leq \frac{\log(n)}{t}\max\left\{ \phi_{\max} \sqrt{n}\|\AA \|_F\|\AA\|_2 , \frac{(\phi_{\max} \sqrt{n}\|\AA \|_F)^2}{t}, \|(\AA\TT)\|^2_2\right\} \hspace{0.55in} \textrm{[by Lemma \ref{lem:row_norm_spectral_norm}]}\\
        & \leq \frac{\phi_{\max}\sqrt{n}\log(n)\| \AA\|^2_F}{t}\left(1 + \frac{\sqrt{n}}{t}\right) \leq  \frac{2\phi_{\max}\sqrt{n}\log(n)\| \AA\|^2_F}{t}
    \end{split}
\end{equation}
where the last inequality uses that $t = \Omega(\sqrt{n})$.
Next, we bound $\var{\Y}\leq \expecf{}{\Y^2} $ as follows 

\begin{equation}
\label{eqn:spec_y_row}
    \begin{split}
        \expecf{}{\Y^2} & =  t \left(\sum_{i \in [n]} \frac{p_i}{t^2p_i^2} ((\AA\TT)^T_{i}(\AA\TT)_{i})^2 + \frac{1}{t^2}((\AA\TT)^T(\AA\TT))^2 - \sum_{i \in [n]} \frac{2p_i}{p_i t^2} (\AA\TT)^T_{i} (\AA\TT)_{i} (\AA\TT)^T (\AA\TT) \right) \\
        & = \frac{1}{t}\left( \sum_{i \in [n]} \frac{((\AA\TT)^T_{i}(\AA\TT)_{i})^2}{p_i} + ((\AA\TT)^T(\AA\TT))^2 - \sum_{i \in [n]} 2 (\AA\TT)^T_{i}(\AA\TT)_{i} (\AA\TT)^T(\AA\TT) \right) \\
        & \preceq \frac{1}{t}\left(  \sum_{i \in [n]} \frac{((\AA\TT)^T_{i}(\AA\TT)_{i})^2}{p_i} \right) \\
        & \preceq \frac{\log(n)}{t} \max\left\{\phi_{\max}\sqrt{n}\|\AA\|_F \|\AA\|_2 , \frac{(\phi_{\max}\sqrt{n}\|\AA\|_F)^2}{t} \right\} \left(  \sum_{i \in [n]} (\AA\TT)^T_{i}(\AA\TT)_{i} \right)  \\
        & \preceq \frac{c\log(n) \sqrt{n}\|\AA \|^2_F \|(\AA\TT) \|^2_2 }{t}\II_{n\times n}
    \end{split}
\end{equation}
where we again use Lemma \ref{lem:row_norm_spectral_norm} and Theorem \ref{thm:pcp_main_col}. Observe, 

Applying the Matrix Bernstein inequality with equations \ref{eqn:spec_x_row} and \ref{eqn:spec_y_row} 

\begin{equation}
\label{eqn:case1}
    \begin{split}
        \prob{}{\| \Y \|_2 \geq \epsilon \|\AA \|^2_F }  
        & \leq 2n \exp{\left(- \frac{\epsilon^2 \| \AA\|^4_F}{  \frac{c\log(n)\sqrt{n}\phi_{\max}\|\AA\|^2_F  \|(\AA\TT) \|^2_2 }{t}  +  \frac{\epsilon \sqrt{n}\log(n)\phi_{\max}}{3t}\| \AA \|^4_F }\right)} \\
        & \leq 2n \exp{\left(-  \frac{\epsilon^2 t }{  c' \phi_{\max}\sqrt{n} \log(n)  }\right)}\\
    \end{split}
\end{equation}
where the second inequality uses Theorem \ref{thm:pcp1}, to conclude that with probability at least $1-\delta/2$ $\|\AA\TT \|^2_2 \leq \|\AA \|^2_2 +\epsilon/k\|\AA \|^2_F \leq O(\|\AA \|^2_F) $.
Therefore, it suffices to set $t = \frac{c'\phi_{\max} \sqrt{n}\log(n))}{\epsilon^2}\log(n/\delta)$, to bound the above probability by $\delta/2$. Union bounding over the error for both PCPs, and setting $\epsilon = \epsilon/k$, we can conclude that with probability at least $1-\delta$, 
\[
(\AA\TT)^T(\AA\TT)  - \frac{\epsilon}{k}\| \AA\|^2_F \II \preceq (\SS\AA\TT)^T (\SS\AA\TT) \preceq (\AA\TT)^T(\AA\TT) +  \frac{\epsilon}{k}\| \AA\|^2_F\II
\]
when $t = \Omega\left( \frac{\phi_{\max} \sqrt{n} k^2}{\epsilon^2} \right)$.

\end{proof}

We use the spectral bound from Theorem \ref{thm:pcp_spectral_row} to obtain a row projection-cost preservation guarantee.  We follow the same proof strategy as Theorem \ref{thm:pcp_main_col}, while requiring modified version of the scalar Chernoff bound. We do away with the head-tail split from \cite{cohenmm17},\cite{mw17} and \cite{bakshi2018sublinear} and analyze the projection-cost guarantee directly. This enables us to obtain a better $\epsilon$ dependence than \cite{mw17} and \cite{bakshi2018sublinear}. Note, our $\epsilon$ dependence matches that of \cite{cohenmm17} but our row projection cost preserving sketch can be computed in sub-linear time, albeit for PSD matrices. 

\begin{theorem}(Row Projection-Cost Preservation.)
\label{thm:pcp_main_row}
Given as input $\AA+\NN$ let $\CC$ be an $n \times t$ matrix as defined in Theorem \ref{thm:pcp_main_col} such that $\CC = \AA\TT + \NN \TT$. Let $p = \{p_1, p_2 \ldots p_n\}$ be a probability distribution over the rows of $\CC$ such that $p_j = \frac{(\AA+\NN)_{j,j}}{\trace{\AA+\NN}}$.  Let $t =  O\left(\frac{\phi_{max}\sqrt{n}k^2 \log^2(n)}{\epsilon^2}\right)$. Then, construct $\RR$ using $t$ rows of $\CC$ and set each one to $\frac{\CC_{i,*}}{\sqrt{t p_i}}$ with probability $p_i$. With probability at least $1-c$, for any rank-$k$ orthogonal projection $\XX$,
\[
\| \RR - \RR\XX\|^2_F = \| \CC - \CC\XX\|^2_F \pm O(\epsilon+\sqrt{\eta})\| \AA \|^2_F
\]
for a fixed constant $c$. 
\end{theorem}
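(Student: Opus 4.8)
\textbf{Proof sketch for Theorem \ref{thm:pcp_main_row}.}
The plan is to mirror the proof of Theorem \ref{thm:pcp_main_col}, replacing the column–sampling matrix $\TT$ by the row–sampling matrix $\SS$ and invoking the spectral guarantee of Theorem \ref{thm:pcp_spectral_row} in place of Theorem \ref{thm:pcp1}. First I would use the projection identity $\|\RR-\RR\XX\|^2_F = \trace{\RR^\top\RR} - \trace{\XX\RR^\top\RR\XX}$ (valid because $\XX=\XX^2=\XX^\top$), and likewise $\|\CC-\CC\XX\|^2_F = \trace{\CC^\top\CC}-\trace{\XX\CC^\top\CC\XX}$, so that it suffices to establish (i) $\|\RR\|^2_F = \|\CC\|^2_F \pm O(\epsilon+\sqrt\eta)\|\AA\|^2_F$, and (ii) $\trace{\XX\RR^\top\RR\XX} = \trace{\XX\CC^\top\CC\XX} \pm O(\epsilon+\sqrt\eta)\|\AA\|^2_F$, for every rank-$k$ orthogonal projection $\XX$.

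For (ii) I would expand $\CC=\AA\TT+\NN\TT$, so that $\RR^\top\RR=\CC^\top\SS^\top\SS\CC$ splits into the $(\SS\AA\TT)^\top(\SS\AA\TT)$ term, the two cross terms, and the $(\SS\NN\TT)^\top(\SS\NN\TT)$ term, and similarly for $\CC^\top\CC$. For the main term, write $\XX=\ZZ\ZZ^\top$ with $\ZZ\in\mathbb{R}^{n\times k}$ having orthonormal columns; Theorem \ref{thm:pcp_spectral_row} gives $(\SS\AA\TT)^\top(\SS\AA\TT) = (\AA\TT)^\top(\AA\TT) \pm \tfrac{\epsilon}{k}\|\AA\|^2_F\II$, so evaluating the quadratic form on each of the $k$ columns of $\ZZ$ and summing contributes total error $\epsilon\|\AA\|^2_F$, exactly as in \eqref{eqn:pcp2_head}. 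The noise contributions are handled crudely: since $\SS$ acts row-wise as an unbiased estimator of squared Frobenius norm, $\expecf{}{\|\SS\NN\TT\|^2_F}=\|\NN\TT\|^2_F$ and $\expecf{}{\|\NN\TT\|^2_F}=\|\NN\|^2_F\le\eta\|\AA\|^2_F$, so by Markov $\|\SS\NN\TT\|^2_F = O(\eta)\|\AA\|^2_F$ with constant probability, which bounds $\trace{\XX(\SS\NN\TT)^\top(\SS\NN\TT)\XX}$; the cross terms are then $\le \|\SS\AA\TT\|_F\|\SS\NN\TT\|_F = O(\sqrt\eta)\|\AA\|^2_F$, once we know $\|\SS\AA\TT\|^2_F=O(\|\AA\|^2_F)$, which follows from part (i) combined with Theorem \ref{thm:pcp_main_col}. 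The analogous expansion of $\CC^\top\CC$ shows its noise terms are likewise $O(\eta)\|\AA\|^2_F$ and $O(\sqrt\eta)\|\AA\|^2_F$, so (ii) follows.

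For (i) I would prove $\|\SS\AA\TT\|^2_F=(1\pm\epsilon)\|\AA\TT\|^2_F$ by a scalar Bernstein/Chernoff bound, then decompose $\|\RR\|^2_F$ and $\|\CC\|^2_F$ via $\|a+b\|^2=\|a\|^2+\|b\|^2+2\langle a,b\rangle$ with $a=\SS\AA\TT$ (resp.\ $\AA\TT$) and $b=\SS\NN\TT$ (resp.\ $\NN\TT$), absorbing the $\NN\TT$-terms into the $O(\sqrt\eta)\|\AA\|^2_F$ slack exactly as above. The delicate input is the per-sample bound required by Bernstein: one needs $\|(\AA\TT)_{i,*}\|^2_2/(t p_i) = \tfrac{\trace{\AA+\NN}}{t(\AA+\NN)_{i,i}}\|(\AA\TT)_{i,*}\|^2_2$ to be $O\!\big(\tfrac{\epsilon^2}{k^2\log n}\|\AA\|^2_F\big)$ for every $i$. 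This is precisely where Lemma \ref{lem:row_norm_bound} is used: it bounds $\|(\AA\TT)_{i,*}\|^2_2$ by $O\!\big(\log n\cdot\max\{\|\AA_{i,*}\|^2_2,\ \phi_{\max}\sqrt n\|\AA\|_F\AA_{i,i}/t\}\big)$; combining this with $p_i=(\AA+\NN)_{i,i}/\trace{\AA+\NN}\ge \AA_{i,i}/(\phi_{\max}\trace{\AA+\NN})$, with $\trace{\AA+\NN}=O(\sqrt n\|\AA\|_F)$ (Cauchy--Schwarz on the spectra of $\AA$ and $\NN$), and with $\|\AA_{i,*}\|^2_2\le\|\AA\|_2\AA_{i,i}\le\|\AA\|_F\AA_{i,i}$ (Lemma \ref{lem:row_norm_spectral_norm}), a short case analysis over the two branches of the maximum shows that the choice $t=\Omega(\phi_{\max}\sqrt n k^2\log^2 n/\epsilon^2)$ forces the ratio down to the required level, so Bernstein yields the bound with probability $1-1/\mathrm{poly}(n)$ after a union bound over the $n$ rows and over the failure event of Lemma \ref{lem:row_norm_bound}.

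The main obstacle is exactly the point stressed in the surrounding text: the row norms of $\CC$ genuinely do not concentrate under this sampling scheme (a two-sided estimate would let us read off row norms of $\AA$ in sublinear time, which is impossible), so the whole argument rests on the one-sided bound of Lemma \ref{lem:row_norm_bound} being sharp enough. Making it go through requires managing the maximum of the two regimes carefully and verifying that the extra $\log^2 n$ factor in $t$ relative to the column PCP is exactly what is needed both for the per-sample Bernstein bound and for the union bound over rows; and, departing from \cite{cohenmm17,mw17,bakshi2018sublinear}, one must observe that no head--tail decomposition of $\XX$ is needed, since Theorem \ref{thm:pcp_spectral_row} already controls the full quadratic form $\trace{\XX(\cdot)\XX}$ directly --- which is what yields the improved $\epsilon$-dependence claimed in the theorem.
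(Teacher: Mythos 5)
Your proposal is correct and follows essentially the same route as the paper's proof: the projection identity reducing matters to Frobenius-norm and $\trace{\XX(\cdot)\XX}$ terms, the spectral bound of Theorem \ref{thm:pcp_spectral_row} applied columnwise to $\ZZ$ for the main quadratic form, Markov plus Cauchy--Schwarz for the noise and cross terms, and a scalar Chernoff/Bernstein bound driven by the one-sided row-norm estimate of Lemma \ref{lem:row_norm_bound} for the Frobenius-norm term, with no head--tail split needed. The only cosmetic difference is that you concentrate $\|\SS\AA\TT\|^2_F$ around $\|\AA\TT\|^2_F$ and fold the noise in afterwards, whereas the paper applies the Chernoff bound directly to $\|\RR\|^2_F$ (rows of $\CC$ including noise, using $\|(\NN\TT)_{i,*}\|^2_2 = O(\|(\AA\TT)_{i,*}\|^2_2)$); both give the same conclusion.
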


\begin{proof}
Note, $\RR = \SS \CC = \SS\AA\TT + \SS\NN\TT$, where $\SS$ and $\TT$ are the corresponding sampling matrices.  
Observe, $\| \CC - \CC\XX\|^2_F = \trace{  (\mathbb{I}-\XX) \CC^T \CC(\mathbb{I}-\XX) }$. Then, 
\begin{equation}
\label{eqn:tr1}
    \begin{split}
        \trace{  (\mathbb{I}-\XX) \CC^T \CC(\mathbb{I}-\XX) } & = \trace{  \CC^T \CC} + \trace{ \XX\CC^T \CC \XX} - \trace{\CC^T \CC \XX} - \trace{ \XX\CC^T \CC} \\
        & = \trace{  \CC^T \CC^T} + \trace{ \XX^T \CC^T \CC \XX} - \trace{\CC^T \CC \XX\XX} - \trace{ \XX\XX\CC^T \CC} \\
        & = \trace{  \CC^T \CC} + \trace{ \XX\CC^T \CC \XX} - \trace{\XX\CC^T \CC \XX} - \trace{ \XX\CC^T \CC \XX} \\
        & = \trace{  \CC^T \CC} - \trace{ \XX\CC^T \CC \XX} \\
        & = \trace{  \CC^T \CC} - \trace{ \XX(\AA\TT + \NN\TT)^T (\AA\TT + \NN\TT) \XX}
    \end{split}
\end{equation}
where we used the fact that for any projection matrix $X = X^2$ in addition to the cyclic property of the trace. Here, for analyzing the cross and tail terms, we observe that with probability $1- c$, $\|\XX (\AA\TT)^T \|_F \leq O(1)\|\AA\|_F$ and $\|\XX (\NN\TT)^T \|^2_F \leq O(\eta)\|\AA\|^2_F$. Therefore, 
\begin{equation}
\label{eqn:at_expansion}
    \trace{ \XX(\AA\TT + \NN\TT)^T (\AA\TT + \NN\TT) \XX} = \trace{\XX (\AA\TT)^T \AA\TT \XX} \pm O(\sqrt{\eta})\|\AA \|^2_F 
\end{equation}
Similarly, 
\begin{equation}
\label{eqn:tr2}
    \begin{split}
        \| \RR - \RR\XX\|^2_F  = \trace{(\mathbb{I}-\XX) \RR^T \RR (\mathbb{I}-\XX)} 
        & = \trace{  \RR^T \RR} - \trace{ \XX\RR^T \RR \XX}\\
        & = \trace{  \RR^T \RR} -  \trace{ \XX(\SS\AA\TT)^T (\SS\AA \TT) \XX} \pm O(\sqrt{\eta})\| \AA \|^2_F
    \end{split}
\end{equation}
Here, we observe $\| \SS\AA\TT\|^2_F$ is an unbiased estimator for $\|\AA \|^2_F$  and $\| \SS\NN\TT\|^2_F$ is an unbiased estimator for $\|\NN \|^2_F$. Using the same idea as above, we can bound the cross and tail terms by $O(\sqrt{\eta})\|\AA \|^2_F$. 
Our goal is show that Equations \ref{eqn:tr1} and \ref{eqn:tr2} are related up to additive error $O(\epsilon +\sqrt{\eta})\|\AA\|^2_F$.  We first relate $\trace{  \CC^T\CC}$ and $\trace{  \RR^T \RR}$. Recall, $\expecf{}{\trace{\RR^T \RR}} = \expecf{}{\|\RR\|^2_F} = \|\CC\|^2_F = \trace{\CC^T\CC}$. Using a scalar Chernoff bound, we show that with probability at least $1-1/\textrm{poly}(n)$, $| \|\RR \|^2_F - \|\CC \|^2_F | \leq \epsilon \| \AA \|^2_F$. Observe, for all $i \in [t]$, $\RR_{*,i} = \frac{1}{\sqrt{p_{i'} t}} \CC_{i',*}$ for some $i'\in [n]$. Then, 
\begin{equation}
\begin{split}
    \|\RR_{i,*} \|^2_2  = \frac{1}{p_{i'} t} \| \CC_{i',*} \|^2_2 
     &= \frac{\phi_{max}\sqrt{n}\| \AA\|_F \epsilon^2}{\phi_{max} \sqrt{nk}\log(n) \log(n/\delta) \AA_{i',i'}} \| \CC_{i',*} \|^2_2 \\
     &\leq \frac{c(1+\eta) \| \AA\|_F \epsilon^2}{ \sqrt{k} \log(n/\delta)\AA_{i',i'}} \max\left\{ \|\AA_{i',*}\|^2_2, \frac{ \phi_{\max} \sqrt{n}\|\AA \|_F\AA_{i',i'}}{t} \right\} \\
     &\leq \frac{c\epsilon^2}{ \sqrt{k} \log(n/\delta)} \max \left\{ \| \AA\|_2\|\AA\|_F, \frac{\|\AA \|_F^2 \epsilon^2}{ \sqrt{k} \log(n/\delta)} \right\} \\
     & \leq \frac{c\epsilon^2}{\sqrt{k} \log(n/\delta)} \|\AA+\NN\|^2_F
\end{split}
\end{equation}
where we use $\CC_{i',*} = (\AA\TT)_{i',*} + (\NN \TT)_{i',*}$,  $\|(\NN\TT)_{i,*} \|^2_{2} \leq (\eta)\|(\AA\TT)_{i,*}\|^2_2$ for all $i$ and Lemma \ref{lem:row_norm_bound} to bound $\|(\AA\TT)_{i,*}\|^2_2$.
Therefore, $\frac{ \sqrt{k} \log(n/\delta)}{c\epsilon^2 \|\AA \|^2_F} \|\RR_{i,*} \|^2_2 \in [0,1]$. Note, $\|\RR\|^2_F$ is an unbiased estimator for $\|\AA+\NN\|^2_F$. Using a Chernoff bound, 

\begin{equation}
    \begin{split}
        \prob{}{\|\RR \|^2_F \geq  (1 +\epsilon) \|\AA +\NN\|^2_F } & = \prob{}{ \frac{\sqrt{k} \log(n/\delta)}{\epsilon^2 \|\AA \|^2_F} \|\RR \|^2_F \geq \frac{\sqrt{k} \log(n/\delta)}{\epsilon^2} (1 +\epsilon)} \\ 
        & \leq \exp\left(- \frac{\sqrt{k} \epsilon^2 \log(n/\delta)}{\epsilon^2} \right)
         \leq \frac{\delta}{10}
    \end{split}
\end{equation}
Therefore, with probability at least $1 - \delta/10$, $|\| \RR\|^2_F - \|\AA + \NN\|^2_F | \leq \epsilon \|\AA+\NN\|^2_F$. Note, we can then bound $\|\AA +\NN \|^2_F \leq \|\AA \|^2_F+ 2\sqrt{\eta}\|\AA\|^2_F$. Therefore, 
\begin{equation*}
    |\| \RR\|^2_F - \|\AA \|^2_F | \leq O(\epsilon + \sqrt{\eta})\| \AA\|_F
\end{equation*}
Recall, by equation \ref{eqn:pcp1},  with probability $\delta/10$, $\|\AA\|^2_F = (1\pm  (\epsilon +2 \sqrt{\eta}))\|\CC\|^2_F$ and thus we have that  $\| \RR\|^2_F - \|\CC\|^2_F \leq  3\epsilon \|\AA\|^2_F$.  We can repeat the above argument to lower bound $\|\RR \|^2_F$.
Therefore, with probability $1- \delta$, we have 
\begin{equation}
    \label{eqn:pcp_head_terms}
    |\|\RR \|^2_F - \|\CC \|^2_F| \leq O(\epsilon+\sqrt{\eta}) \|\AA\|^2_F
\end{equation}
\noindent Next, we relate $\trace{ \XX(\SS\AA\TT)^T (\SS\AA \TT) \XX} $ and $\trace{\XX (\AA\TT)^T \AA\TT \XX} $.
Observe, $\XX$ is a rank $k$ projection matrix and we can represent it as $\ZZ \ZZ^T$, where $\ZZ \in \mathbb{R}^{n \times k}$ and has orthonormal columns. By the cyclic property of the trace, we have 
$$
\trace{\ZZ \ZZ^T(\SS\AA\TT)^T (\SS\AA \TT) \ZZ \ZZ^T}  =\trace{\ZZ^T (\SS\AA\TT)^T (\SS\AA \TT) \ZZ}  =  \sum_{j \in [k]} \ZZ^T_{*,j} (\SS\AA\TT)^T (\SS\AA \TT)\ZZ_{*,j}
$$
Similarly, $\trace{\ZZ \ZZ^T (\AA\TT)^T \AA\TT  \ZZ \ZZ^T}  = \sum_{j \in [k]} \ZZ^T_{*,j}  (\AA\TT)^T \AA\TT  \ZZ_{*,j}$. By Theorem \ref{thm:pcp_spectral_row} , we have 

\begin{equation}
\label{eqn:pcp_spec}
    \begin{split}
         \sum_{j \in [k]} \left( \ZZ^T_{*,j} (\AA\TT)^T \AA\TT  \ZZ_{*,j}  \right) 
          = \sum_{ j \in [k]} \left(  \ZZ^T_{*,j} (\SS\AA\TT)^T (\SS\AA \TT) \ZZ_{*,j} \pm \frac{\epsilon}{k} \| \AA \|_F^2 \ZZ^T_{*,j} \II \ZZ_{*,j}\right)
     \end{split}
\end{equation}
Since $\ZZ^T_{*,j} \ZZ_{*,j} = 1$ and $\trace{\ZZ^T (\SS\AA \TT)^T(\SS\AA \TT)\ZZ} = \trace{\XX (\AA\TT)^T \AA\TT\XX}$, we obtain
\begin{equation}
\label{eqn:pcp2_}
    \trace{\XX (\AA\TT)^T \AA\TT \XX} - \epsilon \| \AA \|_F^2 \leq  \trace{ \XX(\SS\AA \TT)^T (\SS\AA \TT)\XX} \leq \trace{\XX (\AA\TT)^T \AA\TT \XX} + \epsilon \| \AA \|_F^2 
\end{equation}
Combining equations \ref{eqn:pcp2_},\ref{eqn:pcp_head_terms}, \ref{eqn:at_expansion} and \ref{eqn:tr2} with probability $1-c$, 
\begin{equation*}
    \| \CC - \CC\XX\|^2_F - O(\epsilon+\sqrt{\eta})\| \AA \|^2_F \leq \| \RR - \RR\XX\|^2_F \leq \| \CC - \CC\XX\|^2_F + O(\epsilon+\sqrt{\eta})\| \AA \|^2_F
\end{equation*}
\end{proof}

\begin{Frame}[\textbf{Algorithm \ref{alg:robust_lra}} : Robust PSD Low-Rank Approximation]
\label{alg:robust_lra}
\textbf{Input}: A  Matrix $\AA + \NN$, integer $k$, $\epsilon >0$ and $\phi_{\max} = \max_{j} \AA_{j,j}/(\AA +\NN)_{j,j}$
\begin{enumerate}
    \item Let $t= \frac{c\phi^2_{\max} \sqrt{n}k\log^2(n)}{\epsilon^2}$, for some constant $c$. Let $q = \{ q_1, q_2 \ldots q_n \} $ denote a distribution over columns of $\AA+\NN$ such that $q_j = \frac{(\AA+\NN)_{j,j}}{\trace{\AA+\NN}}$. Construct a column PCP for $\AA+\NN$ by sampling $t$ columns of $\AA+\NN$ such that each column is set to $\frac{(\AA+\NN)_{*,j}}{\sqrt{t q_j}}$ with probability $q_j$. Let $\CC$ be the resulting $n \times t$ matrix that satisfies the guarantee of Theorem \ref{thm:pcp_main_col}.
    \item Let $p = \{ p_1, p_2 \ldots p_n \} $ denote a distribution over rows of $\CC$ such that $p_i = \frac{(\AA+\NN)_{i,i}}{\trace{\AA+\NN}}$. Construct a row PCP for $\CC$ by sampling $t$ rows of $\CC$ such that each row is set to $\frac{\CC_{i,*}}{\sqrt{t p_i}}$ with probability $p_i$. Let $\RR$ be the resulting $t \times t$ matrix that satisfies the guarantee of Theorem \ref{thm:pcp_main_row}. Sample $\Theta(n)$ entries uniformly at random from $\AA$ and rescale such that $\widetilde{v}^2 = \Theta(\|\AA\|^2_F)$. 
    \item Let $\mu = \phi_{\max}\sqrt{|(\AA+\NN)_{i,i}|\cdot|(\AA+\NN)_{i',i'}|} $
    For all $i \in [t]$, let $\X_i =\sum_{j\in[\epsilon^3 t/k^3]}  \X_{i,j}$ such that $\X_{i,j} = (k^3/\epsilon^3) \RR^2_{i,i'}$, with probability $1/t$, for all $i'\in[t]$. Here, we query the entry corresponding to $\RR_{i,i'}$ in $\AA+\NN$ and truncate it to $\mu$.   
    Let $\tau = \phi_{\max}^2 n \widetilde{v}^2/t^2$. If $\X_i > \tau$, sample row $\RR_{i,*}$ with probability $1$. For the remaining rows, sample $nk/(\epsilon t)$ rows uniformly at random.
    \item Run the sampling algorithm from Frieze-Kannan-Vempala \cite{fkv04} to compute a $t\times k$ matrix $\SS$ such that $\| \RR - \RR\SS \SS^T \|^2_F \leq \|\RR -\RR_k\|^2_F + \epsilon \| \RR\|^2_F$. 
    Consider the regression problem $$\min_{\XX \in \R^{n \times k}} \| \CC -\XX\SS^T \|^2_F.$$
    Sketch the problem using the leverage scores of $\SS^T$, as shown in Lemma \ref{lem:fast_regression}, to obtain a sampling matrix $\EE$ with $O(\frac{k}{\epsilon})$ columns. Compute $$\XX_{\CC} = \argmin_{\XX \in \R^{n \times k}}\| \CC\EE - \XX\SS^T\EE \|^2_F.$$
    Let $\XX_{\CC}\SS^T = \UU\VV^T$ be such that $\UU\in\mathbb{R}^{n\times k}$ has orthonormal columns.
     \item Consider the regression problem $$\min_{\XX \in \mathbb{R}^{k \times n}} \|\AA -\UU\XX\|^2_F.$$
     Sketch the problem as above, following Lemma \ref{lem:fast_regression} to obtain a sampling matrix $\EE'$ with $O(\frac{k}{\epsilon})$ rows. Compute $$\XX_{\AA} = \argmin_{\XX} \|\EE'\AA - \EE'\UU\XX \|^2_F$$
\end{enumerate}
\textbf{Output: $\MM = \UU$, $\NN^T = \XX_{\AA}$ }  
\end{Frame}

\noindent\textbf{Full Algorithm.}
Next, we describe a sublinear time and query robust algorithm for low-rank approximation of PSD matrices. We show that querying $\widetilde{O}(\phi^2_{\max}nk/\epsilon)$ entries of $\AA$ suffices. While we assume we know $\phi_{\max}$, in practice this need not be the case. Therefore, given a budget for the total number of queries, denoted by $\beta$, we can run the algorithm by querying a $\sqrt{\beta} \times \sqrt{\beta}$ submatrix (as described in Algorithm \ref{alg:robust_lra}), but correctness only holds when $\beta \geq \widetilde{\Theta}(\phi^2_{\max}nk/\epsilon)$.  Recall, whenever we read an entry in $(\AA + \NN)_{i,j}$, we can truncate it to $\phi_{\max}\sqrt{|(\AA+\NN)_{i,i}|\cdot|(\AA+\NN)_{j,j}|}$. We can compute these thresholds by simply reading the diagonal of $\AA+\NN$. 

We proceed by constructing column and row projection-cost preserving sketches of $\AA+\NN$, to obtain a $t \times t$ matrix $\RR$, where $t = \widetilde{O}(\phi_{\max}\sqrt{n}k^2/\epsilon^2)$. 
Instead of reading the entire matrix, we sample $\epsilon^3 t/k^3$ entries in each row of $\RR$, and read these entries. Ideally we would want to estimate $\ell^2_2$ norms of rows of $\RR$ to then use a result of Frieze-Kannan-Vempala \cite{fkv04} to show that there exists an $s \times t$ matrix $\SS$ such that the row space of $\SS$ contains a good rank-$k$ approximation, where $s =  c\phi_{\max}^2 nk/\epsilon t$, for some constant $c$. However, we show that is it not possible to obtain accurate estimates of the row norms of each row of $\RR$ with high probability. 

Instead, we describe a new sampling procedure that ends up sampling rows of $\RR$ with the same probability as Frieze-Kannan-Vempala. 
Once we compute a good low-rank approximation for $\RR$ we can follow the approach of \cite{cohenmm17},\cite{mw17} and \cite{bakshi2018sublinear}, where we set up two regression problems, and use fast approximate regression to compute a low rank approximation for $\AA$. The main theorem we prove in this section is as follows:

\begin{theorem}(Robust PSD LRA.)
\label{thm:main_thm}
Let $k$ be an integer and $\epsilon>\eta>0$. 
Given a matrix $\AA + \NN$, where $\AA$ is PSD and $\NN$ is a corruption term such that $\|\NN \|^2_F \leq \sqrt{\eta}\|\AA \|^2_F$ and for all $i \in [n]$ $\|\NN_{i,*}\|^2_2\leq c\|\AA_{i,*} \|^2_2$, for a fixed constant $c$, Algorithm \ref{alg:robust_lra} samples $\widetilde{O}\left(\phi^2_{\max}nk/\epsilon\right)$ entries in $\AA+\NN$ and computes matrices $\MM, \NN^T \in \mathbb{R}^{n \times k}$ such that with probability at least $99/100$,
\begin{equation*}
    \|\AA - \MM \NN \|^2_F \leq \|\AA - \AA_k \|^2_F + (\epsilon+\sqrt{\eta})\|\AA\|^2_F
\end{equation*}
\end{theorem}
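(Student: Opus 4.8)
\textbf{Proof plan for Theorem \ref{thm:main_thm}.}
The plan is to follow the two-stage template that has by now become standard in this line of work (column PCP, then row PCP, then two sketched regressions), but to replace the row-norm estimation step inside the Frieze--Kannan--Vempala reduction with a weaker sampling argument, since as remarked in the overview accurate per-row norm estimates are provably impossible here. First I would invoke Theorem \ref{thm:pcp_main_col} to build the column PCP $\CC = (\AA+\NN)\TT$ with $t = \widetilde{O}(\phi^2_{\max}\sqrt{n}k^2/\epsilon^2)$ columns, so that for every rank-$k$ projection $\XX$ we have $\|\CC - \XX\CC\|^2_F = \|\AA - \XX\AA\|^2_F \pm (\epsilon+\sqrt\eta)\|\AA\|^2_F$; then Theorem \ref{thm:pcp_main_row} gives the $t\times t$ row PCP $\RR$ of $\CC$ with the same additive slack. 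At this point $\RR$ has $\widetilde{O}(\phi^2_{\max}nk/\epsilon)$ entries, but we cannot afford to read all of it, so we only query $\epsilon^3 t/k^3$ uniformly random entries in each row, truncating each queried entry of $\AA+\NN$ to $\mu = \phi_{\max}\sqrt{|(\AA+\NN)_{i,i}|\,|(\AA+\NN)_{i',i'}|}$, which does not increase the corruption.

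The core of the argument is then the row-sampling step. Using Lemma \ref{lem:large_norm} (here the ``informal'' statement in the overview, whose formal version I would invoke), the per-row estimator $\X_i$ equals $(1\pm 0.1)\|\RR_{i,*}\|^2_2$ with probability at least $\min(\|\RR_{i,*}\|^2_2 k/(\epsilon n),1)$ --- exactly the probability Frieze--Kannan--Vempala needs for sampling row $i$. I would split $[t]$ into the heavy set $\mathcal H = \{i : \|\RR_{i,*}\|^2_2 \ge \phi^2_{\max} n\|\RR\|^2_F/t^2\}$, of size $O(k^4\log^4 n/\epsilon^4)$, and the rest. For $i\in\mathcal H$ we keep row $i$ precisely when its estimator $\X_i$ exceeds the threshold $\tau = \phi^2_{\max} n\widetilde v^2/t^2$ (with $\widetilde v^2 = \Theta(\|\AA\|^2_F)$ from the $\Theta(n)$ uniform samples of $\AA$); this happens with the right probability and, crucially, only for a controlled number of spurious rows so the total number of sampled rows stays $s = O(\phi^2_{\max}nk/(\epsilon t))$. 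For the light rows I would sample $nk/(\epsilon t)$ of them uniformly at random, and argue this simulates $\ell^2_2$-sampling up to the tolerance FKV can absorb. Then I would open up the FKV variance computation (as in \cite{fkv04,indyk2019sample}) and verify that the combined sampling process controls $\sum_i \mathcal Y_i$ and the second-moment terms well enough to conclude: the row space of the resulting $s\times t$ matrix $\SS$ contains a rank-$k$ matrix with $\|\RR - \RR\SS\SS^\top\|^2_F \le \|\RR - \RR_k\|^2_F + \epsilon\|\RR\|^2_F$, and $\|\RR\|^2_F = O(\|\AA\|^2_F)$.

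Finally, from this good low-rank approximation of $\RR$ I would run the usual two-regression ``sketch and solve'' closure: set up $\min_{\XX}\|\CC - \XX\SS^\top\|^2_F$, sketch by leverage scores of $\SS^\top$ (Lemma \ref{lem:fast_regression}) to get $\XX_\CC$, write $\XX_\CC\SS^\top = \UU\VV^\top$ with $\UU$ orthonormal, then solve $\min_{\XX}\|\AA - \UU\XX\|^2_F$ similarly via leverage scores of $\UU$ to obtain $\XX_\AA$. Chaining the two PCP guarantees and the two $(1+\epsilon)$-regression guarantees, and accounting each time for the fixed $(\epsilon+\sqrt\eta)\|\AA\|^2_F$ additive error introduced by the PCPs, yields $\|\AA - \UU\XX_\AA\|^2_F \le \|\AA - \AA_k\|^2_F + (\epsilon+\sqrt\eta)\|\AA\|^2_F$. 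The query count is $\epsilon^3 t^2/k^3 = \widetilde O(\phi^2_{\max}nk/\epsilon)$ for the sampled entries of $\RR$, plus $n$ for the diagonal and $\Theta(n)$ for $\widetilde v$, plus $\widetilde O(nk/\epsilon)$ for the two regression sketches, all within budget. I expect the main obstacle to be the row-sampling variance analysis: proving that the hybrid of ``keep-if-estimator-large'' on $\mathcal H$ and uniform sampling on the light rows genuinely reproduces the FKV guarantee despite (a) the estimators being accurate only with small probability and (b) there being no small-query certificate for which rows were estimated accurately, so the spurious-row count must be bounded purely through the structure of $\mathcal H$ and the threshold $\tau$.
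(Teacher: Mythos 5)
Your plan follows the paper's proof of Theorem \ref{thm:main_thm} essentially step for step: column PCP via Theorem \ref{thm:pcp_main_col}, row PCP via Theorem \ref{thm:pcp_main_row}, uniform per-row entry sampling with truncation, the weak estimator guarantee of Lemma \ref{lem:large_norm}, the heavy/light split with the threshold $\tau$ (keep-if-estimator-large on $\mathcal{H}$, uniform sampling on the light rows), re-opening the Frieze--Kannan--Vempala variance bound with the lower bounds on the sampling probabilities, and then the two leverage-score-sketched regressions chained through Lemma \ref{lem:pcp_useful} to return to $\AA$. The obstacle you single out (bounding the hybrid sampling variance without per-row certificates) is exactly the step the paper resolves in its Existence theorem, so your route is correct and is the same as the paper's.
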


We begin with the following simple lemma for approximating the Frobenius norm : 
\begin{lemma}(Approximating Frobenius Norm.)
\label{lem:frob_estimate}
Given as input an $n \times n$ matrix $\AA+\NN$, there exists an algorithm that reads $O(\phi_{\max}^2n)$ entries in $\AA$ and outputs an estimator $\tilde{v}$ such that with probability at least $1- \frac{1}{n^c}$, $\tilde{v} = \Theta(\|\AA\|^2_F)$. 
\end{lemma}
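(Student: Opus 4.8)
The plan is to estimate $\|\AA\|_F^2=\trace{\AA\AA^{\top}}$ by an importance‑sampling estimator whose sampling probabilities are read off the diagonal of the corrupted input, applying the same entrywise truncation already used in Algorithm~\ref{alg:robust_lra}. First I would query the $n$ diagonal entries of $\AA+\NN$, set $\tau=\sum_{i=1}^n|(\AA+\NN)_{i,i}|$, and work with the product distribution $q_{i,j}=|(\AA+\NN)_{i,i}|\cdot|(\AA+\NN)_{j,j}|/\tau^2$ over $[n]\times[n]$ (indices with $(\AA+\NN)_{i,i}=0$ are discarded, since then $\AA_{i,i}\le\phi_{\max}\cdot 0=0$ forces the $i$-th row and column of $\AA$ to vanish by positive semidefiniteness). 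Because $q$ factorizes, sampling from it costs nothing beyond the diagonal reads. For each of $s$ i.i.d.\ draws $(i_k,j_k)\sim q$ I query $(\AA+\NN)_{i_k,j_k}$, clamp its magnitude to $\mu_{i_k,j_k}=\phi_{\max}\sqrt{|(\AA+\NN)_{i_k,i_k}|\cdot|(\AA+\NN)_{j_k,j_k}|}$ to obtain $\widetilde{\AA}_{i_k,j_k}$, and form $Z_k=\widetilde{\AA}_{i_k,j_k}^2/q_{i_k,j_k}$; the output is $\widetilde v=\tfrac1s\sum_{k}Z_k$.

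Next I would establish three estimates. (i) Since $\widetilde\AA$ is a fixed matrix once the diagonal is read, $\mathbb{E}[Z_k]=\|\widetilde\AA\|_F^2$; moreover $\AA_{i,j}^2\le\AA_{i,i}\AA_{j,j}\le\phi_{\max}^2|(\AA+\NN)_{i,i}||(\AA+\NN)_{j,j}|=\mu_{i,j}^2$ by Cauchy--Schwarz on the rows of $\AA^{1/2}$ and the definition of $\phi_{\max}$, so $\AA_{i,j}$ lies inside the clamping interval, whence $|\widetilde\AA_{i,j}-\AA_{i,j}|\le|\NN_{i,j}|$ and $\|\widetilde\AA-\AA\|_F\le\|\NN\|_F$; under the noise model (with $\|\NN\|_F$ a sufficiently small constant fraction of $\|\AA\|_F$) this gives $\|\widetilde\AA\|_F^2=\Theta(\|\AA\|_F^2)$. (ii) $Z_k\le\phi_{\max}^2\tau^2$ almost surely, because $\widetilde\AA_{i,j}^2\le\mu_{i,j}^2=\phi_{\max}^2\tau^2 q_{i,j}$, and similarly $\mathbb{E}[Z_k^2]=\sum_{i,j}\widetilde\AA_{i,j}^4/q_{i,j}\le\phi_{\max}^2\tau^2\|\widetilde\AA\|_F^2$. (iii) $\tau\le\trace{\AA}+\sum_i|\NN_{i,i}|\le\sqrt n\,\|\AA\|_F+\sqrt n\,\|\NN\|_F=O(\sqrt n\,\|\AA\|_F)$ by Cauchy--Schwarz on the eigenvalues and the per-row bound on $\NN$, so $\phi_{\max}^2\tau^2/\|\widetilde\AA\|_F^2=O(\phi_{\max}^2 n)$. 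A scalar Bernstein bound with the range and variance estimates from (ii)--(iii) then yields $\widetilde v=(1\pm\tfrac12)\|\widetilde\AA\|_F^2=\Theta(\|\AA\|_F^2)$ with probability $1-n^{-c}$ once $s=\Theta(\phi_{\max}^2 n\log n)$; adding the $n$ diagonal reads, the total is $\widetilde O(\phi_{\max}^2 n)$ queries (the single logarithmic factor is within the $\widetilde O$ convention used throughout; a constant success probability already follows from Chebyshev with only $O(\phi_{\max}^2 n)$ samples). The quantity $\sqrt{\widetilde v}$ then matches the convention $\widetilde v^2=\Theta(\|\AA\|_F^2)$ of Algorithm~\ref{alg:robust_lra}.

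The one delicate point — the main obstacle — is that the estimator is built from $\widetilde\AA$, the truncated corrupted matrix, not from $\AA$ itself, so I must argue that (a) clamping never moves an entry away from its true value $\AA_{i,j}$, which is exactly why the radius is chosen to be $\mu_{i,j}$, the tightest upper bound on $|\AA_{i,j}|$ available from the diagonal and $\phi_{\max}$, and (b) the residual perturbation $\|\widetilde\AA-\AA\|_F$ stays a small constant fraction of $\|\AA\|_F$, which is precisely the content of the Frobenius‑bounded, well‑spread corruption assumption. Everything else is a routine second‑moment calculation for an $\ell_2^2$‑type sampling estimator of $\trace{\MM\MM^{\top}}$.
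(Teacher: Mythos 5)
Your estimator is correct, and it is a genuinely more self-contained route than the paper's. The paper dispatches this lemma in two lines by citing the Frobenius-norm concentration already established inside the robust PCP machinery: sampling rows/columns proportional to the diagonal entries (Theorems \ref{thm:pcp_main_col}/\ref{thm:pcp_main_row} with constant accuracy) yields a sketch whose squared Frobenius norm is $(1\pm O(1))\|\AA+\NN\|^2_F$, and $\|\AA+\NN\|^2_F=\Theta(\|\AA\|^2_F)$ since $\|\NN\|^2_F\le\eta\|\AA\|^2_F$. You instead build a standalone entry-sampling estimator with the product-of-diagonals distribution $q_{i,j}\propto|(\AA+\NN)_{i,i}||(\AA+\NN)_{j,j}|$, which is morally the same sampling scheme (a column draw followed by a row draw in the PCP), but you add two things the paper's citation glosses over: (a) the entrywise clamping to $\mu_{i,j}=\phi_{\max}\sqrt{|(\AA+\NN)_{i,i}||(\AA+\NN)_{j,j}|}$, justified by $\AA_{i,j}^2\le\AA_{i,i}\AA_{j,j}\le\mu_{i,j}^2$, which gives you the range/variance bounds directly and lets you control the noise via $\|\widetilde{\AA}-\AA\|_F\le\|\NN\|_F$ rather than via the well-spread column-norm assumption the paper's PCP proofs lean on (indeed, for this lemma you only need the Frobenius bound on $\NN$, not the per-row condition); and (b) an explicit Bernstein computation with $\tau=O(\sqrt{n}\,\|\AA\|_F)$ giving failure probability $n^{-c}$ at $s=\Theta(\phi_{\max}^2 n\log n)$ samples. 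The one blemish — the extra $\log n$ over the stated $O(\phi_{\max}^2 n)$ query count when insisting on $1-n^{-c}$ success — is shared by the paper's own proof (which reads $\phi^2_{\max}n\log(n)/\epsilon^2$ entries), so it does not distinguish the two arguments; you flag it appropriately. Net effect: your proof is slightly longer but independent of the PCP theorems and makes the role of the truncation and of each noise assumption explicit, whereas the paper's buys brevity by reusing machinery it has already paid for.
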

\begin{proof}
There are multiple ways to see this. Observe, in Theorem \ref{thm:pcp_main_row}, we show that sampling $ \frac{\phi_{\max}^2n\log(n)}{\epsilon^2}$ entries results in row projection-cost preserving sketch $\RR$ such that $\| \RR \|^2_F = (1\pm \epsilon )\| \AA + \NN \|^2_F$. Setting $\epsilon$ to be a small constant suffices. 
\end{proof}

Next, we provide intuition for why uniformly sampling columns of $\RR$ does not suffice for obtaining a sketch that spans a good low rank approximation. For simplicity, we assume there is no noise ($\eta =0$ and $\phi_{\max}=1$) and show that our techniques to bound the column norms of $\RR$ results in an estimate that is too large. We note that this lemma is not required for proving our result, and is just present for intuition. 

\begin{lemma}
\label{lem:large_frob_norm}
Let $\eta=0$.
Let $\RR \in \mathbb{R}^{t \times t}$ be a row projection-cost preserving sketch for $\CC$ as described in Theorem \ref{thm:pcp_main_row}.
For all $j \in [t]$, with probability at least $1 - 1/n^c$,  $$\| \RR_{*,j}\|^2_2 \leq O\left(\log(n)\max\left\{ \|\CC_{*,j}\|^2_2, \frac{ n \|\AA \|^2_F}{t^2} \right\}\right) = O\left(\log(n)\max\left\{\frac{\sqrt{n}\|\AA\|^2_F}{t }, \frac{ n \| \AA\|^2_F}{t^2} \right\}\right)$$ where $c$ is a fixed constant.
\end{lemma}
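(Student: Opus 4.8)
The plan is to run the argument behind Lemma~\ref{lem:row_norm_bound} with the roles of rows and columns interchanged. Since $\eta=0$ forces $\NN=0$, we have $\CC=\AA\TT$ and the row PCP is $\RR=\SS\CC$, where $\SS$ independently samples (and rescales) $t$ rows of $\CC$, picking row $i$ with probability $p_i=\AA_{i,i}/\trace{\AA}$. Hence $\RR_{*,j}$ is a rescaled subsample of $\CC_{*,j}$: I would write $\smallnorm{\RR_{*,j}}^2_2=\sum_{\ell\in[t]}Z_\ell$ for i.i.d.\ variables $Z_\ell$ equal to $\tfrac{1}{tp_i}\CC^2_{i,j}$ with probability $p_i$, so that $\expecf{}{\smallnorm{\RR_{*,j}}^2_2}=\smallnorm{\CC_{*,j}}^2_2$, and then prove concentration of $\smallnorm{\RR_{*,j}}^2_2$ around this mean via Bernstein's inequality. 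This requires a bound on $\max_\ell Z_\ell$ and on $\varf{\smallnorm{\RR_{*,j}}^2_2}$.

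For the deterministic inputs to Bernstein, I would use that each nonzero entry of $\CC=\AA\TT$ has the form $\CC_{i,j}=\tfrac{1}{\sqrt{tq_{j'}}}\AA_{i,j'}$ for the original index $j'$ sampled as column $j$, so $\CC^2_{i,j}=\tfrac{\trace{\AA}}{t\AA_{j',j'}}\AA^2_{i,j'}$; combined with $\AA^2_{i,j'}\le\AA_{i,i}\AA_{j',j'}$ (Cauchy--Schwarz on $\langle\AA^{1/2}_{i,*},\AA^{1/2}_{j',*}\rangle$, as in the preliminaries) and $\trace{\AA}\le\sqrt{n}\smallnorm{\AA}_F$, this gives $\CC^2_{i,j}\le\tfrac{\sqrt{n}\smallnorm{\AA}_F\AA_{i,i}}{t}$, hence $\tfrac{\CC^2_{i,j}}{p_i}=\tfrac{\trace{\AA}}{\AA_{i,i}}\CC^2_{i,j}\le\tfrac{n\smallnorm{\AA}^2_F}{t}$ and so $Z_\ell\le M:=\tfrac{n\smallnorm{\AA}^2_F}{t^2}$. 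For the variance I would bound $\varf{Z_\ell}\le\expecf{}{Z_\ell^2}=\tfrac{1}{t^2}\sum_i\tfrac{\CC^4_{i,j}}{p_i}\le\tfrac{1}{t^2}\cdot\tfrac{n\smallnorm{\AA}^2_F}{t}\sum_i\CC^2_{i,j}=\tfrac{M}{t}\smallnorm{\CC_{*,j}}^2_2$, so that $\varf{\smallnorm{\RR_{*,j}}^2_2}=\sum_\ell\varf{Z_\ell}\le M\smallnorm{\CC_{*,j}}^2_2$.

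Finally, writing $\mu:=\smallnorm{\CC_{*,j}}^2_2$, I would apply Bernstein with deviation $s=\gamma\max\{\mu,M\}$ for $\gamma=\Theta(\log n)$: the exponent $\tfrac{s^2/2}{M\mu+Ms/3}$ works out to $\Omega(\gamma)$ both when $\mu\ge M$ and when $\mu<M$ (in each regime one of $M\mu$, $M^2$ dominates the denominator, leaving a factor $\Omega(\gamma)$), giving failure probability $n^{-c'}$ for any desired constant $c'$ and hence $\smallnorm{\RR_{*,j}}^2_2\le\mu+s=O(\log n)\max\{\mu,M\}$; a union bound over the $t\le\mathrm{poly}(n)$ columns yields the $1-1/n^c$ guarantee. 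The second displayed bound in the statement then follows from $\mu=\tfrac{\trace{\AA}}{t\AA_{j',j'}}\smallnorm{\AA_{*,j'}}^2_2\le\tfrac{\trace{\AA}\smallnorm{\AA}_2}{t}\le\tfrac{\sqrt{n}\smallnorm{\AA}^2_F}{t}$ via Lemma~\ref{lem:row_norm_spectral_norm}. I do not expect any genuine obstacle here: the one place to be careful is the bookkeeping of which indices are random versus fixed when invoking the per-entry inequalities, but this is otherwise a routine second-moment calculation followed by Bernstein, exactly parallel to Lemma~\ref{lem:row_norm_bound}.
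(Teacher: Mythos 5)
Your proposal is correct and follows essentially the same route as the paper's proof: write $\|\RR_{*,j}\|^2_2$ as a sum of independent terms with mean $\|\CC_{*,j}\|^2_2$, bound each term and the variance via $\AA^2_{i,j}\le\AA_{i,i}\AA_{j,j}$ and $\trace{\AA}\le\sqrt{n}\|\AA\|_F$, and apply Bernstein with deviation $\Theta(\log n)\max\{\|\CC_{*,j}\|^2_2,\,n\|\AA\|^2_F/t^2\}$. Your explicit bound $M=n\|\AA\|^2_F/t^2$ and the two-regime case analysis inside the Bernstein exponent are just a cleaner rendering of the paper's AM-GM step (and its per-term bound, written there as $\trace{\AA}/t$), so there is no substantive difference.
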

\begin{proof}
Observe, $\| \RR_{*,j}\|^2_2 = \sum_{i \in [t]} \RR_{i,j}^2$, where  $\RR_{i,j}^2=\frac{\trace{\AA}}{t\AA_{i',i'}} \CC_{i',j}^2$ 
with probability $\frac{\AA_{i',i'}}{\trace{\AA}}$ for all $i' \in [n]$. Then, 
$\expec{}{\| \RR_{*,j}\|^2_2} =  \sum^{n}_{i=1}\CC^2_{i,j} =
\|\CC_{*,j} \|^2_2$. Next, we compute the variance of $\|\RR_{*,j}\|^2_2$. $\varf{\|\RR_{*,j}\|^2_2} = t \varf{\RR_{i,j}} \leq t \expecf{}{\RR^4_{i,j}}$. Then, 
\begin{equation*}
    \begin{split}
        t \expecf{}{\RR^4_{i,j}}  =  \sum_{i' \in [n]} \frac{1}{t p_{i'}}\CC_{i',j}^4
        & \leq \sum_{i' \in [n]} \frac{\trace{\AA}^2}{t^2 \AA_{i',i'} \AA_{j,j}}\AA_{i',j}^2 \AA_{i',i'} \AA_{j,j} \\
        & \leq \frac{\trace{\AA}^2}{t^2}\|\AA_{*,j}\|^2_2 \\
        & = \frac{\trace{\AA}}{t}\|\CC_{*,j} \|^2_2\\
        & \leq \left(\frac{\trace{\AA}}{t}\right)^2 + \|\CC_{*,j}\|^4_2 \hspace{0.4in} \textrm{ [AM-GM]} \\
    \end{split}
\end{equation*}
where we use $\AA^2_{i,j} \leq \AA_{i,i}\AA_{j,j}$, which follows from applying Cauchy-Schwarz to $\langle \AA^{1/2}_{i,*},\AA^{1/2}_{j,*}\rangle$. Similarly, we bound $\RR^2_{i,j} \leq \frac{\trace{\AA}}{t} $. Applying Bernstein's inequality, 
\begin{equation*}
    \begin{split}
        &\prob{}{|\|\RR_{*,j} \|^2_2 - \|\CC_{*,j} \|^2_2  \geq  \eta \max\left\{ \|\CC_{*,j}\|^2_2, \frac{ \trace{\AA}}{t} \right\}} \\
         & \leq 2 \exp\left(- \frac{\eta^2 \max\left\{ \|\CC_{*,j}\|^4_2,  \left(\frac{ \trace{\AA}}{t}\right)^2 \right\}  }{\left(\frac{\trace{\AA}}{t}\right)^2 + \|\CC_{*,j}\|^4_2 + \eta \max\left\{ \frac{ \trace{\AA} \|\CC_{*,j}\|^2_2}{t}, \left(\frac{ \trace{\AA}}{t}\right)^2 \right\} } \right) \\
        & \leq 2 \exp\left(- \frac{\eta \max\left\{ \|\CC_{*,j}\|^4_2,  \left(\frac{ \trace{\AA}}{t}\right)^2 \right\}  }{c' \left(\frac{\trace{\AA}}{t}\right)^2 + c'\|\CC_{*,j}\|^4_2  } \right) \\
    \end{split}
\end{equation*}
where we use the AM-GM inequality on $\frac{\trace{\AA}}{t}\|\CC_{*,j}\|^2_2$ repeatedly. Setting $\eta = \Omega(\log(n))$ completes the proof. Finally, observe, for any $j \in [t]$, $\|\CC_{*,j}\|^2_2 = \frac{\trace{\AA}}{t \AA_{j',j'}}\|\AA_{*,j'} \|^2_2$ for some $j' \in [n]$. We then use $\trace{\AA} \leq \sqrt{n}\|\AA\|_F$.
\end{proof}



\noindent 
It is well-known that to recover a low-rank approximation for $\RR$, one can sample rows of $\RR$ proportional to row norm estimates, denoted by $\Y_i$ \cite{fkv04}. As shown in \cite{indyk2019sample} the following two conditions are a relaxation of those required in \cite{fkv04}, and suffice to obtain an additive error low-rank approximation :   
\begin{enumerate}
	\item For all $i \in [t]$, $\mathcal{Y}_i \geq \|\RR_{i,*} \|^2_2$.
	\item $\sum_{i \in [t]} \mathcal{Y}_i \leq \frac{n}{t} \| \RR \|^2_F$
\end{enumerate}
To satisfy the first condition, we need to obtain overestimates for each row. Since it is not immediately clear how to obtain overestimates for row norms of $\RR$, a na\"ive approach would be to bias the estimate for each row by an upper bound on the row norm. However, by Lemma \ref{lem:large_frob_norm}, a row norm could be as large as $ \sqrt{n} \|\AA \|^2_F/t$. Observe, we cannot afford to bias the estimator of each row, $\mathcal{Y}_i$, by this amount since $\sum_{i \in [t]} \mathcal{Y}_i \geq \sqrt{n} \|\AA \|^2_F \geq \sqrt{n} \|\RR \|^2_F $. Therefore, we would have to sample $\sqrt{n}k/\epsilon$ rows of $\RR$, resulting in us querying $\Omega(nk^{3}/\epsilon^{3})$ entries in $\AA$, even when $\eta=0$.

An alternative strategy would be to bias the estimator for each row by $n\|\RR\|^2_F/t^2$, as this would satisfy condition 2 above. 
We can now hope to detect rows of $\RR$ that have norm larger than $n\|\RR\|^2_F/t^2$ by sampling $\epsilon^3 t/k^3$ entries in each row of $\RR$, uniformly at random. Note, this way we can construct an unbiased estimator for the $\ell^2_2$ norm of each row. Ideally, we would want to show a high probability statement for concentration of our row norm estimates around the expectation. We could then union bound, and obtain concentration for all $i$ simultaneously.  

However, this is not possible since it may be the case that a row of $\RR$ is $\log(n)$-sparse with each entry being large in magnitude. In this case, uniformly querying the row would not observe any non-zero with good probability and thus cannot distinguish between such a row and an empty row. Instead, we settle for a weaker statement, that shows our estimate is accurate with $o(1)$ probability. All subsequent statements hold for $\eta>0$.

\begin{lemma}(Estimating large row norms.)
\label{lem:large_norm}
Let $\RR \in \mathbb{R}^{t \times t }$ be the row PCP output by Step 2 of Algorithm \ref{alg:robust_lra}. For all $i \in [t]$ let $\X_i = \sum_{j \in [\epsilon^3 t/k^3]} \X_{i,j}$ such that $\X_{i,j} = \frac{k^3\RR^2_{i,j'}}{\epsilon^3}$ with probability $\frac{1}{t}$, for all $j' \in [t]$. Then, for all $i \in [t]$, $\X_i = \left(1\pm \frac{1}{10}\right)\|\RR_{i,*}\|^2_2$ with probability at least $\frac{\|\RR_{i,*} \|^2_2 k}{\epsilon n}$.
\end{lemma}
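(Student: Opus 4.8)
The plan is to estimate the random variable $\X_i = \sum_{j \in [\epsilon^3 t/k^3]} \X_{i,j}$ via a second-moment argument. First I would record the first two moments of a single term. Since $\X_{i,j} = \tfrac{k^3}{\epsilon^3}\RR_{i,j'}^2$ with $j'$ uniform on $[t]$, we get $\expecf{}{\X_{i,j}} = \tfrac{k^3}{\epsilon^3}\cdot\tfrac{1}{t}\sum_{j'}\RR_{i,j'}^2 = \tfrac{k^3}{\epsilon^3 t}\|\RR_{i,*}\|^2_2$, so that $\expecf{}{\X_i} = \tfrac{\epsilon^3 t}{k^3}\cdot\tfrac{k^3}{\epsilon^3 t}\|\RR_{i,*}\|^2_2 = \|\RR_{i,*}\|^2_2$, i.e.\ $\X_i$ is an unbiased estimator of the row norm. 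For the second moment, $\expecf{}{\X_{i,j}^2} = \tfrac{k^6}{\epsilon^6 t}\sum_{j'}\RR_{i,j'}^4 \le \tfrac{k^6}{\epsilon^6 t}\|\RR_{i,*}\|^2_2 \max_{j'}\RR_{i,j'}^2$.

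The key issue is that $\X_i$ need not concentrate with high probability — a row can be $O(\log n)$-sparse with a few large entries, so the estimator is $0$ unless one of those few coordinates is hit. So rather than aim for a high-probability bound I would lower-bound the probability that $\X_i$ lands in $(1\pm \tfrac{1}{10})\|\RR_{i,*}\|^2_2$. Here the crucial structural fact is that $\RR$ is a \emph{row} PCP: by Lemma \ref{lem:row_norm_bound} applied through the construction (and the truncation to $\mu$ built into Step 2--3 of Algorithm \ref{alg:robust_lra}), every entry $\RR_{i,j'}^2$ is bounded by roughly $\tfrac{\phi_{\max}^2 n}{t^2}\|\AA\|^2_F \cdot \poly\log$, and more usefully each entry is at most an $O(\epsilon n/(kt))$-fraction of $\|\RR_{i,*}\|^2_2$ whenever $\|\RR_{i,*}\|^2_2 \ge \phi_{\max}^2 n \|\AA\|^2_F/t^2$ — this is exactly the regime where the claimed success probability $\tfrac{\|\RR_{i,*}\|^2_2 k}{\epsilon n} \le 1$ is meaningful. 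So I would split into the two cases $\|\RR_{i,*}\|^2_2 \ge \epsilon n/k$ (where the claim asserts success probability $1$, i.e.\ is vacuous up to constants and follows since the individual terms are small relative to the mean, giving concentration by Chebyshev / Bernstein) and $\|\RR_{i,*}\|^2_2 < \epsilon n/k$.

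In the main case, I would argue as follows. Let $p = \tfrac{\epsilon^3 t}{k^3 t} = \tfrac{\epsilon^3}{k^3}$ be the per-coordinate sampling probability (per draw), and note the effective number of ``mass-carrying'' coordinates is $\approx \|\RR_{i,*}\|^2_2 / \max_{j'}\RR_{i,j'}^2$. Using the entrywise bound $\max_{j'}\RR_{i,j'}^2 \le O(\poly\log\, \phi_{\max}^2 n \|\AA\|^2_F /t^2)$ together with $\|\RR\|^2_F = \Theta(\|\AA\|^2_F)$ (up to $\sqrt\eta$ terms, from Theorem \ref{thm:pcp_main_row}), one computes that the probability of sampling at least one large-mass coordinate, and in fact enough of them to reach the $(1\pm\tfrac1{10})$ window, is at least $\Omega(\tfrac{\|\RR_{i,*}\|^2_2 k}{\epsilon n})$. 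Concretely: the expected number of sampled coordinates from row $i$ that carry an $\Omega(1)$-fraction of the mass is $\ge \tfrac{\epsilon^3 t}{k^3}\cdot \tfrac{\|\RR_{i,*}\|^2_2 t}{\phi_{\max}^2 n \|\AA\|^2_F}\cdot\Theta(1)$, which after substituting $t = \widetilde\Theta(\phi_{\max}\sqrt n k^2/\epsilon^2)$ simplifies (up to $\polylog$) to $\Theta(\tfrac{\|\RR_{i,*}\|^2_2 k}{\epsilon n})$; when this expectation is $\le 1$, a Paley--Zygmund / reverse-Markov argument on the indicator that $\X_i$ reaches $\tfrac{9}{10}\|\RR_{i,*}\|^2_2$ gives the matching lower bound on the probability, and a matching upper bound (ruling out overshoot past $\tfrac{11}{10}$) follows because the terms are bounded and independent.

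The main obstacle I anticipate is making the entrywise upper bound on $\RR_{i,j'}^2$ clean enough to drive the counting argument: Lemma \ref{lem:row_norm_bound} controls $\|(\AA\TT)_{i,*}\|^2_2$ but individual entries of $\RR = \SS\CC$ require chaining the column-sampling scale factor $1/\sqrt{tq_j}$ and the row-sampling scale $1/\sqrt{tp_i}$, plus the truncation to $\mu = \phi_{\max}\sqrt{|(\AA+\NN)_{i,i}|\,|(\AA+\NN)_{i',i'}|}$, through the Cauchy--Schwarz bound $\AA_{i,j}^2 \le \AA_{i,i}\AA_{j,j}$. I would handle this by first establishing a clean per-entry bound $\RR_{i,j'}^2 \le O(\polylog(n))\,\phi_{\max}^2 \tfrac{n}{t^2}\|\AA\|^2_F$ as a standalone claim (this is where all the PCP machinery and the $\phi_{\max}$ parameter enter), and only then running the elementary sampling/Paley--Zygmund computation above, keeping the two concerns separate so the probability bookkeeping stays transparent.
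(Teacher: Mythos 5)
Your setup (unbiasedness, second moment of a single draw, and the observation that no high-probability bound is possible for sparse heavy rows) matches the paper's, but your main mechanism for the lower bound is not the paper's and, more importantly, it does not work. The event you propose to lower-bound via Paley--Zygmund --- ``sample at least one coordinate carrying an $\Omega(1)$-fraction of $\|\RR_{i,*}\|_2^2$'' --- is essentially disjoint from the success event rather than contained in it: each sampled entry enters the estimator scaled by $k^3/\epsilon^3$, so a single hit of a coordinate with $\RR_{i,j'}^2 \geq c\|\RR_{i,*}\|_2^2$ already forces $\X_i \geq c\,(k^3/\epsilon^3)\|\RR_{i,*}\|_2^2 \gg \frac{11}{10}\|\RR_{i,*}\|_2^2$, i.e.\ an overshoot far outside the window (and zero hits gives $\X_i$ far below it). Landing in $(1\pm\frac{1}{10})\|\RR_{i,*}\|_2^2$ is a concentration event requiring many individually small contributions, not a ``hit a heavy coordinate'' event, so the probability $\Theta(\|\RR_{i,*}\|_2^2 k/(\epsilon n))$ you compute for hitting heavy coordinates is not a lower bound on the success probability. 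Your appeal to boundedness to rule out overshoot does not rescue this: the per-entry truncation only gives $\X_{i,j} \leq O(\epsilon\|\AA\|_F^2/k)$, which for the rows in question can exceed $\|\RR_{i,*}\|_2^2$ by a large factor.

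The paper's proof is instead a direct Bernstein argument, and this is where the claimed probability actually comes from. Using the truncation ($\NN^2_{i,j}\le \phi^2_{\max}(\AA+\NN)_{i,i}(\AA+\NN)_{j,j}$), Cauchy--Schwarz ($\AA^2_{i,j}\le \AA_{i,i}\AA_{j,j}$), and $t=\widetilde{\Theta}(\phi_{\max}\sqrt{n}k^2/\epsilon^2)$, one gets the two bounds $\X_{i,j}\le O(\epsilon\|\AA\|_F^2/k)$ and $\varf{\X_i}\le O\bigl(\tfrac{\epsilon\|\AA\|_F^2}{k}\bigr)\|\RR_{i,*}\|_2^2$; Bernstein then bounds the failure probability by $\exp\bigl(-\Theta(\|\RR_{i,*}\|_2^2\,k\,\mathrm{polylog}(n)/(\epsilon\|\RR\|_F^2))\bigr)$, and the stated success probability is read off from $1-\exp(-x)\gtrsim \min(x,1)$ together with $\|\RR\|_F^2=\Theta(\|\AA\|_F^2)$. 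So the per-entry bound you correctly identify as the crux is indeed the key ingredient, but it must be fed into the exponential concentration bound, not into a counting/Paley--Zygmund argument. Your case split is also off as stated: in the regime $\|\RR_{i,*}\|_2^2\geq \epsilon n/k$ the claim is not vacuous, and Chebyshev with the variance bound only gives failure probability $O(\epsilon\|\AA\|_F^2/(k\|\RR_{i,*}\|_2^2))$, which need not be small; the exponential tail is what makes that case go through.
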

\begin{proof}
Observe, $\X_i$ is an unbiased estimator of $\| \RR_{i,*}\|^2_2$ : 
\begin{equation*}
    \expecf{}{\X_i} = \frac{\epsilon^3 t}{k^3} \expecf{}{\X_{i,j}} = \frac{\epsilon^3 t}{k^3} \sum_{j' \in [t]}\frac{k^3}{\epsilon^3n}\RR^2_{i,j'} = \| \RR_{i,*}\|^2_2
\end{equation*}
Next, we compute the variance of $\X_i$. 

\begin{equation}
\label{eqn:bern_1}
    \begin{split}
        \varf{\X_i} = \frac{\epsilon^3 t}{k^3} \varf{\X_{i,j}} & \leq \sum_{j \in [t]} \frac{1}{\epsilon^3 } \RR^4_{i,j} \\
        & \leq \sum_{j \in [t]} \frac{k^3}{\epsilon^3 } \RR^2_{i,j}\left( \frac{\trace{\AA+\NN}^2}{t^2(\AA+\NN)_{i,i}(\AA+\NN)_{j,j} }(\AA+\NN)^2_{i,j} \right)\\
        & \leq \sum_{j \in [t]} \frac{k^3}{\epsilon^3 } \RR^2_{i,j}\left( \frac{\trace{\AA+\NN}^2}{t^2(\AA+\NN)_{i,i}(\AA+\NN)_{j,j} }( \AA_{i,i}\AA_{j,j} + \NN^2_{i,j})  \right)\\
        & \leq \sum_{j \in [t]} \frac{k^3}{\epsilon^3 } \RR^2_{i,j}\left( \frac{\trace{\AA+\NN}^2 \phi^2_{\max}}{t^2} + \frac{\trace{\AA+\NN}^2 \phi^2_{\max}(\AA+\NN)_{i,i}(\AA+\NN)_{j,j} }{t^2(\AA+\NN)_{i,i}(\AA+\NN)_{j,j} }  \right)\\
        & \leq \sum_{j \in [t]} \frac{k^3}{\epsilon^3 } \RR^2_{i,j}\left( \frac{\trace{\AA+\NN}^2 \phi^2_{\max}}{t^2}   \right)\leq O\left(\frac{\epsilon\|\AA \|^2_F}{k} \|\RR_{i,*} \|^2_2 \right)
    \end{split}
\end{equation}
Here, we use that $\NN^2_{i,j} \leq \phi^2_{\max}(\AA+\NN)_{i,i}(\AA+\NN)_{j,j}$, which follows from our truncation procedure. Further, using $t = \phi_{\max}\sqrt{n}k^2/\epsilon^2$ and $\trace{\AA + \NN} \leq \sqrt{n}\|\AA \|_F + \sqrt{\eta n}\|\AA \|_F$, we can bound 
$$ \frac{\trace{\AA+\NN}^2 \phi^2_{\max}}{t^2} \leq O\left(\frac{\epsilon\|\AA \|^2_F}{k} \right)$$
\noindent Further, using the same argument as above  
\begin{equation}
\label{eqn:bern_2}
\end{equation}
$$\X_{i,j} \leq \frac{k^3}{\epsilon^3} \RR^2_{i,j} \leq O\left(\frac{\epsilon \|\AA \|^2_F}{k}\right) $$ 
Using Equations \ref{eqn:bern_1} and \ref{eqn:bern_2} in Bernstein's inequality, 
\begin{equation*}
\begin{split}
    \Pr\left[ \big|\X_i -\expec{}{\X_i}\big|\geq  \delta \expecf{}{\X_i}\right] &\leq \exp\left(- \frac{\delta^2 \|\RR_{i,*} \|^4_2 }{\frac{\epsilon\|\AA \|^2_F}{k} \|\RR_{i,*} \|^2_2  + \frac{\delta \epsilon\|\AA \|^2_F}{3k}\|\RR_{i,*} \|^2_2}\right) \\
    & \leq \exp\left(- \frac{\delta^2 \|\RR_{i,*} \|^2_2 k \log^2(n) }{\epsilon \|\RR\|^2_F}\right)
\end{split}
\end{equation*}
where we use that $\|\AA\|^2_F = \Theta(\|\RR\|^2_F) $.
Setting $\eta = \frac{1}{10}$, $\X_i = \left(1 \pm \frac{1}{10} \right)\|\RR_{i,*} \|^2_2$ with probability at least $1-\exp\left(- \frac{ \|\RR_{i,*} \|^2_2 k \log^2(n) }{\epsilon \| \RR\|^2_F}\right)$. Let $\xi_i$ be the event that  $\X_i = \left(1 \pm \frac{1}{10} \right)\| \RR_{i,*}\|^2_2$. Then, union bounding over $t\leq n$ such events $\xi_i$, simultaneously for all $i$, $\xi_i$ is true with probability at least 
\begin{equation*}
    1-\exp\left(- \frac{ \|\RR_{i,*} \|^2_2 k \log(n) }{\epsilon \|\RR \|^2_F}\right) \geq \frac{ \|\RR_{i,*} \|^2_2 k \log(n) }{\epsilon \|\RR \|^2_F} \qedhere
\end{equation*}
\end{proof}

We now have two major challenges: first, the probability with which 
the estimators are accurate is too small to even detect all rows with 
norm larger than $\phi_{\max}^2 n\| \RR \|^2_F/t^2$, and second, there is no small query certificate for when an estimator is accurate in 
estimating the row norms. Therefore, we cannot even identify the rows 
where we obtain an accurate estimate of their norm.   

To address the first issue, we make the crucial observation that while we cannot estimate the norm of each row accurately, we can hope to sample the row with the same probability as Frieze-Kannan-Vempala \cite{fkv04}. Recall, their algorithm samples row $\RR_{i,*}$ with probability at least $\|\RR_{i,*}\|^2_2/\|\RR\|^2_F$, which matches the probability in Lemma \ref{lem:large_norm}. Therefore, we can focus on designing a weaker notion of identifiability, that may potentially include extra rows.  

We begin by partitioning rows of $\RR$ into two sets. Let $\mathcal{H} = \left\{ i \textrm{ }\big|\textrm{ } \|\RR_{i,*} \|^2_2 \geq \phi_{\max}^2 n \widetilde{v}^2/t^2 \right\}$ be the set of heavy rows and $[t] \setminus \mathcal{H}$ be the remaining rows. Since with 
probability at least $1-\frac{1}{\poly(n)}$, $\| \RR \|^2_F = \Theta(\| \AA\|^2_F) = \Theta(\widetilde{v}^2)$,   $$|\mathcal{H}| = O( t^2/ \phi_{\max}^2 n ) =O( k^4 \log^4(n)/\epsilon^4)$$
Observe, every row in $\mathcal{H}$ can potentially satisfy the 
threshold $\tau= \phi_{\max}^2 n \widetilde{v}^2/t^2$. Therefore, even if our estimator $\X_i$ is 
$\Theta(\|\RR_{i,*} \|^2_2)$ for all $i \in \mathcal{H}$, we include 
at most $\widetilde{O}(k^4/\epsilon^4)$ extra rows in $\SS$, which is well within our budget. Observe, we can then sample a row with probability $1$ whenever the corresponding estimate is larger than $\tau$. This sampling process ensures that we identify rows in $\mathcal{H}$ with the right probability and also doees not query more than $O(\phi_{\max}^2nk/\epsilon)$ entries in $\AA + \NN$. 
For all the remaining rows, we know the norm is at most 
$O(\phi_{\max}^2 n \widetilde{v}^2/t^2)$. We then modify the analysis of 
\cite{fkv04} to show that we can handle both cases separately.

\begin{theorem}(Existence \cite{fkv04}.)
Let $\RR$ be a row projection-cost preserving sketch output by Step 2 of Algorithm \ref{alg:robust_lra}. For all $i \in [t]$, let $\mathcal{X}_i$ be estimate for $\| \RR_{i,*}\|^2_2$ as described in Step 3 of Algorithm \ref{alg:robust_lra}. Let $\SS$ be a subset of $s= O(\phi_{\max}^2 nk/\epsilon t)$ columns of $\RR$ sampled according to distribution $r = \{ r_1, r_2, \ldots r_t \}$ such that $r_i$ is the probability of sampling the $i$-th row.
Then, with probability at least $99/100$, there exists a $t \times k$ matrix $\UU$ in the column span of $\SS$ such that
\begin{equation*}
    \|\RR  - \UU\UU^T\RR \|^2_F \leq \|\RR - \RR_k \|^2_F + \epsilon \|\RR\|^2_F
\end{equation*}
\end{theorem}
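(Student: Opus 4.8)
The plan is to derive the claim from the linear-time low-rank approximation analysis of Frieze, Kannan and Vempala \cite{fkv04}, in the relaxed form isolated by Indyk et al.\ \cite{indyk2019sample}, specialised to the hybrid sampling process of Step 3 of Algorithm \ref{alg:robust_lra}. Recall the skeleton of that analysis: sample a set $\SS \subseteq [t]$ of rows of $\RR$, with row $i$ included with probability $r_i$, and form the reweighted matrix $\widetilde{\RR}$ whose rows are $\RR_{i,*}/\sqrt{r_i}$, $i\in\SS$; then $M := \widetilde{\RR}^{\top}\widetilde{\RR} = \sum_{i\in\SS}\tfrac1{r_i}\RR_{i,*}^{\top}\RR_{i,*}$ is an unbiased estimator of $\RR^{\top}\RR$ with $\mathbb{E}\|M-\RR^{\top}\RR\|_F^2 = \sum_i\big(\tfrac1{r_i}-1\big)\|\RR_{i,*}\|_2^4 \le \sum_i \tfrac{\|\RR_{i,*}\|_2^4}{r_i}$, and the top-$k$ eigenspace $\UU$ of $M$ lies in $\mathrm{span}\{\RR_{i,*} : i\in\SS\}$ and satisfies $\|\RR-\UU\UU^{\top}\RR\|_F^2 \le \|\RR-\RR_k\|_F^2 + 2\sqrt{k}\,\|M-\RR^{\top}\RR\|_F$ (Weyl plus Cauchy--Schwarz over the top $k$ eigenvalue perturbations). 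Thus it suffices to (a) lower bound the per-row probabilities by $r_i \gtrsim \|\RR_{i,*}\|_2^2\,k/(\epsilon\,\|\RR\|_F^2)$ up to polylogarithmic factors — which forces $\sum_i \|\RR_{i,*}\|_2^4/r_i \lesssim (\epsilon/k)\,\mathrm{polylog}(n)\,\|\RR\|_F^4$, hence $\|M-\RR^{\top}\RR\|_F \lesssim \sqrt{\epsilon/k}\,\mathrm{polylog}(n)\,\|\RR\|_F^2$ and an additive error $\sqrt{\epsilon}\,\mathrm{polylog}(n)\,\|\RR\|_F^2$, which after rescaling $\epsilon$ (equivalently, by sampling the amount prescribed in Step 3) becomes $\epsilon\|\RR\|_F^2 = \epsilon\,\Theta(\|\AA\|_F^2)$ — and (b) show $|\SS| = O(\phi_{\max}^2 nk/(\epsilon t))$ with high probability.

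For (a) I would follow the case split of Step 3. If $\|\RR_{i,*}\|_2^2 \ge \tau = \phi_{\max}^2 n\widetilde{v}^2/t^2$, then by Lemma \ref{lem:frob_estimate} together with Theorems \ref{thm:pcp_main_col} and \ref{thm:pcp_main_row} we have $\widetilde{v}^2 = \Theta(\|\AA\|_F^2) = \Theta(\|\RR\|_F^2)$, so $\tau = \Theta\big(\phi_{\max}^2 n\|\RR\|_F^2/t^2\big)$; Lemma \ref{lem:large_norm} then gives $\X_i = (1\pm\tfrac1{10})\|\RR_{i,*}\|_2^2$, hence $\X_i > \tau$, with probability $\Omega\big(\|\RR_{i,*}\|_2^2\,k/(\epsilon\,\|\RR\|_F^2)\big)$ up to polylog, and on this event row $i$ is retained with probability $1$, giving the desired bound on $r_i$. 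If $\|\RR_{i,*}\|_2^2 < \tau$, then $\|\RR_{i,*}\|_2^2/\|\RR\|_F^2 < \phi_{\max}^2 n/t^2$, and since Step 3 samples $\Theta(\phi_{\max}^2 nk/(\epsilon t))$ rows uniformly from a pool of size $\le t$, row $i$ is included with probability $\ge \phi_{\max}^2 nk/(\epsilon t^2) \ge (k/\epsilon)\,\|\RR_{i,*}\|_2^2/\|\RR\|_F^2$, as required. For (b), the uniform route contributes $O(\phi_{\max}^2 nk/(\epsilon t))$ rows, while the ``heavy'' route contributes $\#\{i : \X_i > \tau\}$; since each $\X_i$ is an unbiased estimator of $\|\RR_{i,*}\|_2^2$, Markov gives $\mathbb{E}\,\#\{i : \X_i > \tau\} \le \|\RR\|_F^2/\tau = \Theta\big(t^2/(\phi_{\max}^2 n)\big) = \widetilde{O}(k^4/\epsilon^4)$, a lower-order additive term. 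Substituting into the variance bound of the first paragraph finishes the proof, modulo one point.

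The step I expect to be the genuine obstacle is the reweighting of the rows identified by the heavy route: $M$ requires dividing $\RR_{i,*}$ by $\sqrt{r_i}$, but for such a row $r_i = \Pr[\X_i > \tau]$ is not computable, and in the borderline regime $\|\RR_{i,*}\|_2^2 \approx \tau$ it is not even within a constant factor of any natural proxy. One way to proceed is to reweight by $1/\widehat{r}_i$ for a \emph{known} lower bound $\widehat{r}_i \le r_i$ read off from $\X_i$ via Lemma \ref{lem:large_norm} (conditioned on detection, $\X_i$ is a constant-factor estimate of $\|\RR_{i,*}\|_2^2$, so one can take $\widehat{r}_i = \min(c\,\X_i k/(\epsilon\,\widetilde{v}^2),1)$); then $M$ is an overestimate of $\RR^{\top}\RR$ in the L{\"o}wner order, $\RR^{\top}\RR \preceq \mathbb{E}[M] = \RR^{\top}\RR + \sum_{i}\big(\tfrac{r_i}{\widehat{r}_i}-1\big)\RR_{i,*}^{\top}\RR_{i,*}$, and one must bound both this bias and the extra variance. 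The subtle case is precisely the $\widetilde{O}(k^4/\epsilon^4)$ potential borderline rows, each with $\|\RR_{i,*}\|_2^2 = \Theta(\tau)$: I would either (i) argue directly from the \emph{coverage} form of the FKV existence proof, in which oversampling rows is harmless because a superset of a ``good'' sample still spans a good rank-$k$ subspace, so no reweighting of the heavy rows is needed; or (ii) carry out the L{\"o}wner-order bookkeeping above, using the smallness of $\tau$ and of the number of borderline rows to keep $\|\mathbb{E}[M]-\RR^{\top}\RR\|_F$ and the added variance below $\mathrm{polylog}(n)\cdot(\epsilon/k)\|\RR\|_F^4$. Getting the constants and polylog factors in this last step to line up so that the final additive error is $\epsilon\|\RR\|_F^2$ and the sample count is $\widetilde{O}(\phi_{\max}^2 nk/(\epsilon t))$ — which then feeds into Theorem \ref{thm:main_thm} — is the crux of the argument.
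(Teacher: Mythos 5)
Your bookkeeping of the sampling probabilities is exactly the paper's: for heavy rows, Lemma \ref{lem:large_norm} gives detection (hence inclusion) probability $\Omega\big(\|\RR_{i,*}\|_2^2\,k\log(n)/(\epsilon\|\RR\|_F^2)\big)$, for rows below the threshold $\tau$ the uniform sample combined with $\|\RR_{i,*}\|_2^2\le\tau$ gives the analogous bound, and the number of spuriously retained rows is controlled through $|\mathcal{H}|$. The gap is in the engine you feed these bounds into. The Gram-matrix route, $M=\sum_{i\in\SS}\tfrac{1}{r_i}\RR_{i,*}^{\top}\RR_{i,*}$ together with $\|\RR-\UU\UU^{\top}\RR\|_F^2\le\|\RR-\RR_k\|_F^2+2\sqrt{k}\,\|M-\RR^{\top}\RR\|_F$, loses a square root: with $r_i\gtrsim\|\RR_{i,*}\|_2^2k/(\epsilon\|\RR\|_F^2)$ you get only $\|M-\RR^{\top}\RR\|_F\lesssim\sqrt{\epsilon/k}\,\|\RR\|_F^2$, hence additive error $O(\sqrt{\epsilon})\|\RR\|_F^2$, exactly as you compute. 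Your claim that rescaling $\epsilon$ is ``equivalently, sampling the amount prescribed in Step 3'' is where this breaks: to reach $\epsilon\|\RR\|_F^2$ this route needs $r_i\gtrsim (k/\epsilon^2)\,\|\RR_{i,*}\|_2^2/\|\RR\|_F^2$ and $s=\Theta(\phi_{\max}^2nk/(\epsilon^2 t))$, a factor $1/\epsilon$ more than the detection probability guaranteed by Lemma \ref{lem:large_norm} and than the theorem's sample budget, and carried downstream it would inflate the query complexity of Theorem \ref{thm:main_thm} to $\widetilde{O}(\phi_{\max}^2nk/\epsilon^2)$. The paper instead runs the per-direction FKV existence argument: for each $\ell\le k$ it forms the witness $\WW_\ell=\frac{1}{s}\sum_{i'\in[s]}\frac{\PP_{i,\ell}}{r_i}\RR_{i,*}$ (a vector in the span of the sampled rows), bounds $\expecf{}{\|\WW_\ell-\sigma_\ell\QQ_\ell\|_2^2}\le\frac{1}{s}\big(\frac{k\log n}{\epsilon}+\frac{\phi_{\max}^2 n}{t}\big)\|\RR\|_F^2$ using precisely your two probability lower bounds, sums over the $k$ directions, and applies Markov; there the error is linear in the variance rather than proportional to its square root, which is what makes $s=O(\phi_{\max}^2nk/(\epsilon t))$ suffice.

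Separately, the reweighting issue you single out as the crux is not actually an obstacle for this statement. The theorem asserts only the existence of a good rank-$k$ matrix in the span of the sampled rows, so the weights $1/r_i$ (or $1/\sqrt{r_i}$) never need to be computed: they appear only inside analysis witnesses ($M$ in your route, $\WW_\ell$ in the paper's), where one may use the exact, unknown inclusion probabilities, with only \emph{lower} bounds on $r_i$ entering the variance; the algorithm itself recovers a basis from the span by regression in Step 4 and never touches the weights. Your option (i) --- oversampling is harmless and no reweighting is needed --- is the right instinct and is essentially what the paper does, but in your write-up it is left as a speculative fallback while your primary route carries the $1/\epsilon$ loss described above.
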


\begin{proof}
We follow the proof strategy of \cite{fkv04} and show how to directly bound the variance in our setting as opposed to reducing to the two conditions above. 
Let $\RR = \PP\Sig\QQ^T = \sum_{\ell \in [t]} \Sig_{\ell,\ell} \PP_{\ell,*} \QQ_{\ell,*}^T = \sum_{\ell \in [t]} \sigma_{\ell} \PP_\ell \QQ_\ell^T$. Recall, $\RR_k = \sum_{\ell \in [k]}\AA \QQ_\ell \QQ_\ell^T $. For $\ell \in [t]$, let  $\WW_{\ell} = \frac{1}{s} \sum_{i' \in [s]} \YY_{i'} $ where $\YY_{i'} = \frac{\PP_{i,\ell}}{r_i} \RR_{i,*}$ with probability $r_i$, for all $i \in [t]$. Then, 
\begin{equation}
\expecf{}{\WW_{\ell}}  = \expecf{}{\YY_{i'}} =  \sum_{j\in[t]}\frac{\PP_{i,\ell}}{r_i} \RR_{i,*} r_i = \sigma_{\ell}  \QQ_{\ell}
\end{equation}
Therefore, in expectation the span of the rows contain a good low-rank solution.  
Next, we bound the variance. 
Recall, here we consider the rows in $\mathcal{H}$ and its complement separately. From Lemma For all $i \in \mathcal{H}$, we know that $\X_i = \Theta(\|\RR_{i,*} \|^2_2)$ with probability at least $ \|\RR_{i,*} \|^2_2 k \log(n) / \epsilon \|\RR \|^2_F$. Since for all such $i$, $ \|\RR_{i,*} \|^2_2 \geq \tau$, the corresponding $r_i \geq \frac{ \|\RR_{i,*} \|^2_2 k \log(n) }{\epsilon \|\RR \|^2_F}$, since every time we pass the threshold we sample the row. For all $i \notin \mathcal{H}$, $r_i \geq 1/t$ since there can be at most $t$ such $i$, and we sample each such row with uniform probability. 
Once we have a lower bound on $r_i$ in both cases, we open up the analysis of the variance bound in \cite{fkv04} and show that our lower bounds suffice.  

\begin{equation}
\begin{split}
\expecf{}{\| \WW_{\ell} - \sigma_\ell \QQ_\ell  \|_2^2} = \frac{1}{s}\left(\sum_{j\in[t]}\frac{\PP^2_{i,\ell}}{r_i} \|\RR_{i,*}\|^2_2 \right)  - \frac{\sigma^2_\ell}{s} 
& \leq  \frac{1}{s}\left( \sum_{j \in \mathcal{H}}\frac{\PP^2_{i,\ell}}{r_i} \|\RR_{i,*}\|^2_2  + \sum_{i \in [t] \setminus \mathcal{H} } \frac{\PP^2_{i,\ell}}{r_i} \|\RR_{i,*}\|^2_2  \right) \\
& \leq  \frac{1}{s}\left( \sum_{j \in \mathcal{H}} \frac{ k \log(n) \PP^2_{i,\ell} \|\RR\|^2_F  }{ \epsilon }  + \sum_{j \in [t] \setminus \mathcal{H} } t \PP^2_{i,\ell} \|\RR_{i,*} \|^2_2   \right) \\
& \leq  \frac{1}{s}\left( \sum_{j \in \mathcal{H}} \frac{ k \log(n) \PP^2_{i,\ell} \|\RR\|^2_F  }{ \epsilon }  + \sum_{j \in [t] \setminus \mathcal{H} } \frac{n}{t} \PP^2_{i,\ell} \widetilde{v}^2   \right) \\
&\leq \frac{1}{s}\left(\frac{k\log(n)}{\epsilon} + \frac{\phi_{\max}^2 n}{t} \right) \| \RR \|^2_F
\end{split}
\end{equation}
Now, we can repeat the argument from \cite{fkv04} and it suffices to set $s = \left(\frac{\phi_{\max}^2 n}{t} + \frac{k}{\epsilon}\right) \frac{k}{\epsilon} = O(\frac{\phi_{\max}^2 nk}{\epsilon t})$. For completeness, we present the rest of the proof here. 
For all $\ell \in [t]$, let $\YY_{\ell}= \frac{1}{\sigma_{\ell}} \WW_{\ell}$. Let $\mathcal{V} =\textrm{span}(\YY_{1},\YY_{2},\ldots, \YY_{k})$. Let $\ZZ_{1}, \ZZ_{2}, \ldots \ZZ_{t}$ be an orthonormal basis for $\R^{t}$ such that $\mathcal{V} = \textrm{span}(\ZZ_{1},\ZZ_{2},\ldots, \ZZ_{k'})$, where $k'=\textrm{dim}(\mathcal{V})$. Let $\SS = \RR \sum_{\ell \in [k]}\ZZ_{\ell}\ZZ_{\ell}^T$ be the candidate low-rank approximation approximation. Then, 
\begin{equation}
\label{eqn:fkv_error1}
    \begin{split}
        \|\RR - \SS \|^2_F & = \sum_{\ell \in [t]} \| (\RR - \SS) \ZZ_{\ell} \|^2_2 \\
        & = \sum_{\ell \in [k'+1, t]} \| \RR  \ZZ_{\ell} \|^2_2 \\
        & = \sum_{\ell \in [k'+1, t]} \left\| \left((\RR - \RR \sum_{\ell' \in [k]}\QQ_{\ell'}\YY_{\ell'}^T \right)  \ZZ_{\ell} \right\|^2_2 \\
        &\leq \left\| \RR - \RR \sum_{\ell' \in [k]}\YY_{\ell'}\YY_{\ell'}^T \right\|^2_F
    \end{split}
\end{equation}
where the first equality follows from $\|\ZZ_{\ell}\|^2_2=1$, the seconds follows from $\ZZ_{\ell'}^T \ZZ_{\ell} =0$ for $\ell'\neq \ell$, the third follows from $\langle \YY_{\ell'}, \ZZ_{\ell} \rangle = 0$ for all $\ell' \leq k$ and $\ell > k'$. Let $\widehat{\SS} = \RR \sum_{\ell' \in [k]} \YY_{\ell'} \YY_{\ell'}^T$. Since $\PP_{1},\PP_{2}, \ldots, \PP_{t}$ forms an orthonormal basis
\begin{equation}
\label{eqn:fkv_error2}
\begin{split}
    \left\| \RR - \widehat{\SS} \right\|^2_F & \leq  \sum_{\ell \in[t]}\left\| \PP_{\ell} \left(\RR -\widehat{\SS}\right) \right\|^2_2 \\
    & = \sum_{\ell \in[k]}\left\| \sigma_{\ell} \QQ_{\ell} - \WW_{\ell}\right\|^2_2 + \sum_{\ell\in [k+1, t]} \sigma_{\ell}^2
\end{split}
\end{equation}
Taking expectations on both sides of equation \ref{eqn:fkv_error2}, we have 
\begin{equation}
\begin{split}
\expecf{}{\left\| \RR - \widehat{\SS} \right\|^2_F} & \leq \expecf{}{ \sum_{\ell \in[k]}\left\| \sigma_{\ell} \QQ_{\ell} - \WW_{\ell}\right\|^2_2} + \|\RR -\RR_{k}\|^2_F \\
& \leq \frac{k}{s}\left(\frac{k\log(n)}{\epsilon} + \frac{\phi_{\max}^2n}{t} \right) \| \RR \|^2_F + \|\RR -\RR_{k}\|^2_F
\end{split}
\end{equation}
Since $\widehat{\SS}$ is a rank $k$ matrix and $\RR_{k}$ is the best rank $k$ approximation to $\RR$, $\| \RR - \widehat{\SS}\|^2_F - \| \RR - \RR_k\|^2_F$ is a non-negative random variable. Thus, using Markov's inequality and Equation \ref{eqn:fkv_error1}, 
\begin{equation*}
\Pr\left[ \| \RR - \SS\|^2_F - \| \RR - \RR_k\|^2_F \geq \frac{100nk}{st} \|\RR \|^2_F \right] \leq \frac{1}{100}
\end{equation*}
Therefore, it suffices to sample $s = O\left(\frac{\phi_{\max}^2nk}{\epsilon t} \right)$ columns, read all of them and compute a low rank approximation for $\RR$ with probability at least $\frac{99}{100}$. Observe, the total entries read by this algorithm is $O\left(\frac{\phi_{\max}^2nk}{\epsilon t} \cdot t \right) = O\left(\frac{\phi_{\max}^2 nk}{\epsilon} \right)$. 
\end{proof}

It remains to show that we can now recover a low-rank approximation for $\AA$, in factored form, from the low-rank approximation for $\RR$. Here, we follow the approach of \cite{cohenmm17},\cite{mw17} and \cite{bakshi2018sublinear}, where we set up two regression problems, and use the sketch and solve paradigm to compute an approximate solution. We use the following Lemma from \cite{bakshi2018sublinear} that relates a good low-rank approximation of an additive error project-cost preserving sketch to a low-rank approximation of the original matrix. A similar guarantee for relative error appears in \cite{cohenmm17} and \cite{mw17}. 

\begin{lemma}(Lemma 4.4 in \cite{bakshi2018sublinear}.)
\label{lem:pcp_useful}
Let $\CC$ be a column PCP for $\AA$ satisfying the guarantee of Theorem \ref{thm:pcp_main_col}. Let $\PP^*_{\CC} = \argmin_{\textrm{rank}(\XX)\leq k}\| \CC - \XX\CC \|^2_F$
and $\PP^*_{\AA} =\argmin_{\textrm{rank}(\XX)\leq k} \|\AA - \XX\AA \|^2_F$.
Then, for any rank $k$ projection matrix $\PP$ such that $\|\CC - \PP \CC \|^2_F \leq  \| \CC - \PP^*_{\CC} \CC\|^2_F + (\epsilon+\sqrt{\eta}) \|\CC\|^2_F$, with probability at least $99/100$,
\[
\|\AA - \PP \AA \|^2_F \leq  \| \AA - \PP^*_{\AA} \AA\|^2_F + (\epsilon+\sqrt{\eta}) \| \AA \|^2_F
\]
A similar guarantee holds for a row PCP of $\AA$. 
\end{lemma}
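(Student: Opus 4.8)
\textbf{Proof proposal for Lemma \ref{lem:pcp_useful}.}

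The plan is to adapt the standard ``PCP transfer'' argument (as in \cite{cohenmm17,mw17,bakshi2018sublinear}) to our additive-error setting, using the column projection-cost preservation guarantee of Theorem \ref{thm:pcp_main_col} twice: once for the candidate projection $\PP$ and once for an optimal projection $\PP^*_{\AA}$. Fix the additive error level and write $\theta = (\epsilon + \sqrt{\eta})\|\AA\|^2_F$; by Theorem \ref{thm:pcp_main_col} and equation \ref{eqn:pcp1} we also have $\|\CC\|^2_F = \|\AA\|^2_F \pm \theta$, so up to a constant factor in the additive term we may freely interchange $\|\CC\|^2_F$ and $\|\AA\|^2_F$. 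The key is that Theorem \ref{thm:pcp_main_col} holds simultaneously for \emph{any} fixed rank-$k$ orthogonal projection; we will instantiate it for the three projections $\PP$, $\PP^*_{\CC}$, and $\PP^*_{\AA}$ (a union bound over three fixed matrices costs only a constant factor in the failure probability, which we absorb into $c$).

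The steps, in order: (1) Apply Theorem \ref{thm:pcp_main_col} with $\XX = \PP$ to get $\|\AA - \PP\AA\|^2_F \le \|\CC - \PP\CC\|^2_F + \theta$. (2) Use the hypothesis of the lemma: $\|\CC - \PP\CC\|^2_F \le \|\CC - \PP^*_{\CC}\CC\|^2_F + \theta$. (3) Since $\PP^*_{\CC}$ is the minimizer over rank-$k$ projections for $\CC$ and $\PP^*_{\AA}$ is just \emph{some} rank-$k$ projection, $\|\CC - \PP^*_{\CC}\CC\|^2_F \le \|\CC - \PP^*_{\AA}\CC\|^2_F$. (4) Apply Theorem \ref{thm:pcp_main_col} again, now with $\XX = \PP^*_{\AA}$, in the other direction: $\|\CC - \PP^*_{\AA}\CC\|^2_F \le \|\AA - \PP^*_{\AA}\AA\|^2_F + \theta$. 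Chaining (1)--(4) gives
\begin{equation*}
\|\AA - \PP\AA\|^2_F \le \|\AA - \PP^*_{\AA}\AA\|^2_F + 3\theta = \|\AA - \PP^*_{\AA}\AA\|^2_F + O(\epsilon + \sqrt{\eta})\|\AA\|^2_F,
\end{equation*}
and rescaling $\epsilon$ by a constant (and noting $\sqrt{\eta}$ is already absorbed) yields the claimed bound. For the row PCP statement, the identical argument runs with $\RR = \SS\CC$ in place of $\CC$, $\AA$ replaced by $\CC$, and Theorem \ref{thm:pcp_main_row} in place of Theorem \ref{thm:pcp_main_col}; the additive term there is $O(\epsilon + \sqrt{\eta})\|\AA\|^2_F$ by the statement of Theorem \ref{thm:pcp_main_row}, so no new ideas are needed.

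The one subtlety — and the main thing to be careful about rather than a genuine obstacle — is that Theorem \ref{thm:pcp_main_col} as stated is a \emph{two-sided} additive guarantee for a fixed projection, not a uniform-over-all-rank-$k$-projections guarantee, so we must make sure each of the three projections we plug in is fixed independently of the sketch's randomness. The projection $\PP^*_{\AA}$ is a deterministic function of $\AA$, so that is fine; $\PP^*_{\CC}$ and $\PP$ depend on $\CC$ (hence on the sketch), which is the familiar circularity in PCP arguments. The standard resolution, which we follow, is that we only ever need the \emph{inequality} $\|\CC - \PP^*_{\CC}\CC\|^2_F \le \|\CC - \PP^*_{\AA}\CC\|^2_F$ (a deterministic consequence of optimality, requiring no concentration), and the only place we invoke concentration for a data-dependent projection is step (1) with $\XX = \PP$; but step (1) is exactly the hypothesis chain we are handed, and in the intended application $\PP$ is produced by an algorithm whose guarantee is conditioned on the PCP event already holding, so there is no vicious cycle. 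Concretely we state the lemma so that the $99/100$ probability refers to the event in Theorem \ref{thm:pcp_main_col} (and Theorem \ref{thm:pcp_main_row}) holding, and then the chain (1)--(4) is purely algebraic on that event. I expect this to be a short, routine proof; the only real work is bookkeeping the constants in the additive terms and the union bound over the $O(1)$ projections involved.
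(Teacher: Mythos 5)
Your proof is correct and is essentially the standard PCP-transfer argument that the paper imports verbatim from Lemma 4.4 of Bakshi--Woodruff (the paper gives no proof of its own, only the citation): apply the projection-cost preservation guarantee once to $\PP$ and once to $\PP^*_{\AA}$, use the optimality of $\PP^*_{\CC}$ in between, and absorb the resulting constant-factor blow-up of the additive term by rescaling $\epsilon$. Your one stated subtlety about fixed versus data-dependent projections is actually moot: Theorem \ref{thm:pcp_main_col} is proved via the Loewner-ordering spectral bound of Theorem \ref{thm:pcp1}, which holds on a single good event and therefore gives the additive guarantee \emph{uniformly} over all rank-$k$ projections, so instantiating it at the sketch-dependent $\PP$ and $\PP^*_{\CC}$ requires no extra care.
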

\noindent  Note, while $\RR \SS \SS^T$ is an approximate rank-$k$ solution for $\RR$, it does not have the same dimensions as $\AA$. If we do not consider running time, we could construct a low-rank approximation to $\AA$ as follows: since projecting $\RR$ onto $\SS^T$ is approximately optimal, it follows from Lemma \ref{lem:pcp_useful} that with probability $99/100$,
\begin{equation}
\label{eqn:pcp_as}
\| \CC -\CC\SS\SS^T  \|^2_F =\| \CC - \CC_k \|^2_F \pm (\epsilon+\sqrt{\eta}) \| \CC \|^2_F
\end{equation}
Let $\CC_k = \UU'\VV'^T$ be such that $\UU'$ has orthonormal columns. Then, $\|\CC - \UU'\UU'^T \CC \|^2_F = \| \CC - \CC_k \|^2_F$ and by Lemma \ref{lem:pcp_useful} it follows that with probability $98/100$, $\|\AA - \UU'\UU'^T\AA\|^2_F \leq \|\AA - \AA_k \|^2_F + (\epsilon+\sqrt{\eta})\| \AA \|^2_F$. However, even approximately computing a column space $\UU'$ for $\CC_k$ using an input-sparsity time algorithm, such as \cite{clarkson2013low}, could require $\Omega\left(nt\right)$ queries. To get around this issue, we observe that an approximate solution for $\RR$ lies in the row space of $\SS^T$ and therefore, an approximately optimal solution for $\CC$ lies in the row space of $\SS^T$. We then set up the following regression problem:
\begin{equation}
\label{eqn:col_pcp_reg}
\min_{\textrm{rank}(\XX)\leq k} \|\CC - \XX \SS^T \|^2_F
\end{equation}
Note, this regression problem is still too large to be solved in sublinear time. Therefore, we sketch it by sampling columns of $\CC$ to set up a smaller regression problem. Observe, since $\SS$ has orthonormal columns, the leverage scores are simply $\ell^2_2$ norms of rows of $\SS$. Now, using  Lemma \ref{lem:fast_regression},  approximately solving this regression problem requires sampling $\Omega(k/\epsilon)$ rows of $\CC$, which in turn requires $\Omega(\frac{nk}{\epsilon})$ queries to $\AA+\NN$. 
Note, the above theorem applied to Equation \ref{eqn:col_pcp_reg} can take $O\left(nk+ \textrm{poly}(k,\epsilon^{-1})\right)$ time and thus is a lower order term.  
Since $\SS^T$ has orthonomal rows, the leverage scores are precomputed. With probability at least $99/100$, we can compute $\XX_{\CC} = \argmin_{\XX} \|\CC\EE - \XX \SS^T\EE \|^2_F$, where $\EE$ is a leverage score sketching matrix with $O\left(\frac{k}{\epsilon}\right)$ columns, as shown in Lemma \ref{lem:fast_regression}, and thus requires $O\left(\frac{nk}{\epsilon}\right)$ queries to $\AA$. 
Then,
\begin{equation}
\begin{split}
\|\CC - \XX_{\CC} \SS^T \|^2_F & \leq (1+\epsilon) \min_{\XX}\|\CC - \XX \SS^T \|^2_F \\
& \leq (1+\epsilon) \| \CC -\CC\SS\SS^T  \|^2_F \\ 
& = \| \CC - \CC_k \|^2_F \pm (\epsilon+\sqrt{\eta})\| \CC \|^2_F
\end{split}
\end{equation}
where the last two inequalities follow from equation \ref{eqn:pcp_as}. 
Let  $\XX_{\CC}\SS^T = \UU'\VV'^T $ be such that $\UU'$ has orthonormal columns. Then, the column space of $\UU'$ contains an approximately optimal solution for $\AA$, since $\|\CC - \UU'\VV'^T \|^2_F = \| \CC -\CC_k \|^2_F \pm \epsilon \|\CC \|^2_F$ and $\CC$ is a column PCP for $\AA$. It follows from Lemma \ref{lem:pcp_useful}  that with probability at least $98/100$, 
\begin{equation}
\label{eqn:additive_for_a}
\| \AA - \UU'\UU'^T\AA \|^2_F \leq \| \AA - \AA_k \|^2_F + (\epsilon+\sqrt{\eta}) \|\AA\|_F
\end{equation}
Therefore, there exists a good solution for $\AA$ in the column space of $\UU'$. Since we cannot compute this explicitly, we set up the following regression problem: 
\begin{equation}
\min_{\XX} \|\AA - \UU'\XX \|^2_F
\end{equation}
Again, we sketch the regression problem above by sampling columns of $\AA$ and apply Lemma \ref{lem:fast_regression}. We can then compute $\XX_{\AA} = \argmin_{\XX} \| \EE' \AA - \EE' \UU'\XX \|^2_F $ with probability at least $99/100$, where $\EE'$ is a sketching matrix with $\left(\frac{k}{\epsilon}\right)$ rows and $O\left(\frac{nk}{\epsilon}\right)$ queries to $\AA$. Then,
\begin{equation}
\begin{split}
\| \AA -  \UU'\XX_{\AA} \|^2_F & \leq (1+\epsilon)\min_{\XX} \| \AA - \UU'\XX \|^2_F \\
& \leq (1+\epsilon) \|\AA - \UU' \UU'^T\AA \|^2_F \\
& \leq  \| \AA - \AA_k \|^2_F + O(\epsilon+\sqrt{\eta}) \|\AA\|^2_F
\end{split}
\end{equation}
where the second inequality follows from $\XX$ being the minimizer and $\UU'^T\AA$ being some other matrix, and the last inequality follows from equation \ref{eqn:additive_for_a}.
Recall, $\UU'$ is an $n \times k$ matrix and the time taken to solve the regression problem is $O\left(nk + \textrm{poly}(k,\epsilon^{-1})  \right)$. 

Therefore, we observe that $\UU'\XX_{\AA}$ suffices and we output it in factored form by setting $\MM = \UU'$ and $\NN = \XX_{\AA}^T $. Union bounding over the probabilistic events, and rescaling $\epsilon$, with probability at least $9/10$, Algorithm \ref{alg:robust_lra} outputs $\MM \in \mathbf{R}^{n \times k}$ and $\NN^T \in \mathbf{R}^{n \times k}$ such that the total number of entries queried in $\AA$ are $\widetilde{O}\left(\frac{\phi^2_{\max}nk}{\epsilon}\right)$. This concludes the proof of Theorem \ref{thm:main_thm}.

\vspace{0.2in}
\textbf{Correlation Matrices.}
We introduce low-rank approximation of correlation Matrices, a special case of PSD matrices
where the diagonal is all $1$s. 
Correlation matrices are well studied in numerical linear algebra, statistics and finance since an important statistic of $n$ random variables $X_1, X_2, \ldots X_n$ is given by computing the pairwise correlation coefficient, $\textrm{corr}(X_i,X_j) = \textrm{cov}(X_i,X_j)/\sqrt{\textrm{var}(X_i)\cdot \textrm{var}(X_j)}$. A natural matrix representation of correlation coefficients results in a  $n \times n$ correlation matrix $\AA$ such that $\AA_{i,j} = \textrm{corr}(X_i, X_j)$. 

\begin{definition}(Correlation Matrices.)
$\AA$ is an $n \times n$ correlation matrix if $\AA$ is PSD and $\AA_{i,i} =1$, for all $i \in [n]$. 
\end{definition}

Often, in practice the correlation matrices obtained are close to being PSD, but corrupted by noise in the form of missing or asynchronous observations, stress testing or aggregation. Here the goal is to query few entries of the corrupted matrix and recover a rank-$k$ matrix close to the underlying correlation matrix, assuming that the underlying matrix is also close to low rank to begin with. 

Here we observe that since correlation matrices have all diagonal entries equal to $1$, we can compute $\phi_{\max}$ by simply reading the diagonal entries of $\AA + \NN$. However, we can do even better since we can discard the diagonal entries of $\AA + \NN$. 
The main insight here is that for correlation matrices, our algorithm simply uniformly samples columns and rows to construct our row and column PCPs, since we know what the true diagonals should be. 
In this case, no matter what the adversary does to the diagonal,  $\phi_{\max}=1$ and we obtain a $\widetilde{O}(nk/\epsilon)$ query algorithm.

\begin{corollary}(Robust LRA for Correlation Matrices.)
\label{cor:cor_matrix_lra}
Let $k$ be an integer and $1>\epsilon>\eta>0$. Given $\AA + \NN$, where $\AA$ is a correlation matrix and $\NN$ is a corruption term such that $\|\NN\|^2_F \leq \eta \|\AA\|^2_F$ and for all $i \in [n]$ $\|\NN_{i,*}\|^2_2 \leq c\|\AA_{i,*}\|^2_2$ for a fixed constant $c$, there exists an algorithm that samples $\widetilde{O}\left(nk/\epsilon\right)$ entries in $\AA+\NN$ and with probability at least $99/100$, computes a rank $k$ matrix $\BB$ such that 
\begin{equation*}
    \|\AA - \BB \|^2_F \leq \|\AA - \AA_k \|^2_F + (\epsilon+ \sqrt{\eta})\|\AA\|^2_F
\end{equation*}
\end{corollary}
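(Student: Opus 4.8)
The plan is to derive this as a specialization of Theorem \ref{thm:main_thm}, exploiting the fact that a correlation matrix has a \emph{known} diagonal. Since $\AA$ is a correlation matrix, $\AA_{i,i}=1$ for every $i\in[n]$, so whatever the adversary did to the diagonal of $\AA+\NN$ is irrelevant: I would first preprocess by overwriting the diagonal. Formally, let $\NN'$ agree with $\NN$ on all off-diagonal entries and satisfy $\NN'_{i,i}=0$ for all $i$. Then $\AA+\NN'$ has diagonal identically $1$, agrees with $\AA+\NN$ off the diagonal, and $\NN'$ is still a valid corruption: $\|\NN'\|^2_F\le\|\NN\|^2_F\le\eta\|\AA\|^2_F$ and $\|\NN'_{i,*}\|^2_2\le\|\NN_{i,*}\|^2_2\le c\|\AA_{i,*}\|^2_2$ for each $i$. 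The key point is that for this instance $\phi_{\max}=\max_{j}\AA_{j,j}/|(\AA+\NN')_{j,j}|=\max_j 1/1=1$. I would then run Algorithm \ref{alg:robust_lra} on $\AA+\NN'$; equivalently, run it on $\AA+\NN$ while treating every diagonal query as returning $1$, which requires no diagonal queries at all.

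Next I would observe that with $\phi_{\max}=1$ every data-dependent sampling step collapses to uniform sampling: the column-PCP distribution of Theorem \ref{thm:pcp_main_col} is $q_j=(\AA+\NN')_{j,j}/\trace{\AA+\NN'}=1/n$, and the row-PCP distribution of Theorem \ref{thm:pcp_main_row} is $p_i=1/n$, so Steps 1--2 simply take $t=\widetilde O(\sqrt{n}k^2/\epsilon^2)$ uniformly random columns and then $t$ uniformly random rows, yielding a $t\times t$ matrix $\RR$. The truncation threshold $\mu=\phi_{\max}\sqrt{|(\AA+\NN)_{i,i}|\,|(\AA+\NN)_{j,j}|}$ becomes $1$, and since $\AA$ is a correlation matrix we have $|\AA_{i,j}|=|\langle\AA^{1/2}_{i,*},\AA^{1/2}_{j,*}\rangle|\le\sqrt{\AA_{i,i}\AA_{j,j}}=1$ by Cauchy--Schwarz, so truncating each queried off-diagonal entry to $1$ only shrinks $|\NN'_{i,j}|$ and hence only decreases the corruption while leaving all structural hypotheses intact. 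In particular, the inequality $(\AA+\NN')^2_{i,j}\le\phi^2_{\max}(\AA+\NN')_{i,i}(\AA+\NN')_{j,j}$ used in the variance bounds of Lemma \ref{lem:large_norm} and Theorem \ref{thm:pcp_spectral_row} holds trivially with $\phi_{\max}=1$.

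Given these reductions, I would invoke Theorems \ref{thm:pcp_main_col}, \ref{thm:pcp_main_row}, \ref{thm:pcp_spectral_row}, Lemmas \ref{lem:large_norm}, \ref{lem:pcp_useful}, the Frieze--Kannan--Vempala existence result, and the two sketch-and-solve regression steps exactly as in the proof of Theorem \ref{thm:main_thm}, now with $\phi_{\max}=1$ throughout. This yields $\MM,\NN^{\top}\in\mathbb{R}^{n\times k}$ with $\|\AA-\MM\NN\|^2_F\le\|\AA-\AA_k\|^2_F+(\epsilon+\sqrt{\eta})\|\AA\|^2_F$ with probability at least $99/100$, while the total number of queries to $\AA+\NN$ is $\widetilde O(\phi^2_{\max}nk/\epsilon)=\widetilde O(nk/\epsilon)$ (the $\Theta(n)$ samples for the Frobenius-norm estimate $\widetilde v^2$, the $\epsilon^3t/k^3$ samples per row of $\RR$, the fully sampled heavy/uniform rows, and the two regression sketches all fit within this bound). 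Setting $\BB=\MM\NN$ gives the claim. I do not expect a genuine obstacle here; the only thing needing care is the verification that overwriting the diagonal produces an honest instance of the noise model (so $\phi_{\max}=1$ is legitimate, not merely assumed) and that the truncation-to-$1$ step interacts cleanly with the concentration arguments — both of which follow from $\NN'$ being entrywise no larger in magnitude than $\NN$ off the diagonal and the correlation-matrix bound $|\AA_{i,j}|\le1$.
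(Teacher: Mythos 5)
Your proposal is correct and follows essentially the same route as the paper: the paper's argument is precisely that the true diagonal of a correlation matrix is known to be $1$, so any diagonal corruption can be discarded, forcing $\phi_{\max}=1$, turning the diagonal-sampling PCP constructions into uniform sampling, and then invoking the robust algorithm of Theorem \ref{thm:main_thm} with $\phi_{\max}=1$ to get $\widetilde{O}(nk/\epsilon)$ queries and additive error $(\epsilon+\sqrt{\eta})\|\AA\|^2_F$. Your added checks (that zeroing the diagonal of $\NN$ yields a valid instance of the noise model and that truncating off-diagonal queries to $1$ only shrinks the corruption, using $|\AA_{i,j}|\le 1$) are exactly the implicit verifications behind the paper's two-sentence derivation.
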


Note, the sample complexity of this algorithm is optimal, since there is an $\Omega(nk/\epsilon)$ query lower bound for additive-error low-rank approximation of correlation matrices, even when there is no corruption (see Corollary \ref{cor:sample_lb_corr}).

\vspace{0.2in}
\textbf{Additive-Error PSD Low-Rank Approximation.}
In the limit where $\eta =0$, $\phi_{max} =1$, and we obtain an algorithm with query complexity $\widetilde{O}(nk/\epsilon)$. While this guarantee is already implied by our algorithm for \textit{relative-error} low-rank approximation, our additive-error algorithm is simpler to implement, since the sampling probabilities can be computed \textit{exactly} by simply reading the diagonal. 
\begin{corollary}(Sample-Optimal Additive-Error LRA.)
Given a PSD matrix $\AA$, rank parameter $k$, and $\epsilon>0$, there exists an algorithm that samples $\widetilde{O}(nk/\epsilon)$ entries in $\AA$ and outputs a rank-$k$ matrix $\BB$ such that with probability at least $99/100$,
\begin{equation*}
    \|\AA - \BB \|^2_F \leq \|\AA - \AA_k \|^2_F + \epsilon \|\AA\|^2_F
\end{equation*}
\end{corollary}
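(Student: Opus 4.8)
The plan is to obtain this corollary as the special case $\NN = 0$ of Theorem~\ref{thm:main_thm}. First I would observe that when the corruption matrix is identically zero, the hypotheses of Theorem~\ref{thm:main_thm} are satisfied trivially: $\|\NN\|^2_F = 0 \le \sqrt{\eta}\|\AA\|^2_F$ with $\eta = 0$, and $\|\NN_{i,*}\|^2_2 = 0 \le c\|\AA_{i,*}\|^2_2$ for every $i$. Moreover $\phi_{\max} = \max_{j} \AA_{j,j}/|(\AA+\NN)_{j,j}| = \max_j \AA_{j,j}/\AA_{j,j} = 1$, using that the diagonal entries of a PSD matrix are nonnegative. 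Plugging $\eta = 0$ and $\phi_{\max} = 1$ into the conclusion of Theorem~\ref{thm:main_thm} yields a rank-$k$ factorization $\MM\NN$ with $\|\AA - \MM\NN\|^2_F \le \|\AA - \AA_k\|^2_F + \epsilon\|\AA\|^2_F$ and query complexity $\widetilde{O}(\phi_{\max}^2 nk/\epsilon) = \widetilde{O}(nk/\epsilon)$; taking $\BB = \MM\NN$ finishes the argument.

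To justify the ``simpler to implement'' remark I would trace Algorithm~\ref{alg:robust_lra} through this regime. With $\NN = 0$ the column (and row) sampling distribution becomes $q_j = \AA_{j,j}/\trace{\AA}$, which is computed \emph{exactly} after reading only the $n$ diagonal entries of $\AA$; the column PCP of Theorem~\ref{thm:pcp_main_col} and the row PCP of Theorem~\ref{thm:pcp_main_row} then hold with pure additive error $\epsilon\|\AA\|^2_F$ (the $\sqrt{\eta}$ terms vanish), producing a $t \times t$ matrix $\RR$ with $t = \widetilde{O}(\sqrt{n}\,k^2/\epsilon^2)$. The row-norm estimation step (Lemma~\ref{lem:large_norm}), the modified Frieze--Kannan--Vempala existence argument, and the two sketch-and-solve regression steps carry over verbatim with $\phi_{\max} = 1$; no step uses that $\NN$ is nonzero.

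An alternative route, giving a slightly stronger conclusion, is to invoke Theorem~\ref{thm:sample_opt_psd_lra} directly: its output satisfies $\|\AA - \MM\NN\|^2_F \le (1+\epsilon)\|\AA - \AA_k\|^2_F = \|\AA - \AA_k\|^2_F + \epsilon\|\AA - \AA_k\|^2_F$, and then $\|\AA - \AA_k\|^2_F \le \|\AA\|^2_F$ absorbs the last term into $\epsilon\|\AA\|^2_F$, again with query complexity $\widetilde{O}(nk/\epsilon)$.

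There is essentially no technical obstacle: all the real work lives in the proofs of Theorems~\ref{thm:main_thm} and~\ref{thm:sample_opt_psd_lra}. The only minor bookkeeping is that if $\AA$ has a zero diagonal entry, then by the Cauchy--Schwarz bound $\AA_{i,j}^2 \le \AA_{i,i}\AA_{j,j}$ noted in the preliminaries the corresponding row and column are identically zero, so they may be deleted without changing $\|\AA - \AA_k\|^2_F$ or $\|\AA\|^2_F$; this keeps $\phi_{\max}$ well defined and equal to $1$.
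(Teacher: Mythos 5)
Your proposal is correct and matches the paper's own treatment: the corollary is obtained exactly by specializing Theorem~\ref{thm:main_thm} (Algorithm~\ref{alg:robust_lra}) to the limit $\eta = 0$, $\phi_{\max} = 1$, and the paper likewise remarks that the guarantee is already implied by the relative-error result of Theorem~\ref{thm:sample_opt_psd_lra}, which is your alternative route. Your extra bookkeeping (zero diagonal entries force zero rows/columns via $\AA_{i,j}^2 \le \AA_{i,i}\AA_{j,j}$) is a harmless refinement of the same argument.
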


We show a matching lower bound on the query complexity of additive-error low-rank approximation of PSD matrices. Here, we simply observe that the lower bound construction introduced by \cite{mw17} of $\Omega\left(\frac{nk}{\epsilon}\right)$ also holds for additive error. As a consequence our algorithm is optimal in the setting where there is no corruption. 

\begin{corollary}(Correlation Matrix Lower Bound, Theorem 13 \cite{mw17}.)
\label{cor:sample_lb_corr}
Let $\AA$ be a PSD matrix,  $k\in \mathbb{Z}$ and $\epsilon>0$ be such that $\frac{nk}{\epsilon} = o(n^2)$. Any randomized algorithm, $\mathcal{A}$, that with probability at least $2/3$, computes a rank $k$ matrix $\BB$ such that 
\begin{equation*}
    \|\AA - \BB \|^2_F \leq \| \AA - \AA_k \|^2_F + \epsilon\|\AA \|^2_F
\end{equation*}
must read $\Omega\left(\frac{nk}{\epsilon}\right)$ entries of $\AA$ on some input, possibly adaptively, in expectation.
\end{corollary}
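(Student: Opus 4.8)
The plan is to observe that the Musco--Woodruff hard instance for \emph{relative}-error PSD low-rank approximation is already hard for the additive-error guarantee, simply because on that instance $\|\AA\|_F^2$ and $\|\AA-\AA_k\|_F^2$ agree up to a constant factor. I would take the distribution $\gamma(n,\epsilon_0,k_0)$ of \cref{def:hard_dist} with $k_0=k$ and $\epsilon_0 = c\epsilon$, where $c$ is an absolute constant fixed below. Every matrix $\AA$ in the support is PSD — each non-identity principal block is an identity block with a principal all-ones submatrix superimposed, hence PSD — and has all diagonal entries equal to $1$, so it is a correlation matrix; this is why the bound applies even to correlation matrices.

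The only computation to do is the pair of estimates $\|\AA\|_F^2 = \Theta(n)$ and $\|\AA-\AA_{k_0}\|_F^2 = \Theta(n)$. Both are read off from the block structure: the diagonal alone contributes $n$ to $\|\AA\|_F^2$, each of the at most $k_0$ hidden all-ones blocks of side $|\T_\ell| = \Theta(\sqrt{\epsilon_0 n/k_0})$ adds a further $|\T_\ell|^2 = \Theta(\epsilon_0 n/k_0)$, for a total overhead $O(\epsilon_0 n) = O(n)$ when $\epsilon_0 = O(1)$; and a rank-$k_0$ truncation captures the single large eigenvector of each non-identity block, leaving residual $\|\AA-\AA_{k_0}\|_F^2 = n - \sum_\ell |\T_\ell| = \Theta(n)$, where we use $\epsilon_0 k_0 = o(n)$, i.e. $k/\epsilon = o(n)$, which is exactly the hypothesis $nk/\epsilon = o(n^2)$. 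Hence there is an absolute constant $C$ with $\|\AA\|_F^2 \le C\,\|\AA-\AA_{k_0}\|_F^2$ everywhere on the support of $\gamma$; choosing $c = C$ gives $\epsilon\|\AA\|_F^2 \le \epsilon_0\,\|\AA-\AA_{k_0}\|_F^2$.

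With this the reduction is immediate. If $\BB$ is a rank-$k$ matrix with $\|\AA-\BB\|_F^2 \le \|\AA-\AA_k\|_F^2 + \epsilon\|\AA\|_F^2$, then $\|\AA-\BB\|_F^2 \le (1+\epsilon_0)\|\AA-\AA_{k_0}\|_F^2$, so $\BB$ is a $(1+\epsilon_0)$-relative-error rank-$k_0$ approximation. Thus a randomized algorithm that with probability at least $2/3$ produces an additive-$\epsilon\|\AA\|_F^2$ rank-$k$ approximation, when run on $\AA\sim\gamma(n,\epsilon_0,k_0)$, solves the relative-error problem of \cite{mw17} with the same success probability and query count. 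Invoking their lower bound (Theorem~13 of \cite{mw17}, restated above) — which uses Yao's minimax principle \cite{yao_minimax}, reduces to deciding over $\gamma$ whether each block is the identity or contains a planted all-ones submatrix, and rules out any algorithm whose success probability beats $1/2$ by a constant (so success $\ge 2/3$ is covered) — shows this requires $\Omega(nk_0/\epsilon_0) = \Omega(nk/\epsilon)$ queries in expectation over the input, possibly adaptively, as long as $nk_0/\epsilon_0 = o(n^2)$, which again is our hypothesis. Yao's principle then gives the claimed worst-case expected query lower bound.

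The only obstacle is bookkeeping: checking the two Frobenius-norm estimates on the explicit construction and verifying that the substitution $k_0=k$, $\epsilon_0=\Theta(\epsilon)$ stays in the regime $nk_0/\epsilon_0 = o(n^2)$ where the Musco--Woodruff argument applies. No new combinatorial or information-theoretic ingredient is needed; the content is exactly the observation that an additive error of $\epsilon\|\AA\|_F^2$ is, on this instance, a relative error of $O(\epsilon)$.
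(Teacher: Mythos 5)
Your proposal is correct and is essentially the paper's own argument: both take the Musco--Woodruff hard instance (which is a correlation matrix), observe that $\|\AA\|_F^2 = \Theta(n) = \Theta(\|\AA-\AA_k\|_F^2)$ there, so additive error $\epsilon\|\AA\|_F^2$ coincides up to constants with relative error $(1+\Theta(\epsilon))$, and then appeal to the argument of Theorem~13 of \cite{mw17}. The only cosmetic difference is that you package this as a black-box reduction with a rescaled parameter $\epsilon_0=\Theta(\epsilon)$ (and note the success-probability $2/3$ is covered by the distinguishing argument), whereas the paper states the instance explicitly and says the remainder is identical to the proof in \cite{mw17}.
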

\begin{proof}
We observe that in the lower bound construction of \cite{mw17}, the matrix $\AA$ is binary, with all $1$s on a the diagonal, and $k$ off-diagonal blocks of all $1$s, each of size $\sqrt{\frac{2\epsilon n}{k}}\times\sqrt{\frac{2\epsilon n}{k}}$. Therefore, $\AA$ is a correlation matrix and $\|\AA\|^2_F = (1+2\epsilon)n $. Further, the optimal rank-$k$ cost, $\|\AA -\AA_k \|^2_F = \Theta(n)$. To compute an additive-error approximation, any algorithm must caputure $\epsilon\|\AA\|^2_F =\epsilon n$ mass among the off-diagonal entries of $\AA$. Note, the remaining proof is identical to Theorem 13 in \cite{mw17}.
\end{proof}

\newpage

\paragraph{Acknowledgments.} A. Bakshi and D. Woodruff would like to acknowledge support from the National Science Foundation under Grant No. CCF-1815840. Part of this work was also done while they were visiting the Simons Institute for the Theory of Computing.

\bibliographystyle{alpha}
\bibliography{lra}

\newpage

\end{document}